\theoremstyle{acmplain}
\newtheorem{remark}{Remark}
\mathchardef\mhyphen="2D
\newcommand{\N}{\mathbb{N}}
\newcommand{\B}{\{0,1\}}
\newcommand{\Plays}{\mathsf{Plays}}
\newcommand{\Hist}{\mathsf{Hist}}
\newcommand{\occ}[1]{\mathsf{Occ}({#1})}
\newcommand{\infOcc}[1]{\mathsf{Inf}({#1})}
\newcommand{\payoff}[1]{\mathsf{pay}({#1})}
\newcommand{\won}[1]{\mathsf{won}({#1})}
\newcommand{\payoffObj}[2]{\mathsf{pay}_{#1}({#2})}
\newcommand{\updateReach}[1]{\mathsf{reachUpd}({#1})}
\newcommand{\updateSafe}[1]{\mathsf{safeUpd}({#1})}
\newcommand{\strategyProfile}{\sigma}
\newcommand{\outcome}[1]{\mathsf{out}({#1})}
\newcommand{\Playsigma}[1]{\mathsf{Plays}_{#1}}
\newcommand{\Histsigma}[1]{\mathsf{Hist}_{#1}}
\newcommand{\Playsigmazero}{\Playsigma{\sigma_0}}
\newcommand{\nbrObjectives}{t}
\newcommand{\Obj}{\Omega}
\newcommand{\ObjPlayer}[1]{\Omega_{#1}}
\newcommand{\target}{T} 
\newcommand{\reach}[1]{\mathsf{Reach}(#1)}
\newcommand{\reachName}{\mathsf{Reach}}
\newcommand{\safeSet}{S} 
\newcommand{\safe}[1]{\mathsf{Safe}(#1)}
\newcommand{\safeName}{\mathsf{Safe}}
\newcommand{\parity}[1]{\mathsf{Parity}(#1)}
\newcommand{\parityName}{\mathsf{Parity}}
\newcommand{\BuchiSet}{B}
\newcommand{\Buchi}[1]{\mathsf{B}\ddot{\mathsf{u}}\mathsf{chi}(#1)}
\newcommand{\BuchiName}{\mathsf{B}\ddot{\mathsf{u}}\mathsf{chi}}
\newcommand{\CoBuchi}[1]{\mathsf{co \mhyphen }\mathsf{B}\ddot{\mathsf{u}}\mathsf{chi}(#1)}
\newcommand{\CoBuchiName}{\mathsf{co \mhyphen }\mathsf{B}\ddot{\mathsf{u}}\mathsf{chi}}
\newcommand{\BooleanBuchi}[1]{\mathsf{Boolean}\mathsf{B}\ddot{\mathsf{u}}\mathsf{chi}(#1)}
\newcommand{\BooleanBuchiName}{\mathsf{Boolean}\mathsf{B}\ddot{\mathsf{u}}\mathsf{chi}}
\newcommand{\Muller}[1]{\mathsf{Muller}(#1)}
\newcommand{\MullerName}{\mathsf{Muller}}
\newcommand{\Street}[1]{\mathsf{Streett}(#1)}
\newcommand{\StreetName}{\mathsf{Streett}}
\newcommand{\Rabin}[1]{\mathsf{Rabin}(#1)}
\newcommand{\RabinName}{\mathsf{Rabin}}
\newcommand{\np}{$\mathsf{NP}$}
\newcommand{\npHard}{$\mathsf{NP}$-hard}
\newcommand{\npComplete}{$\mathsf{NP}$-complete}
\newcommand{\pspaceHard}{$\mathsf{PSPACE}$-hard}
\newcommand{\pspaceComplete}{$\mathsf{PSPACE}$-complete}
\newcommand{\nexptime}{$\mathsf{NEXPTIME}$}
\newcommand{\nexptimeHard}{$\mathsf{NEXPTIME}$-hard}
\newcommand{\nexptimeComplete}{$\mathsf{NEXPTIME}$-complete}
\newcommand{\FPT}{$\mathsf{FPT}$}
\newcommand{\problemParam}{k}
\newcommand{\problem}{Stackelberg-Pareto Synthesis problem}
\newcommand{\problemAb}{SPS problem}
\newcommand{\setCover}{Set Cover problem}
\newcommand{\setCoverAb}{SC problem}
\newcommand{\succinctSetCover}{Succinct Set Cover problem}
\newcommand{\succinctSetCoverAb}{SSC problem}
\newcommand{\dominatingSet}{Succinct Dominating Set problem}
\newcommand{\dominatingSetAb}{SDS problem}
\newcommand{\challengerProver}{Challenger-Prover}
\newcommand{\challengerProverAb}{C-P}
\newcommand{\prov}{\mathcal{P}}
\newcommand{\chal}{\mathcal{C}}
\newcommand{\provWit}{W}
\newcommand{\fixed}{$\sigma_0$-fixed}
\newcommand{\paretoOptimal}{\fixed{} Pareto-optimal}
\newcommand{\fixedStrategy}[1]{{#1}-fixed}
\newcommand{\paretoOptimalStrategy}[1]{\fixedStrategy{#1} Pareto-optimal}
\newcommand{\game}{Stackelberg-Pareto game}
\newcommand{\games}{Stackelberg-Pareto games}
\newcommand{\gameAb}{SP game}
\newcommand{\gamesAb}{SP games}
\newcommand{\paretoSet}[1]{P_{#1}}
\newcommand{\paretoSetSize}{|\Wit{\sigma_0}|}
\newcommand{\dominatedPlays}[1]{\Omega^{<}(\paretoSet{#1})}
\newcommand{\deviations}[1]{\mathsf{Dev}(\Wit{#1})}
\newcommand{\punStrat}[1]{\sigma^{\mathsf{Pun}}_{#1}}
\newcommand{\Wit}[1]{\mathsf{Wit}_{#1}}
\newcommand{\prefStrat}[1]{\mathsf{Wit}_{\sigma_0}(#1)}
\newcommand{\W}{W}
\newcommand{\region}{region}
\newcommand{\sect}{section}
\newcommand{\internal}{internal}
\newcommand{\final}{terminal}
\newcommand{\Reg}[1]{\mathsf{Reg}({#1})}
\newcommand{\Witproperty}{region-tree structure}
  \providecommand\BibTeX{{%
    \normalfont B\kern-0.5em{\scshape i\kern-0.25em b}\kern-0.8em\TeX}}}
\begin{document}

%%
%% The "title" command has an optional parameter,
%% allowing the author to define a "short title" to be used in page headers.
\title{Stackelberg-Pareto Synthesis (Extended Version)}

%%
%% The "author" command and its associated commands are used to define
%% the authors and their affiliations.
%% Of note is the shared affiliation of the first two authors, and the
%% "authornote" and "authornotemark" commands
%% used to denote shared contribution to the research.
\author{Véronique Bruyère}
\orcid{0000-0002-9680-9140}
\affiliation{%
  \institution{Université de Mons (UMONS)}
  \streetaddress{place du Parc 20}
  \city{Mons}
  \country{Belgium}
  \postcode{7000}
}

\author{Baptiste Fievet}
\orcid{0000-0002-4925-1105}
\affiliation{%
  \institution{École Normale Supérieure Paris-Saclay}
  \streetaddress{Avenue des Sciences 4}
  \city{Gif-sur-Yvette}
  \country{France}
  \postcode{91190}
}

\author{Jean-François Raskin}
\orcid{0000-0002-3673-1097}
\affiliation{%
  \institution{Université Libre de Bruxelles (ULB)}
  \streetaddress{Campus de la Plaine CP212}
  \city{Bruxelles}
  \country{Belgium}
  \postcode{1050}
}

\author{Clément Tamines}
\orcid{0000-0003-1104-911X}
\affiliation{%
  \institution{Université de Mons (UMONS)}
  \streetaddress{place du Parc 20}
  \city{Mons}
  \country{Belgium}
  \postcode{7000}
}

%%
%% By default, the full list of authors will be used in the page
%% headers. Often, this list is too long, and will overlap
%% other information printed in the page headers. This command allows
%% the author to define a more concise list
%% of authors' names for this purpose.

%\renewcommand{\shortauthors}{Trovato and Tobin, et al.}

%%
%% The abstract is a short summary of the work to be presented in the
%% article.
\begin{abstract}
We study the framework of two-player Stackelberg games played on graphs in which Player~$0$ announces a strategy and Player~$1$ responds rationally with a strategy that is an optimal response. While it is usually assumed that Player~$1$ has a single objective, we consider here the new setting where he has several. In this context, after responding with his strategy, Player~$1$ gets a payoff in the form of a vector of Booleans corresponding to his satisfied objectives. Rationality of Player~$1$ is encoded by the fact that his response must produce a Pareto-optimal payoff given the strategy of Player~$0$. We study for several kinds of $\omega$-regular objectives the \problem{} which asks whether Player~$0$ can announce a strategy which satisfies his objective, whatever the rational response of Player~$1$. We show that this problem is fixed-parameter tractable for games in which objectives are all reachability, safety, B\"uchi, co-B\"uchi, Boolean B\"uchi, parity, Muller, Streett or Rabin objectives. We also show that this problem is \nexptimeComplete{} except for the cases of B\"uchi objectives for which it is \npComplete{} and co-B\"uchi objectives for which it is in \nexptime{} and \npHard{}. The problem is already \npComplete{} in the simple case of reachability objectives and graphs that are trees. 
\end{abstract}

%%
%% The code below is generated by the tool at http://dl.acm.org/ccs.cfm.
%% Please copy and paste the code instead of the example below.
%%
\begin{CCSXML}
<ccs2012>
   <concept>
       <concept_id>10003752.10003790.10002990</concept_id>
       <concept_desc>Theory of computation~Logic and verification</concept_desc>
       <concept_significance>500</concept_significance>
       </concept>
   <concept>
       <concept_id>10003752.10010070.10010099.10010102</concept_id>
       <concept_desc>Theory of computation~Solution concepts in game theory</concept_desc>
       <concept_significance>500</concept_significance>
       </concept>
   <concept>
       <concept_id>10011007.10010940.10010992.10010998</concept_id>
       <concept_desc>Software and its engineering~Formal methods</concept_desc>
       <concept_significance>500</concept_significance>
       </concept>
 </ccs2012>
\end{CCSXML}

\ccsdesc[500]{Theory of computation~Logic and verification}
\ccsdesc[500]{Theory of computation~Solution concepts in game theory}
\ccsdesc[500]{Software and its engineering~Formal methods}

%%
%% Keywords. The author(s) should pick words that accurately describe
%% the work being presented. Separate the keywords with commas.
\keywords{Two-player Stackelberg games played on graphs, synthesis, omega-regular objectives}

%%
%% This command processes the author and affiliation and title
%% information and builds the first part of the formatted document.
\maketitle

\section{Introduction}
Two-player zero-sum infinite-duration games played on graphs are a mathematical model used to formalize several important problems in computer science, such as \emph{reactive system synthesis}.  In this context, see e.g.~\cite{PnueliR89}, the graph represents the possible interactions between the system and the environment in which it operates. One player models the system to synthesize, and the other player models the (uncontrollable) environment. In this classical setting, the objectives of the two players are opposite, that is, the environment is \emph{adversarial}. Modelling the environment as fully adversarial is usually a \emph{bold abstraction} of reality as it can be composed of one or several components, each of them having their own objective. 

In this paper, we consider the framework of \emph{Stackelberg games}~\cite{Stackelberg37}, a richer non-zero-sum setting, in which Player~$0$ (the system) called \emph{leader} announces his strategy and then Player~$1$ (the environment) called \emph{follower} plays rationally by using a strategy that is an optimal response to the leader's strategy. This framework captures the fact that in practical applications, a strategy for interacting with the environment is committed before the interaction actually happens. The goal of the leader is to announce a strategy that guarantees him a payoff at least equal to some given threshold. In the specific case of Boolean objectives, the leader wants to see his objective being satisfied. The concept of leader and follower is also present in the framework of \emph{rational synthesis}~\cite{FismanKL10,KupfermanPV16} with the difference that this framework considers several followers, each of them with their own Boolean objective. In that case, rationality of the followers is modeled by assuming that the environment settles to an equilibrium (e.g. a Nash equilibrium) where each component (composing the environment) is considered to be an \emph{independent selfish individual}, excluding cooperation scenarios between components or the possibility of coordinated rational multiple deviations. Our work proposes a novel and natural \emph{alternative} in which the single follower, modeling the environment, has several objectives that he wants to satisfy. After responding to the leader with his own strategy, Player~$1$ receives a vector of Booleans which is his payoff in the corresponding outcome. Rationality of Player~$1$ is encoded by the fact that he only responds in such a way to receive \emph{Pareto-optimal payoffs}, given the strategy announced by the leader. This setting  encompasses scenarios where, for instance, several components can collaborate and agree on trade-offs. The goal of the leader is therefore to announce a strategy that guarantees him to satisfy his own objective, whatever the response of the follower which ensures him a Pareto-optimal payoff. The problem of deciding whether the leader has such a strategy is called the \emph{Stackelberg-Pareto Synthesis problem} (\problemAb{}). 

\paragraph{Contributions.} In addition to the definition of the new setting, our main contributions are the following ones. We consider reachability and safety objectives as well as several classical prefix-independent $\omega$-regular objectives (B\"uchi, co-B\"uchi, Boolean B\"uchi, parity, Muller, Streett, and Rabin).
We provide a thorough analysis of the complexity of solving the \problemAb{} for all those objectives. Our results are interesting and singular both from a theoretical and practical point of view. This paper is the extended version of~\cite{DBLP:conf/concur/BruyereRT21} where only reachability and parity objectives were studied.

First, we show in Theorem~\ref{thm:FPT_all} that the \problemAb{} is \emph{fixed-parameter tractable} (\FPT). The parameters of the \FPT{} complexity for each kind of objectives are summarized in Table~\ref{table:fpt-summary} and the number of objectives of the follower is a parameter in all cases. These are important results as it is expected that, in practice, the \emph{number} of objectives of the environment is limited to a few.

To obtain these \FPT{} results, we develop a reduction from our non-zero-sum games to a zero-sum game in which the protagonist, called \emph{Prover}, tries to show the existence of a solution to the problem, while the antagonist, called \emph{Challenger}, tries to disprove it. This zero-sum game is defined in a \emph{generic way}, independently of the actual objectives used in the initial game, and can then be easily adapted according to the case of a specific objective. The fixed-parameter complexity of the problem for prefix-independent objectives is shown by reduction to Boolean B\"uchi objectives. As a separate result, we propose an \FPT{} algorithm for Boolean B\"uchi objectives in the \emph{zero-sum} setting, which improves the complexity of the algorithm given in~\cite{BruyereHR18} (Theorem~\ref{thm:bbnewcomp}). 

Second, we prove that the \problemAb{} is \nexptimeComplete{} for all the objectives we study, except for B\"uchi objectives for which it is \npComplete{} and co-B\"uchi objectives for which it is in \nexptime{} and \npHard{} (Theorems~\ref{thm:nexptime},~\ref{thm:buchi_np_membership} and \ref{thm:nexptimehard}, see also Table~\ref{table:comp_summary}). It is already \npComplete{} in the simple setting of reachability objectives and graphs that are trees (Theorem~\ref{thm:npcomplete}). 
To the best of our knowledge, this is the first \nexptime-completeness result for a natural class of games played on graphs. To obtain the \nexptime{}-hardness, we present a natural \emph{succinct version} of the set cover problem that is complete for this class (Theorem~\ref{thm:ssc-completeness}), a result of potential independent interest. To obtain the \nexptime{}-membership of the \problemAb{}, we show that exponential-size solutions exist for positive instances of the \problemAb{} and this allows us to design a nondeterministic exponential-time algorithm. Unfortunately, it was not possible to use the \FPT{} algorithm mentioned above to show this membership due to its too high time complexity; conversely, our \nexptime{} algorithm is not \FPT{}.

\begin{table}
  \caption{Complexity class of the \problemAb{} for different objectives.}

  \label{table:comp_summary}
  \begin{tabular}{ll}
    \toprule
    Objective          & Complexity class         \\ \midrule
    Reachability (tree arena), B\"uchi & \npComplete{}            \\ 
    co-B\"uchi                    & \nexptime{}, \npHard{}      \\
    Reachability, safety, Boolean B\"uchi, parity, Muller, Streett, Rabin              & \nexptimeComplete{}      \\ 
   \bottomrule
\end{tabular}
\end{table}

\paragraph{Related Work.} Rational synthesis is introduced in~\cite{FismanKL10} for $\omega$-regular objectives in a setting where the followers are cooperative with the leader, and later in~\cite{KupfermanPV16} where they are adversarial. Precise complexity results for various $\omega$-regular objectives are established in~\cite{ConduracheFGR16} for both settings. Those complexities differ from the ones of the problem studied in this paper. Indeed, for reachability objectives, adversarial rational synthesis is \pspaceComplete, while for parity objectives, its precise complexity is not settled (the problem is \pspaceHard{} and in \nexptime{}). Extension to non-Boolean payoffs, like mean-payoff or discounted sum, is studied in~\cite{GuptaS14,GuptaS14c} in the cooperative setting and in~\cite{BalachanderGR20,FiliotGR20} in the adversarial setting. 

When several players (like the followers) play with the aim to satisfy their objectives, several solution concepts exist such as Nash equilibrium~\cite{Nas50}, subgame perfect equilibrium~\cite{selten}, secure equilibria~\cite{DBLP:conf/tacas/ChatterjeeH07,DBLP:journals/tcs/ChatterjeeHJ06}, or admissibility~\cite{Berwanger07,BrenguierRS15}. The constrained existence problem, close to the cooperative rational synthesis problem, is to decide whether there exists a solution concept such that the payoff obtained by each player is larger than some threshold. Let us mention~\cite{ConduracheFGR16,Ummels08,UmmelsW11} for results on the constrained existence for Nash equilibria and~\cite{Raskin2021,BrihayeBGRB20,Ummels06} for such results for subgame perfect equilibria. Rational verification is studied in~\cite{GutierrezNPW19,GutierrezNPW20}. This problem (which is not a synthesis problem) is to decide whether a given LTL formula is satisfied by the outcome of all Nash equilibria (resp. some Nash equilibrium). The interested reader can find more pointers to works on non-zero-sum games for reactive synthesis in~\cite{DBLP:conf/lata/BrenguierCHPRRS16,Bruyere17,DBLP:journals/siglog/Bruyere21}.

\paragraph{Structure.} The paper is structured as follows. In Section~\ref{sec:prelim}, we introduce the class of \games{} and the \problemAb{}. We establish in Section~\ref{sec:link} the relationship between the objectives we consider and useful technical results for our proofs. This section also provides an improved \FPT{} algorithm for zero-sum Boolean B\"uchi games. We show in Section~\ref{sec:newfpt} that the \problemAb{} is in \FPT{} for all the objectives we study. The \nexptime{} membership of this problem is studied in Section~\ref{sec:nexptime_member} for all the objectives we consider except for B\"uchi objectives for which the \np{} membership is established in Section~\ref{sec:buchi_np}. The hardness of the \problemAb{} is studied in Section~\ref{sec:nexptime_hard}, where it is proved to be \nexptime{}-hard for all the objectives except for B\"uchi and co-B\"uchi objectives for which it is \np{}-hard. We also show in this section that the problem is \npComplete{} in case of reachability objectives and graphs that are trees. In Section \ref{sec:conclusion}, we provide a conclusion and discuss future work.

\section{Preliminaries and Stackelberg-Pareto Synthesis Problem} \label{sec:prelim}

This section introduces the class of two-player \games{} in which the first player has a single objective and the second has several. We present a decision problem on those games called the \problem{}, which we study in this paper.

\subsection{Preliminaries}

\paragraph{Game Arena.}
A \emph{game arena} is a tuple \sloppy $G = (V, V_0, V_1, E, v_0)$ where $(V,E)$ is a finite directed graph such that: \emph{(i)} $V$ is the set of vertices and $(V_0, V_1)$ forms a partition of $V$ where $V_0$ (resp.\ $V_1$) is the set of vertices controlled by Player~$0$ (resp.\ Player~$1$), \emph{(ii)} $E \subseteq V \times V$ is the set of edges such that each vertex $v$ has at least one successor $v'$, i.e., $(v,v') \in E$, and \emph{(iii)} $v_0 \in V$ is the initial vertex. We call a game arena a \emph{tree arena} if it is a tree in which every leaf vertex has itself as its only successor. A \emph{sub-arena} $G'$ with a set $V' \subseteq V$ of vertices and initial vertex $v'_0 \in V'$ is a game arena defined from $G$ as expected.

\paragraph{Plays.}
A \emph{play} in a game arena $G$ is an infinite sequence of vertices $\rho = v_0 v_1 \ldots \in V^{\omega}$ such that it starts with the initial vertex $v_0$ and $(v_j,v_{j+1}) \in E$ for all $j \in \N$. \emph{Histories} in $G$ are finite sequences $h = v_0 \ldots v_j \in V^+$ defined similarly. A history is \emph{elementary} if it contains no cycles. We denote by $\Plays_G$ the set of plays in $G$. We write $\Hist_G$ (resp.\ $\Hist_{G,i}$) the set of histories (resp.\ histories ending with a vertex in $V_i$). We use the notations $\Plays$, $\Hist$, and $\Hist_i$ when $G$ is clear from the context. We write $\occ{\rho}$ the set of vertices occurring in $\rho$ and $\infOcc{\rho}$ the set of vertices occurring infinitely often in $\rho$. 

\paragraph{Strategies.}
A \emph{strategy} $\sigma_i$ for Player~$i$ is a function $\sigma_i\colon \Hist_i \rightarrow V$ assigning to each history $hv \in \Hist_i$ a vertex $v' = \sigma_i(hv)$ such that $(v,v') \in E$. It is \emph{memoryless} if $\sigma_i(hv) = \sigma_i(h'v)$ for all histories $hv, h'v$ ending with the same vertex $v \in V_i$. More generally, it is \emph{finite-memory} if it can be encoded by a Moore machine $\mathcal{M}$~\cite{2001automata}. The \emph{memory size} of $\sigma_i$ is the number of memory states of $\mathcal{M}$. In particular, $\sigma_i$ is memoryless when it has a memory size of one.

Given a strategy $\sigma_i$ of Player~$i$, a play $\rho = v_0 v_1 \ldots$ is \emph{consistent} with $\sigma_i$ if $v_{j+1} = \sigma_i(v_0 \ldots v_j)$ for all $j \in \N$ such that $v_j \in V_i$. Consistency is naturally extended to histories. We denote by $\Playsigma{\sigma_i}$ (resp.\ $\Histsigma{\sigma_i}$) the set of plays (resp.\ histories) consistent with $\sigma_i$. A \emph{strategy profile} is a tuple $\strategyProfile = (\sigma_0, \sigma_1)$ of strategies, one for each player. We write $\outcome{\strategyProfile}$ the unique play consistent with both strategies and we call it the \emph{outcome} of $\strategyProfile$.

\paragraph{Objectives.}
An \emph{objective} for Player~$i$ is a set of plays $\Obj \subseteq \Plays$. A play $\rho$ \emph{satisfies} the objective $\Obj$ if $\rho \in \Obj$. In this paper, we focus on two categories of classical $\omega$-regular objectives. First, we consider objectives that rely on the whole play: reachability and safety objectives.
\begin{itemize}
    \item Let $\target \subseteq V$ be a set of vertices called \emph{target set}, the \emph{reachability} objective $\reach{\target} = {\{\rho \in \Plays \mid \occ{\rho} \cap \target \neq \emptyset \}}$ asks to visit at least one vertex of $\target$. 
    
    \item Let $\safeSet \subseteq V$ be a set of vertices called the \emph{safe set}, the \emph{safety} objective $\safe{\safeSet} = {\{\rho \in \Plays \mid \occ{\rho} \cap (V\setminus\safeSet) = \emptyset \}}$ asks to avoid visiting vertices outside the safe set.
\end{itemize}
Second, we consider those classical objectives which are prefix-independent.
\begin{itemize}
    \item Given a set $\BuchiSet \subseteq V$ of vertices, the \emph{B\"uchi} objective $\Buchi{\BuchiSet} = {\{\rho \in \Plays \mid \infOcc{\rho} \cap \BuchiSet \neq \emptyset \}}$ asks to visit the set $\BuchiSet$ infinitely often. The \emph{co-B\"uchi} objective $\CoBuchi{\BuchiSet} = {\{\rho \in \Plays \mid \infOcc{\rho} \cap \BuchiSet = \emptyset \}}$ asks to visit the set $\BuchiSet$ finitely often.
    
     \item Let $c : V \rightarrow \{0, \dots, d\}$ be a function called a \emph{priority function} which assigns an integer to each vertex in the arena. We write $\infOcc{c(\rho)} = \{c(v) \mid v \in \infOcc{\rho}\}$ the set of priorities occurring infinitely often in $\rho$. The \emph{parity} objective $\parity{c} = \{\rho \in \Plays \mid \min(\infOcc{c(\rho)}) \text{ is even}\}$ asks that the minimum priority visited infinitely often be even. 
     In this paper, we assume that $d$ is even in any priority function $c$ and we also use notation $\max(c)$ to denote this maximal priority $d$.
  
    \item Let $c : V \rightarrow \{0, \dots, d\}$ be a priority function and $Q \subseteq 2^{\{0, \dots, d\}}$ a set containing sets of priorities. The \emph{Muller} objective $\Muller{c, Q} = \{\rho \in \Plays \mid \infOcc{c(\rho)} \in  Q\}$ asks that the set of priorities occurring infinitely often belongs to $Q$. 
    
    \item Given $m$ pairs of sets $(E_1, F_1), \dots, (E_m, F_m)$ such that $E_i \subseteq V, F_i \subseteq V$ with $i \in \{1, \dots, m\}$, the \emph{Streett} objective $\Street{(E_1, F_1), \dots, (E_m, F_m)} = {\{\rho \in \Plays \mid \forall i\in \{1, \dots, m\}, \infOcc{\rho} \cap E_i \neq \emptyset \lor \infOcc{\rho} \cap F_i = \emptyset\}}$ asks that for every pair $(E_i, F_i)$ if $F_i$ is visited infinitely often then $E_i$ is also visited infinitely often. The \emph{Rabin} objective $\Rabin{(E_1, F_1), \dots, (E_m, F_m)} = {\{\rho \in \Plays \mid \exists i\in \{1, \dots, m\}, \infOcc{\rho} \cap E_i = \emptyset \land \infOcc{\rho} \cap F_i \neq \emptyset\}}$ asks that there exists a pair $(E_i, F_i)$ in which $F_i$ is visited infinitely often and $E_i$ is visited finitely often.
\end{itemize}
We also consider the following prefix-independent objective, called \emph{Boolean B\"uchi objective} \cite{DBLP:journals/scp/EmersonL87,BruyereHR18}, which encompasses every prefix-independent objective mentioned above as we will discuss in the next section.
\begin{itemize}
    \item Given $m$ sets $\target_1, \dots, \target_m$ such that $\target_i \subseteq V$, $i \in \{1, \dots, m\}$ and $\phi$ a Boolean formula over the set of variables $X = \{x_1, \dots, x_m\}$, the \emph{Boolean B\"uchi} objective $\BooleanBuchi{\phi, \target_1, \dots, \target_ m} = \{\rho \in \Plays \mid \rho \text{ satisfies } (\phi, \target_1, \dots, \target_ m) \}$ is the set of plays whose valuation of the variables in $X$ satisfy formula $\phi$. Given a play $\rho$, its valuation is such that $x_i = 1$ if and only if $\infOcc{\rho} \cap \target_i \neq  \emptyset$ and $x_i = 0$ otherwise. That is, a play satisfies the objective if the Boolean formula describing sets to be visited infinitely often by a play is satisfied. We denote by $|\phi|$ the size of $\phi$ as equal to the number of symbols in $\{\land, \lor, \lnot\} \cup X $ in $\phi$.
\end{itemize}
In the sequel, when we refer to prefix-independent objectives, we speak about the ones described previously.

\paragraph{Zero-sum games.} A \emph{zero-sum game} $\mathcal{G} = (G, \Obj)$ is a game played by two players such that the first player has objective $\Obj$ and the second player has the opposite objective $\Plays \setminus \Obj$. We assume that the reader is familiar with this concept, see e.g.~\cite{2001automata}. We prefix a zero-sum game by the type of $\Obj$, e.g. \emph{Boolean B\"uchi zero-sum game\footnote{Notice that those games are also called Emerson-Lei zero-sum games.}}.

\subsection{Stackelberg-Pareto Synthesis Problem}

We now introduce a new class of two-player games, called \games{}, in which the first player has a single objective and the second has several. This model is the basis for the problem studied in this paper. We end this section with an example of such a game and of the related problem.

\begin{definition}
\label{def:SPgame}
    A \emph{\game{}} (\gameAb{}) $\mathcal{G} = (G, \ObjPlayer{0},\ObjPlayer{1}, \dots, \ObjPlayer{\nbrObjectives})$ is composed of a game arena $G$, an objective $\ObjPlayer{0}$ for Player~$0$ and $\nbrObjectives \geq 1$ objectives $\ObjPlayer{1}, \dots, \ObjPlayer{\nbrObjectives}$ for Player~$1$. Every objective in an \gameAb{} is of the same type in $\{ \reachName{}, \safeName{}, \BuchiName{}, \CoBuchiName{}, \BooleanBuchiName{}, \parityName{}, \MullerName{}, \StreetName{}, \RabinName{}\}$. We sometimes prefix \gameAb{} by a type of objectives when discussing the specific case where all objectives are of that type, e.g., \emph{parity \gameAb{}} when all objectives are parity objectives. We write $|\mathcal{G}|$ the \emph{size} of $\mathcal{G}$ which corresponds to the number of vertices $|V|$ in its arena and the number $\nbrObjectives$ of objectives for Player~$1$.
\end{definition}

\paragraph{Payoffs in SP Games.}
The \emph{payoff} of a play $\rho \in \Plays$ corresponds to the vector of Booleans $\payoff{\rho} \in \{0,1\}^{\nbrObjectives}$ such that for all $i \in \{1, \dots, \nbrObjectives\}$, $\payoffObj{i}{\rho} = 1$ if $\rho \in \ObjPlayer{i}$, and $\payoffObj{i}{\rho} = 0$ otherwise. Note that we omit to include Player~$0$ when discussing the payoff of a play. Instead we say that a play $\rho$ is \emph{won} by Player~$0$ if $\rho \in \ObjPlayer{0}$ and we write $\won{\rho} = 1$, otherwise it is \emph{lost} by Player~$0$ and we write $\won{\rho} = 0$. We write $(\won{\rho},\payoff{\rho})$ the \emph{extended payoff} of $\rho$. Given a strategy profile $\strategyProfile$, we write $\won{\strategyProfile} = \won{\outcome{\strategyProfile}}$ and $\payoff{\strategyProfile} = \payoff{\outcome{\strategyProfile}}$.  For reachability \gamesAb{} and safety \gamesAb{}, since the objectives are prefix-dependent, we also define $\won{h}$ and $\payoff{h}$ for histories $h \in \Hist$ as done for plays.

We introduce the following partial order on payoffs. Given two payoffs $p = (p_1, \dots, p_\nbrObjectives)$ and $p' = (p'_1, \dots, p'_\nbrObjectives)$ such that $p, p' \in \{0,1\}^{\nbrObjectives}$, we say that $p'$ is \emph{larger} than $p$ and write $p  \leq p'$ if $p_i \leq p'_i$ for all $i \in \{1, \dots, \nbrObjectives\}$. Moreover, when it also  holds that $p_i < p'_i$ for some $i$, we say that $p'$ is \emph{strictly larger} than $p$ and we write $p < p'$. A subset of payoffs $P \subseteq \{0,1\}^\nbrObjectives$ is an \emph{antichain} if it is composed of pairwise incomparable payoffs with respect to $\leq$.

\paragraph{Stackelberg-Pareto Synthesis Problem.} Given a strategy $\sigma_0$ of Player~$0$, we consider the set of payoffs of plays consistent with $\sigma_0$ which are \emph{Pareto-optimal}, i.e., maximal with respect to $\leq$. We write this set $\paretoSet{\sigma_0} =  \max \{\payoff{\rho} \mid \rho \in \Playsigmazero \}$. Notice that it is an antichain. We say that those payoffs are \emph{\paretoOptimal} and write $|\paretoSet{\sigma_0}|$ the number of such payoffs. Notice that $|\paretoSet{\sigma_0}|$ is at most exponential in $\nbrObjectives$.  A play $\rho \in \Playsigmazero$ is called \paretoOptimal{} if its payoff $\payoff{\rho}$ is in $\paretoSet{\sigma_0}$.

The problem studied in this paper asks whether there exists a strategy $\sigma_0$ for Player~$0$ such that every play in $\Playsigmazero$ which is \paretoOptimal{} satisfies the objective of Player~$0$. This corresponds to the assumption that given a strategy of Player~$0$, Player~$1$ will play \emph{rationally}, that is, with a strategy $\sigma_1$ such that  $\outcome{(\sigma_0, \sigma_1)}$ is \paretoOptimal{}. %as the outcome with of any other strategy will have a smaller or incomparable payoff. 
It is therefore sound to ask that Player~$0$ wins against such rational strategies.

\begin{definition}
	Given an \gameAb{}, the \emph{\problem{}} (\problemAb{}) is to decide whether there exists a strategy $\sigma_0$ for Player~$0$ (called a \emph{solution}) such that for each strategy profile ${\strategyProfile} = {(\sigma_0, \sigma_1)}$ with $\payoff{\strategyProfile} \in \paretoSet{\sigma_0}$, it holds that $\won{\strategyProfile} = 1$.
\end{definition}

\paragraph{Witnesses.} Given a strategy $\sigma_0$ that is a solution to the \problemAb{} and any payoff $p \in \paretoSet{\sigma_0}$, for each play $\rho$ consistent with $\sigma_0$ such that $\payoff{\rho} = p$ it holds that $\won{\rho}=1$. For each $p \in \paretoSet{\sigma_0}$, we arbitrarily select such a play which we call a \emph{witness} (of $p$). We denote by $\Wit{\sigma_0}$ the set of all witnesses, of which there are as many as payoffs in $\paretoSet{\sigma_0}$. The size $|\Wit{\sigma_0}|$ of $\Wit{\sigma_0}$ is at most exponential in $\nbrObjectives$ as $|\Wit{\sigma_0}| = |\paretoSet{\sigma_0}|$. In the sequel, it is useful to see this set as a tree composed of $\paretoSetSize$ branches. Additionally for a given history $h \in \Hist$, we write $\prefStrat{h}$ the set of witnesses for which $h$ is a prefix, i.e., $\prefStrat{h} = \{ \rho \in \Wit{\sigma_0} \mid h$ is prefix of $\rho \}$. Notice that $\prefStrat{h} = \Wit{\sigma_0}$ when $h = v_0$ and that the size of $\prefStrat{h}$ decreases as the size of $h$ increases, until it contains a single play or becomes empty. 

\begin{example}
\label{ex:example}

\begin{figure}
	\centering
		\resizebox{0.7\textwidth}{!}{%
		\begin{tikzpicture}
		
		\node[draw, rectangle, minimum size=0.75cm, inner sep = 0.5pt] (v0) at (-0.25,0){$v_0$};
		\node[draw, circle, minimum size=0.75cm, inner sep = 0.5pt] (v1) at (1.5,0.75){$v_1$};
		\node[draw, rectangle, minimum size=0.75cm, inner sep = 0.5pt] (x) at (1.5,-0.75){$v_2$};
		\node[draw, circle, minimum size=0.75cm, inner sep = 0.5pt] (v2) at (3.25,0){$v_3$};
		\node[draw, circle, minimum size=0.75cm, inner sep = 0.5pt] (v3) at (3.25,-1.5){$v_4$};
		\node[draw, rectangle, minimum size=0.75cm, inner sep = 0.5pt] (v4) at (4.75,0.75){$v_5$};			
		\node[draw, circle, minimum size=0.75cm, inner sep = 0.5pt] (v5) at (4.75,-0.75){$v_7$};
		\node[draw, circle, minimum size=0.75cm, inner sep = 0.5pt] (v6) at (6.25,0.75){$v_6$};
				
		\draw[-stealth, shorten >=1pt,auto] (v0) to [] (v1);
		\draw[-stealth, shorten >=1pt,auto] (v0) edge [] node {} (x);
		\draw[-stealth, shorten >=1pt,auto] (x) edge [] node {} (v2);
		\draw[-stealth, shorten >=1pt,auto] (x) edge [] node {} (v3);

		\draw[-stealth, shorten >=1pt,auto] (v1) edge [loop right] node[right] {\small $(0, (0,0,1))$} (v1);
		
		\draw[-stealth, shorten >=1pt,auto] (v3) edge [loop right] node[right] {\small $( 0, (1,0,0))$} (v3);

		\draw[-stealth, shorten >=1pt,auto] (v2) edge [] node {} (v4);
		\draw[-stealth, shorten >=1pt,auto] (v4) edge [bend left] node {} (v2);

		\draw[-stealth, shorten >=1pt,auto] (v2) edge [] node {} (v5);
		
		\draw[-stealth, shorten >=1pt,auto] (v5) edge [loop right] node {\small $(1, (1,1,0))$} (v5);
		
		\draw[-stealth, shorten >=1pt,auto] (v4) edge [] node {} (v6);
		
		\draw[-stealth, shorten >=1pt,auto] (v6) edge [loop right] node {\small$(1, (0,1,1))$} (v6);

		\end{tikzpicture}
		}%
	
	\caption{A reachability \gameAb{}.}
	\label{example_memory}
	\Description{Figure 1. Fully described in the text.}
\end{figure}

Consider the reachability \gameAb{} with arena $G$ depicted in Figure \ref{example_memory} in which Player~$1$ has $\nbrObjectives = 3$ objectives. The vertices of Player~$0$ (resp.\ Player~$1$) are depicted as ellipses (resp.\ rectangles)\footnote{This convention is used throughout this paper.}. Every objective in the game is a reachability objective defined as follows: $\ObjPlayer{0} = \reach{\{v_6, v_7\}}$, $\ObjPlayer{1} = \reach{\{v_4, v_7\}}$, $\ObjPlayer{2} = \reach{\{v_3\}}$, $\ObjPlayer{3} = \reach{\{v_1, v_6\}}$. The extended payoff of plays reaching vertices from which they can only loop is displayed in the arena next to those vertices, and the extended payoff of play $v_0 v_2 (v_3v_5)^\omega$ is $(0, (0,1,0))$.

Consider the memoryless strategy $\sigma_0$ of Player~$0$ such that he chooses to always move to $v_5$ from $v_3$. The set of payoffs of plays consistent with $\sigma_0$ is $\{(0,0,1), (0,1,0), (1, 0, 0), (0, 1, 1)\}$ and the set of those that are Pareto-optimal is $\paretoSet{\sigma_0} = \{(1, 0, 0), (0, 1, 1)\}$. Notice that play $\rho = v_0 v_2 (v_4)^\omega$ is consistent with $\sigma_0$, has payoff $(1, 0, 0)$ and is lost by Player~$0$. Strategy $\sigma_0$ is therefore not a solution to the \problemAb{}. In this game, there is only one other memoryless strategy for Player~$0$, where he chooses to always move to $v_7$ from $v_3$. One can verify that it is again not a solution to the \problemAb{}. 

We can however define a finite-memory strategy $\sigma'_0$ such that $\sigma'_0(v_0 v_2 v_3) = v_5$ and $\sigma'_0(v_0 v_2 v_3 v_5 v_3) = v_7$ and show that it is a solution to the problem. Indeed, the set of \paretoOptimalStrategy{$\sigma'_0$} payoffs is $\paretoSet{\sigma'_0} = \{(0, 1, 1),(1, 1, 0)\}$ and Player~$0$ wins every play consistent with $\sigma'_0$ whose payoff is in this set. A set $\Wit{\sigma'_0}$ of witnesses for these payoffs is $\{v_0v_2v_3v_5v_6^\omega, v_0v_2v_3v_5v_3v_7^\omega\}$ and is in this case the unique set of witnesses. This example shows that Player~$0$ sometimes needs memory in order to have a solution to the \problemAb{}.
\qed\end{example}

\section{Useful Properties about the Objectives} 
\label{sec:link}

The results presented in this section are technical results useful in our proofs throughout this paper. We first discuss the relationships between the objectives studied in this paper. We provide a translation from every prefix-independent objective to an equivalent Boolean B\"uchi objective and, for some, to an equivalent parity objective. We use these results to show that an \gameAb{} with a certain type of objectives can be translated in polynomial time into an equivalent \gameAb{} with another type  of objectives. We finish the section by presenting an alternative algorithm to solve Boolean B\"uchi zero-sum games, with an improved complexity over the algorithm from~\cite{BruyereHR18}.

\subsection{Reduction to Boolean B\"uchi and Parity Objectives}
We provide a reduction from the prefix-independent objectives presented in Section \ref{sec:prelim} to an equivalent Boolean B\"uchi objective. The following proposition describes how any objective in $\{\BuchiName{}, \CoBuchiName{}, \parityName{}, \MullerName{}, \StreetName{}, \RabinName{}\}$ can be encoded into a Boolean B\"uchi objective such that a play satisfies the Boolean B\"uchi objective if and only if it satisfies the original objective. This encoding is standard and can be found for instance in~\cite{DBLP:conf/atva/RenkinDP20}. We recall it for the sake of completeness.

\begin{table}
  \caption{Encoding of prefix-independent objectives into Boolean B\"uchi objectives.}
  \label{table:summaryBooleanBuchiTable}
  \begin{tabular}{llll}
  \toprule
                                                        & Parameter         & Size of $X$        & Size of $\phi$ \\ \midrule
  $\Buchi{B}$                             & $B$               & $1$                & $\mathcal{O}(1)$ \\
  $\CoBuchi{B}$                               & $B$               & $1$                & $\mathcal{O}(1)$ \\

  $\parity{c}$                                          & $d$ & $d + 1$ & $\mathcal{O}(d^2)$ \\
  $\Muller{c, Q}$                                       & $d, |Q|$ & $d + 1$       & $\mathcal{O}(d \cdot |Q|)$ \\
    $\Street{(E_1, F_1), \dots, (E_m, F_m)}$   & $m$                    & $2 \cdot  m$  & $\mathcal{O}(m)$  \\
  $\Rabin{(E_1, F_1), \dots, (E_m, F_m)}$   & $m$                    & $2 \cdot  m$  & $\mathcal{O}(m)$  \\ \bottomrule
\end{tabular}
\end{table}

\begin{proposition}
\label{prop_bb_encoding}
    Any objective in $\{\BuchiName{}, \CoBuchiName{}, \parityName{}, \MullerName{}, \StreetName{}, \RabinName{}\}$ can be encoded into an equivalent Boolean B\"uchi objective, a summary of those encodings is provided in Table~\ref{table:summaryBooleanBuchiTable}.

    \begin{enumerate}
        \item A B\"uchi objective $\Buchi{B}$ (resp. co-B\"uchi objective $\CoBuchi{B}$) can be encoded into a Boolean B\"uchi objective $\BooleanBuchi{\phi, T}$ such that $\BooleanBuchi{\phi, T} = \Buchi{B}$ (resp. $= \CoBuchi{B}$) with $X = \{x\}$ and $|\phi| = 1$ in $\mathcal{O}(1)$.
        
        \item A parity objective $\parity{c}$ can be encoded into a Boolean B\"uchi objective  $\BooleanBuchi{\phi, T_0, \dots, T_d}$ with $d = \max(c)$ the maximal even priority according to priority function $c$ such that $\BooleanBuchi{\phi, T_0, \dots, T_d} = \parity{c}$ with $X = \{x_0, x_1, \dots, x_{d}\}$ and $|\phi|$ in $\mathcal{O}(d^2)$.
        
        \item A Muller objective $\Muller{c, Q}$ with $Q = \{Q_1, Q_2, \dots, Q_m\}$ can be encoded into a Boolean B\"uchi objective  $\BooleanBuchi{\phi, T_0, \dots, T_d}$ with $d = \max(c)$ the maximal even priority according to priority function $c$ such that $\BooleanBuchi{\phi, T_0, \dots, T_d} = \Muller{c, Q}$ with $X = \{x_0, x_1, \dots, x_{d}\}$ and $|\phi|$ in $\mathcal{O}(d \cdot m)$ with $m =|Q|$.
        
        \item A Streett objective $\Street{(E_1, F_1), \dots, (E_m, F_m)}$ (resp. Rabin objective $\Rabin{(E_1, F_1), \dots, (E_m, F_m)}$) can be encoded into a Boolean B\"uchi objective $\BooleanBuchi{\phi, T_1, T'_1, \dots, T_m, T'_m}$ such that $\BooleanBuchi{\phi, T_1, T'_1, \dots, T_m, T'_m} = \Street{(E_1, F_1), \dots, (E_m, F_m)}$ (resp. $= \Rabin{(E_1, F_1), \dots, (E_m, F_m)}$) with $X = \{x_1, x'_1, \dots, x_m, x'_m\}$ and $|\phi|$ in $\mathcal{O}(2 \cdot m)$. 
        
    \end{enumerate}
\end{proposition}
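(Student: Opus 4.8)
The proof is by giving, for each objective type in the list, an explicit set of target sets $T_1,\dots,T_m\subseteq V$ and a Boolean formula $\phi$ over variables $\{x_1,\dots,x_m\}$, and then checking that for every play $\rho$, the induced valuation (where $x_i=1$ iff $\infOcc{\rho}\cap T_i\neq\emptyset$) satisfies $\phi$ exactly when $\rho$ lies in the original objective. Since all objectives involved are prefix-independent and depend only on $\infOcc{\rho}$, and the valuation itself depends only on $\infOcc{\rho}$, it suffices to verify the equivalence at the level of the set $I=\infOcc{\rho}\subseteq V$, which is a nonempty strongly connected subgraph; this is a purely combinatorial check with no limiting arguments needed. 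I would organize the argument as the four itemized cases, each a short paragraph, and collect the resulting sizes of $X$ and $\phi$ into Table~\ref{table:summaryBooleanBuchiTable}.

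For the individual cases: for $\Buchi{B}$, take $m=1$, $T_1=B$, $\phi=x_1$; for $\CoBuchi{B}$, take $T_1=B$, $\phi=\lnot x_1$; each is $\mathcal{O}(1)$. For $\parity{c}$ with $d=\max(c)$, introduce one variable $x_j$ per priority $j\in\{0,\dots,d\}$ with $T_j=c^{-1}(j)$, so $x_j=1$ iff priority $j$ occurs infinitely often; the parity condition ``the least priority seen infinitely often is even'' is then $\phi=\bigvee_{j\text{ even}}\big(x_j\wedge\bigwedge_{j'<j}\lnot x_{j'}\big)$, which has $\mathcal{O}(d^2)$ symbols. For $\Muller{c,Q}$ with $Q=\{Q_1,\dots,Q_m\}$, use the same priority variables and encode ``$\infOcc{c(\rho)}=Q_k$ for some $k$'' as $\phi=\bigvee_{k=1}^m\big(\bigwedge_{j\in Q_k}x_j\wedge\bigwedge_{j\notin Q_k}\lnot x_j\big)$, of size $\mathcal{O}(d\cdot m)$. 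For $\Street{(E_1,F_1),\dots,(E_m,F_m)}$, introduce $2m$ variables: $x_i$ with target $E_i$ and $x_i'$ with target $F_i$, and set $\phi=\bigwedge_{i=1}^m(x_i\vee\lnot x_i')$; dually for Rabin, $\phi=\bigvee_{i=1}^m(\lnot x_i\wedge x_i')$; each of size $\mathcal{O}(m)$. In every case the translation of target sets is computed in polynomial time by a single pass over $V$ (for parity/Muller, one bucket per priority value), and writing down $\phi$ takes time proportional to its size.

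I do not expect a genuine obstacle here: the statement is a standard bookkeeping result, and the only thing to be careful about is consistency of conventions. Two minor points deserve a sentence of justification. First, one must argue that the valuation-to-objective equivalence is genuinely an equivalence of \emph{sets of plays} and not merely ``pointwise'' — but since both sides are prefix-independent and determined by $\infOcc{\rho}$, and every $\infOcc{\rho}$ of a play is a nonempty subset of $V$ closed under the reachability relation inside itself, no play is excluded and both descriptions pick out the same subset of $\Plays$. Second, for parity one should note the paper's standing assumption that $d=\max(c)$ is even, so the disjunction over even $j\le d$ is well-formed and covers the case where priority $d$ itself is the minimum seen infinitely often. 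With these remarks in place the proof reduces to exhibiting the four formulas above and reading off the sizes, which I would present directly.
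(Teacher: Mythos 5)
Your proposal is correct and takes essentially the same route as the paper: the B\"uchi/co-B\"uchi, Muller, Streett and Rabin encodings coincide with the paper's, and your parity formula differs only in negating \emph{all} lower-indexed variables rather than just the odd ones, an equivalent formula of the same $\mathcal{O}(d^2)$ size. The preliminary observation that it suffices to check the equivalence at the level of $\infOcc{\rho}$ is a harmless addition the paper leaves implicit.
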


\begin{proof} For each objective we provide the formula $\phi$ of the corresponding Boolean B\"uchi objective.
    \begin{enumerate}
        \item Let $\Buchi{B}$ be a B\"uchi objective for the set $B$. We define the Boolean B\"uchi objective $\BooleanBuchi{\phi, T}$ such that $T = B$ and $\phi = x$. %It is direct to see that the plays which satisfy $\Buchi{B}$ also satisfy $\BooleanBuchi{\phi, T}$ and that the converse also holds. 
        In case of a co-B\"uchi objective, we take $T = B$ and $\phi = \neg x$. 
        
        \item Let $\parity{c}$ be a parity objective. We create the formula $\phi$ over the set of variables $\{x_0, x_1, \dots, x_d\}$ such that $\phi = x_0 \lor (x_2 \land \neg x_1) \lor (x_4 \land \neg x_3 \land \neg x_1) \lor \dots \lor (x_d \land \neg x_{d-1} \land \neg x_{d - 3} \land \dots \land \neg x_1)$ and for $j \in \{0, \dots, d\}$, we define the set corresponding to variable $x_j$ as $T_j = \{ v \in V \mid c(v) = j \}$. This Boolean B\"uchi objective explicitly lists the sets of priorities seen infinitely often which satisfy the parity objective.
        
        \item Let $\Muller{c, Q}$ be a Muller objective such that $Q = \{Q_1,\ldots, Q_m\}$. We create the formula $\phi$ over variables $\{x_0, x_1, \dots, x_d\}$ such that $\phi = (z^1_0 \land z^1_1 \land \dots \land z^1_d) \lor \dots \lor (z^m_0 \land z^m_1 \land \dots \land z^m_d)$ with $z^i_j = x_j$ if $j \in Q_i$ and $z^i_j = \neg x_j$ otherwise. The set corresponding to variable $x_j$ is $T_j = \{ v \in V \mid c(v) = j \}$. This Boolean B\"uchi objective indicates for each $Q_i$ what are the priorities which are visited infinitely often and those which are not.
        
        \item Let $\Street{(E_1, F_1), \dots, (E_m, F_m)}$ be a Streett objective. We create the formula $\phi$ over the set of variables $\{x_1, x'_1, \dots, x_m, x'_m\}$ such that set corresponding to variable $x_j$ (resp. $x'_j$) is $T_j = E_j$ (resp. $T'_j = F_j$). We define $\phi = (x_1 \lor \neg x'_1) \land \dots \land (x_m \lor \neg x'_m)$. 
        This Boolean B\"uchi objective is easily adapted in case of a Rabin objective which is the complement of a Streett objective.
    \end{enumerate}
\end{proof}

It follows from Proposition~\ref{prop_bb_encoding} that any \gameAb{} with prefix-independent objectives can be translated into a Boolean B\"uchi \gameAb{} on the same arena with polynomial time complexities depending on the actual objectives. 

We now recall the following relationship between parity objectives and some of the prefix-independent objectives studied in this paper. We start with the following proposition on the classical translation of parity objectives into Streett or Rabin objectives (see e.g. \cite{DBLP:reference/mc/BloemCJ18, DBLP:reference/mc/Kupferman18}).

\begin{proposition}
    \label{prop:parity_into_streett_rabin}
    A parity objective $\parity{c}$ with $d = \max(c)$ can be encoded into an equivalent Rabin (resp. Streett) objective with $d/2 + 1$ pairs.
\end{proposition}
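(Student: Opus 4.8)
The plan is to encode a parity objective $\parity{c}$ with $d = \max(c)$ directly as a Rabin (resp.\ Streett) condition by reading off, from the structure of the parity winning condition, which priorities being seen infinitely often make the play winning. First I would recall that a play $\rho$ satisfies $\parity{c}$ iff the \emph{minimum} priority in $\infOcc{c(\rho)}$ is even, i.e., iff there exists an even priority $2k$ with $0 \le 2k \le d$ such that $2k$ is visited infinitely often while every smaller priority $0, 1, \dots, 2k-1$ is visited only finitely often. This is naturally a disjunction over the $d/2 + 1$ even priorities, which is exactly the shape of a Rabin condition.

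Concretely, for each $k \in \{0, 1, \dots, d/2\}$ define the Rabin pair $(E_k, F_k)$ by $F_k = \{ v \in V \mid c(v) = 2k \}$ (the set that must be seen infinitely often) and $E_k = \{ v \in V \mid c(v) < 2k \}$ (the set that must be seen only finitely often). Then $\rho \in \Rabin{(E_0,F_0), \dots, (E_{d/2}, F_{d/2})}$ iff there is some $k$ with $\infOcc{\rho} \cap F_k \neq \emptyset$ and $\infOcc{\rho} \cap E_k = \emptyset$, which by the characterization above is exactly $\rho \in \parity{c}$. This uses $d/2 + 1$ pairs as required, and the construction is on the same arena. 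For the Streett case, I would simply observe that the Streett objective is the complement (within $\Plays$) of a Rabin objective over the same pairs, and that the parity objective for priority function $c$ is complemented by the parity objective for $c' = c + 1$ (shifting priorities, which is handled cleanly because we allow $\max(c)$ to be even and can pad); alternatively one directly dualizes the description above, writing $\parity{c}$ as the Streett condition expressing ``for every odd priority $2k-1$ seen infinitely often, some priority strictly below it is also seen infinitely often.'' Either route gives $d/2 + 1$ pairs.

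The verification that the two objectives define the same set of plays is routine: it is just unfolding the definitions of $\min(\infOcc{c(\rho)})$, of the Rabin acceptance condition, and of $\infOcc{\rho}$, using that $F_k$ and $E_k$ partition according to the priority value. The only mild subtlety — and the part I would be most careful about — is the Streett direction: one must make sure the complementation of a parity condition stays within the class of parity conditions with an even maximal priority (the paper's standing assumption that $d$ is even), which is why one shifts priorities by one and, if needed, adds a dummy top priority so the maximum is again even; this does not change the number of pairs $d/2 + 1$. Apart from that bookkeeping, there is no real obstacle: the result is the standard parity-to-Rabin/Streett translation, and I would present it as such.
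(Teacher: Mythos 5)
Your proof is correct and follows essentially the same route as the paper: the paper's Rabin pairs are the cumulative sets $E_j=\{v \mid c(v)\le 2j-3\}$, $F_j=\{v\mid c(v)\le 2j-2\}$, which accept exactly when the minimal infinitely occurring priority equals $2j-2$, so your pairs with $F_k=\{v\mid c(v)=2k\}$ and $E_k=\{v\mid c(v)<2k\}$ define the same set of plays, and the Streett case is likewise handled by dualization (your direct Streett formulation over odd priorities is the cleaner of your two routes and uses at most $d/2+1$ pairs after padding with a trivial pair).
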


\begin{proof}
Let $d' = d/2$ and let us first provide the encoding into a Rabin objective such that $\parity{c} = \Rabin{(E_1, F_1), \dots, (E_{d' + 1}, F_{d' + 1}})$. We construct the chain $E_1 \subsetneq F_1 \subsetneq \dots \subsetneq E_{d' + 1} \subsetneq F_{d' + 1}$ of $d' + 1$ Rabin pairs as follows: let $E_1 = \emptyset$, $F_1 = \{v \mid c(v) = 0\}$ and for all $2 \leq j \leq d' + 1$, let $E_j = F_{j-1} \cup \{v \mid c(v) = 2 \cdot j -3\}$ and $F_j = E_{j} \cup \{v \mid c(v) = 2 \cdot j - 2\}$. An encoding into a Streett objective is done similarly as the complement of a Rabin objective.
\end{proof}

It follows from Proposition~\ref{prop:parity_into_streett_rabin} that any parity \gameAb{} can be translated into a Streett or Rabin \gameAb{} on the same arena in polynomial time.

A polynomial time translation of a parity \gameAb{} into a Muller \gameAb{} requires to play on a modified arena $G'$. The next proposition indicates that the size of $G'$ is polynomial in the size of $G$ and in the maximal priority $d$ among the priorities used in each parity objective of $\mathcal{G}$, and that the number of sets in any Muller objective of $\mathcal{G'}$ is polynomial in this maximal priority~$d$. 

\begin{proposition}
    \label{prop:parity_to_muller}
    Any parity \gameAb{} $\mathcal{G} = (G, \ObjPlayer{0},\ObjPlayer{1}, \dots, \ObjPlayer{\nbrObjectives})$ with each $\ObjPlayer{i} = \parity{c_i}$ can be transformed into a Muller \gameAb{} $\mathcal{G'} = (G', \ObjPlayer{0}',\ObjPlayer{1}', \dots, \ObjPlayer{\nbrObjectives}')$ such that Player~$0$ has a solution to the \problemAb{} in $\mathcal{G}$ if and only if he has one in $\mathcal{G'}$. In addition, with $d = \max_{i \in \{0, \dots, \nbrObjectives\}}(\max(c_i))$, $|V'|$ is in $\mathcal{O}(|V| + |E| \cdot d)$ and for each objective $\Omega'_i = \Muller{c'_i, Q_i}$, $|Q_i|$ is in $\mathcal{O}(\max(c_i))$.
\end{proposition}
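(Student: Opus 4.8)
The plan is to transform each parity objective $\ObjPlayer{i} = \parity{c_i}$ into a Muller objective, but the obstacle is that a parity objective over priorities $\{0,\dots,d\}$ does not directly translate to a Muller objective \emph{on the same arena} with a polynomial-size set $Q_i$: the number of subsets of $\{0,\dots,d\}$ whose minimum is even is exponential in $d$. The standard fix is to record, in the arena, the minimal priority seen since the last ``reset,'' so that on the modified arena the relevant information about which priority is minimal-infinitely-often is exposed as a vertex label, and then $Q_i$ only needs to list $\mathcal{O}(d)$ good situations. Concretely, I would build $G'$ by taking the Latest Appearance Record restricted to priorities, or more economically a single counter: augment $V$ with a component in $\{0,\dots,d\}$ on edges, tracking the minimum priority encountered since the counter was last reset, and reset the counter each time the current minimum is attained or improved. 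Then $\infOcc{\cdot}$ of a play in $G'$ determines $\min(\infOcc{c_i(\rho)})$ for the projection $\rho$ in $G$, and the winning condition ``this minimum is even'' becomes a Muller condition with $\mathcal{O}(\max(c_i))$ accepting sets.

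First I would define $G'$ precisely: take $V' \subseteq V \times \{0,\dots,d\} \times \{\bot,\top\}$ (or some similarly small gadget), with a projection $\pi\colon V' \to V$, where the extra components implement the ``min-since-reset'' bookkeeping for the single priority function $c = $ a global priority function subsuming all the $c_i$ (or do the bookkeeping once with the finest information, since $d = \max_i \max(c_i)$ bounds everything). The edges of $G'$ mirror the edges of $G$ and deterministically update the extra components, so $(V',E')$ has $|V'| \in \mathcal{O}(|V| + |E|\cdot d)$ as claimed (the edges carry the counter) and there is a bijection between plays/histories of $G$ and plays/histories of $G'$ via $\pi$, preserving the controlling player, hence a bijection between strategies and between strategy profiles. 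Second, I would define for each $i$ the new priority function $c'_i$ on $V'$ and the set $Q_i$ of $\mathcal{O}(\max(c_i))$ subsets so that a play $\rho'$ in $G'$ satisfies $\Muller{c'_i,Q_i}$ if and only if $\pi(\rho')$ satisfies $\parity{c_i}$; this is the routine but careful part, using the invariant that the counter returns infinitely often to (and only to) the minimal priority seen infinitely often along $\rho = \pi(\rho')$.

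Third, I would transfer the \problemAb{} across the bijection. Since $\pi$ induces a bijection $\sigma_0 \mapsto \sigma_0'$ between Player~$0$ strategies with $\outcome{(\sigma_0',\sigma_1')} = \pi^{-1}(\outcome{(\sigma_0,\sigma_1)})$ and, by the equivalences above, $\payoff{\cdot}$ and $\won{\cdot}$ are preserved along $\pi$, the Pareto-optimal payoffs coincide: $\paretoSet{\sigma_0'} = \paretoSet{\sigma_0}$, and $\sigma_0$ is a solution in $\mathcal{G}$ if and only if $\sigma_0'$ is a solution in $\mathcal{G'}$. Finally I would check the size bounds: $|V'| \in \mathcal{O}(|V| + |E|\cdot d)$ by construction, and $|Q_i| \in \mathcal{O}(\max(c_i))$ since $Q_i$ enumerates only the $\mathcal{O}(\max(c_i)/2)$ relevant ``even minimal priority'' witnesses (expressed over the counter component). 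The main obstacle I anticipate is getting the counter gadget exactly right so that it is both small (linear in $d$, charged to edges) and faithful — in particular ensuring that the reset discipline makes the minimal priority seen infinitely often the unique priority to which the counter returns infinitely often, which is what makes the Muller condition have only polynomially many accepting sets; this is exactly the place where a naive product with $2^{\{0,\dots,d\}}$ would blow up and must be avoided.
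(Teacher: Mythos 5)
There is a genuine gap, and it sits exactly where you flagged your ``main obstacle'': the counter gadget. Your plan relies on the invariant that a min-since-reset counter ``returns infinitely often to (and only to) the minimal priority seen infinitely often''. With any reasonable reset discipline this fails in the ``only to'' direction: if $p^*=\min(\infOcc{c_i(\rho)})$, the counter does return to $p^*$ infinitely often, but it can also record and reset at larger values infinitely often (e.g.\ whenever some $q>p^*$ is read twice without a smaller priority in between), so the set of counter values (or reset events) occurring infinitely often is \emph{not} a function of $p^*$ alone. This matters because the Muller objective in this paper is the explicit-list format $\Muller{c'_i,Q_i}$: $Q_i$ must enumerate every accepted set $\infOcc{c'_i(\rho')}$. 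With your product construction the realizable infinitely-occurring sets range over exponentially many subsets of $\{0,\dots,d\}$ whose minimum is even, so no labelling of the counter states yields $|Q_i|\in\mathcal{O}(\max(c_i))$; you would be back to an exponential $Q_i$, defeating the purpose of the proposition (which feeds the $|Q_i|$-dependent bounds of Table~\ref{table:summaryBooleanBuchiTable} and Table~\ref{table:fpt-summary}). The strategy-transfer step and the $|V'|$ bound in your plan are fine; it is only this step that breaks.

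The paper avoids the counter altogether: it replaces every edge $(v_i,v_j)$ by a path of $d$ fresh vertices whose priorities are $\min(c(v_i)+k,d)$ for $k=1,\dots,d$, all owned by the owner of $v_i$. This forces the set of priorities seen infinitely often in $G'$ to always be a suffix interval $\{p_{\min},\dots,d\}$ with the same minimum as in $G$, so there are only $d+1$ realizable sets and $Q_i$ consists of the $\max(c_i)/2+1$ even-bottomed ones; the play bijection then transfers solutions to the \problemAb{} in both directions, with $|V'|\in\mathcal{O}(|V|+|E|\cdot d)$. If you try to repair your counter so that its infinitely-visited states are determined by $p^*$ alone, you will find you need to emit, after each read priority $p$, a chain of states of priorities $p+1,\dots,d$ --- which is precisely the paper's edge-padding construction. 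So the approaches are genuinely different in intent, but only the padding one delivers the polynomial $|Q_i|$ bound.
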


\begin{proof}
    Let $\mathcal{G} = (G, \ObjPlayer{0},\ObjPlayer{1}, \dots, \ObjPlayer{\nbrObjectives})$ be a parity \gameAb{} such that $\ObjPlayer{i} = \parity{c_i}$ for all $i$, and let $d = \max_{i \in \{0, \dots, \nbrObjectives\}}(\max(c_i))$. Let us construct the desired Muller \gameAb{} $\mathcal{G'} = (G', \ObjPlayer{0}',\ObjPlayer{1}', \dots, \ObjPlayer{\nbrObjectives}')$. The main difficulty is to obtain Muller objectives $\ObjPlayer{i}' = (c'_i,Q_i)$ such that $Q_i$ contains a polynomial number of sets.
    
    The vertices of $G'$ consist of those of $G$ with additional vertices for each edge of $E$. Let us consider an edge $(v_i, v_j) \in E$, this edge is replaced in $G'$ by a sequence of vertices $v_i \ v_{i,j}^1 \dots v_{i,j}^d \ v_j$ such that $(v_i, v_{i,j}^1)$, $(v_{i,j}^k, v_{i,j}^{k+1})$ for $k \in \{1, \dots, d - 1\}$ and $(v_{i,j}^d, v_j)$ are in $E'$. Vertices $v_{i,j}^k$ with $k \in \{1, \dots, d\}$ belong to the same player as $v_i$. The priority function $c'$ of the Muller objective $\Omega' = (c',Q)$ corresponding to the parity objective $\Omega$ in $\mathcal{G}$ is defined as follows for this new sequence of vertices. First, the priority of $v_i$ and $v_j$ remain unchanged, that is $c'(v_i) = c(v_i)$ and  $c'(v_j) = c(v_j)$. Second, the priority of vertex $v_{i,j}^k$ is $c'(v_{i,j}^k) = \min(c(v_i) + k, d)$. Hence we observe the sequence of priorities $c(v_i), c(v_i)+1, c(v_i)+2, \ldots, d, \ldots, d, c(v_j)$ along the path replacing the edge $(v_i,v_j)$ (once priority $d$ is reached, subsequent vertices in this path keep that priority).
    The goal of this construction is to make it so that each time a vertex of priority $p$ is encountered in a play, a vertex for every larger priority $p + 1$ to $d$ is also encountered.
    
    This construction only impacts the sets of priorities seen infinitely often in a play but not the smallest of such priorities, allowing us to define each Muller objective $(c',Q)$ using a polynomial (instead of exponential) number of sets in $Q$. Indeed $Q$ is defined such that $Q = \{\{0, 1, \dots, d\}\, \{2, 3, \dots, d\}, \dots, \{\max(c),\dots, d\}\}$ (recall that $\max(c)$ is assumed to be even). Notice that $|Q| = \max(c)/2 + 1$.
    
    Let us show that Player~$0$ has a solution to the \problemAb{} in $\mathcal{G}$ if and only if he has one in $\mathcal{G'}$. It can easily be shown that, with this construction, there exists a play $\rho = v_0 v_1 v_2 \dots \in \Plays_{G}$ if and only if there exists a play $\rho' = v_0 v_{0,1}^1 \dots v_{0,1}^d v_1 v_{1,2}^1 \dots v_{1,2}^d v_2 \dots \in \Plays_{G'}$. Recall that each vertex $v_{i,j}^k \in V'$ belongs to the same player as vertex $v_i \in V$. Let us show that $\payoff{\rho} = \payoff{\rho'}$ and $\won{\rho} = \won{\rho'}$. To do so, we prove that a parity objective $\Omega$ of $\mathcal{G}$ is satisfied in $\rho$ if and only if the corresponding Muller objective $\Omega'$ of $\mathcal{G'}$ is satisfied in $\rho'$. Let $\infOcc{c(\rho)}$ be the set of priorities occurring infinitely often in $\rho$. By construction, it holds that the corresponding set for $\rho'$ is $\infOcc{c(\rho')} = \{ c'(v_i), c'(v_{i,i+1}^1), \dots, c'(v_{i,i+1}^d) \mid v_i \in \infOcc{\rho} \}$. By construction, the minimum priority in $\infOcc{c(\rho')}$ is the same as in $\infOcc{c(\rho)}$ since $c'(v_i) = c(v_i)$ and the priority of vertices $v_{i,i+1}^k$ is larger than that of $v_i$ by construction. Let us assume that $\Omega$ is satisfied in $\rho$, it follows that the set $\infOcc{c(\rho')}$ contains every priority from the minimal priority occurring in $\infOcc{c(\rho)}$, which is even, to priority $d$. Therefore $\infOcc{c(\rho')} \in Q$ and the Muller objective is satisfied in $\rho'$. Let us now assume that $\Omega$ is not satisfied in $\rho$, it follows that the minimum priority occurring in $\infOcc{c(\rho)}$, and therefore in $\infOcc{c(\rho')}$, is odd. Since no set in $Q$ is such that its minimum priority is odd, it follows that the Muller objective is not satisfied in $\rho'$. Using this result, we can show that a strategy $\sigma_0$ that is solution to the \problemAb{} in $\mathcal{G}$ can be transformed into a strategy $\sigma'_0$ which is a solution in $\mathcal{G'}$ and vice-versa.
\end{proof}

\subsection{An Alternative Approach to Solving Boolean B\"uchi Zero-Sum Games}
\label{subsec:AlternativeFPTBooleanBuchi}

We now consider Boolean B\"uchi zero-sum games and provide an alternative algorithm to solving them, improving on the fixed-parameter complexity of the algorithm from \cite{BruyereHR18}. The way we prove this result, rather than the result in itself (which is interesting in its own right), is used in the next section to show the fixed-parameter complexity of solving the \problemAb{} for \gamesAb{} with prefix-independent objectives. We refer the reader to~\cite{downey2012parameterized} for the concept of fixed-parameter complexity (\FPT). 

\begin{theorem} 
\label{thm:bbnewcomp}
Solving Boolean B\"uchi zero-sum games $(G,\BooleanBuchi{\phi, \target_1, \dots, \target_ m})$ is in \FPT{} for parameters $m$ and $|\phi|$. The algorithm complexity in the parameters is exponential in $m$ and linear in $|\phi|$.
\end{theorem}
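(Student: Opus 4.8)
The plan is to confine the dependence on the formula $\phi$ to a cheap preprocessing step, and then to solve the remaining game as a Muller-type game over only $m$ colours. First I would compute the family $\mathcal{W} = \{ S \subseteq \{1, \dots, m\} \mid \phi \text{ holds under the valuation } x_i = 1 \Leftrightarrow i \in S \} \subseteq 2^{\{1, \dots, m\}}$, by evaluating $\phi$ on each of the $2^m$ valuations at cost $\mathcal{O}(|\phi|)$ per valuation; this takes time $\mathcal{O}(2^m \cdot |\phi|)$, and it is the only place where $|\phi|$ ever enters, which accounts for the linear dependence on $|\phi|$. After this step, by definition of a Boolean B\"uchi objective, a play $\rho$ satisfies $\BooleanBuchi{\phi, \target_1, \dots, \target_m}$ if and only if $\tau(\rho) := \{ i \mid \infOcc{\rho} \cap \target_i \neq \emptyset \}$ belongs to $\mathcal{W}$. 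Colouring each vertex $v$ by $\mathrm{col}(v) = \{ i \mid v \in \target_i \} \subseteq \{1, \dots, m\}$, we have $\tau(\rho) = \bigcup_{v \in \infOcc{\rho}} \mathrm{col}(v)$, so the winning condition now depends only on the union of the colours seen infinitely often along $\rho$, through its membership in $\mathcal{W}$.

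This is essentially an explicit Muller condition over the colour set $\{1, \dots, m\}$. A convenient way to continue is to turn it into a genuine explicit Muller game: one inserts at each vertex $v$ a short gadget (controlled by the same player as $v$) that deterministically cycles through the individual colours of $\mathrm{col}(v)$, using a fresh dummy colour for vertices of empty colour, before moving on; this multiplies the arena size by a factor $\mathcal{O}(m)$ only, and the set of colours seen infinitely often along a play now equals (up to the dummy colour) the value $\tau(\cdot)$ of the corresponding play of the original arena, so a play satisfies the new Muller objective iff the set of colours seen infinitely often lies in $\mathcal{W}$. I would then solve this Muller game by the standard recursive procedure in the style of Zielonka and McNaughton: attractor computations in the arena, which are polynomial in $|V|$ and $|E|$, are interleaved with a recursion steered by the structure of $\mathcal{W} \subseteq 2^{\{1, \dots, m\}}$, namely by its Zielonka tree, whose nodes form strictly decreasing chains of sub-alphabets of $\{1, \dots, m\}$; hence the number of recursive calls depends only on $m$, and exponentially so. Together with the $\mathcal{O}(2^m \cdot |\phi|)$ preprocessing this yields a running time of the form $g(m) \cdot \mathrm{poly}(|V|, |E|) + \mathcal{O}(2^m \cdot |\phi|)$ with $g$ exponential in $m$, which is the announced \FPT{} bound.

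The delicate point, and the main obstacle, is the complexity analysis of this last step: one has to ensure that the exponential blow-up is confined to a function of $m$ alone, so that the algorithm is genuinely \FPT{} and not merely polynomial for each fixed value of $m$, and that the dependence on $m$ can be kept singly exponential, $2^{\mathcal{O}(m)}$, rather than the $m^{\mathcal{O}(m)}$ that a naive traversal of the full Zielonka tree would suggest. The natural route is to observe that every winning family occurring in the recursion is a restriction $\mathcal{W} \cap 2^{C}$ of $\mathcal{W}$ to a sub-alphabet $C \subseteq \{1, \dots, m\}$, of which there are only $2^m$, and to organise and memoise the recursion around these. This careful bookkeeping, specific to the Boolean B\"uchi presentation, is what yields the improvement over the algorithm of~\cite{BruyereHR18}, and the same ideas are reused in the next section to obtain fixed-parameter tractability of the \problemAb{} for \gamesAb{} with prefix-independent objectives.
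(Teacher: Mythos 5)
Your preprocessing step (tabulating $\mathcal{W} = \{S \subseteq \{1,\dots,m\} \mid \phi \text{ is satisfied when } x_i = 1 \Leftrightarrow i \in S\}$ in time $\mathcal{O}(2^m\cdot|\phi|)$) is sound and plays the same role as the evaluation of $sat(\phi,s,r)$ in the paper, and the reformulation of the objective as ``the union of colours seen infinitely often belongs to $\mathcal{W}$'' is correct. The gap is in the last step. You solve the resulting Muller-type game with the McNaughton--Zielonka recursion and assert that ``the number of recursive calls depends only on $m$.'' That is not what the standard analysis gives: at each level of the recursion the attractor-removal loop may have to be repeated up to $|V|$ times before the winning region stabilises, so the number of recursive calls is bounded by roughly $|V|^{\text{depth of the Zielonka tree}}$, i.e. $|V|^{\mathcal{O}(m)}$ in general. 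This is an XP bound, not an \FPT{} one (exactly as the classical Zielonka algorithm for parity games is $|V|^{\mathcal{O}(d)}$ and is not \FPT{} in $d$). Your proposed repair --- memoising over the restrictions $\mathcal{W}\cap 2^{C}$, of which there are only $2^m$ --- does not close the gap, because the recursive calls differ not only in the sub-alphabet $C$ but also in the sub-arena obtained after removing attractors; two calls with the same $\mathcal{W}\cap 2^{C}$ but different vertex sets cannot be merged, so the $|V|$-dependent branching survives.

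What is missing is precisely the ingredient the paper uses: a reduction whose blow-up is a function of $m$ alone \emph{followed by an algorithm that is \FPT{} in the number of priorities}. The paper builds a deterministic set-appearance-record automaton with $m!\cdot(m+1)$ states, takes its product with $G$, turns $\BooleanBuchi{\phi, \target_1, \dots, \target_m}$ into a parity objective with at most $2(m+1)$ priorities on the product (Proposition~\ref{prop:bb_correctness}), and then invokes the quasi-polynomial parity-game algorithm of~\cite{CaludeJKL020} (Theorem~\ref{thm:calude}), which is \FPT{} in the number of priorities; the formula $\phi$ only enters through the evaluation of $sat(\phi,s,r)$ at each product vertex, giving the linear dependence on $|\phi|$. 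Your colour-gadget construction could be salvaged along similar lines (e.g. by converting the condition $\mathcal{W}$ into a parity game via a Zielonka-tree or appearance-record transformation of size depending only on $m$ and then calling an \FPT{} parity solver), but with the plain recursive Muller algorithm as written the claimed bound $g(m)\cdot\mathrm{poly}(|V|,|E|)$ is unsubstantiated, and the singly-exponential-in-$m$ claim is even further out of reach.
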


The \FPT{} algorithm proposed in~\cite{BruyereHR18} is polynomial in $|V|$, linear in the number $\ell$ of symbols $\vee, \wedge$ in $\phi$, and double exponential in $m$. 
Notice that if we assume that $\phi$ contains no occurrence of two consecutive symbols $\neg$, then $|\phi|$ is linear in $\ell$. The complexity of our algorithm thus improves on the complexity of the algorithm from \cite{BruyereHR18} (from double to single exponential in $m$).

We now detail our algorithm, starting with an adequate structure to track the sets of a Boolean B\"uchi objective which are visited infinitely often in a play.

\paragraph{Set Appearance Record.} Latest appearance records (LAR), which are structures that keep track of the most recently occurred vertices in a play, are often used in the literature. Modifications of these structures can be used to keep track of other values, such as the latest priorities occurring in a play \cite{DBLP:conf/atva/RenkinDP20}. We provide a modification on the LAR structure, called \emph{set appearance record} (SAR), which, given a series of sets of vertices, keeps track of those most recently visited. A set is \emph{visited} if a vertex which belongs to that set is visited in the play. Given $m$ sets of vertices $T_1, \dots, T_m$, we write $\mathcal{P}(T_1, \dots, T_m)$ the set of their permutations which we denote by a word of length $m$ over the alphabet $\{T_1, \dots, T_m\}$ such that each letter appears exactly once in that word. We define a deterministic finite automaton $\textsf{SAR}= (\mathcal{P}(T_1, \dots, T_m) \times \{1, \dots, m +1\}, (s_0, {r}_0), \delta)$ over the alphabet $V$ such that $s_0 = T_1, \dots, T_m$, ${r}_0 = 1$ for the initial state $(s_0,r_0)$ and for each state $(s,r)$ and each symbol $v \in V$, we define $\delta((s, {r}), v) = (s',r')$ such that $s' = (T_{i_1}, \dots, T_{i_{j-1}}, T_{i_{j}}, \dots, T_{i_m})$ and $r' = i_j$ where sets $T_{i_1}$ to $T_{i_{j-1}}$ don't contain vertex $v$ and remain in the same order as in $s$ and sets $T_{i_{j}}$ to $T_{i_m}$ contain $v$ and are also kept in the same order as in $s$. If $v$ does not belong to any set in $T_1, \dots, T_m$, $s'$ remains equal to $s$ and we set $r' = m + 1$. Notice that $r'$, called the \emph{hit}, corresponds to the position of the leftmost set in $s$ containing $v$. Let $\rho = v_0 v_1 \dots$ be a play in $G$. We can consider the corresponding \textsf{SAR} execution on $\rho$ which is $\textsf{SAR}(\rho) = (s_0, {r}_0) (s_1, {r}_1) \dots$ such that $(s_j, {r}_j) = \delta((s_{j-1}, r_{j-1}), v_{j-1})$. Let ${r}_{min}$ be the smallest hit appearing infinitely often in $\textsf{SAR}(\rho)$. Then, the sets appearing after index ${r}_{min}$ in $s$ along $\textsf{SAR}(\rho)$ are always the same from some point on and equal to the sets in $T_1, \dots, T_m$ visited infinitely often in $\rho$, that is the sets $T_i$ such that $T_i \cap \infOcc{\rho} \neq \emptyset$. Given a value ${r}$ of the hit, we write $s_{\geq {r}}$ the sets which appear after index ${r}$ in $s$. Notice that, if from some point on the play does not visit any vertex $v$ such that $v \in T_i$ for some set $T_i$, then $r_{min} = m + 1$ and  $s_{\geq {r_{min}}} = \emptyset$. Those arguments are summarized in the next lemma.

\begin{lemma}
\label{lem:correspondance}
Let $\rho$ be a play and $\textsf{SAR}(\rho)$ be its corresponding \textsf{SAR} execution. Let ${r}_{min}$ be the smallest hit appearing infinitely often in $\textsf{SAR}(\rho)$. Then the sets in $T_1, \dots, T_m$ visited infinitely often in $\rho$ are exactly the sets appearing after index ${r}_{min}$ in $s$ along $\textsf{SAR}(\rho)$. 
\qed\end{lemma}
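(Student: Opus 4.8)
The plan is to run the standard ``appearance record eventually stabilizes'' argument, adapted to the fact that $\textsf{SAR}$ tracks \emph{sets} rather than vertices or priorities. Write $\mathcal{I} = \{T_i \mid T_i \cap \infOcc{\rho} \neq \emptyset\}$ for the family of sets visited infinitely often and put $k = |\mathcal{I}|$. The aim is to show that from some point on the word $s$ along $\textsf{SAR}(\rho)$ carries the $k$ sets of $\mathcal{I}$ exactly at its last $k$ positions, and that ${r}_{min} = m-k+1$; the statement then follows, recalling the convention that $s_{\geq {r}}$ denotes the sets at positions $\geq {r}$.

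First I would fix a time $N$ after which $\rho$ visits only vertices of $\infOcc{\rho}$; this exists because $V$ is finite. Since every $T_i \notin \mathcal{I}$ is disjoint from $\infOcc{\rho}$, after $N$ no vertex of a non-$\mathcal{I}$ set is ever read, so on each transition $\delta((s,{r}),v)$ after $N$ the non-$\mathcal{I}$ sets all lie in the block of sets not containing $v$; in particular their mutual order is frozen after $N$ and their positions can only decrease. A short case analysis on $\delta((s,{r}),v)$ then shows that, for $T_i \notin \mathcal{I}$ and $T_j \in \mathcal{I}$, the relative order of $T_i$ and $T_j$ in $s$ can never revert from ``$T_i$ before $T_j$'' to ``$T_j$ before $T_i$'' after $N$ (that would need $v \in T_i$), and that it becomes ``$T_i$ before $T_j$'' whenever a vertex of $T_j \setminus T_i$ is read. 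Such a vertex lies in $\infOcc{\rho}$: $T_j$ has an infinitely-visited vertex and $T_i$ has none, so some infinitely-visited vertex of $T_j$ is outside $T_i$; hence it is read infinitely often, and so there is a time after which $T_i$ permanently precedes $T_j$. Letting $N' \geq N$ be above all these finitely many times (one per pair), after $N'$ every non-$\mathcal{I}$ set precedes every $\mathcal{I}$ set; as there are $m-k$ of the former and $k$ of the latter, the non-$\mathcal{I}$ sets occupy positions $1, \dots, m-k$ in a now fixed order and the sets of $\mathcal{I}$ occupy positions $m-k+1, \dots, m$. This already gives the ``always the same from some point on'' part: for every index past $N'$, $s_{\geq m-k+1}$ equals $\mathcal{I}$.

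It remains to pin down ${r}_{min}$. For the lower bound, after $N'$ the read vertex $v$ is in $\infOcc{\rho}$, so every set containing it meets $\infOcc{\rho}$ and thus sits at a position $\geq m-k+1$; hence every hit after $N'$ is $\geq m-k+1$, so ${r}_{min} \geq m-k+1$. For the upper bound, assume $k \geq 1$ and follow the set at position $m-k+1$, which after $N'$ is always in $\mathcal{I}$: if from some point on it were never the set containing the current vertex, then it would be a fixed set never read again (nothing in front of it ever moves, being the frozen non-$\mathcal{I}$ block, and it itself would never move), contradicting that every set of $\mathcal{I}$ is read infinitely often. So infinitely often the current vertex $v$ lies in the set at position $m-k+1$, and at each such instant after $N'$ the leftmost set containing $v$ is exactly at position $m-k+1$ (positions $1, \dots, m-k$ hold non-$\mathcal{I}$ sets, which cannot contain $v$); the hit is then $m-k+1$. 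Hence ${r}_{min} \leq m-k+1$, so ${r}_{min} = m-k+1$ and $s_{\geq {r}_{min}}$ equals $\mathcal{I}$ from $N'$ on. The degenerate case $k = 0$ is immediate: after $N$ no set is ever visited, so every hit is $m+1$, whence ${r}_{min} = m+1$ and $s_{\geq {r}_{min}} = \emptyset = \mathcal{I}$.

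The only genuinely delicate point is the bookkeeping of the second paragraph: verifying that the relative order of a non-$\mathcal{I}$ set and an $\mathcal{I}$ set can only ``improve'' after $N$ and never regress, and that each such improvement is eventually triggered. Once the word has settled into the shape ``non-$\mathcal{I}$ block followed by $\mathcal{I}$ block'', both the stabilization of $s_{\geq {r}_{min}}$ and the value ${r}_{min} = m-k+1$ fall out by direct inspection, so I expect this to be the main---and essentially only---obstacle.
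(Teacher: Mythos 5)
Your proof is correct and carries out in full the standard appearance-record stabilization argument that the paper itself only sketches in the paragraph preceding the lemma, which is stated there without a detailed proof. Your reading of $s_{\geq r}$ as the sets at positions $\geq r$ (inclusive of position $r$, despite the wording "after index $r$") is indeed the intended one, and your extra identification of $r_{min} = m-k+1$ is a harmless strengthening that immediately yields the claimed equality with the sets visited infinitely often.
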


\paragraph{Extending the Arena with the \textsf{SAR}} Let us now consider the game arena $G'$ obtained by extending the arena $G$ of the Boolean B\"uchi zero-sum game $(G, \BooleanBuchi{\phi, T_1, \dots, T_m})$ with the \textsf{SAR} for the sets $T_1, \dots, T_m$. We define  $G' = (V', V'_0, V'_1, E', v'_0)$ with
\begin{itemize}
    \item $V' = V \times \mathcal{P}(T_1, \dots, T_m)  \times \{1, \dots, m + 1 \}$;
    \item $V'_i = V_i \times \mathcal{P}(T_1, \dots, T_m)  \times \{1, \dots, m + 1\}$;
    \item $((v, s, {r}), (v', s', {r}')) \in E'$ if and only if $(v, v') \in E$ and $(s', {r}') = \delta((s,{r}), v)$;
    \item $v'_0 = (v_0, s_0, {r}_0)$.
\end{itemize}
Given a play $\rho'$ (resp. history $h'$) in $G'$, we denote by $\rho'_{V}$ (resp. $h'_V$) the infinite (resp. finite) sequence of vertices in $V$ obtained by keeping the $v$-component of every vertex $(v,s,{r})$ appearing in $\rho'$ (resp. $h'$). By construction of $G'$, this \textit{projection} of $\rho'$ (resp. $h'$) is a play (resp. a history) in $G$. %It therefore holds that for every play (resp. history) in $G'$, there exists a unique corresponding play (resp. history) in $G$. 
Conversely, given a play $\rho = v_0 v_1 \dots$ (resp. history $h = v_0 v_1 \dots v_j$) in $G$, we can construct a unique corresponding play (resp. history) in $G'$ by adding to each vertex $v_i$ in $\rho$ (resp. $h$) its corresponding state of the automaton $\textsf{SAR}$, $(s, {r}) = \hat{\delta}((s_0, {r}_0), v_0 \dots v_{i-1}))$, which is unique given the vertices which have been encountered in the play (resp. history).

\paragraph{Solving the Boolean B\"uchi Zero-sum Game Using $G'$.}
Let us now show how we can solve the Boolean B\"uchi zero-sum game $(G, \BooleanBuchi{\phi, T_1, \dots, T_m})$ by solving a parity zero-sum game with objective $\parity{c}$ in $G'$. Given a vertex $(v, s, {r})$ of $G'$, the formula $\phi$ of the Boolean B\"uchi objective and its set of variables $X$, we write $sat(\phi, s, {r}) = 1$ if and only if the valuation of $X$ such that $x_i = 1$ if and only if $T_i \in s_{\geq {r}}$ and $x_i = 0$ otherwise satisfies $\phi$, else we write $sat(\phi, s, {r}) = 0$. We define the following priority function $c : V' \rightarrow \{0, \ldots, 2 \cdot (m+1) \}$ on the vertices of $G'$: 

\[
    c((v, s, r)) = \begin{cases}
        2 \cdot r - 2 & \text{if } sat(\phi, s, {r}) = 1 \\
        2 \cdot r - 1 & \text{otherwise}
    \end{cases}
\]
Intuitively, a vertex associated with a smaller hit value has a smaller priority. In addition, its priority is even if the recent sets in $s$ (that is, the sets $s_{\geq {r}}$ appearing after the index ${r}$ in $s$) would satisfy formula $\phi$ if they were to be visited infinitely often and its priority is odd if that would not be the case. Let us now show that solving the Boolean B\"uchi zero-sum game $(G, \BooleanBuchi{\phi, T_1, \dots, T_m}))$ amounts to solving the parity zero-sum game $(G', \parity{c})$.

\begin{proposition}
\label{prop:bb_correctness}
Let $\mathcal{G} = (G, \BooleanBuchi{\phi, T_1, \dots, T_m}))$ be a Boolean B\"uchi zero-sum game and $\mathcal{G}' = (G', \parity{c})$ be the corresponding parity zero-sum game on the extended arena $G'$. Player~$0$ has a winning strategy from vertex $v_0$ in $G$ for the objective $\BooleanBuchi{\phi, T_1, \dots, T_m}$ if and only if he has a winning strategy for the objective $\parity{c}$ from $(v_0, s_0, {r}_0)$ in $G'$.
\end{proposition}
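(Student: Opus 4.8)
The plan is to transport both zero-sum games to the extended arena $G'$ and then show that the canonical play-correspondence between $G$ and $G'$ induced by the deterministic automaton $\textsf{SAR}$ maps $\BooleanBuchi{\phi, T_1, \dots, T_m}$-plays exactly onto $\parity{c}$-plays. Since $G'$ is obtained from $G$ merely by attaching a deterministic memory component, this correspondence also transfers strategies, so that winning in one game is equivalent to winning in the other.

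First I would record the strategy and play correspondence. Given a history $h'$ of $G'$ ending in a Player-$0$ vertex $(v, s, r)$, a strategy $\sigma_0$ of Player~$0$ in $G$ lifts to a strategy $\sigma_0'$ in $G'$ by $\sigma_0'(h') = (\sigma_0(h'_V),\, s',\, r')$ where $(s', r') = \delta((s, r), v)$; this is a legal move in $G'$ because $(v, \sigma_0(h'_V)) \in E$. Conversely, every strategy of $G'$ projects to a strategy of $G$ by forgetting the $\textsf{SAR}$-component, and the same holds for Player~$1$. The map $\rho' \mapsto \rho'_V$ is a bijection from $\Plays_{G'}$ onto $\Plays_G$, and $\rho'$ is consistent with a lifted strategy if and only if $\rho'_V$ is consistent with the corresponding strategy of $G$. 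Hence it suffices to prove that for every play $\rho'$ of $G'$ with projection $\rho = \rho'_V$, we have $\rho' \in \parity{c}$ if and only if $\rho \in \BooleanBuchi{\phi, T_1, \dots, T_m}$.

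Next I would compute the minimal priority occurring infinitely often along $\rho'$. Writing $\textsf{SAR}(\rho) = (s_0, r_0)(s_1, r_1)\dots$, the vertices of $\rho'$ are $(v_0, s_0, r_0)(v_1, s_1, r_1)\dots$, so the hits occurring infinitely often along $\rho$ are exactly the values occurring infinitely often in the third coordinate of $\rho'$. Let $r_{min}$ be the smallest such hit. Any vertex $(v, s, r)$ of $\rho'$ with $r > r_{min}$ has priority in $\{2r - 2, 2r - 1\} \subseteq \{2r_{min}, 2r_{min}+1, \dots\}$, and no vertex with hit $< r_{min}$ occurs infinitely often, so the minimal priority occurring infinitely often is $2r_{min} - 2$ or $2r_{min} - 1$, according to the eventual value of $sat(\phi, s_j, r_{min})$ over the infinitely many indices $j$ with $r_j = r_{min}$. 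The key observation is that $sat(\phi, s, r)$ depends on $s$ only through $s_{\geq r}$, and by Lemma~\ref{lem:correspondance} the family $s_{\geq r_{min}}$ stabilises along $\textsf{SAR}(\rho)$ to $\{T_i \mid T_i \cap \infOcc{\rho} \neq \emptyset\}$. Therefore $sat(\phi, s_j, r_{min})$ is eventually equal to $1$ iff the valuation with $x_i = 1$ exactly when $T_i \cap \infOcc{\rho} \neq \emptyset$ satisfies $\phi$, i.e.\ iff $\rho \in \BooleanBuchi{\phi, T_1, \dots, T_m}$; the degenerate case $r_{min} = m + 1$, in which $s_{\geq r_{min}} = \emptyset$ and the relevant valuation is the all-zero one, is identical. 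Consequently the minimal priority occurring infinitely often in $\rho'$ is even iff $\rho$ satisfies the Boolean B\"uchi objective, which is precisely $\rho' \in \parity{c} \Leftrightarrow \rho \in \BooleanBuchi{\phi, T_1, \dots, T_m}$.

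Finally I would conclude by transferring strategies: if $\sigma_0$ is winning for Player~$0$ in $(G, \BooleanBuchi{\phi, T_1, \dots, T_m})$ from $v_0$, then every play consistent with its lift $\sigma_0'$ projects to a play consistent with $\sigma_0$, hence satisfies the Boolean B\"uchi objective, hence satisfies $\parity{c}$, so $\sigma_0'$ is winning in $(G', \parity{c})$ from $(v_0, s_0, r_0)$; the converse is obtained symmetrically by projecting a winning strategy of $G'$. I expect the only genuinely delicate step to be the stabilisation argument of the third paragraph: one must reconcile the fact that the priority of $(v_j, s_j, r_j)$ is computed from the \emph{current} hit $r_j$ with the fact that Lemma~\ref{lem:correspondance} controls $s_{\geq r_{min}}$ for the \emph{fixed} threshold $r_{min}$, which is exactly why the analysis is carried out along the subsequence of indices where $r_j = r_{min}$. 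Everything else is the routine bookkeeping of a deterministic memory extension.
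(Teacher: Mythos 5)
Your proof is correct and follows essentially the same route as the paper: lift and project strategies through the deterministic \textsf{SAR} component, and use Lemma~\ref{lem:correspondance} to relate the minimal priority seen infinitely often (determined at the smallest recurring hit $r_{min}$, including the degenerate case $r_{min}=m+1$) to satisfaction of $\phi$. The only difference is organizational — you isolate the play-level equivalence $\rho' \in \parity{c} \Leftrightarrow \rho'_V \in \BooleanBuchi{\phi, T_1, \dots, T_m}$ once and then transfer strategies in both directions, whereas the paper repeats that analysis inside each direction — which is a slightly cleaner presentation of the same argument.
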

\begin{proof}
Let $\sigma_0$ be a winning strategy from $v_0$ in $G$ for the objective $\BooleanBuchi{\phi, T_1, \dots, T_m}$. Let us show that we can create a strategy $\sigma'_0$ which is winning from $(v_0, s_0, {r}_0)$ in $G'$ for the parity objective $\parity{c}$. We define $\sigma'_0$ such that for $h' \in {V'}^+$ ending in a vertex $(v, s, {r}) \in V'_0$, $\sigma'_0(h') = (v', \delta((s,{r}), v))$ such that $\sigma_0(h'_V) = v'$. Let $\rho'$ be a play consistent with $\sigma'_0$ from $(v_0, s_0, {r}_0)$. Let us show that $\rho'$ satisfies the objective $\parity{c}$. First, notice that, given the play $\rho'$ and the way we defined $\sigma'_0$, it holds that $\rho = \rho'_V$ is a play in $G$ starting in $v_0$ and consistent with $\sigma_0$. Therefore this play satisfies the objective $\BooleanBuchi{\phi, T_1, \dots, T_m}$ and the sets of vertices in $T_1, \dots, T_m$ visited infinitely often in $\rho$ satisfy the formula $\phi$. Given the construction of $G'$, the $(s, r)$-component of the vertices in $\rho'$ correspond to the run $\textsf{SAR}(\rho)$ of the $\textsf{SAR}$. Let $r_{min}$ be the smallest value of the hit occurring infinitely often in $\textsf{SAR}(\rho)$, it holds that the sets $s_{\geq {r_{min}}}$ correspond to the sets visited infinitely often in $\rho$ by Lemma~\ref{lem:correspondance}. Since $\rho$ satisfies the Boolean B\"uchi objective, it also holds that eventually $sat(\phi, s, {r_{min}}) = 1$ for every occurrence of $(s,r_{min})$ in $\textsf{SAR}(\rho)$. Hence, in $\rho'$, %some vertices $(v, s, r_{min})$ with $sat(\phi, s, {r_{min}}) = 1$ are visited infinitely often and therefore 
the minimum priority occurring infinitely often in $\rho'$ according to $c$ is even, satisfying $\parity{c}$.

Let $\sigma'_0$ be a winning strategy from $(v_0, s_0, {r}_0)$ in $G'$ for the objective $\parity{c}$. Let us show that we can create a strategy $\sigma_0$ which is winning from $v_0$ in $G$ for the Boolean B\"uchi objective $\BooleanBuchi{\phi, T_1, \dots, T_m}$. We define $\sigma_0$ such that for $hv \in {V}^+$ ending in a vertex $v \in V_0$, $\sigma_0(h) = v'$ such that $\sigma'_0((v, s, r)) = (v', s', r')$ with $(s, {r}) = \hat{\delta}((s_0, {r}_0), h))$. Let $\rho$ be a play consistent with $\sigma_0$, let us show that this play satisfies the objective $\BooleanBuchi{\phi, T_1, \dots, T_m}$. We can construct the unique corresponding play $\rho'$ in $G'$ such that %$\rho'_i = (\rho_i, s, r)$ with $(s, r) = \hat{\delta}((s_0, {r}_0), v_0 \dots \rho_i))$, that is 
the $(s, r)$-component of $\rho'$ corresponds to $\textsf{SAR}(\rho)$. Given the way we defined $\sigma_0$, this play is consistent with $\sigma'_0$. Therefore, it satisfies the objective $\parity{c}$ and the minimal priority occurring infinitely often in $\rho'$ is some even priority $2 \cdot r$ for some value $r$. It follows that in $\textsf{SAR}(\rho)$, the smallest value of the hit occurring infinitely often is $r$ and that eventually for every occurrence of $(s, r)$ in $\textsf{SAR}(\rho)$, $sat(\phi, s, {r}) = 1$. Since by Lemma~\ref{lem:correspondance}, from some point on, $s_{\geq {r}}$ correspond to the sets in $\{T_1, \dots, T_m\}$ visited infinitely often in $\rho$, it follows that $\rho$ satisfies the Boolean B\"uchi objective $\BooleanBuchi{\phi, T_1, \dots, T_m}$.
\end{proof}

We are now able to prove Theorem~\ref{thm:bbnewcomp}. To do so, we use the following result.

\begin{theorem}[\cite{CaludeJKL020}]
\label{thm:calude}
Solving a parity zero-sum game is in \FPT{} in parameter $d$, with an algorithm in $\mathcal{O}(|V|^5 + 2^{d \cdot (\log(d) + 6)})$ time where $d$ is the number of priorities.
\end{theorem}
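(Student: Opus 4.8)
The plan is to establish the quasi-polynomial ``statistics'' method behind the result of Calude et al.\ and then extract a fixed-parameter bound from it by a case analysis on $d$ versus $\log|V|$. First I would invoke \emph{positional determinacy} of parity games (a standard result), so that deciding the game reduces to computing, for each vertex, which player has a memoryless winning strategy. The core idea is to turn the prefix-independent winning condition (the minimum priority seen infinitely often is even) into a \emph{safety} condition on a product arena $G \times M$, where $M$ is a finite deterministic memory that reads the sequence of priorities of a play and maintains a bounded \emph{certificate} that the dominating priority is even. Concretely, $M$ records a hierarchy of counters---equivalently, a node of an $(|V|, d/2)$-universal tree---one per even level: at each level it tracks, up to a threshold, how many dominating even stretches have been completed without an intervening smaller odd priority, resetting the appropriate counters whenever a smaller priority occurs.

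The heart of the proof is the \emph{soundness and completeness} of these bounded certificates, which I would isolate as a single combinatorial lemma: the even player wins a play if and only if its run through $M$ never enters the designated ``overflow'' region. The completeness direction uses a pigeonhole/doubling argument showing that if the even player wins, the required domination can always be witnessed within a number of steps bounded by $|V|$, so each counter need only range up to $\lceil \log|V|\rceil$; the soundness direction checks that an overflow genuinely exposes an odd dominating priority. This yields a bound on the number of memory states equal to the number of nodes of the universal tree, namely $|M| \le \binom{\lceil \log|V|\rceil + d/2}{d/2}\cdot \mathrm{poly}(d) = |V|^{O(\log d)}$.

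With the lemma in hand, the decision problem becomes a safety game on $G \times M$, which I would solve by the usual attractor (fixpoint) computation in time linear in the product size, giving a total running time of the form $\mathrm{poly}(|V|)\cdot|M| = |V|^{O(\log d)}$. The final step is to convert this quasi-polynomial bound into the claimed \FPT{} bound $\mathcal{O}(|V|^5 + 2^{d\cdot(\log d + 6)})$ by splitting on the size of $d$. If $d \le 2\log|V|$, I would bound the binomial coefficient by noting that $\binom{\lceil\log|V|\rceil + d/2}{d/2}$ is polynomial in $|V|$ throughout this regime, so the whole running time collapses into $\mathcal{O}(|V|^5)$. If instead $d > 2\log|V|$ (equivalently $|V| < 2^{d/2}$), then both $|M|$ and the polynomial factor $\mathrm{poly}(|V|)$ are bounded by a function of $d$ alone, of order $2^{\mathcal{O}(d\log d)}$, which careful constant bookkeeping refines to $2^{d\cdot(\log d + 6)}$. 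Summing the two regimes gives the additive bound in the statement.

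The step I expect to be the true obstacle is the combinatorial lemma of the second paragraph: designing the memory $M$ and proving that a counter range of $\lceil\log|V|\rceil$ per even level is at once sufficient (completeness) and tight enough that any overflow is sound is exactly the crux of the quasi-polynomial phenomenon, and it is what forces the universal-tree shape on $M$. The final case analysis, by contrast, is routine once the size bound on $M$ is available; its only delicate point is tracking the constants in the exponent to land precisely on $\log d + 6$.
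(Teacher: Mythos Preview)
The paper does not prove this theorem; it is quoted verbatim as an external result from \cite{CaludeJKL020} and used as a black box in the proof of Theorem~\ref{thm:bbnewcomp}. There is therefore no proof in the paper to compare your proposal against.

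That said, your sketch is a faithful high-level outline of the Calude--Jain--Khoussainov--Li--Stephan argument, filtered through the later universal-tree reformulation due to Jurdzi\'nski and Lazi\'c. Two small points. First, in the original construction the even player's objective on the product arena is \emph{reachability}, not safety: overflow of the top-level counter certifies that sufficiently many even-dominated segments have accumulated, so Even \emph{wants} to reach the overflow region. Your phrasing ``never enters the designated overflow region'' inverts this; since safety for one player is reachability for the other the algorithm is unaffected, but the soundness/completeness directions you describe would have to be swapped accordingly. Second, the explicit binomial bound $\binom{\lceil\log|V|\rceil + d/2}{d/2}$ and its identification with the size of a universal tree are not in \cite{CaludeJKL020} itself; Calude et al.\ bound the number of ``statistics'' states directly, and the universal-tree language comes from follow-up work. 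None of this affects the correctness of your plan, and the final case split on $d$ versus $\log|V|$ to extract the additive \FPT{} bound is exactly the standard argument.
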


\begin{proof}[Proof of Theorem~\ref{thm:bbnewcomp}.]
Let $\mathcal{G} = (G, \BooleanBuchi{\phi, T_1, \dots, T_m}))$ be a Boolean B\"uchi zero-sum game. We construct $\mathcal{G}' = (G', \parity{c})$ the corresponding parity zero-sum game on the extended arena $G'$. The arena $G'$ contains $|V'| = |V| \cdot m! \cdot (m+1)$ vertices. Its priority function $c$ is such that $\max(c) = 2 \cdot (m + 1)$ and requires to evaluate $sat(\phi, s, {r})$ for each vertex $(v, s, {r})$ of $G'$. Hence, constructing $\mathcal{G}'$ is in \FPT, with an algorithm polynomial in $|V|$, exponential in $m$, and linear in $|\phi|$. By Theorem~\ref{thm:calude}, solving the parity zero-sum game $\mathcal{G}'$ is in \FPT{} with a time complexity polynomial in $|V|$ and exponential in $m$. The whole algorithm (constructing $\mathcal{G}'$ and solving it) is then in \FPT, with a time complexity polynomial in $|V|$, exponential in $m$, and linear in $|\phi|$.
\end{proof}

%Since a parity objective with $d$ priorities is a special case of a Muller objective with $d$ priorities, and since the arena $G'$ is such that it contains $n' = n \cdot m! \cdot m$ vertices and $d' = 2 \cdot (m + 1)$ priorities, solving the game $\mathcal{G'}$ can be done in $\mathcal{O}((2 \cdot m)^{2 \cdot m} \cdot n \cdot m! \cdot m)^5)$. In order to compute the priority function $c$ of $\mathcal{G'}$, for each vertex $(v, s, {t})$ of $G'$ we must compute $sat(\phi, s, {t})$. This amounts to computing the valuation of $X$ from $s$ and ${t}$ in $\mathcal{O}(m)$ and then checking whether $\phi$ is satisfied in $\mathcal{O}(|\phi|)$. Overall, this takes $\mathcal{O}(n \cdot m! \cdot m^2 \cdot |\phi|)$ time. In total, the time complexity is in $\mathcal{O}(n \cdot m^{m + 2} \cdot |\phi| + (n \cdot 4^m \cdot m^{3 \cdot m + 1})^5)$. This yields the complexity announced in \autoref{thm:bbnewcomp}.

\section{Fixed-Parameter Complexity}
\label{sec:newfpt}

This section is dedicated to showing the fixed-parameter complexity of solving the \problemAb{} in \gamesAb{}. We start by introducing a reduction from \gamesAb{} to a zero-sum game, called the \challengerProver{} game. We then show that the \problemAb{} is in \FPT{} for reachability and safety \gamesAb{} using this zero-sum game. Finally, we show that the problem is in \FPT{} for Boolean B\"uchi \gamesAb{} which then allows us to show that it is also in \FPT{} for every prefix-independent objective.

\begin{theorem}
\label{thm:FPT_all}
Solving the \problemAb{} is in \FPT{} for \gamesAb{}. The complexity and parameter for each objective is reported in Table~\ref{table:fpt-summary}.
\end{theorem}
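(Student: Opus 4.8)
To prove Theorem~\ref{thm:FPT_all} I would use a uniform two-stage reduction: first from the non-zero-sum \problemAb{} to a zero-sum \emph{\challengerProver{}} game $\mathcal{G}_{\mathsf{CP}}$ — in which a \emph{Prover} tries to exhibit a solution $\sigma_0$ together with a family of witnesses, while a \emph{Challenger} tries to refute it — and then from $\mathcal{G}_{\mathsf{CP}}$ to a zero-sum game with a single $\omega$-regular objective on a suitably enlarged arena, solvable in time polynomial in the size of that arena. The fact that makes this work is that $\paretoSet{\sigma_0}$ is always an antichain of payoffs in $\{0,1\}^\nbrObjectives$, so the number of witnesses the Prover has to keep track of is bounded by an explicit function of $\nbrObjectives$ alone, which keeps all enlarged arenas of controlled size.

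\textbf{Stage 1: the \challengerProver{} game.} I would play $\mathcal{G}_{\mathsf{CP}}$ on an arena obtained from $G$ by adjoining, as finite memory, a description of the tree of witnesses currently being revealed — which branch is being followed and the payoff the Prover commits to for each branch — together with the bookkeeping needed to evaluate payoffs. The Prover resolves Player~$0$'s choices and, at Player~$1$ vertices, proposes the successors of the current tree node; the Challenger either follows a proposed successor or \emph{deviates} by choosing another edge, after which the two players simply continue in $G$ ($\sigma_0$ against an adversary). Prover wins a play of $\mathcal{G}_{\mathsf{CP}}$ if every completed branch realizes its committed payoff, the committed payoffs are pairwise incomparable, every branch is won by Player~$0$, and every deviation play is either won by Player~$0$ or has payoff strictly below the committed payoff of the branch the Challenger deviated from. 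One then proves the central equivalence: Player~$0$ has a solution to the \problemAb{} in $\mathcal{G}$ if and only if Prover has a winning strategy in $\mathcal{G}_{\mathsf{CP}}$ — for ``$\Rightarrow$'' the Prover plays $\sigma_0$ and reveals an adaptively chosen refinement of the witness tree $\Wit{\sigma_0}$, and for ``$\Leftarrow$'' one extracts a Player~$0$ strategy from a winning Prover strategy and checks directly that every \paretoOptimal{} play it allows is won.

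\textbf{Stage 2: instantiating the inner solver.} For reachability objectives, $\won{\cdot}$ and $\payoff{\cdot}$ are prefix-dependent and monotone, so the only bookkeeping needed is the subset of $\{0,\dots,\nbrObjectives\}$ of objectives already satisfied along the current play — a memory of size $2^{\nbrObjectives+1}$ — and on the enlarged arena the Prover's winning condition becomes a simple combination of reachability and safety conditions, so $\mathcal{G}_{\mathsf{CP}}$ is solvable in time polynomial in $|V|\cdot 2^{\mathcal{O}(\nbrObjectives)}$, hence \FPT{}; the safety case is symmetric. For the prefix-independent objectives, by Proposition~\ref{prop_bb_encoding} it suffices to treat Boolean B\"uchi objectives, since every \gameAb{} with prefix-independent objectives rewrites into a Boolean B\"uchi one in polynomial time with the parameter blow-ups of Table~\ref{table:summaryBooleanBuchiTable} (for Muller, composed with the arena transformation of Proposition~\ref{prop:parity_to_muller}). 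For Boolean B\"uchi I would reuse the \textsf{SAR} construction of Section~\ref{subsec:AlternativeFPTBooleanBuchi}: extend $G$ with one \textsf{SAR} over the union of the target sets occurring in $\ObjPlayer{0},\dots,\ObjPlayer{\nbrObjectives}$, so that by Lemma~\ref{lem:correspondance} the sets visited infinitely often — hence the full extended payoff $(\won{\rho},\payoff{\rho})$ of every play and of every witness branch — become readable as parity-type conditions. Building $\mathcal{G}_{\mathsf{CP}}$ on top of this arena, the Prover's winning condition is a combination of finitely many parity conditions (one to certify each committed branch payoff, plus the domination test on deviations); after a product with the small automaton recording which branch is active and whether a deviation has occurred, this is a single parity (or Rabin) objective on an arena of size polynomial in $|V|$, exponential in $\nbrObjectives$ and in the \textsf{SAR} parameter $m$, and linear in $|\phi|$. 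Solving it via Theorem~\ref{thm:calude} (or the argument behind Theorem~\ref{thm:bbnewcomp}) is \FPT{}, and composing with the encodings of Proposition~\ref{prop_bb_encoding} gives the parameters listed for each objective in Table~\ref{table:fpt-summary}.

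\textbf{Main obstacle.} The delicate point is Stage~1: arranging the deviation/domination rule so that the reduction is simultaneously sound, complete, and keeps the Prover's winning condition a single low-complexity $\omega$-regular objective rather than an unbounded conjunction that would defeat the polynomial-time inner solver. One must show that a number of witness branches bounded in $\nbrObjectives$ always suffices; that the Prover gains nothing by mislabeling a branch's committed payoff, which forces the winning condition to also certify those payoffs; and, most subtly, that every profitable deviation of the Challenger is dominated by a witness sharing the deviation prefix, so the Prover can be required to anticipate it while building the tree adaptively. Getting these to hold together, and carrying out the size accounting so the enlarged arenas stay within the bounds claimed in Table~\ref{table:fpt-summary}, is the heart of the proof.
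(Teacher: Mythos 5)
Your overall architecture is the same as the paper's (a Challenger--Prover zero-sum game, then objective-specific instantiations: payoff tracking plus a B\"uchi-type condition for reachability/safety, and a \textsf{SAR}-extended arena with a parity condition for Boolean B\"uchi, with Proposition~\ref{prop_bb_encoding} handling the other prefix-independent objectives). However, your Stage~1 has a genuine gap in the Prover's winning condition on deviations. You require that a deviating play be won by Player~$0$ \emph{or have payoff strictly below the committed payoff of the branch the Challenger deviated from}, and in your ``main obstacle'' paragraph you assert that ``every profitable deviation of the Challenger is dominated by a witness sharing the deviation prefix.'' That assertion is false, and with it the forward direction of your central equivalence fails. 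Concretely, take $\nbrObjectives = 3$ and a solution $\sigma_0$ with $\paretoSet{\sigma_0} = \{(1,1,0),(0,1,1)\}$, whose two witnesses separate early; deep inside the subtree of the witness for $(1,1,0)$, Player~$1$ can deviate into a play lost by Player~$0$ with payoff $(0,0,1)$. Since $(0,0,1) < (0,1,1)$, this play is not \paretoOptimal{}, so $\sigma_0$ is a perfectly good solution to the \problemAb{}; but $(0,0,1) \not< (1,1,0)$, so under your local condition the Prover loses. The domination test after a deviation must be against the \emph{entire} committed antichain $P$, not against the branch(es) through the deviation point --- this is exactly why the paper has $\prov$ announce $P$ in the very first move from $\bot$ and phrases condition~(\ref{cp_cond_3}) as $\exists p \in P,\ \payoff{\rho_V} < p$. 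With that repair no ``anticipation'' lemma is needed at all, and your soundness direction is unaffected because a Prover win still forces every consistent play to be won or non-\paretoOptimal{}.

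Two smaller inaccuracies, neither fatal: first, the memory you describe (the committed antichain plus the set of still-active branches) has doubly exponential, not $2^{\mathcal{O}(\nbrObjectives)}$, many states --- antichains of $\{0,1\}^{\nbrObjectives}$ already number doubly exponentially many --- so the correct size accounting matches the double-exponential-in-$t$ bounds of Table~\ref{table:fpt-summary}, which still yields \FPT{}. Second, Muller objectives need no detour through Proposition~\ref{prop:parity_to_muller} (which goes from parity to Muller and is used for hardness); they are encoded directly into Boolean B\"uchi by Proposition~\ref{prop_bb_encoding} with $|\phi| = \mathcal{O}(d \cdot |Q|)$, which is where the parameters $(d_i,|Q_i|)$ in the table come from.
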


\begin{table}
  \caption{Complexity of the \FPT{} algorithm for each kind of objective studied in this paper. We consider $(\Omega_i)_{i \in \{1,\ldots,t\}}$ where all objectives $\Omega_i$ are $\reach{T}$, $\safe{S}$, $\Buchi{B}$, $\CoBuchi{B}$, $\BooleanBuchi{\phi,T_1,\ldots,T_m}$, $\parity{c}$ with $d = \max(c)$, $\Muller{c,Q}$ with $d = \max(c)$, $\Street{(E_1,F_1),\ldots,(E_m,F_m)}$, or $\Rabin{(E_1,F_1),\ldots,(E_m,F_m)}$. Notice that all algorithms are double exponential in parameter $t$.}
  \label{table:fpt-summary}
  \begin{tabular}{lll}
    \toprule
    Objective                 & Parameters & Complexity in the parameters \\ \midrule
    Reachability              & $t$ & double exponential in $t$            \\ 
    Safety                    & $t$ & double exponential in $t$            \\  
    B\"uchi                   & $t$ & double exponential in $t$ \\  
    co-B\"uchi                & $t$ & double exponential in $t$ \\  
    Boolean B\"uchi           & $t$, $(m_i,|\phi_i|)_{i \in \{1,\ldots,t\}}$ & double exponential in $t$, exponential in $\sum_{i=0}^t m_i$, and polynomial in $\sum_{i=0}^t |\phi_i|$ \\ 
    Parity                    & $t$, $(d_i)_{i \in \{1,\ldots,t\}}$ & double exponential in $t$ and exponential in $\sum_{i=0}^t d_i$\\ 
    Muller                    & $t$, $(d_i,|Q_i|)_{i \in \{1,\ldots,t\}}$ & double exponential in $t$, exponential in $\sum_{i=0}^t d_i$, and polynomial in $\sum_{i=0}^t |Q_i|$           \\ 
    Streett                   &  $t$, $(m_i)_{i \in \{1,\ldots,t\}}$  & double exponential in $t$ and exponential in $\sum_{i=0}^t m_i$ \\ 
    Rabin                     &  $t$, $(m_i)_{i \in \{1,\ldots,t\}}$ & double exponential in $t$ and exponential in $\sum_{i=0}^t m_i$  \\ 
   \bottomrule
\end{tabular}
\end{table}

\subsection{Challenger-Prover Game} 
\label{subsec:cp}

In order to prove Theorem~\ref{thm:FPT_all}, we provide a reduction to a specific two-player zero-sum game, called the \emph{\challengerProver{}} game (\challengerProverAb{} game). This game is a zero-sum game played between \emph{Challenger} (written $\chal{}$) and \emph{Prover} (written $\prov{}$). We will show that Player~$0$ has a solution to the \problemAb{} in an \gameAb{} if and only if $\prov{}$ has a winning strategy in the corresponding \challengerProverAb{} game. In the latter game, $\prov{}$ tries to show the existence of a strategy $\sigma_0$ that is solution to the \problemAb{} in the original game and $\chal{}$ tries to disprove it. The \challengerProverAb{} game is described independently of the objectives used in the \gameAb{} and its objective is described as such in a \emph{generic way}. We later provide the proof of our \FPT{} results by adapting it specifically for reachability, safety, Boolean B\"uchi, and finally prefix-independent objectives.

\paragraph{Intuition on the \challengerProverAb{} Game.} 
Without loss of generality, the \gamesAb{} we consider in this section are such that each vertex in their arena has at most \emph{two successors}. It can be shown (see Appendix \ref{app:usefull_notions}) that any \gameAb{} $\mathcal{G}$ with $n$ vertices can be transformed into an \gameAb{} $\bar{\mathcal{G}}$ with $\mathcal{O}(n^2)$ vertices such that every vertex has at most two successors and Player~$0$ has a solution to the \problemAb{} in $\mathcal{G}$ if and only if he has a solution to the \problemAb{} in $\bar{\mathcal{G}}$.

Let $\mathcal{G}$ be an \gameAb{}. The \challengerProverAb{} game $\mathcal{G'}$ is a zero-sum game associated with $\mathcal{G}$ that intuitively works as follows. First, $\prov{}$ selects a set $P$ of payoffs which he announces as the set of Pareto-optimal payoffs $\paretoSet{\sigma_0}$ for the solution $\sigma_0$ to the \problemAb{} in $\mathcal{G}$ he is trying to construct. Then, $\prov{}$ tries to show that there exists a set of witnesses $\Wit{\sigma_0}$ in $\mathcal{G}$ for the payoffs in $P$. After the selection of $P$ in $\mathcal{G}'$, there is a one-to-one correspondence between plays in the arenas $G$ and $G'$ such that the vertices in $G'$ are augmented with a set $\provWit$ which is a subset of $P$. Initially $\provWit$ is equal to $P$ and after some history in $G'$, $\provWit$ contains payoff $p \in P$ if the corresponding history in $G$ is prefix of the witness with payoff $p$ in the set $\Wit{\sigma_0}$ that $\prov{}$ is building. In addition, the objective $\Omega_{\prov{}}$ of $\prov{}$ is such that he has a winning strategy $\sigma_\prov$ in $\mathcal{G'}$ if and only if the set $P$ that he selected coincides with the set $\paretoSet{\sigma_0}$ for the corresponding strategy $\sigma_0$ in $\mathcal{G}$ and the latter strategy is a solution to the \problemAb{} in $\mathcal{G}$. A part of the arena of the \challengerProverAb{} game for Example~\ref{ex:example} with a positional winning strategy for $\prov{}$ highlighted in bold is illustrated in Figure~\ref{fig:example_challenger_prover}. 

\begin{figure} 
	\centering
		\resizebox{\textwidth}{!}{%
		\begin{tikzpicture}[->,>=stealth, shorten >=1pt,auto]
           
		\node[draw, circle, minimum size=0.75cm, inner sep = 0.5pt] (bot) at (9,11.5){$\bot$};
		\node[draw, rectangle, minimum size=0.75cm, minimum width={width("Magnetometer")}, inner sep = 2pt, rounded corners=11pt] (v0) at (10, 10){$v_0, P, \{p_1, p_2\}$};
		\node[draw, rectangle, minimum size=0.75cm, minimum width={width("Magnetometer")}, inner sep = 2pt] (v0sq) at (13, 10){$v_0, P, (\emptyset, \{p_1, p_2\})$};
		\node[draw, rectangle, minimum size=0.75cm, minimum width={width("Magnetometer")}, inner sep = 2pt, rounded corners=11pt] (v1) at (13, 11.5){$v_1, P, \emptyset$};
		\node[draw, rectangle, minimum size=0.75cm, minimum width={width("Magnetometer")}, inner sep = 2pt, rounded corners=11pt] (x) at (13, 8.5){$v_2, P, \{p_1, p_2\}$};

		\node[draw, rectangle, minimum size=0.75cm, minimum width={width("Magnetometer")}, inner sep = 2pt] (xsq) at (16, 8.5){$v_2, P, (\{p_1, p_2\}, \emptyset)$};
		\node[draw, rectangle, minimum size=0.75cm, minimum width={width("Magnetometer")}, inner sep = 2pt, rounded corners=11pt] (v2) at (16, 10){$v_3, P, \{p_1, p_2\}$};
		\node[draw, rectangle, minimum size=0.75cm, minimum width={width("Magnetometer")}, inner sep = 2pt, rounded corners=11pt] (v3) at (16, 7){$v_4, P, \emptyset$};

		\node[draw, rectangle, minimum size=0.75cm, minimum width={width("Magnetometer")}, inner sep = 2pt, rounded corners=11pt] (v4) at (19, 10){$v_5, P, \{p_1, p_2\}$};
		\node[draw, rectangle, minimum size=0.75cm, minimum width={width("Magnetometer")}, inner sep = 2pt, rounded corners=11pt] (v5) at (16, 11.5){$v_7, P, \{p_1, p_2\}$};
		\node[draw, rectangle, minimum size=0.75cm, minimum width={width("Magnetometer")}, inner sep = 2pt] (v4sq) at (22, 10){$v_5, P, (\{p_1\}, \{p_2\})$};
		
		\node[draw, rectangle, minimum size=0.75cm, minimum width={width("Magnetometer")}, inner sep = 2pt] (v4sqtre) at (22, 8.5){$v_5, P, (\{p_2\}, \{p_1\})$};
		\node[draw, rectangle, minimum size=0.75cm, minimum width={width("Magnetometer")}, inner sep = 2pt, rounded corners=11pt] (v2tre) at (22, 7){$v_3, P, \{p_2\}$};
		\node[draw, rectangle, minimum size=0.75cm, minimum width={width("Magnetometer")}, inner sep = 2pt, rounded corners=11pt] (v6tre) at (25, 8.5){$v_6, P, \{p_1\}$};
		\node[draw, rectangle, minimum size=0.75cm, minimum width={width("Magnetometer")}, inner sep = 2pt, rounded corners=11pt] (v5tre) at (25, 7){$v_7, P, \{p_2\}$};

		\node[draw, rectangle, minimum size=0.75cm, minimum width={width("Magnetometer")}, inner sep = 2pt] (v4sqfr) at (19, 8.5){$v_5, P, (\{p_1, p_2\}, \emptyset)$};
		\node[draw, rectangle, minimum size=0.75cm, minimum width={width("Magnetometer")}, inner sep = 2pt, rounded corners=11pt] (v6fr) at (19, 7){$v_6, P, \emptyset$};
				
    	\node[draw, rectangle, minimum size=0.75cm, minimum width={width("Magnetometer")}, inner sep = 2pt, rounded corners=11pt] (v2bis) at (22, 11.5){$v_3, P, \{p_1\}$};
		\node[draw, rectangle, minimum size=0.75cm, minimum width={width("Magnetometer")}, inner sep = 2pt, rounded corners=11pt] (v6) at (25, 10){$v_6, P, \{p_2\}$};
		
	    \node[draw, rectangle, minimum size=0.75cm, minimum width={width("Magnetometer")}, inner sep = 2pt, rounded corners=11pt] (v5bis) at (25, 11.5){$v_7, P, \{p_1\}$};

	    \node[rectangle, rounded corners=11pt, minimum size=0.8cm] (dotsbot) at (8, 10){$\dots$};
	    \node[rectangle, rounded corners=11pt, minimum size=0.8cm] (dotsv0) at (9, 8.5){$\dots$};
	    \node[rectangle, rounded corners=11pt, minimum size=0.8cm] (dotsv0bis) at (10, 8.5){$\dots$};
	    \node[rectangle, rounded corners=11pt, minimum size=0.8cm] (dotsv0tre) at (11, 8.5){$\dots$};
	    \node[rectangle, rounded corners=11pt, minimum size=0.8cm] (dotsx) at (13, 7){$\dots$};
	    \node[rectangle, rounded corners=11pt, minimum size=0.8cm] (dotsxbis) at (12, 7){$\dots$};
	    \node[rectangle, rounded corners=11pt, minimum size=0.8cm] (dotsxtre) at (14, 7){$\dots$};
	    \node[rectangle, rounded corners=11pt, minimum size=0.8cm] (dotsv4) at (19,11.5){$\dots$};
	    
	    \node[rectangle, rounded corners=11pt, minimum size=0.8cm] (dotsv3) at (22,12.75){$\dots$};
	    \node[rectangle, rounded corners=11pt, minimum size=0.8cm] (dotsv3bis) at (22,5.75){$\dots$};

	    \draw[-stealth, shorten >=1pt, auto] (bot) edge [] (dotsbot.90);
		\draw[-stealth, shorten >=1pt, auto] (v0) edge [] node {} (dotsv0);
    	\draw[-stealth, shorten >=1pt, auto] (v0) edge [] node {} (dotsv0bis);
		\draw[-stealth, shorten >=1pt, auto] (v0) edge [] node {} (dotsv0tre);
		\draw[-stealth, shorten >=1pt, auto] (x) edge [] node {} (dotsx);
		\draw[-stealth, shorten >=1pt, auto] (x) edge [] node {} (dotsxbis);
		\draw[-stealth, shorten >=1pt, auto] (x) edge [] node {} (dotsxtre);
		\draw[-stealth, shorten >=1pt, auto] (v4) edge [] node {} (dotsv4);		
		\draw[-stealth, shorten >=1pt, auto] (v2bis) edge [] node {} (dotsv3);		
		\draw[-stealth, shorten >=1pt, auto] (v2tre) edge [] node {} (dotsv3bis);

		\draw[-stealth, shorten >=1pt, auto, very thick]  (bot) edge [] node {} (v0.90);
		\draw[-stealth, shorten >=1pt, auto, very thick]  (v0) edge [] node {} (v0sq);
		\draw[-stealth, shorten >=1pt, auto]  (v0sq) edge [] node {} (v1);
		\draw[-stealth, shorten >=1pt, auto]  (v0sq) edge [] node {} (x);
		
		\draw[-stealth, shorten >=1pt, auto, very thick]  (x) edge [] node {} (xsq);
		\draw[-stealth, shorten >=1pt, auto]  (xsq) edge [] node {} (v2);
		\draw[-stealth, shorten >=1pt, auto]  (xsq) edge [] node {} (v3);
		
		\draw[-stealth, shorten >=1pt, auto, very thick]  (v2) edge [] node {} (v4);
		\draw[-stealth, shorten >=1pt, auto]  (v2) edge [] node {} (v5);
		\draw[-stealth, shorten >=1pt, auto, very thick]  (v4) edge [] node {} (v4sq);
		
		\draw[-stealth, shorten >=1pt, auto]  (v4sq) edge [] node {} (v2bis);
		\draw[-stealth, shorten >=1pt, auto]  (v4sq) edge [] node {} (v6);
		
		\draw[-stealth, shorten >=1pt, auto, very thick]  (v2bis) edge [] node {} (v5bis);
		
		\draw[-stealth, shorten >=1pt, auto, very thick]  (v1) edge [loop above] node {} (v1);
		\draw[-stealth, shorten >=1pt, auto, very thick]  (v5) edge [loop above] node {} (v5);
		\draw[-stealth, shorten >=1pt, auto, very thick]  (v5bis) edge [loop above] node {} (v5bis);
		
		\draw[-stealth, shorten >=1pt, auto, very thick]  (v3) edge [loop below] node {} (v3);
		\draw[-stealth, shorten >=1pt, auto, very thick]  (v6) edge [loop above] node {} (v6);
		
		\draw[-stealth, shorten >=1pt, auto]  (19.87, 9.78) to [] (20.82, 8.88);
		\draw[-stealth, shorten >=1pt, auto]  (v4sqtre) edge [] node {} (v6tre);
		\draw[-stealth, shorten >=1pt, auto]  (v4sqtre) edge [] node {} (v2tre);
		\draw[-stealth, shorten >=1pt, auto]  (v2tre) edge [] node {} (v5tre);
		
		\draw[-stealth, shorten >=1pt, auto]  (v4) edge [] node {} (v4sqfr);
		\draw[-stealth, shorten >=1pt, auto]  (v4sqfr) edge [] node {} (v6fr);
		\draw[-stealth, shorten >=1pt, auto]  (17.82, 8.87) to [] (16.88, 9.78);
		
		\draw[-stealth, shorten >=1pt, auto, very thick]  (v6fr) edge [loop below] node {} (v6fr);

		\draw[-stealth, shorten >=1pt, auto]  (v6tre) edge [loop below] node {} (v6tre);
		\draw[-stealth, shorten >=1pt, auto]  (v5tre) edge [loop below] node {} (v5tre);

		\end{tikzpicture}
	}%
	\caption{A part of the \challengerProverAb{} game for Example \ref{ex:example} with $P = \{p_1, p_2\}, p_1 = (1,1,0)$ and $p_2 = (0,1,1)$. }
	\label{fig:example_challenger_prover}
	\Description{Figure 2. Fully described in the text.}
\end{figure} 

\paragraph{Arena of the \challengerProverAb{} Game.} The initial vertex is $\bot$ and it belongs to $\prov{}$. From this vertex, $\prov{}$ selects a successor $(v_0, P, \provWit)$ such that $W = P$ and $P$ is an antichain of payoffs which $\prov$ announces as the set $\paretoSet{\sigma_0}$ for the strategy $\sigma_0$ in $G$ he is trying to construct. All vertices in plays starting with this vertex will have this same value for their $P$-component. Those vertices are either a triplet $(v, P, \provWit)$ that belongs to $\prov$ or $(v, P, (\provWit_l, \provWit_r))$ that belongs to~$\chal{}$. Given a play $\rho$ (resp.\ history $h$) in $G'$, we denote by $\rho_V$ (resp.\ $h_V$) the play (resp.\ history) in $G$ obtained by removing $\bot$ and keeping the $v$-component of every vertex of $\prov{}$ in $\rho$ (resp.\ $h$), which we call its \emph{projection}. Notice that we only consider the $v$-component of vertices of $\prov{}$ to avoid duplication of vertices $v \in V_1$ in the projection due to the vertices of $\chal{}$.
\begin{itemize}
    \item After history $hm$ such that $m = (v, P, \provWit)$ with $v \in V_0$, $\prov{}$ selects a successor $v'$ such that $(v, v') \in E$ and vertex $(v', P, \provWit)$ is added to the play. This corresponds to Player~$0$ choosing a successor $v'$ after history $h_V v$ in~$G$. 
    \item After history $hm$ such that $m = (v, P, \provWit)$ with $v \in V_1$, $\prov{}$ selects a successor $(v, P, (\provWit_l, \provWit_r))$ with $(\provWit_l, \provWit_r)$ a partition of $\provWit$. This corresponds to $\prov{}$ splitting the set $W$ into two parts according to the two successors $v_l$ and $v_r$ of $v$. For the strategy $\sigma_0$ that $\prov{}$ tries to construct and its set of witnesses $\Wit{\sigma_0}$ he is building, he asserts that $\provWit_l$ (resp.\ $\provWit_r$) is the set of payoffs of the witnesses in $\prefStrat{h_V v_l}$ (resp.\ $\prefStrat{h_V v_r}$).
    \item From a vertex $(v, P, (\provWit_l, \provWit_r))$, $\chal{}$ can select a successor $(v_l, P, \provWit_l)$ or $(v_r, P, \provWit_r)$ which corresponds to the choice of Player~$1$ in~$G$. 
\end{itemize}   

\noindent
Formally, the game arena of the \challengerProverAb{} game is the tuple $G' = (V', V'_{\prov{}}, V'_{\chal{}}, E', \bot)$ with
\begin{itemize}
    \item $V'_{\prov{}} = \{ \bot \} \cup \{ (v, P, \provWit) \mid v \in V, P \subseteq \{0,1\}^\nbrObjectives \text{ is an antichain and } \provWit \subseteq P \}$,
    \item $V'_{\chal{}} = \{ (v, P, (\provWit_l, \provWit_r)) \mid v \in V_1, P \subseteq \{0,1\}^\nbrObjectives \text{ is an antichain and } \provWit_l, \provWit_r \subseteq P \}$,
    \item $(\bot, (v, P, \provWit)) \in E'$ if $v = v_0$ and $P = \provWit$,
    \item $((v, P, \provWit), (v', P, \provWit)) \in E'$ if $v \in V_0$ and $(v, v') \in E$,
    \item $((v, P, \provWit), (v, P, (\provWit_l, \provWit_r))) \in E'$ if $v \in V_1$ and $(\provWit_l, \provWit_r)$ is a partition of $\provWit$, 
    \item $((v, P, (\provWit_l, \provWit_r)), (v', P, \provWit) ) \in E'$ if $(v, v') \in E$ and $\{v' = v_l$ and $\provWit = \provWit_l\}$ or 
    $\{v' = v_r$ and $\provWit = \provWit_r\}$.
\end{itemize}
In the definition of $E'$, if $v$ has a single successor $v'$ in $G$, it is assumed to be $v_l$ and $W_r$ is always equal to $\emptyset$. We use as a convention that given the two successors $v_i$ and $v_j$ of vertex $v$, $v_i$ is the left successor if $i < j$. 

\paragraph{Objective of $\prov$ in the \challengerProverAb{} Game.}
Let us now discuss the objective $\Omega_{\prov{}}$ of $\prov{}$. The $\provWit$-component of the vertices controlled by $\prov{}$ has a size that decreases along a play $\rho$ in $G'$. We write $lim_\provWit(\rho)$ the value of the $\provWit$-component at the limit in $\rho$. Recall that with the $P$-component and this $\provWit$-component, $\prov{}$ tries to construct a solution $\sigma_0$ to the \problemAb{} with associated sets $P =  \paretoSet{\sigma_0}$ and $\Wit{\sigma_0}$. Therefore, for him to win in the \challengerProverAb{} game, $lim_\provWit(\rho)$ must be a singleton or empty in every consistent play such that:
\begin{itemize}
    \item $lim_\provWit(\rho)$ must be a singleton $\{p\}$ with $p$ the payoff of $\rho_V$ in $G$, showing that $\rho_V \in \Wit{\sigma_0}$ is a correct witness for $p$. In addition, it must hold that $\won{\rho_V} = 1$ as $p \in P$ and as $\prov{}$ wants $\sigma_0$ to be a solution.
    \item $lim_\provWit(\rho)$ must be the empty set such that either the payoff of $\rho_V$ belongs to $P$ and $\won{\rho_V} = 1$, or the payoff of $\rho_V$ is strictly smaller than some payoff in~$P$.
\end{itemize}
These conditions verify that the sets $P = \paretoSet{\sigma_0}$ and $\Wit{\sigma_0}$ are correct and that $\sigma_0$ is indeed a solution to the \problemAb{} in $G$. They are \emph{generic} as they do not depend on the actual objectives used in the \gameAb{}. 

Let us give the formal definition of $\Omega_{\prov{}}$. For an antichain $P$ of payoffs, we write $\Plays^P_{G'}$ the set of plays in $G'$ which start with $\bot (v_0, P, P)$ and we define the following set 
\begin{alignat}{3}
B_P = \big{\{} \rho \in \Plays^P_{G'} ~\mid~ &(lim_\provWit(\rho) = \{p\} &&\land \payoff{\rho_V} = p \in P &&\land \won{\rho_V} = 1) \ \lor \label{cp_cond_1}\\
                &(lim_\provWit(\rho) = \emptyset &&\land \payoff{\rho_V} \in P &&\land \won{\rho_V} = 1) \ \lor \label{cp_cond_2} \\
             &(lim_\provWit(\rho) = \emptyset &&\land \exists p \in P, \payoff{\rho_V} < && ~p) \big{\}}. \label{cp_cond_3}
\end{alignat}
Objective $\Omega_{\prov{}}$ of $\prov{}$ in $\mathcal{G'}$ is the union of $B_P$ over all antichains $P$. As the \challengerProverAb{} game is zero-sum, objective $\Omega_{\chal{}}$ equals $\Plays_{G'} \setminus \Omega_{\prov{}}$. The following theorem holds.

\begin{theorem}
\label{thm:CP} 
Player~$0$ has a strategy $\sigma_0$ that is solution to the \problemAb{} in $\mathcal{G}$ if and only if $\prov{}$ has a winning strategy $\sigma_{\prov{}}$ from $\bot$ in the \challengerProverAb{} game $\mathcal{G'}$.
\end{theorem}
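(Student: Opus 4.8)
The plan is to establish the two directions of the equivalence by translating strategies between $\mathcal{G}$ and $\mathcal{G'}$, using the $\provWit$-component of the vertices controlled by $\prov$ to track which plays of $G$ are meant to be the witnesses of the payoffs in $P$. The guiding invariant I would maintain is: after a $\sigma_{\prov}$-consistent history $h'$ of $G'$ ending in a vertex $(v,P,\provWit)$, the projection $h'_V$ is a history of $G$ and $\provWit$ is exactly the set of payoffs in $P$ whose intended witness has $h'_V$ as a prefix (so $\provWit = \prefStrat{h'_V}$ in the forward direction).

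For the forward implication, suppose $\sigma_0$ is a solution, put $P = \paretoSet{\sigma_0}$, and fix a set of witnesses $\Wit{\sigma_0} = \{\rho_p \mid p \in P\}$, so $\payoff{\rho_p} = p$ and $\won{\rho_p} = 1$ for each $p$. I would let $\sigma_{\prov}(\bot) = (v_0,P,P)$ and then define $\sigma_{\prov}$ to preserve the invariant: at a $\prov$-vertex with $v \in V_0$, play $\sigma_0(h'_V)$ (all $\rho_p$ through $h'_V$ continue identically, so the invariant survives); at a $\prov$-vertex with $v \in V_1$, split $\provWit$ into $\provWit_l = \{p \in \provWit \mid \rho_p \text{ continues to } v_l\}$ and $\provWit_r$ defined analogously, a genuine partition since each $\rho_p$ is a single play, so the invariant survives whichever successor $\chal$ then selects. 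For a $\sigma_{\prov}$-consistent play $\rho$, the invariant forces $lim_\provWit(\rho)$ to equal $\{p\}$ with $\rho_V = \rho_p$ when $\rho_V$ is one of the witnesses, and $\emptyset$ otherwise, and in both cases $\rho_V$ is consistent with $\sigma_0$. In the first case condition~(\ref{cp_cond_1}) holds since $\sigma_0$ is a solution; in the second case, as $P$ is the set of maximal payoffs of $\sigma_0$-consistent plays, either $\payoff{\rho_V} \in P$, whence $\won{\rho_V} = 1$ and condition~(\ref{cp_cond_2}) holds, or $\payoff{\rho_V} < p$ for some $p \in P$ and condition~(\ref{cp_cond_3}) holds. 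Hence $\sigma_{\prov}$ is winning.

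For the backward implication, suppose $\sigma_{\prov}$ is winning, and let $\sigma_{\prov}(\bot) = (v_0,P,P)$ with $P$ an antichain. Since a $\prov$-move at a vertex with $v \in V_0$ depends only on the current history, for every $\sigma_0$-consistent history $h$ of $G$ (defined inductively) there is a unique $\sigma_{\prov}$-consistent history $\hat h$ of $G'$ with $\hat h_V = h$ — the moves of $\chal$ being forced by the successors occurring in $h$ — and I would set $\sigma_0(h)$ to be the $v$-component of $\sigma_{\prov}(\hat h)$; then the $\sigma_{\prov}$-consistent plays of $G'$ are exactly the lifts of the $\sigma_0$-consistent plays of $G$. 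To see $P = \paretoSet{\sigma_0}$: for $p \in P$, follow the $\sigma_{\prov}$-consistent play $\rho^{(p)}$ in which $\chal$ always moves to the side of the split containing $p$; then $p$ remains in the $\provWit$-component forever, so $p \in lim_\provWit(\rho^{(p)})$, which rules out conditions~(\ref{cp_cond_2}) and~(\ref{cp_cond_3}), hence by~(\ref{cp_cond_1}) the play $\rho^{(p)}_V$ is $\sigma_0$-consistent, has payoff $p$, and is won — so $p$ is achievable under $\sigma_0$ and has a witness. Conversely, lifting any $\sigma_0$-consistent play $\rho$ to a $\sigma_{\prov}$-consistent play and applying the three conditions gives, respectively, $\payoff{\rho} = p \in P$, $\payoff{\rho} \in P$, or $\payoff{\rho} < p$ for some $p \in P$, so every achievable payoff is $\le$ some element of $P$; since $P$ is an antichain of payoffs all achievable under $\sigma_0$, this forces $P = \max\{\payoff{\rho} \mid \rho \in \Playsigma{\sigma_0}\} = \paretoSet{\sigma_0}$. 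Finally, if $\rho$ is $\sigma_0$-consistent with $\payoff{\rho} \in P$, then in its lift condition~(\ref{cp_cond_3}) is impossible because $P$ is an antichain, and conditions~(\ref{cp_cond_1}) and~(\ref{cp_cond_2}) both give $\won{\rho} = 1$; hence $\sigma_0$ is a solution.

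The step I expect to be the main obstacle is the bookkeeping in the forward direction: getting the split $(\provWit_l,\provWit_r)$ at $V_1$-vertices exactly right so that the invariant $\provWit = \prefStrat{h'_V}$ is preserved by the moves of both players, and carefully handling the moment the $\provWit$-component shrinks to a singleton or to $\emptyset$, where $\chal$ may abandon the intended witness while the projected play still stays consistent with $\sigma_0$. The antichain argument identifying $P$ with $\paretoSet{\sigma_0}$ in the backward direction is short but must be stated cleanly, as it is precisely what makes the ``strictly dominated'' disjunct~(\ref{cp_cond_3}) harmless for plays whose payoff lies in $P$.
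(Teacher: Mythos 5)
Your proposal is correct and follows essentially the same route as the paper's proof: the same construction of $\sigma_{\prov}$ from $\sigma_0$ (announcing $P = \paretoSet{\sigma_0}$, following $\sigma_0$ at Player~$0$ vertices, and splitting the payoff set according to which witnesses continue through each successor), and the same reverse construction of $\sigma_0$ via the one-to-one lifting of consistent histories, with the case analysis over conditions~(\ref{cp_cond_1})--(\ref{cp_cond_3}). Your explicit antichain argument showing $P = \paretoSet{\sigma_0}$ in the backward direction is a slightly more detailed rendering of the step the paper treats briefly, but it is the same argument in substance.
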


\begin{proof}
    Let us first assume that Player~$0$ has a strategy $\sigma_0$ that is solution to the \problemAb{}{} in $\mathcal{G}$. Let $\paretoSet{\sigma_0}$ be its set of \paretoOptimal\ payoffs and let $\Wit{\sigma_0}$ be a set of witnesses. We construct the strategy $\sigma_{\prov{}}$ from $\sigma_0$ such that
    \begin{itemize}
        \item $\sigma_{\prov{}}(\bot) = (v_0, P, P)$ such that $P = \paretoSet{\sigma_0}$ (this vertex exists as $\paretoSet{\sigma_0}$ is an antichain),
        \item $\sigma_{\prov{}}(hm) = (v', P, \provWit)$ if $m = (v, P, \provWit)$ with $v \in V_0$ and $v' = \sigma_0(h_V v)$, 
        \item $\sigma_{\prov{}}(hm) = (v, P, (\provWit_l, \provWit_r))$ if $m = (v, P, \provWit)$ with $v \in V_1$ and for $i \in\{l, r\}$, $\provWit_i = \{ \payoff{\rho} \mid \rho \in \prefStrat{h_V v_i}\}$.
    \end{itemize}   
    It is clear that given a play $\rho$ in $G'$ consistent with $\sigma_{\prov}$, the play $\rho_V$ in $G$ is consistent with $\sigma_0$. Let us show that $\sigma_{\prov}$ is winning for $\prov{}$ from $\bot$ in $G'$. Consider a play $\rho$ in $G'$ consistent with $\sigma_\prov$. There are two possibilities. \emph{(i)} $\rho_V$ is a witness of $\Wit{\sigma_0}$ and by construction $lim_\provWit(\rho) = \{p\}$ with $p = \payoff{\rho_V}$; thus $\won{\rho_V} = 1$ as $\sigma_0$ is a solution and $\rho_V$ is a witness. \emph{(ii)} $\rho_V$ is not a witness and by construction $lim_\provWit(\rho) = \emptyset$; as $\sigma_0$ is a solution, then $p = \payoff{\rho_V}$ is bounded by some payoff of $\paretoSet{\sigma_0}$ and in case of equality $\won{\rho_V} = 1$. Therefore $\rho$ satisfies the objective $B_P$ of $\Omega_{\prov{}}$ since it satisfies condition (\ref{cp_cond_1}) in case \emph{(i)} and condition (\ref{cp_cond_2}) or (\ref{cp_cond_3}) in case \emph{(ii)}. 

    Let us now assume that $\prov$ has a winning strategy $\sigma_{\prov}$ from $\bot$ in $G'$. Let $P$ be the antichain of payoffs chosen from $\bot$ by this strategy. We construct the strategy $\sigma_0$ from $\sigma_{\prov}$ such that $\sigma_0(h_Vv) = v'$ given $\sigma_{\prov}(hm) = (v', P, \provWit)$ with $m = (v, P, \provWit)$ and $v \in V_0$. Notice that this definition makes sense since there is a unique history $hm$ consistent with $\sigma_{\prov}$ ending with a vertex of $\prov{}$ associated with $h_Vv$ showing a one-to-one correspondence between those histories. %It holds that, given a play $\rho$ in $G$, there is a play $\rho'$ in $G'$ such that $\rho = \rho'_V$. 
    
    Let us show $\sigma_0$ is a solution to the \problemAb{} with $\paretoSet{\sigma_0}$ being the set $P$. 
    First notice that $P$ is not empty. Indeed let $\rho$ be a play consistent with $\sigma_{\prov}$. As $\rho$ belongs to $\Omega_{\prov{}}$ and in particular to $B_P$, one can check that $P \neq \emptyset$ by inspecting conditions~(\ref{cp_cond_1}) to (\ref{cp_cond_3}). Second notice that by definition of $E'$, if $((v, P, \provWit), (v, P, (\provWit_l, \provWit_r))) \in E'$ with $\provWit \neq \emptyset$, then either $\provWit_l$ or $\provWit_r$ is not empty. Therefore given any payoff $p \in P$, there is a unique play $\rho$ consistent with $\sigma_{\prov}$ such that $lim_\provWit(\rho) = \{p\}$. By construction of $\sigma_0$ and as $\sigma_{\prov}$ is winning, the play $\rho_V$ is consistent with $\sigma_0$, has payoff $p$, and is won by Player~$0$ (see (\ref{cp_cond_1})). 
    
    Let $\rho_V$ be a play consistent with $\sigma_0$ and $\rho$ be the corresponding play consistent with $\sigma_{\prov}$. It remains to consider (\ref{cp_cond_2}) and (\ref{cp_cond_3}). These conditions indicate that $\rho_V$ has a payoff equal to or strictly smaller than a payoff in $P$ and that in case of equality $\won{\rho_V} = 1$. This shows that $\paretoSet{\sigma_0} = P$ and that $\sigma_0$ is a solution to the \problemAb{}.
\end{proof}

\subsection{Fixed-Parameter Complexity of Reachability and Safety SP Games} 
\label{subsec:fptreach}

We now develop the proof of Theorem~\ref{thm:FPT_all} for reachability and safety \gamesAb{} which works by specializing the generic objective $\Omega_{\prov{}}$ of the \challengerProverAb{} game to handle reachability or safety objectives. Let $\mathcal{G} = (G, \ObjPlayer{0},\ObjPlayer{1}, \dots, \ObjPlayer{\nbrObjectives})$ be an \gameAb{} with either reachability objectives $\ObjPlayer{i} = \reach{T_i}$, $i \in \{0, \ldots, \nbrObjectives\}$, or safety objectives $\safe{S_i}$, $i \in \{0, \ldots, \nbrObjectives\}$. We start by extending the arena of this \challengerProverAb{} game with additional information.\\

For reachability, we extend the arena $G'$ of the \challengerProverAb{} game such that its vertices keep track of the objectives of $\mathcal{G}$ which are satisfied along a play. Given an extended payoff $(w,p) \in \{0,1\} \times \{0,1\}^{\nbrObjectives}$ and a vertex $v \in V$, we define the \emph{reachability payoff update} $\updateReach{w,p,v} = (w',p')$ such that
$$\begin{array}{lll}
w' = 1   &\iff & w = 1 \text{ or } v \in \target_0,\\
p'_i = 1 &\iff& p_i = 1 \text{ or } v \in \target_i, \quad \forall i \in \{1, \ldots, \nbrObjectives\}.
\end{array}$$
We obtain the extended arena $G^*$ from $G'$ as follows: \emph{(i)} its set of vertices is $V' \times \{0,1\} \times \{0,1\}^\nbrObjectives$, \emph{(ii)} its initial vertex is $ \bot^* = (\bot, 0, (0,\ldots,0))$, and \emph{(iii)} $((q, w, p),(q', w', p'))$ with $q' = (v', P, \provWit)$ or $q' = (v', P, (\provWit_l, \provWit_r))$ is an edge in $G^*$ if $(q, q') \in E'$ and $(w',p') = \updateReach{w,p,v'}$. %The projection $\rho_V$ (resp.\ $h_V$) of a play $\rho$ (resp.\ a history $h$) in $G^*$ is defined as before. It is direct to see that the extended payoff of any history $h \in \Hist_{G^*}$ is equal to $(\won{h_V},\payoff{h_V})$. 
Recall that in some edges in $E'$, the same $v$-component is repeated twice, and that such repetitions are eliminated when projecting a play or a history from $G'$ to $G$. The way we keep track of the satisfied objectives is not affected by those repetitions.

Similarly, for safety we extend the arena $G'$ such that its vertices keep track of the safety objectives of $\mathcal{G}$ which are \emph{not satisfied} along a play. Given an extended payoff $(w,p) \in \{0,1\} \times \{0,1\}^{\nbrObjectives}$ and a vertex $v \in V$, we define the \emph{safety payoff update} $\updateSafe{w,p,v} = (w',p')$ such that
$$\begin{array}{lll}
w' = 1   &\iff & w = 1 \text{ and } v \in \safeSet_0,\\
p'_i = 1 &\iff& p_i = 1 \text{ and } v \in \safeSet_i, \quad \forall i \in \{1, \ldots, \nbrObjectives\}.
\end{array}$$
The extended arena is obtained in the same way as for reachability, with the difference that its initial vertex is $ \bot^* = (\bot, 1, (1,\ldots,1))$. We define the zero-sum game $\mathcal{G^*} = (G^*, \Omega^*_\prov)$ in which the three abstract conditions (\ref{cp_cond_1}-\ref{cp_cond_3}) detailed previously are encoded into the following B\"uchi objective by using the $(w, p)$-component added to vertices. We define $\Omega^*_\prov = \Buchi{B^*}$ with
\begin{alignat}{1}
B^* = \big{\{} (v, P, \provWit, w, p) \in V^*_{\prov{}} ~\mid~
            &(W = \{p\} \land  w = 1) \ \lor \tag{\ref{cp_cond_1}'} \\
            &(W = \emptyset  \land  p \in P  \land w = 1) \ \lor \tag{\ref{cp_cond_2}'} \\
            &(W = \emptyset  \land  \exists p' \in  P,~ p < p') \big{\}}. \tag{\ref{cp_cond_3}'}
\end{alignat}

We introduce the following proposition, the proof of which is a consequence of Theorem~\ref{thm:CP}.

\begin{proposition}
\label{prop:FPTreach}
Player~$0$ has a strategy $\sigma_0$ that is solution to the \problemAb{} in a reachability or safety \gameAb{} $\mathcal{G}$ if and only if $\prov{}$ has a winning strategy $\sigma^*_{\prov}$ in $\mathcal{G^*}$.
\end{proposition}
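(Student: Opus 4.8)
The plan is to derive Proposition~\ref{prop:FPTreach} directly from Theorem~\ref{thm:CP} by showing that the extended game $\mathcal{G^*}$ is ``equivalent'' to the generic \challengerProverAb{} game $\mathcal{G'}$: namely, $\prov{}$ has a winning strategy in $\mathcal{G^*}$ if and only if he has one in $\mathcal{G'}$. Combining this with Theorem~\ref{thm:CP} then immediately yields the statement. So the real content is the correspondence between $\mathcal{G'}$ and $\mathcal{G^*}$.

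The first step is to observe that the arena $G^*$ is obtained from $G'$ purely by decorating each vertex with the extra $(w,p)$-component, where this component is updated deterministically along every edge by $\updateReach{}$ (resp.\ $\updateSafe{}$) applied to the $v$-component of the target vertex, starting from $(0,(0,\ldots,0))$ (resp.\ $(1,(1,\ldots,1))$). Hence there is a bijection between plays (and histories) of $G'$ and plays (and histories) of $G^*$: the projection that drops the $(w,p)$-component, and its inverse that recomputes it. Under this bijection, the players' vertices correspond, so strategies transfer back and forth, and $\rho$ consistent with a strategy in one game corresponds to the projected/lifted play consistent with the transferred strategy in the other. The second step, which is the crux, is to check that this $(w,p)$-decoration faithfully records the quantities appearing in conditions (\ref{cp_cond_1}--\ref{cp_cond_3}). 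Concretely: for a play $\rho^*$ in $G^*$ with projection $\rho$ in $G'$ and $\rho_V$ in $G$, the limit value $(w,p)$ of the extended-payoff component along $\rho^*$ equals the extended payoff $(\won{\rho_V},\payoff{\rho_V})$. For reachability this is because $w$ (resp.\ $p_i$) is monotone, flips to $1$ exactly when a vertex of $\target_0$ (resp.\ $\target_i$) is visited, and $\occ{\rho_V} \cap \target_j \neq \emptyset$ iff some such vertex is visited along the (finite or infinite) prefix; note the remark in the paper that repeated $v$-components on certain $E'$-edges do not affect this accumulation. For safety it is the dual monotone argument: $w$ (resp.\ $p_i$) stays $1$ exactly as long as only vertices of $\safeSet_0$ (resp.\ $\safeSet_i$) are seen, matching $\occ{\rho_V}\cap(V\setminus\safeSet_j)=\emptyset$. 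The third step is to match the B\"uchi condition with the limit condition: since the $\provWit$-component is non-increasing along a play and the $(w,p)$-component is eventually constant, a play $\rho^*$ eventually stays inside $B^*$ iff its stable $(W,w,p)$ values satisfy one of (\ref{cp_cond_1}'--\ref{cp_cond_3}'); and by the identification $W = lim_\provWit(\rho)$ and $(w,p) = (\won{\rho_V},\payoff{\rho_V})$, this holds iff $\rho$ satisfies $B_P$, i.e.\ lies in $\Omega_{\prov{}}$. Because $\Buchi{B^*}$ is prefix-independent, ``eventually in $B^*$'' is equivalent to ``$B^*$ visited infinitely often'', so $\rho^* \in \Omega^*_\prov$ iff $\rho \in \Omega_{\prov{}}$.

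Putting these together: given a winning strategy $\sigma^*_\prov$ for $\prov{}$ in $\mathcal{G^*}$, its projection to $G'$ is well-defined (the $(w,p)$-component being determined by history) and is winning in $\mathcal{G'}$ by the equivalence of objectives; conversely, any winning $\sigma_\prov$ in $\mathcal{G'}$ lifts to a winning strategy in $\mathcal{G^*}$. Therefore $\prov{}$ wins $\mathcal{G^*}$ iff $\prov{}$ wins $\mathcal{G'}$, and by Theorem~\ref{thm:CP} this holds iff Player~$0$ has a solution to the \problemAb{} in $\mathcal{G}$, establishing the proposition.

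The main obstacle, and the step deserving the most care, is the second one: rigorously arguing that the accumulated $(w,p)$-component equals the extended payoff of $\rho_V$ in the presence of the $v$-component duplications on the $\chal{}$-to-$\prov{}$ (and $\prov{}$-splitting) edges of $E'$, and handling the fact that for safety objectives one must track ``not satisfied'' monotonically rather than ``satisfied''. Once the bookkeeping is pinned down, the reduction to Theorem~\ref{thm:CP} is routine. I would also remark that $\mathcal{G^*}$ is now a \emph{reachability/safety-free} zero-sum game with a plain B\"uchi objective, which is exactly what is needed for the subsequent \FPT{} complexity bound, but that observation belongs to the paragraph following the proposition rather than to its proof.
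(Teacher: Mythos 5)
Your proposal is correct and follows essentially the same route as the paper: the paper likewise derives the proposition as a consequence of Theorem~\ref{thm:CP}, using the fact that the $(w,p)$- and $\provWit$-components stabilize along a play of $G^*$ so that they record $(\won{\rho_V},\payoff{\rho_V})$ and $lim_\provWit(\rho)$, whence visiting $B^*$ infinitely often coincides with satisfying one of conditions (\ref{cp_cond_1}--\ref{cp_cond_3}). Your write-up merely makes the play/strategy correspondence between $G'$ and $G^*$ and the handling of repeated $v$-components more explicit than the paper does.
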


\begin{proof}
Given the fact that $\mathcal{G}$ is a reachability or safety \gameAb{}, the $(w,p)$-component in vertices of $G^*$ allows us to easily retrieve the extended payoff of a play in $G$. Indeed, in a play $\rho \in \Plays_{G^*}$, given the construction of $G^*$ and the payoff update function, it holds that from some point on the $W$- and $(w,p)$-components are constant. Therefore it holds that $w = \won{\rho_V}$, $p = \payoff{\rho_V}$ and $\provWit = lim_\provWit(\rho)$ for that play $\rho$. Moreover the {$P$-component} is constant along a play in $G^*$. %It is direct to see that the plays $\rho$ in $G^*$ which visit infinitely often the set $B^*$, and therefore satisfy the B\"uchi objective $\Omega_\prov = \Buchi{B}$, are exactly the plays in $G'$ in $B_P$ for some antichain $P$.
It is direct to see that the plays $\rho$ in $G^*$ which visit infinitely often the set $B^*$, and therefore satisfy the B\"uchi objective $\Omega^*_\prov = \Buchi{B^*}$, satisfy one of the three conditions (\ref{cp_cond_1}-\ref{cp_cond_3}) stated in Subsection~\ref{subsec:cp}. The converse is also true.
\end{proof}

We now describe an \FPT{} algorithm for deciding the existence of a solution to the \problemAb{} in a reachability or safety \gameAb{}, thus proving Theorem~\ref{thm:FPT_all} for reachability and safety \gamesAb{}. The parameter of this \FPT{} algorithm is $\nbrObjectives$ and its complexity is double exponential in $\nbrObjectives$ as stated in Table~\ref{table:fpt-summary}.

\begin{proof}[Proof of Theorem~\ref{thm:FPT_all} for reachability and safety \gamesAb{}.]
We describe the following \FPT{} algorithm (for parameter~$\nbrObjectives$) for deciding the existence of a solution to the \problemAb{} in a reachability or safety \gameAb{} $\mathcal{G}$ by using Proposition~\ref{prop:FPTreach}. First, we construct the zero-sum game $\mathcal{G^*}$. Its number $n$ of vertices is upper-bounded by 
\begin{eqnarray}
\label{eq:size}
1 + |V| \cdot 2^{2 \cdot 2^{\nbrObjectives}} \cdot 2^{\nbrObjectives+1} + |V| \cdot 2^{3 \cdot 2^t} \cdot 2^{\nbrObjectives+1}. 
\end{eqnarray}
Indeed, except the initial vertex, vertices are of the form either $(v, P, \provWit,w,p)$ or $(v, P, (\provWit_l, \provWit_r),w,p)$ such that $P$, $\provWit$, $\provWit_l$ and $\provWit_r$ are antichains of payoffs in $\{0,1\}^\nbrObjectives$, and $(w,p)$ is an extended payoff. The construction of $\mathcal{G^*}$ is thus in \FPT{} for parameter~$\nbrObjectives$, with a time complexity double exponential in $\nbrObjectives$. Second, By Proposition~\ref{prop:FPTreach}, deciding whether there exists a solution to the \problemAb{} in $\mathcal{G}$ amounts to deciding if $\prov$ has a winning strategy from $\bot^*$ in $\mathcal{G^*}$. Since the objective $\Omega^*_\prov$ of $\prov$ in $\mathcal{G^*}$ is a B\"uchi objective, this game can be solved in $\mathcal{O}(n^2)$~\cite{ChatterjeeH14}. It follows that $\mathcal{G^*}$ can be solved in \FPT{} for parameter~$\nbrObjectives$, with a time complexity double exponential in $\nbrObjectives$.
\end{proof}

\subsection{Fixed-Parameter Complexity of SP Games with Prefix-Independent Objectives} 
\label{subsec:fptboolean}

In order to prove Theorem~\ref{thm:FPT_all} for prefix-independent objectives, we first show that solving the \problemAb{} for Boolean B\"uchi \gamesAb{} is in \FPT{}. Then, using the relationships between Boolean B\"uchi and the other prefix-independent objectives stated in Table~\ref{table:summaryBooleanBuchiTable}, we extend this result to the later objectives by reduction to Boolean B\"uchi \gamesAb{}. \\

The intuition behind the \FPT{} result for Boolean B\"uchi \gamesAb{} is the following: we extend the \challengerProverAb{} game to keep track for each Boolean B\"uchi objective of the sets that are visited infinitely often along a play. This allows us to retrieve for a play in the \challengerProverAb{} game the extended payoff of the corresponding play in the original game arena. We then encode the generic objective of the \challengerProverAb{} game into a parity objective using this information.

\paragraph{Extending the \challengerProverAb{} Game with the \textsf{SAR}}
We start by considering a Boolean B\"uchi \gameAb{} $\mathcal{G} = (G, \Omega_0, \Omega_1, \dots, \Omega_t)$ and the arena $G'$ of the corresponding \challengerProverAb{} game. %Recall that there is a one-to-one correspondence between plays in $G$ and $G'$ after the selection of $P$ in $G'$ 
 Our goal is to extend the \challengerProverAb{} game with a single \textsf{SAR} structure, as introduced in Subsection~\ref{subsec:AlternativeFPTBooleanBuchi}, for all Boolean B\"uchi objectives of $\mathcal{G}$. This allows us to keep track in a play in $G'$ of the recently occurring sets for each Boolean B\"uchi objective in the corresponding play in $G$. To do so, we devise the following \textsf{SAR} structure. Given the objectives $\Omega_i = \BooleanBuchi{\phi_i, T^i_1, \dots, T^i_{m_i}}$ for $i \in \{0, 1, \dots, \nbrObjectives\}$, we write $T = \bigcup_{i=0}^{\nbrObjectives} \{T^i_1, \dots, T^i_{m_i}\}$ the set of all sets used to define the Boolean B\"uchi objectives. %The \textsf{SAR} structure for the sets in $T$ tracks recently occurring sets for every Boolean B\"uchi objectives in $\mathcal{G}$, we therefore augment the arena of the \challengerProverAb{} game with this structure. 
The extended arena $G^*= (V^*, V^*_{\prov{}}, V^*_{\chal{}}, E^*, \bot)$ of the \challengerProverAb{} game is defined as follows (we use the previous notations of the \textsf{SAR} automaton):
\begin{itemize}
    \item $V^*_{\prov{}} = V'_{\prov{}} \times \mathcal{P}(T^0_1, \dots, T^0_{m_0}, \dots, T^\nbrObjectives_1, \dots, T^\nbrObjectives_{m_\nbrObjectives})  \times \{1, \dots, |T| +1\}$,
    \item $V^*_{\chal{}} = V'_{\chal{}} \times \mathcal{P}(T^0_1, \dots, T^0_{m_0}, \dots, T^\nbrObjectives_1, \dots, T^\nbrObjectives_{m_\nbrObjectives})  \times \{1, \dots, |T| +1\}\}$,
    \item $(\bot, (v, P, \provWit, s, r)) \in E^*$ if $v = v_0$, $P = \provWit$, $s = s_0$ and $r = r_0$,
    \item $((q, s, r), (q', s', r')) \in E^*$ with $q, q' \in V'$ and $v$ the $v$-component of $q$ if $(q, q') \in E'$ and $(s', r') = \delta((s, r),v)$.
\end{itemize}

Notice that the \textsf{SAR} structure is updated using the $v$-component of the vertices in $V^*$, which is a vertex from $V$. While in a play $\rho^*$ of $G^*$ the $\textsf{SAR}$ may be updated with the same vertex twice (this vertex being eliminated in the projection $\rho^*_V$ of $\rho^*$), this does not affect the $\textsf{SAR}$. Hence the vertices of $G^*$ embed the run of the $\textsf{SAR}$ for the sets in each Boolean B\"uchi objective in the corresponding projection in $G$. Given a play in $G^*$ we can thus determine the payoff of the corresponding play in $G$ for each of the Boolean B\"uchi objective $\Omega_i$, with $i \in \{0, \dots, \nbrObjectives\}$.

\paragraph{Transforming the Generic Objective of the \challengerProverAb{} Game}
Now that we can obtain the payoff of the projection in $G$ of a play in $G^*$, we can use this information to transform the generic objective of the \challengerProverAb{} game into a parity objective in $G^*$, exactly as we did in Subsection~\ref{subsec:AlternativeFPTBooleanBuchi}. The idea behind this objective is to ensure that a play in the extended \challengerProverAb{} game $G^*$ satisfies the parity objective if and only if one of the three abstract conditions (\ref{cp_cond_1}-\ref{cp_cond_3}) is satisfied in that play. Notice that these conditions require obtaining the extended payoff of the projection of a play in $G^*$, which we obtain using the \textsf{SAR}. Given a state $(q, s, r)$ of $G^*$, the formula $\phi_i$ of each Boolean B\"uchi objective $\Omega_i$ and its set of variables $X_i$, we write $pay(s, r) \in \{0, 1\}^t$ the vector of Booleans such that $pay(s, r)_i = 1$ if and only if the valuation of $X_i$ such that $x^i_j = 1$ if and only if $T^i_j \in s_{\geq {r}}$ and $x^i_j = 0$ otherwise satisfies $\phi_i$. We define $won(s, {r}) \in \{0, 1\}$ similarly for $\Omega_0$. We define the following priority function $c$ on the vertices of $G^*$. 
\[
    c((v, P, W, s, r)) = \begin{cases}
        2 \cdot {{r}} - 2 & \text{if } (W = \{pay(s, {r})\} \land won(s, {r}) = 1) \ \lor\\
                        & (W = \emptyset \land pay(s, {r}) \in P \land won(s, {r}) = 1) \ \lor\\
                        & (W = \emptyset \land  \exists p \in P, pay(s, {r}) < p),\\
        2 \cdot {{r}} - 1 & \text{otherwise.}
    \end{cases}
\]

%The minimal color occurring infinitely often will be $2 \cdot {{r_{min}}} - 2$ if the sets visited infinitely often (and therefore the payoff of the play) respects the three conditions and $2 \cdot {{r_{min}}} - 1$ otherwise. 

It can be shown that a play satisfies the parity objective $\parity{c}$ if and only if it is winning in the \challengerProverAb{} game and therefore satisfies one of the three conditions.

We can now show the fixed-parameter complexity of solving the \problemAb{} on \gamesAb{} for Boolean B\"uchi objectives and then for all prefix-independent objectives, therefore completing the proof of Theorem~\ref{thm:FPT_all} while establishing the complexities stated in Table~\ref{table:fpt-summary}. 

\begin{proof}[Proof of Theorem~\ref{thm:FPT_all} for prefix-independent objectives.] 
We begin by proving the result for Boolean B\"uchi objectives. Let $m = |T|$ and $k = \sum_{i=0}^t |\phi_i|$.
Deciding whether there exists a solution to the \problemAb{} reduces to finding a winning strategy for $\prov{}$ in the extended \challengerProverAb{} game $\mathcal{G}^* = (G^*,\parity{c})$. The number of vertices in $G^*$ is computed as in (\ref{eq:size}) (which is the size of the extended \challengerProverAb{} game for reachability objectives), where each factor $2^{t+1}$ (equal to the number extended payoffs $(w,p)$) is replaced by $m! \cdot (m+1)$ (equal to the number of possible configurations of the \textsf{SAR} structure given $T$). Its priority function $c$ is such that $\max(c) = 2 \cdot (m + 1)$ and requires to evaluate $pay(s, {r})$ and $won(s, {r})$ for each vertex $(q, s, r)$ of $G^*$. We then repeat the argument as given in the proof of Theorem~\ref{thm:bbnewcomp}. Constructing $\mathcal{G}^*$ is in \FPT, with a time complexity polynomial in $|V|$, double exponential in $\nbrObjectives$, exponential in $m$, and linear in $k$. By Theorem~\ref{thm:calude}, solving $\mathcal{G}^*$ is in \FPT{} with a time complexity polynomial in $|V|$, double exponential in $\nbrObjectives$, and exponential in $m$. The whole algorithm (constructing $\mathcal{G}^*$ and solving it) is then in \FPT, with a time complexity polynomial in $|V|$, double exponential in $\nbrObjectives$, exponential in $m$, and linear in $k$.

Let us now consider an \gameAb{} $\mathcal{G}$ for a prefix-independent objective. Using Proposition~\ref{prop_bb_encoding}, we can encode each objective of $\mathcal{G}$ into an equivalent Boolean B\"uchi objective in polynomial time in the size of the original objective and independently of the size of the arena. We consider the Boolean B\"uchi \gameAb{} $\mathcal{G'}$ with the same arena $G$ and where every objective $\Omega'_i$ is the Boolean B\"uchi encoding of the corresponding objective $\Omega_i$ in $\mathcal{G}$. It follows that solving the \problemAb{} in $\mathcal{G}$ is equivalent to solving the \problemAb{} in $\mathcal{G'}$ which is in \FPT{} by the first part of the proof. In view of Table~\ref{table:summaryBooleanBuchiTable}, one can check the time complexity of this \FPT{} algorithm as given in Table~\ref{table:fpt-summary}.
\end{proof}

We conclude this section with the following remark. Instead of imposing that an \gameAb{} have the same type of objective for all $\Omega_i$, $i \in \{0,\ldots,\nbrObjectives\}$, we could allow the mixing of different kinds of prefix-independent objectives. One can check that the \FPT{} result of Theorem~\ref{thm:FPT_all} is still valid in this general context.

\begin{remark}
Solving the \problemAb{} is in \FPT{} for \gamesAb{} with a collection of different prefix-independent objectives.
\end{remark}

\section{\textsf{NEXPTIME} Membership}
%=====================================
\label{sec:nexptime_member}

In this section we show the  membership to \nexptime{} of the \problemAb{} by providing a nondeterministic algorithm with time exponential in the size $|\mathcal{G}|$ of the \gameAb{} $\mathcal{G}$ (recall that $|\mathcal{G}|$ depends on the number $|V|$ of vertices of $G$ and on $\nbrObjectives$, see Definition~\ref{def:SPgame}). Notice that the time complexity of the \FPT{} algorithms obtained in the previous section is too high, preventing us from directly using the \challengerProverAb{} game to show a tight membership result. Conversely, the nondeterministic algorithm provided in this section is not \FPT{} as it is exponential in $|V|$. In the next section, we provide a better upper bound for B\"uchi \gamesAb{} by providing an \np{} algorithm. We study the hardness of the \problemAb{} in Section~\ref{sec:nexptime_hard}.

\begin{theorem} \label{thm:nexptime}
The \problemAb{} is in \nexptime{} for \gamesAb{}. 
\end{theorem}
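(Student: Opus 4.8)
The plan is to show that positive instances of the \problemAb{} admit solutions of exponential size, and then to guess such a solution nondeterministically and verify it in deterministic exponential time. Note first that the \challengerProverAb{} game of Subsection~\ref{subsec:cp} cannot be used directly: its arena is \emph{doubly} exponential in $\nbrObjectives$, since the $P$- and $W$-components range over antichains of $\{0,1\}^{\nbrObjectives}$ and over their subsets, so solving it is far too costly. The argument must instead normalize an arbitrary solution by hand.

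The crux is a \emph{normalization lemma}: if Player~$0$ has a solution $\sigma_0$, then he has a finite-memory solution $\sigma_0'$ described by a Moore machine with $2^{\mathrm{poly}(|V|,\nbrObjectives)}$ states and $\paretoSet{\sigma_0'} = \paretoSet{\sigma_0}$. I would build $\sigma_0'$ from two ingredients. (i) A \emph{witness tree} $\tau$: the set $\Wit{\sigma_0}$ of witnesses seen as a tree; since a strategy is deterministic, two consistent witnesses can diverge only at a vertex of Player~$1$, so $\tau$ branches only at Player~$1$ vertices and has at most $|\paretoSet{\sigma_0}| \le 2^{\nbrObjectives}$ leaves. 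Each of its $O(2^{\nbrObjectives})$ segments between consecutive branch points, and each ultimately periodic leaf tail, is then shortened to a lasso of polynomial length by a cycle-elimination/pumping argument that preserves, for every root-to-leaf branch, its payoff (which lies in $\paretoSet{\sigma_0}$) and the fact that it is won by Player~$0$. (ii) For every Player~$1$ vertex $v$ of $\tau$ whose successor $v'$ leaves $\tau$ (a \emph{deviation}), a punishing strategy played from $v'$: by Theorem~\ref{thm:CP} and the definition of $\Omega_{\prov}$, the restriction of $\sigma_0$ after such a deviation already guarantees that the resulting play $\rho$ has $\payoff{\rho} < p$ for some $p \in \paretoSet{\sigma_0}$, or $\payoff{\rho} \in \paretoSet{\sigma_0}$ with $\won{\rho} = 1$. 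This punishing objective is a Boolean combination of the $\Omega_i$ that depends only on the tail of the play (and, in the reachability and safety cases, on the extended payoff accumulated along $\tau$ up to $v'$, of which there are at most $2^{\nbrObjectives+1}$ values); hence, by determinacy of $\omega$-regular zero-sum games and the reduction of Section~\ref{sec:link} to a parity game on the \textsf{SAR}-extended arena, Player~$0$ has a winning strategy for it from $v'$ of size $2^{\mathrm{poly}(|V|,\nbrObjectives)}$. The product of $\tau$ with the finitely many punishing strategies is $\sigma_0'$, and one checks it is still a solution: $\tau$ realises every payoff of $\paretoSet{\sigma_0}$ and wins there, while the punishing strategies guarantee every other consistent play has payoff dominated by some $p \in \paretoSet{\sigma_0}$.

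Given the lemma, the \nexptime{} algorithm is: nondeterministically guess a Moore machine $\mathcal{M}$ with $2^{\mathrm{poly}(|V|,\nbrObjectives)}$ states (an object of exponential size), form the product $G \times \mathcal{M}$ — also of exponential size, and whose infinite paths from the initial vertex are exactly the plays consistent with the induced strategy $\sigma_0'$ — and then, by an SCC analysis of $G \times \mathcal{M}$, compute the set of payoffs of plays consistent with $\sigma_0'$, take the maximal ones (this is $\paretoSet{\sigma_0'}$), and accept iff every play of $G \times \mathcal{M}$ whose payoff is in $\paretoSet{\sigma_0'}$ is won by Player~$0$. For prefix-independent objectives each such payoff/won test is determined by the vertex set of the SCC in which the play settles; for reachability and safety one additionally tracks the extended payoff inside the product. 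All steps are deterministic exponential time in $|\mathcal{G}|$, so the procedure accepts iff the guessed $\sigma_0'$ is a solution, hence (by the normalization lemma) iff the input is a positive instance.

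The main obstacle is the normalization lemma, and within it the shortening of the witness tree: cycle elimination on one segment must not break consistency of the remaining branches with a single deterministic strategy, must not turn a won branch into a lost one (for prefix-independent objectives the set of infinitely occurring vertices on the periodic part must be preserved), and must remain compatible with the relabelling of deviation points and of the payoffs accumulated up to them. Treating the prefix-dependent (reachability, safety) and prefix-independent cases uniformly, and bounding the size of the punishing strategies, are the remaining technical points; the verification step is then routine graph analysis on a single exponential-size product.
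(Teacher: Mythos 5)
Your proposal follows essentially the same route as the paper's proof: first establish that any solution can be normalized to a finite-memory one of exponential size (the witnesses organized as a tree, compacted section by section into short paths and lassos, combined with exponentially many exponential-size punishing strategies triggered at deviations, whose existence comes from the fact that continuing the original solution after a deviation already forces $\Omega_0$ or a dominated payoff), and then guess the Moore machine and verify it on the exponential-size product $G \times \mathcal{M}$ by SCC/payoff analysis, exactly as in the paper. The one refinement the paper makes explicit and your sketch leaves implicit is that, for reachability and safety, witness segments must be cut not only at branch points of the tree but also at every point where the accumulated extended payoff changes (the paper's ``regions''), which is precisely what guarantees that cycle elimination inside a segment preserves the payoff and the won/lost status of each branch.
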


We show that the \problemAb{} is in \nexptime{} by proving that if Player~$0$ has a strategy which is a solution to the problem, then he has one which is finite-memory with at most an exponential number of memory states\footnote{Recall that to have a solution to the \problemAb{}, memory is sometimes necessary as shown in Example~\ref{ex:example}.}.
% and we will show later in the paper that exponential size memory is justified by the fact that the problem, for reachability objectives, is indeed hard for \nexptime{}.} 
This yields a \nexptime{} algorithm in which we nondeterministically guess such a strategy and check in exponential time that it is indeed a solution to the problem. 

\begin{proposition} \label{prop:tildesigma}
Let $\mathcal{G}$ be an \gameAb{}. Let $\sigma_0$ be a solution to the \problemAb{}. Then there exists another solution $\tilde{\sigma}_0$ that is finite-memory and has a memory size exponential in the size of $\mathcal{G}$.
\end{proposition}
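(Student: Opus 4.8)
The plan is to obtain $\tilde{\sigma}_0$ from the given solution $\sigma_0$ by keeping in memory only a bounded amount of information: which part of a fixed finite family of ``witness plays'' is currently being followed, the extended payoff accumulated so far, and — once Player~1 has deviated from all witnesses — the memory of a punishing strategy. Fix $P = \paretoSet{\sigma_0}$, which is an antichain and hence contains at most $2^\nbrObjectives$ payoffs, and fix a set of witnesses $\Wit{\sigma_0}$. Two elementary facts will be used repeatedly: the witnesses, viewed as a set of infinite branches sharing prefixes, form a tree that branches only at vertices of Player~1 (after a history $h$ ending in a vertex of Player~0, every witness through $h$ continues with $\sigma_0(h)$); and any solution with Pareto set $P$ must, against a cooperative Player~1, realize exactly the payoffs in $P$ along such a tree.

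First I would \emph{normalize} this witness tree into a finite object of size exponential in $|\mathcal{G}|$ — a \Witproperty{}. The tool is a pumping argument. The tree has at most $|P|$ infinite branches, hence at most $|P|-1$ branching nodes; each non-branching \emph{segment} between two consecutive branching nodes can be replaced by an elementary path, and each maximal branch, beyond its last branching node, can be replaced by an ultimately periodic play that eventually cycles through exactly the strongly connected vertex set $S = \infOcc{\witnessI}$ of the original witness $\witnessI$ on that branch. For prefix-independent objectives this rewriting changes neither the payoff nor the winning status of any branch, since $\infOcc{\cdot}$ is preserved; for reachability and safety objectives one instead argues directly that a witness for a payoff $p \in P$ can be taken of length polynomial in $|V|$ and $\nbrObjectives$ (visiting exactly the targets, resp.\ remaining in exactly the safe sets, prescribed by $p$), while still branching only at vertices of Player~1. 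The normalized tree then has polynomially many (in $|V|$) vertices on each branch and at most $2^\nbrObjectives$ branches, so its size is polynomial in $|V|$ and exponential in $\nbrObjectives$. I expect this normalization to be the main obstacle — it is where the possibly infinite memory of $\sigma_0$ is actually discarded — and it is plausibly already available as a preceding structural lemma about solutions.

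I would then define $\tilde{\sigma}_0$ in two phases. As long as the play stays on the \Witproperty{}, Player~0 follows it: his move is read off the tree and his memory is the current tree node (inside a cyclic section, the current position in the finite cycle suffices, so this is finite). If Player~1 never leaves the tree, the outcome is one of the normalized branch plays; their payoffs are exactly the elements of $P$ and they are all won. The moment Player~1, from one of its vertices $v$ lying on the tree, moves to a successor $v'$ that is not the prescribed tree successor, $\tilde{\sigma}_0$ switches \emph{permanently} to a finite-memory punishing strategy $\punStrat{(v',w,p)}$, where $(w,p)$ is the extended payoff accumulated along the tree up to $v$ (read off the current tree node). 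The role of $\punStrat{(v',w,p)}$ is to force the resulting outcome $\witness$ to satisfy $\payoff{\witness} < p'$ for some $p'\in P$, or $\payoff{\witness}\in P$ with $\won{\witness}=1$ — precisely conditions~(\ref{cp_cond_2}) and~(\ref{cp_cond_3}) of the \challengerProverAb{} game (cf.\ Theorem~\ref{thm:CP}).

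Such a punishing strategy exists and has small memory. Because the deviating vertex $v$ belongs to Player~1, the history obtained after the move to $v'$ is still consistent with $\sigma_0$; hence the shift of $\sigma_0$ at that history is a strategy all of whose outcomes have payoff bounded by some element of $P$, with equality forcing a win. So Player~0 wins, from the configuration given by the pair $(v',(w,p))$, the $\omega$-regular game whose objective is the condition above: this objective is a fixed finite Boolean combination of $\ObjPlayer{0},\dots,\ObjPlayer{\nbrObjectives}$ and of threshold tests against the fixed finite set $P$, on an arena enlarged by the factor $2^{\nbrObjectives+1}$ that tracks $(w,p)$ (needed only for reachability and safety). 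By finite-memory determinacy of $\omega$-regular games, Player~0 has a finite-memory winning strategy $\punStrat{(v',w,p)}$ whose memory size is bounded by the standard bounds for such objectives — polynomial in $|V|$, exponential in $\nbrObjectives$ (and in the size of the objectives). Since it depends only on the pair $(v',(w,p))$, of which there are at most $|V|\cdot 2^{\nbrObjectives+1}$, one fixed punishing strategy per pair suffices, and the total memory of $\tilde{\sigma}_0$ is the set of \Witproperty{} nodes together with $|V|\cdot 2^{\nbrObjectives+1}$ copies of a punishing memory — exponential in $|\mathcal{G}|$. Finally, every play consistent with $\tilde{\sigma}_0$ is either a normalized branch play (payoff in $P$, won) or one in which Player~1 has deviated (payoff strictly below some element of $P$, or in $P$ and won); therefore $P$ is exactly the set of Pareto-optimal payoffs of $\tilde{\sigma}_0$ and every Pareto-optimal play is won, i.e.\ $\tilde{\sigma}_0$ is a solution, completing the proof.
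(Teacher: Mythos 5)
Your overall route is the same as the paper's: keep the set of witnesses $\Wit{\sigma_0}$, compact it into an exponential-size tree that a finite-memory strategy can reproduce, and switch after any deviation to one of at most $|V|\cdot 2^{\nbrObjectives+1}$ finite-memory punishing strategies, whose existence follows from the fact that the continuation of $\sigma_0$ already wins the corresponding zero-sum ``won or dominated'' game, together with finite-memory determinacy. The paper does exactly this (its punishing lemmas use an explicit Muller encoding for prefix-independent objectives and a reachability-or-safety encoding on a payoff-tracking arena for reachability and safety, rather than your generic determinacy appeal, but that difference is minor).

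The one genuine gap is in how you justify the switch to $\punStrat{(v',w,p)}$ after a deviation from the \emph{compacted} tree. You write that ``the history obtained after the move to $v'$ is still consistent with $\sigma_0$'', but it is not: compacted histories are consistent only with the new strategy $\tilde{\sigma}_0$. What is needed --- and what the paper proves --- is that every deviation $gv'$ from the compacted witnesses shares its data $(v',\won{g},\payoff{g})$ with some genuine deviation $hv'$ from $\Wit{\sigma_0}$; only then does the shift of $\sigma_0$ at $hv'$ witness that Player~$0$ wins the punishing game from that configuration, and only then can the same strategy be reused after $gv'$. This forces the compaction to preserve the extended payoff and the set $\prefStrat{\cdot}$ at \emph{every prefix}, not merely the final payoff of each witness. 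Your segmentation only at branching nodes does not guarantee this for reachability: removing a cycle inside such a segment can delete the unique visit to some target set, changing both the witness's payoff and the configuration handed to the punishing strategy. Your parenthetical fix for reachability and safety (``a witness visiting exactly the targets prescribed by $p$, of polynomial length'') is the right idea but is asserted rather than constructed; the paper makes it precise by cutting each witness into sections on which the whole region $(\won{h},\payoff{h},\prefStrat{h})$ is constant, so that cycle elimination inside a section can never erase a payoff-changing or branching event, and deviations from the compacted tree then inherit matching original deviations by construction.
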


While the proof of Proposition~\ref{prop:tildesigma} requires some specific arguments to treat reachability, safety or prefix-independent objectives $\Omega_i$, $i \in \{0,\ldots,\nbrObjectives\}$, it is based on the following common principles. %We first explain why, when there is a solution $\sigma_0$ to the \problemAb{}, there is one that is finite-memory. We consider a fixed set of witnesses $\Wit{\sigma_0}$. Figure~\ref{fig:nexptime_strategy} illustrates the two steps of the following construction. 

\begin{itemize}
    \item We start from a winning strategy $\sigma_0$ for the \problemAb{} and the objectives $\ObjPlayer{0},\ObjPlayer{1},\dots,\ObjPlayer{\nbrObjectives}$ and we consider a set of witnesses $\Wit{\sigma_0}$, that contains one play for each element of the set  $\paretoSet{\sigma_0}$ of \paretoOptimal{} payoffs.
    \item We start by showing the existence of a strategy $\hat{\sigma}_0$ constructed from $\sigma_0$, in which Player~$0$ follows $\sigma_0$ as long as the current consistent history is prefix of at least one witness in $\Wit{\sigma_0}$. Then when a deviation from $\Wit{\sigma_0}$ occurs, Player~$0$ switches to a so-called \emph{punishing strategy}. A deviation is a history that leaves the set of witnesses $\Wit{\sigma_0}$ after a move of Player~$1$ (this is not possible by a move of Player~$0$). After such a deviation, $\hat{\sigma}_0$ systematically imposes that the consistent play either satisfies $\ObjPlayer{0}$ or is not \paretoOptimal{}, i.e., it gives to Player~1 a payoff that is strictly smaller than the payoff of a witness in $\Wit{\sigma_0}$. This makes the deviation \emph{irrational} for Player~1.  We show that this can be done for each kind of objective with at most exponentially many different punishing strategies, each having a memory size bounded exponentially in the size of the game. The strategy $\hat{\sigma}_0$ that we obtain is therefore composed of the part of $\sigma_0$ that produces $\Wit{\sigma_0}$ and a punishment part whose size is at most exponential.
    \item Then, we show how to decompose each witness in $\Wit{\sigma_0}$ into at most exponentially many \emph{sections} that can, in turn, be compacted into finite elementary paths or lasso shaped paths of polynomial length. As $\Wit{\sigma_0}$ contains exactly $|\paretoSet{\sigma_0}|$ witnesses $\rho$, those compact witnesses $c\rho$ can be produced by a finite-memory strategy with an exponential size, regardless of the objective considered. This allows us to construct a strategy $\tilde{\sigma}_0$ that produces the compact witnesses and acts as $\hat{\sigma}_0$ after any deviation. This strategy is a solution of the \problemAb{} and has an exponential size as announced.
\end{itemize}

\begin{figure}
	\centering
		\resizebox{\textwidth}{!}{%
		\begin{tikzpicture}
          
        % Added to make larger fit in less v space
        %\node[] () at (6,5){};
        %\node[] () at (27,5){};

        % First tree
        \draw[-, thick] (10,10) edge [] node {} (7, 5);
        \draw[-, thick] (10,10) edge [] node {} (13, 5);

        \node[] (0) at (9.55,9.9){$\sigma_0$};
        \node[] (0) at (10,10){};
        
        \node[rectangle, fill=darkgray, inner sep = 0pt, minimum size=4pt] (1) at (10,9){};
        \node[rectangle, fill=darkgray, inner sep = 0pt, minimum size=4pt] (2) at (9.5,8){};
        \node[rectangle, fill=darkgray, inner sep = 0pt, minimum size=4pt] (3) at (11.25,6.5){};
        
        \node[] (4) at (7.5,5){};
        \node[] () at (7.85,5.2){$\rho_1$};
        \node[] (5) at (9.6,5){};
        \node[] () at (9.95,5.2){$\rho_2$};
        \node[] (6) at (11.5,5){};
        \node[] () at (11.15,5.2){$\rho_3$};
        \node[] (7) at (12.5,5){};
        \node[] () at (12.22,5.2){$\rho_4$};

		\draw[semithick] plot [smooth, tension=0.7] coordinates { (0) (1) (11,7.75) (3) (12.25, 5.75) (7)};
		\draw[semithick] plot [smooth, tension=0.7] coordinates { (3) (11.25,5.75) (6)};
		\draw[semithick] plot [smooth, tension=0.7] coordinates { (1) (2) (9.5,7) (9.75,6) (5)};
		\draw[semithick] plot [smooth, tension=0.7] coordinates { (2) (8.25, 6.5) (4)};
		
        \node[draw, rectangle, fill=darkgray, inner sep = 0pt, minimum size=4pt] () at (10,9){};
        \node[draw, rectangle, fill=darkgray, inner sep = 0pt, minimum size=4pt] () at (9.5,8){};
        \node[draw, rectangle, fill=darkgray, inner sep = 0pt, minimum size=4pt] () at (11.25,6.5){};

        % Second tree
        
        \draw[-, thick] (16.5,10) edge [] node {} (13.5, 5);
        \draw[-, thick] (16.5,10) edge [] node {} (19.5, 5);

        \node[] (0bis) at (16.05,9.9){$\hat{\sigma}_0$};
        \node[] (0bis) at (16.5,10){};
        
        \node[rectangle, fill=darkgray, inner sep = 0pt, minimum size=4pt] (1bis) at (16.5,9){};
        \node[rectangle, fill=darkgray, inner sep = 0pt, minimum size=4pt] (2bis) at (16,8){};
        \node[rectangle, fill=darkgray, inner sep = 0pt, minimum size=4pt] (3bis) at (17.75,6.5){};%0.25
        
        \node[] (4bis) at (14,5){};
        %\node[] () at (14,4.75){$\rho_1$};
        \node[] (5bis) at (16.1,5){};
        %\node[] () at (16.1,4.75){$\rho_2$};
        \node[] (6bis) at (18,5){};%0.25
        %\node[] () at (18,4.75){$\rho_3$};
        \node[] (7bis) at (19,5){};%0.25
        %\node[] () at (19,4.75){$\rho_4$};

		\draw[semithick] plot [smooth, tension=0.7] coordinates { (0bis) (1bis) (17.5,7.75) (3bis) (18.75, 5.75) (7bis)};%0.25
		\draw[semithick] plot [smooth, tension=0.7] coordinates { (3bis) (17.75,5.75) (6bis)};%0.25
		\draw[semithick] plot [smooth, tension=0.7] coordinates { (1bis) (2bis) (16,7) (16.25,6) (5bis)};
		\draw[semithick] plot [smooth, tension=0.7] coordinates { (2bis) (14.75, 6.5) (4bis)};

        \node[draw, rectangle, fill=darkgray, inner sep = 0pt, minimum size=4pt] () at (16.5,9){};
        \node[draw, rectangle, fill=darkgray, inner sep = 0pt, minimum size=4pt] () at (16,8){};
        \node[draw, rectangle, fill=darkgray, inner sep = 0pt, minimum size=4pt] () at (17.75,6.5){};%0.25
        
        % punish1
		\node[draw, rectangle, fill=lightgray, inner sep = 0pt, minimum size=4pt] (pun1bis) at (16,7){};
		\node[inner sep = 0pt, minimum size=0pt] (pun1bisroot) at (15.5,6.5){};
        \draw[-] (15.5,6.5) edge [] node {} (14.75, 5);
        \draw[-] (15.5,6.5) edge [] node {} (15.9, 5);
		\node[] () at (15.35,5.4){$\punStrat{1}$};
        \draw[-stealth] (pun1bis) edge [] node {} (pun1bisroot);
        \draw [draw=gray, fill=lightgray, opacity=0.2]
       (15.5,6.5) -- (14.75, 5) -- (15.9, 5) --  cycle;

        % punish2
        \node[draw, rectangle, fill=lightgray, inner sep = 0pt, minimum size=4pt] (pun2bis) at (17.25,8.1){}; %0.25 0.1
		\node[inner sep = 0pt, minimum size=0pt] (pun2bisroot) at (17,7.8){};
        \draw[-] (17,7.8) edge [] node {} (16.35, 5);
        \draw[-] (17,7.8) edge [] node {} (17.75, 5);
		\node[] () at (17,5.95){$\punStrat{2}$};
        \draw[-stealth] (pun2bis) edge [] node {} (pun2bisroot);
        \draw [draw=gray, fill=lightgray, opacity=0.2]
       (17,7.8) -- (16.35, 5) -- (17.75, 5) --  cycle;
       
        % Third tree
        \draw[-, thick] (23,10) edge [] node {} (20, 5);
        \draw[-, thick] (23,10) edge [] node {} (26, 5);

        \node[] (0tre) at (22.55,9.9){$\tilde{\sigma}_0$};
        \node[] (0tre) at (23,10){};
        
        \node[rectangle, fill=darkgray, inner sep = 0pt, minimum size=4pt] (1tre) at (23,9){};
        \node[rectangle, fill=darkgray, inner sep = 0pt, minimum size=4pt] (2tre) at (22.5,8){};
        \node[rectangle, fill=darkgray, inner sep = 0pt, minimum size=4pt] (3tre) at (24.25,6.5){};%0.25
        
        \node[] (4tre) at (20.5,5){};
        %\node[] () at (20.5,4.75){$c \rho_1$};
        \node[] (5tre) at (22.6,5){};
        %\node[] () at (22.6,4.75){$c \rho_2$};
        \node[] (6tre) at (24.5,5){};%0.25
        %\node[] () at (24.5,4.75){$c \rho_3$};
        \node[] (7tre) at (25.5,5){};%0.25
        %\node[] () at (25.5,4.75){$c \rho_4$};

		\draw[semithick] plot [smooth, tension=0.7] coordinates { (0tre) (1tre) (24,7.75) (3tre) (25.25, 5.75) (7tre)};%0.25
		\draw[semithick] plot [smooth, tension=0.7] coordinates { (3tre) (24.25,5.75) (6tre)};%0.25
		\draw[semithick] plot [smooth, tension=0.7] coordinates { (1tre) (2tre) (22.5,7) (22.75,6) (5tre)};
		\draw[semithick] plot [smooth, tension=0.7] coordinates { (2tre) (21.25, 6.5) (4tre)};

        \node[draw, rectangle, fill=darkgray, inner sep = 0pt, minimum size=4pt] () at (23,9){};
        \node[draw, rectangle, fill=darkgray, inner sep = 0pt, minimum size=4pt] () at (22.5,8){};
        \node[draw, rectangle, fill=darkgray, inner sep = 0pt, minimum size=4pt] () at (24.25,6.5){};%0.25
        
        % punish1
		\node[draw, rectangle, fill=lightgray, inner sep = 0pt, minimum size=4pt] (pun1tre) at (22.5,7){};
		\node[inner sep = 0pt, minimum size=0pt] (pun1treroot) at (22,6.5){};
        \draw[-] (22,6.5) edge [] node {} (21.25, 5);
        \draw[-] (22,6.5) edge [] node {} (22.4, 5);
		\node[] () at (21.85,5.4){$\punStrat{1}$};
        \draw[-stealth] (pun1tre) edge [] node {} (pun1treroot);
        \draw[draw=gray, fill=lightgray, opacity=0.2]
       (22,6.5) -- (21.25, 5) -- (22.4, 5) --  cycle;

        % punish2
        \node[draw, rectangle, fill=lightgray, inner sep = 0pt, minimum size=4pt] (pun2tre) at (23.75,8.1){}; %0.25 0.1
		\node[inner sep = 0pt, minimum size=0pt] (pun2treroot) at (23.5,7.8){};
        \draw[-] (23.5,7.8) edge [] node {} (22.85, 5);
        \draw[-] (23.5,7.8) edge [] node {} (24.25, 5);
		\node[] () at (23.5,5.95){$\punStrat{2}$};
        \draw[-stealth] (pun2tre) edge [] node {} (pun2treroot);
        \draw[draw=gray, fill=lightgray, opacity=0.2]
       (23.5,7.8) -- (22.85, 5) -- (24.25, 5) --  cycle;
        
        % scratch
        \node[circle, color=lightgray, fill=lightgray, inner sep = 0pt, minimum size=3pt] (cycle1) at (24,7.75){};
        \node[ circle, color=lightgray, fill=lightgray, inner sep = 0pt, minimum size=3pt] (cycle2) at (24.125,7){};
        \draw[ultra thick, dotted, draw=white] plot [smooth, tension=0.7] coordinates { (cycle1) (24.09, 7.5)  (cycle2)};
                
        \node[draw, circle, color=lightgray, fill=lightgray, inner sep = 0pt, minimum size=3pt] () at (24,7.75){};
        \node[draw, circle, color=lightgray, fill=lightgray, inner sep = 0pt, minimum size=3pt] () at (24.125,7){};
        %\draw[-] (23.87,7.75) edge [] node {} (24.125, 7.87);
       % \draw[-] (24,7) edge [] node {} (24.25, 7);
        %\draw[-] (23.87,7.75) edge [] node {} (24.125, 7.87);
       % \draw[-] (24,7) edge [] node {} (24.25, 7);
       % \draw[draw=gray, fill=lightgray, opacity=0.2] 
       %(23.87,7.75) -- (24.125, 7.87) -- (24.25, 7) --(24,7) -- cycle;
		%\draw[draw=gray, fill=lightgray, opacity=0.2] plot [smooth, tension=1] coordinates { (23.87,7.75) (24.125, 7.87)  (24.25, 7) (24,7)};
        
        \node[circle, color=lightgray, fill=lightgray, inner sep = 0pt, minimum size=3pt] (cycle3) at (24.5,6.24){};
        \node[ circle, color=lightgray, fill=lightgray, inner sep = 0pt, minimum size=3pt] (cycle4) at (25,5.95){};
        \draw[ultra thick, dotted, draw=white] plot [smooth, tension=0.7] coordinates { (cycle3) (cycle4)};
        \node[circle, color=lightgray, fill=lightgray, inner sep = 0pt, minimum size=3pt] () at (24.5,6.24){};
        \node[ circle, color=lightgray, fill=lightgray, inner sep = 0pt, minimum size=3pt] () at (25,5.95){};    
        
        \node[circle, color=lightgray, fill=lightgray, inner sep = 0pt, minimum size=3pt] (cycle5) at (22.64,6.5){};
        \node[ circle, color=lightgray, fill=lightgray, inner sep = 0pt, minimum size=3pt] (cycle6) at (22.75,6){};
        \draw[ultra thick, dotted, draw=white] plot [smooth, tension=0.7] coordinates { (cycle5) (22.65,6.45) (22.7,6.35) (cycle6)};
        \node[circle, color=lightgray, fill=lightgray, inner sep = 0pt, minimum size=3pt] (cycle5) at (22.64,6.5){};
        \node[ circle, color=lightgray, fill=lightgray, inner sep = 0pt, minimum size=3pt] (cycle6) at (22.75,6){};
        
        \node[circle, color=lightgray, fill=lightgray, inner sep = 0pt, minimum size=3pt] (cycle7) at (21.25,6.5){};
        \node[ circle, color=lightgray, fill=lightgray, inner sep = 0pt, minimum size=3pt] (cycle8) at (20.955,6){};
        \draw[ultra thick, dotted, draw=white] plot [smooth, tension=0.7] coordinates { (cycle7) (cycle8)};
        \node[circle, color=lightgray, fill=lightgray, inner sep = 0pt, minimum size=3pt] (cycle7) at (21.25,6.5){};
        \node[ circle, color=lightgray, fill=lightgray, inner sep = 0pt, minimum size=3pt] (cycle8) at (20.955,6){};
        
        \node[circle, color=lightgray, fill=lightgray, inner sep = 0pt, minimum size=3pt] (cycle7) at (22.22,7.7){};
        \node[ circle, color=lightgray, fill=lightgray, inner sep = 0pt, minimum size=3pt] (cycle8) at (21.71,7.1){};
        \draw[ultra thick, dotted, draw=white] plot [smooth, tension=0.7] coordinates { (cycle7) (cycle8)};
        \node[circle, color=lightgray, fill=lightgray, inner sep = 0pt, minimum size=3pt] (cycle7) at (22.22,7.7){};
        \node[ circle, color=lightgray, fill=lightgray, inner sep = 0pt, minimum size=3pt] (cycle8) at (21.71,7.1){};
        
        \node[circle, color=lightgray, fill=lightgray, inner sep = 0pt, minimum size=3pt] (cycle9) at (22.872,8.8){};
        \node[ circle, color=lightgray, fill=lightgray, inner sep = 0pt, minimum size=3pt] (cycle10) at (22.662,8.4){};
        \draw[ultra thick, dotted, draw=white] plot [smooth, tension=0.7] coordinates { (cycle9) (cycle10)};
        \node[circle, color=lightgray, fill=lightgray, inner sep = 0pt, minimum size=3pt] (cycle9) at (22.872,8.8){};
        \node[ circle, color=lightgray, fill=lightgray, inner sep = 0pt, minimum size=3pt] (cycle10) at (22.662,8.4){};
        
        % cycles
        
        \node[rectangle, inner sep = 0pt, minimum size=4pt] (loop4) at (25.243, 5.76){};
        %\draw[ultra thick, draw=white] plot [smooth, tension=0.7] coordinates { (25.29, 5.645) (25.39, 5.55)  (25.5,4.99)};
        %\draw[ultra thick, draw=white] plot [smooth, tension=0.7] coordinates { (25.3, 5.645) (25.4, 5.55)  (25.55,4.99)};
        \draw[draw=white, fill=white]
       (25.29, 5.645) -- (25.4, 5.645) -- (25.58,4.99)  -- (25.43,4.99) -- cycle;
        \draw[-stealth, semithick, out=-53,in=-120, looseness=10] (loop4) edge [] node {} (loop4);
        
        \node[rotate=20, rectangle, inner sep = 0pt, minimum size=4pt] (loop3) at (24.265, 5.7){};
        \draw[draw=white, fill=white]
       (24.35, 5.6) -- (24.23, 5.6) --  (24.43,4.99) -- (24.6,4.99) -- cycle;
        \draw[-stealth, semithick, out=-85,in=-30, looseness=10] (loop3) edge [] node {} (loop3);
        
        \node[ rotate=-30, rectangle, inner sep = 0pt, minimum size=4pt] (loop2) at (22.757, 5.9){};
        \draw[draw=white, fill=white]
       (22.68, 5.8) -- (22.82, 5.8) -- (22.7,4.99) -- (22.55,4.99) -- cycle;
        \draw[-stealth, semithick, out=-90,in=-160, looseness=10] (loop2) edge [] node {} (loop2);
        
        \node[rectangle, inner sep = 0pt, minimum size=4pt] (loop1) at (20.823, 5.7){};
        \draw[draw=white, fill=white]
       (20.6, 5.6) -- (20.8, 5.6) -- (20.6,4.99) -- (20.4,4.99) -- cycle;
        \draw[-stealth, semithick, rotate=-27, out=-100,in=-30, looseness=10] (loop1) edge [] node {} (loop1);
        
        \node[] () at (20.8,5.2){$c \rho_1$};
        \node[] () at (22.615,5.45){$c \rho_2$};
        \node[] () at (24.5,5.25){$c \rho_3$};
        \node[] () at (25.35,5.2){$c \rho_4$};
    
		\end{tikzpicture}
	}%
	\caption{The creation of strategies $\hat{\sigma}_0$ and $\tilde{\sigma_0}$ from a solution $\sigma_0$ with $\Wit{\sigma_0} = \{\rho_1, \rho_2, \rho_3, \rho_4\}$ viewed as trees.}
	\label{fig:nexptime_strategy}
	\Description{Figure 3. Fully described in the text.}
\end{figure}
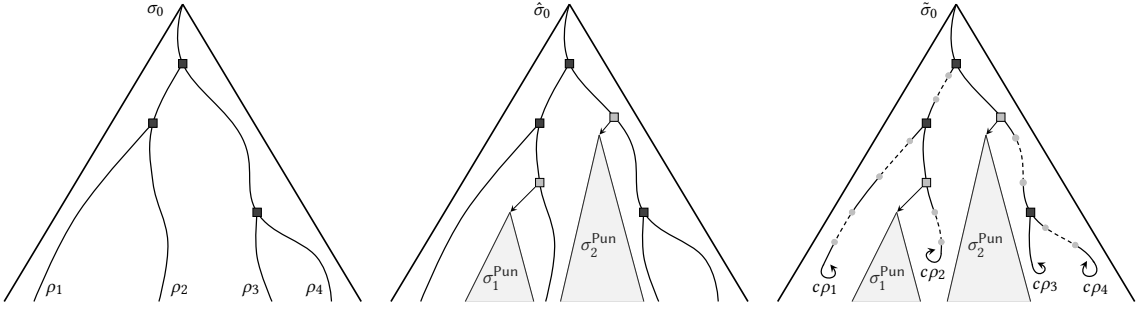 

We now develop the details of the construction of the strategies $\hat{\sigma}_0$ and $\tilde{\sigma}_0$. Figure~\ref{fig:nexptime_strategy} illustrates this construction. It is done in several steps to finally get the proof of Proposition~\ref{prop:tildesigma}. For the rest of this section, we fix an \gameAb{} $\mathcal{G}$ with objectives $\ObjPlayer{0},\ObjPlayer{1},\dots,\ObjPlayer{t}$, a strategy $\sigma_0$ that is solution to the \problemAb{}, a set of witnesses $\Wit{\sigma_0}$ for the \paretoOptimal{} payoffs in $P_{\sigma_0}$, and we write $\dominatedPlays{\sigma_0}$ the set of all plays whose payoff is strictly smaller than some payoff in $\paretoSet{\sigma_0}$.

\subsection{Punishing Strategies}
\label{subsec:punishing}
%================================

Before formally defining the punishing strategies, we first define the set of \emph{deviations} $\deviations{\sigma_0}$ as follows:
$$\deviations{\sigma_0} = \{hv \in \Histsigma{\sigma_0} \mid \prefStrat{h} \neq \emptyset \land \prefStrat{hv} = \emptyset \}.$$
As explained above, a deviation is a history that leaves the set of witnesses $\Wit{\sigma_0}$ (by a move of Player~$1$).

Second, we establish the existence of canonical forms for punishing strategies. We potentially need an exponential number of them for reachability and safety objectives and a polynomial number of them for prefix-independent objectives. In all cases, each punishing strategy has a size which can be bounded exponentially. The existence of those strategies are direct consequences of the following lemmas.

\begin{lemma}[Prefix-Independent Objectives] 
\label{lem:parity-case}
Let $v \in V$ be such that there exists $hv \in \deviations{\sigma_0}$. 

\noindent 
Then there exists a finite-memory strategy $\punStrat{v}$ such that for all deviations $hv \in \deviations{\sigma_0}$, when Player~$0$ plays $\punStrat{v}$ from $hv$, all consistent plays $\rho$ starting in $v$ are such that either $h\rho \in \ObjPlayer{0}$ or $h\rho \in \dominatedPlays{\sigma_0}$. The size of $\punStrat{v}$ is at most exponential in the size of $\mathcal{G}$.
\end{lemma}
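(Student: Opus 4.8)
The plan is to realise each $\punStrat{v}$ as a finite-memory winning strategy of Player~$0$ in one zero-sum game played on the sub-arena of $G$ with initial vertex $v$, whose winning condition is a fixed Boolean combination of $\ObjPlayer{0},\ObjPlayer{1},\dots,\ObjPlayer{\nbrObjectives}$. The first thing I would observe is that prefix-independence makes the target condition depend only on $v$ and on the tail $\rho$: for any infinite path $\rho$ starting in $v$ and any history $h$ ending in a predecessor of $v$ we have $\won{h\rho}=\won{\rho}$ and $\payoff{h\rho}=\payoff{\rho}$, so ``$h\rho\in\ObjPlayer{0}$ or $h\rho\in\dominatedPlays{\sigma_0}$'' is equivalent to ``$\rho\in\ObjPlayer{0}$, or $\payoff{\rho}<p$ for some $p\in\paretoSet{\sigma_0}$''. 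Hence it suffices to build, for each eligible $v$, a single finite-memory strategy $\punStrat{v}$ that from $v$ forces every consistent play into $W_v:=\ObjPlayer{0}\cup\{\rho\mid\exists p\in\paretoSet{\sigma_0},\ \payoff{\rho}<p\}$; it then works uniformly for all deviations $hv\in\deviations{\sigma_0}$ ending in $v$.

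Next I would argue that Player~$0$ \emph{has} a winning strategy for $W_v$ from $v$. Fix a deviation $hv\in\deviations{\sigma_0}$ (one exists by hypothesis) and consider the shifted strategy $\sigma_0[hv]$ which plays from $v$ as $\sigma_0$ plays after $hv$. For a play $\rho$ from $v$ consistent with $\sigma_0[hv]$, the play $h\rho$ is consistent with $\sigma_0$ and $\payoff{h\rho}=\payoff{\rho}$. If $\payoff{\rho}\in\paretoSet{\sigma_0}$, then $h\rho$ is a \paretoOptimal{} play, so $\won{h\rho}=1$ because $\sigma_0$ is a solution, i.e.\ $\rho\in\ObjPlayer{0}$. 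Otherwise $\payoff{\rho}$ is realised by a play consistent with $\sigma_0$ but is not in $\paretoSet{\sigma_0}=\max\{\payoff{\rho'}\mid\rho'\in\Playsigmazero\}$, so $\payoff{\rho}<p$ for some $p\in\paretoSet{\sigma_0}$, i.e.\ $\rho\in\dominatedPlays{\sigma_0}$. In both cases $\rho\in W_v$, so $\sigma_0[hv]$ is a (possibly infinite-memory, $h$-dependent) winning strategy for $W_v$ from $v$ — enough to establish existence.

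The remaining step is to turn this into a finite-memory strategy whose size I can bound. Here I would write $W_v$ explicitly as the Boolean combination $\ObjPlayer{0}\ \lor\ \bigvee_{p\in\paretoSet{\sigma_0}}\bigl(\bigwedge_{i\,:\,p_i=0}\neg\ObjPlayer{i}\ \land\ \bigvee_{i\,:\,p_i=1}\neg\ObjPlayer{i}\bigr)$ of $\ObjPlayer{0},\dots,\ObjPlayer{\nbrObjectives}$ (inner indices $i$ in $\{1,\dots,\nbrObjectives\}$), then use Proposition~\ref{prop_bb_encoding} to encode each $\ObjPlayer{i}$ as a Boolean B\"uchi objective over sets $T^i_1,\dots,T^i_{m_i}$; combining the resulting formulas according to the displayed expression turns $W_v$ into a Boolean B\"uchi objective over $T=\bigcup_{i=0}^{\nbrObjectives}\{T^i_1,\dots,T^i_{m_i}\}$, where only the formula grows — possibly exponentially in $\nbrObjectives$, since $|\paretoSet{\sigma_0}|$ may — while the family of sets does not. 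Applying the \textsf{SAR}-to-parity construction of Subsection~\ref{subsec:AlternativeFPTBooleanBuchi} (Proposition~\ref{prop:bb_correctness}) to this Boolean B\"uchi game on the sub-arena rooted at $v$, Player~$0$ wins it from $v$ iff she wins the induced parity game on the \textsf{SAR}-extended arena, which has $|V|\cdot|T|!\cdot(|T|+1)$ vertices; by the previous paragraph she wins, hence she wins the parity game, hence (parity games being memoryless determined) with a memoryless strategy. Projecting it back to $G$ yields the finite-memory strategy $\punStrat{v}$ with memory size $|T|!\cdot(|T|+1)$, which is at most exponential in $|\mathcal{G}|$, and which by the first paragraph has exactly the claimed punishing property.

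The step I expect to be delicate is precisely this last size accounting: one must verify that the genuinely exponential blow-up coming from $|\paretoSet{\sigma_0}|$ is confined to the \emph{formula} of the Boolean B\"uchi objective (which affects only the time to build the game, not the memory of the extracted strategy), whereas the permutation count $|T|!$ that governs the memory of $\punStrat{v}$ stays single-exponential in $|\mathcal{G}|$ — so that $\punStrat{v}$ is exponential and not doubly exponential in $|\mathcal{G}|$. The determinacy, finite-memory, and memoryless-parity ingredients invoked along the way are all standard.
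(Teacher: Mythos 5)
Your proof is correct, and its first half (playing the shifted $\sigma_0$ after a fixed deviation $hv$ to show that Player~$0$ wins the zero-sum game with objective $\ObjPlayer{0}\cup\dominatedPlays{\sigma_0}$ from $v$, and using prefix-independence to make the punishing strategy depend on $v$ only) is exactly the paper's argument. Where you genuinely diverge is in extracting a finite-memory strategy of bounded size: the paper simply rewrites $\ObjPlayer{0}\cup\dominatedPlays{\sigma_0}$ as an \emph{explicit Muller} condition $\{B\subseteq V\mid \exists\rho,\ \infOcc{\rho}=B \wedge \rho\in\ObjPlayer{0}\cup\dominatedPlays{\sigma_0}\}$ and invokes the classical result that explicit Muller zero-sum games admit finite-memory winning strategies of size exponential in $|V|$, whereas you re-use the paper's own machinery — the Boolean B\"uchi encodings of Proposition~\ref{prop_bb_encoding}, closure of Boolean B\"uchi objectives under Boolean combinations, and the \textsf{SAR}-to-parity reduction of Proposition~\ref{prop:bb_correctness} together with memoryless determinacy of parity games — to get memory $|T|!\cdot(|T|+1)$. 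Both routes work; yours has the merit of giving a concrete, constructive memory bound and of correctly isolating the exponential blow-up in $|\paretoSet{\sigma_0}|$ inside the \emph{formula} (affecting only the time to build the priority function, not the extracted memory). The trade-off is that your bound is exponential in $|T|=\sum_i m_i$, i.e.\ in the description length of the objectives: for B\"uchi, co-B\"uchi, parity and Muller objectives this is exponential in $|V|$ and $\nbrObjectives$ after the standard normalization of priorities, but for Streett, Rabin or Boolean B\"uchi objectives with very many pairs or sets it is exponential in the input description rather than in $|V|$ and $\nbrObjectives$ alone, which is a slight mismatch with the paper's literal definition of $|\mathcal{G}|$ (the paper's explicit-Muller bound depends on $|V|$ only). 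This discrepancy is harmless for the way the lemma is used (the \nexptime{} membership of Proposition~\ref{prop:tildesigma} and Theorem~\ref{thm:nexptime} only needs a bound exponential in the input), but it is worth stating explicitly if you keep your route.
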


\begin{proof}
First, we note that, after a deviation $hv \in \deviations{\sigma_0}$, if Player~$0$ continues to play the strategy $\sigma_0$ from $hv$, then all consistent plays $\rho$ are such that either $\rho \in \ObjPlayer{0}$ or $\rho \in \dominatedPlays{\sigma_0}$ as $\sigma_0$ is a solution to the \problemAb{}. Therefore, we know that Player~$0$ has a punishing strategy for all such deviations $hv$. Second, since we consider prefix-independent objectives, Player~$0$ can use one uniform strategy that only depends on $v$ (and not on $hv$). There exists such a strategy with finite memory that can be constructed as follows. We express the objective $\ObjPlayer{0} \cup \dominatedPlays{\sigma_0}$ as an \emph{explicit Muller} objective~\cite{Horn08} for a zero-sum game played on the arena $G$ from initial vertex $v$. This objective is defined by the set $\{ B \subseteq V \mid \exists \rho \mbox{ such that }\infOcc{\rho}= B \land \rho \in \ObjPlayer{0} \cup \dominatedPlays{\sigma_0} \}$. This exactly encodes the objective of Player~$0$ when he plays the punishing strategy after a deviation $hv$. It is well-known that in explicit Muller zero-sum games, there always exist finite-memory winning strategies with a size exponential in the number $|V|$ of vertices of the arena~\cite{DziembowskiJW97}.
\end{proof}

\begin{lemma}[Reachability and safety] 
\label{lem:reach-case}
Let $v \in V$ and $(w,p) \in \{0,1\} \times \{0,1\}^{\nbrObjectives}$ be such that there exists $hv \in \deviations{\sigma_0}$ with $(\won{hv},\payoff{hv}) = (w,p)$. 

\noindent
Then there exists a finite-memory strategy $\punStrat{(v,w,p)}$ such that for all deviations $hv \in \deviations{\sigma_0}$ with $(\won{hv},\payoff{hv}) = (w,p)$, when Player~$0$ plays $\punStrat{(v,w,p)}$ from $hv$, all consistent plays $\rho$ starting in $v$ are such that either $h\rho \in \ObjPlayer{0}$ or $h\rho \in \dominatedPlays{\sigma_0}$. The size of $\punStrat{(v,w,p)}$ is at most exponential in the size of $\mathcal{G}$.
\end{lemma}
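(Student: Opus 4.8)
The plan is to reuse the machinery of Subsection~\ref{subsec:fptreach}: I would extend the arena $G$ with the extended payoff and reduce the existence of a uniform punishing strategy to winning a B\"uchi zero-sum game. The reason the strategy is now indexed by the whole triple $(v,w,p)$ rather than merely by $v$ (as in Lemma~\ref{lem:parity-case}) is that reachability and safety are prefix-\emph{dependent}: whether the full play $h\rho$ lies in $\ObjPlayer{0}\cup\dominatedPlays{\sigma_0}$ depends on what the prefix $h$ has already achieved, but that information is fully captured by $(\won{hv},\payoff{hv})=(w,p)$, so it suffices to carry this triple.

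Fix $v$ and $(w,p)$ for which a deviation $hv\in\deviations{\sigma_0}$ with $(\won{hv},\payoff{hv})=(w,p)$ exists. I would form the arena $\bar G$ whose vertices are triples $(v',w',p')$ with $v'$ a vertex of $G$ and $(w',p')\in\{0,1\}\times\{0,1\}^{\nbrObjectives}$, obtained from $G$ exactly as $G^*$ was obtained from $G'$ in Subsection~\ref{subsec:fptreach} (updating the extra component with $\updateReach{\cdot}$ in the reachability case, resp.\ $\updateSafe{\cdot}$ in the safety case), with initial vertex $(v,w,p)$. Along any play of $\bar G$ the extra component is monotone, hence eventually constant; and for a play $\rho$ of $G$ starting at $v$, the stable value of the extra component along the corresponding play of $\bar G$ equals the extended payoff $(\won{h\rho},\payoff{h\rho})$ of the full play $h\rho$ — this is precisely where the fact that $(w,p)$ records the contribution of $h$ is used. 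Setting $\mathrm{Good}=\{(v',w',p')\mid w'=1 \text{ or } p'<p'' \text{ for some } p''\in\paretoSet{\sigma_0}\}$, one then checks that $h\rho\in\ObjPlayer{0}\cup\dominatedPlays{\sigma_0}$ if and only if the corresponding play of $\bar G$ visits $\mathrm{Good}$ infinitely often (equivalently, its stable extra component lies in $\mathrm{Good}$), since the stable extra component is the unique one seen infinitely often. So the objective to be satisfied on $\bar G$ from $(v,w,p)$ is the B\"uchi objective $\Buchi{\mathrm{Good}}$, exactly as the generic conditions (\ref{cp_cond_1}-\ref{cp_cond_3}) were turned into a B\"uchi objective in Proposition~\ref{prop:FPTreach}.

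It remains to solve this B\"uchi game. B\"uchi zero-sum games are memoryless-determined for both players (being parity games with two priorities), so whichever player wins from $(v,w,p)$ wins with a memoryless strategy. Player~$0$ does win: choosing any deviation $hv$ with $(\won{hv},\payoff{hv})=(w,p)$, the fact that $\sigma_0$ is a solution to the \problemAb{} means that if Player~$0$ keeps playing $\sigma_0$ after $hv$ then every consistent play $\rho$ satisfies $h\rho\in\ObjPlayer{0}\cup\dominatedPlays{\sigma_0}$; lifting this behaviour to $\bar G$ yields a (history-dependent) winning strategy from $(v,w,p)$, so $(v,w,p)$ lies in Player~$0$'s winning region. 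I take $\punStrat{(v,w,p)}$ to be a memoryless winning strategy of this B\"uchi game from $(v,w,p)$. Read back in $\mathcal{G}$ as a strategy to be used after \emph{any} deviation $hv$ with $(\won{hv},\payoff{hv})=(w,p)$, it is finite-memory with memory exactly the extra $(\{0,1\}\times\{0,1\}^{\nbrObjectives})$-component it carries, hence of memory size at most $2^{\nbrObjectives+1}$, which is exponential in the size of $\mathcal{G}$; and its correctness for all such $hv$ is immediate since, whatever $h$ is, the lifted play from $(v,w,p)$ is consistent with the memoryless $\punStrat{(v,w,p)}$ and therefore visits $\mathrm{Good}$ infinitely often, giving $h\rho\in\ObjPlayer{0}\cup\dominatedPlays{\sigma_0}$.

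The delicate point — what I would be most careful about — is the equivalence in the second paragraph: using monotonicity of the payoff update, one must verify that the genuinely prefix-dependent condition ``$h\rho$ is won by Player~$0$ or strictly dominated'' is exactly the prefix-free B\"uchi condition ``visit $\mathrm{Good}$ infinitely often'' on $\bar G$, for both the reachability update and the safety update (in particular, that the extra component does stabilise to $(\won{h\rho},\payoff{h\rho})$ in each case, and that the safety case does not require a more elaborate condition). Getting this right is also what keeps the memory single-exponential in $\nbrObjectives$: feeding the objective $\ObjPlayer{0}\cup\dominatedPlays{\sigma_0}$ directly into the explicit-Muller machinery of Lemma~\ref{lem:parity-case} over $\bar G$ would instead give a memory bound double-exponential in $\nbrObjectives$.
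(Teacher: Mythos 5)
Your proof is correct and follows essentially the same route as the paper: extend the arena with the $(w,p)$-component (so the extra memory is the at most $2^{\nbrObjectives+1}$ extended payoffs), observe that Player~$0$ wins from $(v,w,p)$ because mimicking $\sigma_0$ after any deviation with that extended payoff forces $\ObjPlayer{0}\cup\dominatedPlays{\sigma_0}$, and invoke positional determinacy of the resulting zero-sum game on the exponential-size extended arena. The only difference is the encoding of the objective there: you use a single B\"uchi condition on $\mathrm{Good}$ (in the spirit of Proposition~\ref{prop:FPTreach}), exploiting that the $(w',p')$-component stabilizes, whereas the paper splits on the value of $w$ and writes the objective as a disjunction of a reachability and a safety objective, citing memoryless determinacy for that class; the two encodings are equivalent on this arena and yield the same exponential memory bound.
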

\begin{proof} We begin with the case of reachability objectives. We follow the same reasoning as in the proof of Lemma~\ref{lem:parity-case}, except that reachability objectives are not prefix-independent. We thus need to take into account the set of objectives $\ObjPlayer{i}$ already satisfied along the history $hv$, which is recorded in $(w,p)$. The uniform finite-memory strategy $\punStrat{(v,w,p)}$ that Player~$0$ can use from all deviations $hv$ such that $\won{hv} = w$ and $\payoff{hv} = p$ is constructed as follows. First, notice that if $w = 1$, meaning that objective $\Omega_0$ is already satisfied, then Player~$0$ can play using any memoryless strategy as punishing strategy. Second, if $w = 0$, as done in Subsection~\ref{subsec:fptreach}, we consider the extension of $G$ such that its vertices are of the form $(v', w', p')$ where the $(w', p')$-component keeps track of the objectives that have been satisfied so far and such that its initial vertex is equal to $(v,w,p)$. On this extended arena, we consider the zero-sum game with the objective $\ObjPlayer{0} \cup \dominatedPlays{\sigma_0}$ encoded as the disjunction of a reachability objective ($\ObjPlayer{0}$) and a safety objective ($\dominatedPlays{\sigma_0}$). More precisely, in the extended game, Player~0 has the objective either to reach a vertex in the set $\{ (v', w', p') \mid w' = 1 \}$ or to stay forever within the set of vertices $\{ (v', w', p') \mid \exists p'' \in \paretoSet{\sigma_0} : p' < p'' \}$. It is known, see e.g.~\cite{BruyereHR18}, that there always exist memoryless winning strategies for zero-sum games with an objective which is the disjunction of a reachability objective and a safety objective. Therefore, this is the case here for the extended game, and thus also in the original game however with a winning finite-memory strategy with exponential size. 

We now shift to the case of safety objectives and show how to adapt the first part of the proof. The uniform finite-memory strategy $\punStrat{(v,w,p)}$ that Player~$0$ can use from all deviations $hv$ such that $\won{hv} = w$ and $\payoff{hv} = p$ is constructed as follows, by using the same extension of $G$ with vertices of the form $(v', w', p')$. First, if $w = 0$, meaning that safety objective $\Omega_0$ is not satisfied, Player~$0$ needs to ensure that the consistent plays with $hv$ as a prefix have a payoff strictly smaller than a payoff of $\paretoSet{\sigma_0}$. To do so, on the extended arena, we consider the zero-sum game with the objective $\dominatedPlays{\sigma_0}$ encoded as a reachability objective: reaching the set of vertices $\{ (v', w', p') \mid \exists p'' \in \paretoSet{\sigma_0} : p' < p'' \}$. This is sound as in the case of safety objectives, once the payoff of a history $h$ is strictly smaller than a payoff $p$, any play with $h$ as a prefix also has a payoff strictly smaller than $p$. Second, if $w = 1$, meaning that the safety objective of Player~$0$ is satisfied \emph{so far}, Player~$0$ needs to ensure that the consistent plays with $hv$ as a prefix have a payoff strictly smaller than a payoff of $\paretoSet{\sigma_0}$ or satisfy the objective of Player~$0$. This is encoded as the disjunction of a safety objective ($\ObjPlayer{0}$) and a reachability objective ($\dominatedPlays{\sigma_0}$) as previously explained. The results mentioned in the first part proof yield the desired memory bound.
\end{proof}

If we systematically change within $\sigma_0$ the behavior of Player~$0$ after a deviation from $\Wit{\sigma_0}$, and use the punishing strategies as defined in the proofs of Lemmas~\ref{lem:parity-case} and \ref{lem:reach-case}, we obtain a new strategy $\hat{\sigma}_0$ that is solution to the \problemAb{}. The total size of the punishing finite-memory strategies in $\hat{\sigma}_0$ is at most exponential in the size of $\mathcal{G}$. To obtain our results, it remains to show how to compact the plays in $\Wit{\sigma_0}$. To that end, we study the histories and plays within $\Wit{\sigma_0}$. 

\subsection{Compacting Witnesses}
\label{subsec:compacting}
%=================================

We now show how to compact the set of witnesses in a way to produce them with a finite-memory strategy. Together with the punishing strategies this will lead to a solution $\tilde{\sigma}_0$ to \problemAb{} with a memory of exponential size as announced in Proposition~\ref{prop:tildesigma}. We first consider reachability objectives, then explain the small modifications required for dealing with safety objectives. We finish by explaining how to modify the construction for prefix-independent objectives.

Consider reachability \gamesAb{}. Given a history $h$ that is prefix of at least one witness in $\Wit{\sigma_0}$, we call \emph{\region} and we denote by $\Reg{h}$ the tuple $\Reg{h} = (\won{h},\payoff{h},\prefStrat{h})$.
We also use notation $R = (w,p,\W)$ for a \region. Given a witness $\rho = v_0v_1 \ldots \in \Wit{\sigma_0}$, we consider $\rho^* = (v_0,R_0) (v_1,R_1) \ldots$ such that each $v_j$ is extended with the \region\ $R_j = (w_j,p_j,\W_j) = \Reg{v_0v_1 \dots v_j}$. Similarly we define $h^*$ associated with any history $h$ prefix of a witness. The following properties hold for a witness $\rho$ and its corresponding play $\rho^*$:
\begin{itemize}
    \item for all $j \geq 0$, we have $w_j \leq w_{j+1}$, $p_j \leq p_{j+1}$, and $\W_j \supseteq \W_{j+1}$,
    \item the sequence $(w_j,p_j)_{j\geq 0}$ eventually stabilizes on $(w,p)$ equal to the extended payoff $( \won{\rho}, \payoff{\rho})$ of $\rho$, 
    \item the sequence $(\W_j)_{j\geq 0}$ eventually stabilizes on a set $\W$ which is a singleton such that $\W = \{\rho\}$.
\end{itemize}

Thanks to the previous properties, each $\rho \in \Wit{\sigma_0}$ can be \emph{\region\ decomposed} into a sequence of paths $\pi[1]\pi[2]\cdots\pi[k]$ where the corresponding decomposition $\pi^*[1]\pi^*[2]\cdots\pi^*[k]$ of $\rho^*$ is such that for each $\ell$: \emph{(i)} the \region\ is constant along the path $\pi^*[\ell]$ and \emph{(ii)} it is distinct from the \region\ of the next path $\pi^*[\ell + 1]$ (if $\ell < k$). Each $\pi[\ell]$ is called a \emph{\sect} of $\rho$, such that it is \emph{\internal} (resp.\ \emph{\final}) if $\ell < k$ (resp.\ $\ell = k$).

Notice that the number of regions that are traversed by $\rho$ is bounded by 
\begin{eqnarray} \label{eq:traversed}
(\nbrObjectives + 2) \cdot |\Wit{\sigma_0}|.
\end{eqnarray}
Indeed, along $\rho$, the first two components $(w,p)$ of a region correspond to a monotonically increasing vector of $\nbrObjectives + 1$ Boolean values (from $(0,(0,\ldots,0))$ to $(1,(1,\ldots,1))$ in the worst case), and the last component  $W$ is a monotonically decreasing set of witnesses (from $\Wit{\sigma_0}$ to $\{\rho\}$ in the worst case).  So the number of regions traversed by a witness is bounded exponentially in the size of the game $\mathcal{G}$ (recall that $|\Wit{\sigma_0}|$ is exponential in $\nbrObjectives$ in the worst case).

We have the following important properties for the \sect s of the witnesses of $\Wit{\sigma_0}$.
\begin{itemize}
    %\item Let $\rho$ be a witness with \region\ decomposition $\rho = \pi[1]\cdots\pi[k]$. Then each $\pi_\ell$ is a \wsect. 
    \item Let $\rho,\rho' \in \Wit{\sigma_0}$, with region decompositions $\rho = \pi[1]\cdots\pi[k]$ and $\rho' = \pi'[1]\cdots\pi'[k']$ and let $h$ be the longest common prefix of $\rho$ and $\rho'$. Then there exists $k_1 < k,k'$ such that $h = \pi[1]\cdots\pi[k_1]$, $\pi[\ell] = \pi'[\ell]$ for all $\ell \in \{1,\ldots,k_1\}$ and $\pi[k_1+1] \neq \pi'[k_1+1]$. Therefore, when $\Wit{\sigma_0}$ is seen as a tree, the branching structure of this tree is respected by the \sect s. 
    \item Let $R = (w,p,\W)$ be a region and consider the set of all histories $h$ such that $\Reg{h} = R$. Then all these histories are prefixes of each other and are prefixes of exactly $|W|$ witnesses (as $\prefStrat{h} = \W$ for each such $h$). Therefore, the branching structure of $\Wit{\sigma_0}$ is respected by the \sect s such that the associated \region s are all pairwise distinct. The latter property is called the \emph{\Witproperty} of $\Wit{\sigma_0}$.
\end{itemize}
\noindent
We consider a \emph{compact} version $c \Wit{\sigma_0}$ of $\Wit{\sigma_0}$ defined as follows:
\begin{itemize}
    \item each \internal\ \sect\ $\pi$ of $\Wit{\sigma_0}$ is replaced by the elementary path $c \pi$ obtained by eliminating all the cycles of $\pi$. Each \final\ \sect\ $\pi$ of $\Wit{\sigma_0}$ is replaced by a lasso $c \pi = \pi'_1(u\pi'_2)^\omega$ such that $u$ is a vertex, $\pi'_1u\pi'_2$ is an elementary path, and $\pi'_1u\pi'_2u$ is prefix of $\pi$.
    \item each witness $\rho$ of $\Wit{\sigma_0}$ with \region\ decomposition $\rho = \pi[1]\cdots\pi[k]$ is replaced by $c \rho = c \pi[1] \cdots c \pi[k]$ such that each $\pi[\ell]$ is replaced by $c \pi[\ell]$. Notice that as the \region\ is constant inside the \sect s, the \region\ decomposition of $c \rho$ coincide with the sequence of its $c \pi[\ell]$, $\ell \in \{1, \ldots, k\}$.
\end{itemize}
Therefore, by construction of the compact witnesses, the \Witproperty\ of $\Wit{\sigma_0}$ is kept by the set $\{ c\rho \mid \rho \in \Wit{\sigma_0}\}$ and we have for each $c\rho \in c\Wit{\sigma_0}$,
\begin{eqnarray} \label{eq:payoffkept}
(\won{c \rho},\payoff{c \rho}) =  (\won{\rho},\payoff{\rho}).
\end{eqnarray}
 
We then construct the announced strategy $\tilde{\sigma}_0$ that produces the set $c \Wit{\sigma_0}$ of compact witnesses and after any deviation acts with the adequate punishing strategy (as mentioned in Lemma~\ref{lem:reach-case}). More precisely, let $gv$ be such that $g$ is prefix of a compact witness and $gv$ is not (Player~$1$ deviates from $c \Wit{\sigma_0}$). Then by definition of the compact witnesses, there exists a deviation $hv$ such that $(\won{gv},\payoff{gv}) = (\won{hv},\payoff{hv}) = (w,p)$. Then from $gv$ Player~0 switches to the punishing strategy $\punStrat{(v,w,p)}$.

\begin{lemma}
\label{lem:reach-correct} 
The strategy $\tilde{\sigma}_0$ is a solution to the \problemAb{} for reachability \gamesAb{} and its size is bounded exponentially in the size of the game $\mathcal{G}$.
\end{lemma}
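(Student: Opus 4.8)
The plan is to establish the two assertions of the lemma — that $\tilde{\sigma}_0$ is a solution, and that it has exponential memory — essentially independently, the first resting on the properties of the \sect s collected just above together with Lemma~\ref{lem:reach-case}, the second on a counting argument.

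First I would check that $\tilde{\sigma}_0$ is well defined as a finite-memory strategy: by the \Witproperty{} of $\Wit{\sigma_0}$ the compact witnesses $c\Wit{\sigma_0}$ form a tree whose internal branchings all sit at vertices of Player~$1$ (deviations from $\Wit{\sigma_0}$ occur only on moves of Player~$1$, and all witnesses agree on moves of Player~$0$), so Player~$0$'s prescription along the tree is unambiguous; and when Player~$1$ leaves the tree after a history $gv$, the switch to $\punStrat{(v,w,p)}$ with $(w,p)=(\won{gv},\payoff{gv})$ is legitimate because, as noted in the construction, some genuine deviation $hv\in\deviations{\sigma_0}$ satisfies $(\won{hv},\payoff{hv})=(w,p)$.

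The core step is the claim that every $\rho\in\Playsigma{\tilde{\sigma}_0}$ belongs to $\ObjPlayer{0}\cup\dominatedPlays{\sigma_0}$, and that every payoff in $\paretoSet{\sigma_0}$ is the payoff of some play in $\Playsigma{\tilde{\sigma}_0}$. For the first half I split on whether $\rho$ ever deviates from $c\Wit{\sigma_0}$. If it never does, then $\rho=c\rho'$ for some $\rho'\in\Wit{\sigma_0}$; by~(\ref{eq:payoffkept}) it has the same extended payoff as the witness $\rho'$, hence $\won{\rho}=1$ (as $\sigma_0$ is a solution and $\rho'$ is a witness) and $\payoff{\rho}\in\paretoSet{\sigma_0}$. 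If $\rho$ deviates at $gv$, write $\rho=g\tau$ with $\tau$ a play from $v$ consistent with $\punStrat{(v,w,p)}$; since a reachability objective holds along $g\tau$ (resp.\ $h\tau$) exactly when its target is met on the prefix up to $v$ or on the suffix $\tau$, and $(\won{gv},\payoff{gv})=(\won{hv},\payoff{hv})$, the plays $g\tau$ and $h\tau$ have the same extended payoff; Lemma~\ref{lem:reach-case} gives $h\tau\in\ObjPlayer{0}\cup\dominatedPlays{\sigma_0}$, hence $\rho=g\tau\in\ObjPlayer{0}\cup\dominatedPlays{\sigma_0}$. The second half is immediate: for $p\in\paretoSet{\sigma_0}$ pick a witness of payoff $p$; its compact version $c\rho'$ has payoff $p$ by~(\ref{eq:payoffkept}) and lies in $\Playsigma{\tilde{\sigma}_0}$. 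With these two facts the solution property follows at once: if $\rho\in\Playsigma{\tilde{\sigma}_0}$ has $\payoff{\rho}\in\paretoSet{\tilde{\sigma}_0}$ but $\won{\rho}=0$, then $\rho\in\dominatedPlays{\sigma_0}$, so $\payoff{\rho}<p$ for some $p\in\paretoSet{\sigma_0}$, which is itself realized by a play of $\Playsigma{\tilde{\sigma}_0}$ — contradicting the maximality of $\payoff{\rho}$ in $\paretoSet{\tilde{\sigma}_0}$. Thus every \paretoOptimalStrategy{$\tilde{\sigma}_0$} play is won.

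For the memory bound I would count the two components of $\tilde{\sigma}_0$. The part producing $c\Wit{\sigma_0}$ is driven by a memory recording the current \region{} and the position inside the current compact \sect{}: by~(\ref{eq:traversed}) each witness crosses at most $(\nbrObjectives+2)\cdot|\Wit{\sigma_0}|$ regions, each compact \sect{} is an elementary path or a short lasso of length $\mathcal{O}(|V|)$, and $|\Wit{\sigma_0}|=|\paretoSet{\sigma_0}|$ is exponential in $\nbrObjectives$, so this part has size exponential in $|\mathcal{G}|$ (in particular the extended payoff $(w,p)$ needed to pick a punishing strategy is read off the current region). The punishing part is the disjoint union of the Moore machines $\punStrat{(v,w,p)}$ over the at most $|V|\cdot 2^{\nbrObjectives+1}$ relevant triples $(v,w,p)$, each of exponential size by Lemma~\ref{lem:reach-case}; assembling these finitely many machines with the skeleton yields a Moore machine of size exponential in $|\mathcal{G}|$. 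I expect the main friction to be the deviation bookkeeping in the core step — matching the extended payoff produced by $\tilde{\sigma}_0$ after $gv$ with that of an actual deviation $hv$ and transporting the conclusion of Lemma~\ref{lem:reach-case} from $h\tau$ to $g\tau$ using prefix-(in)dependence of reachability objectives.
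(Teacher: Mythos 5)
Your proof is correct and follows essentially the same route as the paper: compact witnesses preserve extended payoffs via~(\ref{eq:payoffkept}) so $\paretoSet{\sigma_0}$ is retained and realized by won plays, deviations are handled by the punishing strategies of Lemma~\ref{lem:reach-case}, and the memory is counted through regions~(\ref{eq:traversed}) plus exponentially many exponential-size punishing strategies. The only difference is that you spell out the deviation bookkeeping (transporting the guarantee from $h\tau$ to $g\tau$ via equal extended payoffs of the prefixes) and the final maximality contradiction, which the paper's proof leaves implicit.
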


\begin{proof}
Let us first prove that $\tilde{\sigma}_0$ is a solution to the \problemAb{}. \emph{(i)} By (\ref{eq:payoffkept}), the set of extended payoffs of plays in $c \Wit{\sigma_0}$ is equal to the set of extended payoffs of witnesses in $\Wit{\sigma_0}$. This means that with $c \Wit{\sigma_0}$, we keep the same set $\paretoSet{\sigma_0}$ and the objective $\Omega_0$ is satisfied along each compact witness. \emph{(ii)} The punishing strategies used by $\tilde{\sigma}_0$ guarantee the satisfaction of the objective $\ObjPlayer{0} \cup \dominatedPlays{\sigma_0}$ by Lemma~\ref{lem:reach-case}. Therefore $\tilde{\sigma}_0$ is a solution to the \problemAb{}.

Let us now show that the memory size of $\tilde{\sigma}_0$ is bounded exponentially in the size of $\mathcal{G}$. \emph{(i)} By Lemma~\ref{lem:reach-case}, each punishing strategy used by $\tilde{\sigma}_0$ is of exponential size and the number of punishing strategies is exponential. \emph{(ii)} To produce the compact witnesses, $\tilde{\sigma}_0$ keeps in memory the current region and produces in a memoryless way the corresponding compact section (which is an elementary path or lasso). Thus the required memory size for producing $c \Wit{\sigma_0}$ is the number of regions. By (\ref{eq:traversed}), every play in $c \Wit{\sigma_0}$ traverses at most an exponential number of regions and there is an exponential number of such plays (equal to $|\Wit{\sigma_0}|$).
\end{proof}

Let us now consider safety \gamesAb{}. The construction of $\tilde{\sigma}_0$ is nearly the same as for reachability \gamesAb{}. The minor differences are the following ones. Given a witness $\rho = v_0v_1 \ldots$ and its corresponding play $\rho^* = (v_0,R_0) (v_1,R_1) \ldots$, we have $w_j = 1$ and $p_j \geq p_{j+1}$ for all $j$ due to the safety objectives. The number of regions that are traversed by $\rho$ is bounded by $(\nbrObjectives + 1) \cdot |\Wit{\sigma_0}|$ (instead of (\ref{eq:traversed}), by a monotonically decreasing vector of $\nbrObjectives$ Boolean values (from $(1,(1,\ldots,1))$ to $(1,(0,\ldots,0))$ in the worst case). The strategy $\tilde{\sigma}_0$ is constructed exactly as for reachability \gamesAb{}. We get the next corollary to Lemma~\ref{lem:reach-correct}.

\begin{corollary}
\label{cor:safety-correct}
The strategy $\tilde{\sigma}_0$ is a solution to the \problemAb{} for \gamesAb{} with safety objectives and its size is bounded exponentially in the size of the game $\mathcal{G}$.
\end{corollary}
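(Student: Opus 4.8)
The plan is to follow the proof of Lemma~\ref{lem:reach-correct} essentially line by line, touching only the three places where safety objectives behave differently from reachability ones. Recall that in the safety case the strategy $\tilde{\sigma}_0$ is built exactly as before: it produces the set $c\Wit{\sigma_0}$ of compact witnesses, and on any deviation $gv$ from $c\Wit{\sigma_0}$ it switches to $\punStrat{(v,w,p)}$ with $(w,p) = (\won{gv},\payoff{gv})$; such a deviation $hv \in \deviations{\sigma_0}$ with the same extended payoff $(w,p)$ exists by construction of the compact witnesses, so Lemma~\ref{lem:reach-case} applies unchanged.

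First I would check correctness, which boils down to re-verifying the payoff-preservation identity (\ref{eq:payoffkept}) in the safety setting. Along an \internal\ \sect\ the \region\ $\Reg{h} = (\won{h},\payoff{h},\prefStrat{h})$ is constant; since for safety objectives $\payoff{h}$ strictly decreases the moment a not-yet-violated safe set is left, constancy of the \region\ means that no vertex outside a safe set $\safeSet_i$ with $p_i = 1$ is visited inside the \sect. Hence the elementary path obtained by deleting the cycles of the \sect\ stays inside exactly the same safe sets, and likewise the lasso $c\pi = \pi'_1(u\pi'_2)^\omega$ replacing the \final\ \sect\ only revisits vertices of $\occ{\pi}$; therefore $(\won{c\rho},\payoff{c\rho}) = (\won{\rho},\payoff{\rho})$ for every $\rho \in \Wit{\sigma_0}$. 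Consequently $c\Wit{\sigma_0}$ realises precisely the payoffs of $\paretoSet{\sigma_0}$, with $\ObjPlayer{0}$ satisfied along every compact witness, and after any deviation the strategy $\punStrat{(v,w,p)}$ forces $\ObjPlayer{0} \cup \dominatedPlays{\sigma_0}$ by Lemma~\ref{lem:reach-case} — here we use the safety-specific fact, already invoked in that lemma, that once $\payoff{h} < p$ for some $p \in \paretoSet{\sigma_0}$ every continuation of $h$ keeps a payoff strictly below $p$. Thus every Pareto-optimal play consistent with $\tilde{\sigma}_0$ is won by Player~$0$.

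Then I would bound the memory size exactly as in Lemma~\ref{lem:reach-correct}: the punishing strategies are of exponential size and there are at most exponentially many of them, indexed by $(v,w,p)$; and producing $c\Wit{\sigma_0}$ requires memory equal to the total number of regions traversed, which for safety objectives is at most $(\nbrObjectives + 1) \cdot |\Wit{\sigma_0}|$ per witness (the $(w,p)$-component now being a monotonically decreasing vector of $\nbrObjectives$ Booleans with $w \equiv 1$, and $\prefStrat{h}$ monotonically decreasing), multiplied by the $|\Wit{\sigma_0}|$ witnesses — still exponential in $|\mathcal{G}|$. Combining the two counts yields a finite-memory solution of exponential size, as announced.

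The main obstacle, and really the only point requiring care, is the payoff-preservation check above: one must be certain that compacting a \sect\ cannot accidentally repair a violated safety objective nor violate one that is still alive, and that the loop of the \final\ \sect's lasso introduces no fresh unsafe vertex. Both follow from constancy of the \region\ along a \sect\ together with the fact that the compacted \sect\ visits only a subset of the vertices of the original one. Everything else — the \Witproperty\ of $c\Wit{\sigma_0}$, the existence of a matching deviation $hv$ for each deviation $gv$ from $c\Wit{\sigma_0}$, and the two-part memory accounting — transfers verbatim from the reachability argument.
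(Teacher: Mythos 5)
Your proposal is correct and follows essentially the same route as the paper: it transfers the reachability argument of Lemma~\ref{lem:reach-correct} to safety objectives, using the reversed monotonicity of the $(w,p)$-component (with $w \equiv 1$ and the payoff vector monotonically decreasing, giving the $(\nbrObjectives+1)\cdot|\Wit{\sigma_0}|$ region bound), the payoff-preservation of compacted sections, and the safety punishing strategies of Lemma~\ref{lem:reach-case}. The only difference is that you spell out the verification of the payoff-preservation identity (\ref{eq:payoffkept}) in the safety setting, which the paper leaves implicit.
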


We finally switch to \gamesAb{} with prefix-independent objectives and state the following lemma whose proof follows the same type of arguments as those given for reachability objectives.

\begin{lemma}
\label{lem:parity-correct}
There exists a strategy $\tilde{\sigma}_0$ that is solution to the \problemAb{} for \gamesAb{} with prefix-independent objectives and whose size is bounded exponentially in the size of the game $\mathcal{G}$.
\end{lemma}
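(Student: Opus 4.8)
The plan is to transcribe the construction of Lemma~\ref{lem:reach-correct} to the prefix-independent setting, using the punishing strategies of Lemma~\ref{lem:parity-case} in place of those of Lemma~\ref{lem:reach-case} and dropping the extended-payoff component from the notion of \region. Note first that each of $\ObjPlayer{0},\dots,\ObjPlayer{\nbrObjectives}$, as well as the set $\dominatedPlays{\sigma_0}$, is determined by the set $\infOcc{\rho} \subseteq V$ of vertices a play $\rho$ visits infinitely often (this is clear from the definitions, and also follows from Proposition~\ref{prop_bb_encoding}). For a history $h$ that is a prefix of at least one witness of $\Wit{\sigma_0}$ we set $\Reg{h} = \prefStrat{h}$, and a \sect\ of a witness $\rho$ is a maximal infix along which this set is constant. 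Since $\prefStrat{\cdot}$ decreases monotonically along $\rho$ --- from $\Wit{\sigma_0}$ to the singleton $\{\rho\}$, changing value only at branching nodes of the tree $\Wit{\sigma_0}$ --- every witness is \region\ decomposed into at most $|\Wit{\sigma_0}|$ \sect s (the \internal\ ones finite, the \final\ one infinite), the total number of regions over all witnesses is at most $2 \cdot |\Wit{\sigma_0}| - 1$, hence exponential in $\nbrObjectives$, and the \Witproperty\ of $\Wit{\sigma_0}$ holds exactly as in the reachability case.

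The only genuinely new ingredient is the compaction of the \final\ \sect s. An \internal\ \sect\ $\pi$ is replaced, as before, by the elementary path $c\pi$ obtained by deleting all cycles, which changes no payoff since the objectives are prefix-independent. For the \final\ \sect\ $\pi[k]$ of a witness $\rho$, recall that the edges taken infinitely often along $\rho$ form a strongly connected subgraph on $\infOcc{\rho}$; hence there is an elementary path $\pi'_1 u$ from the first vertex of $\pi[k]$ to some $u \in \infOcc{\rho}$ (taken inside $\occ{\pi[k]}$), together with a closed walk $u\pi'_2 u$ visiting every vertex of $\infOcc{\rho}$ and staying within that strongly connected subgraph, of length at most $|V|^2$, and we set $c\pi[k] = \pi'_1(u\pi'_2)^\omega$. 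Then $\infOcc{c\pi[k]} = \infOcc{\rho}$, so the compact witness $c\rho = c\pi[1]\cdots c\pi[k]$ satisfies $(\won{c\rho},\payoff{c\rho}) = (\won{\rho},\payoff{\rho})$ as in~(\ref{eq:payoffkept}), each compact \sect\ has length at most $|V|^2$, and the \Witproperty\ is inherited by $c\Wit{\sigma_0}$.

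We then define $\tilde{\sigma}_0$ as in the reachability case: it produces the compact witnesses in $c\Wit{\sigma_0}$, keeping in memory only the current \region\ and emitting the corresponding compact \sect\ in a memoryless fashion, and after any deviation $gv$ from $c\Wit{\sigma_0}$ it switches to the punishing strategy $\punStrat{v}$ of Lemma~\ref{lem:parity-case} (such a deviation always occurs at a vertex $v$ that is the last vertex of some genuine deviation of $\deviations{\sigma_0}$, exactly as argued for reachability, so $\punStrat{v}$ is defined). A play consistent with $\tilde{\sigma}_0$ is either a compact witness --- which realizes a payoff of $\paretoSet{\sigma_0}$ and satisfies $\ObjPlayer{0}$ --- or a play $g\rho$ with $g$ a prefix of a compact witness and $\rho$ starting in $v$ and consistent with $\punStrat{v}$; since the objectives and $\dominatedPlays{\sigma_0}$ are prefix-independent, Lemma~\ref{lem:parity-case} gives $g\rho \in \ObjPlayer{0} \cup \dominatedPlays{\sigma_0}$, so $g\rho$ is either won by Player~$0$ or has a payoff strictly dominated by the payoff of a compact witness, hence is not \paretoOptimalStrategy{$\tilde{\sigma}_0$}. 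Therefore $\paretoSet{\tilde{\sigma}_0} = \paretoSet{\sigma_0}$ and every \paretoOptimalStrategy{$\tilde{\sigma}_0$} play is won, so $\tilde{\sigma}_0$ is a solution. For the size, producing $c\Wit{\sigma_0}$ needs memory bounded by the number of regions times the maximal \sect\ length, that is exponential in $\nbrObjectives$ and polynomial in $|V|$, while Lemma~\ref{lem:parity-case} supplies at most $|V|$ punishing strategies, each of size exponential in $|V|$; altogether $\tilde{\sigma}_0$ has memory size exponential in $|\mathcal{G}|$.

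The main obstacle is precisely this \final-\sect\ compaction: in the reachability and safety cases a lasso with an elementary lasso-loop suffices, whereas here the loop must traverse the whole set $\infOcc{\rho}$ in order to preserve the payoff of a Muller-type objective, which forces one to go through the strongly connected subgraph induced by the edges seen infinitely often and to accept a loop of polynomial (rather than linear) length; once this is in place, the remainder is a routine adaptation of the reachability argument, and the only point needing care is to check, as in Lemma~\ref{lem:reach-correct}, that deviations from $c\Wit{\sigma_0}$ land on vertices that already admitted punishing strategies.
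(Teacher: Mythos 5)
Your proof is correct and follows essentially the same route as the paper: the same region notion $\Reg{h}=\prefStrat{h}$, the same section decomposition with cycle removal on internal sections, a short lasso for the terminal section preserving $\infOcc{\cdot}$, and the same combination with the punishing strategies $\punStrat{v}$ of Lemma~\ref{lem:parity-case}, yielding the same exponential memory bound. The only difference is cosmetic: you build the quadratic-length terminal lasso by hand via the strongly connected subgraph of infinitely-taken edges (ensuring only $\infOcc{c\pi[k]}=\infOcc{\pi[k]}$, which indeed suffices for prefix-independent objectives), whereas the paper invokes~\cite[Proposition 3.1]{BouyerBMU15} to get a lasso with both $\occ$ and $\infOcc$ preserved.
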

\begin{proof}
We highlight here the main differences from reachability \gamesAb{}.
 \begin{itemize}
      \item We associate to each history $h$ of a play $\rho \in \Wit{\sigma_0}$ a singleton $\Reg{h} = \prefStrat{h}$ instead of the triplet $(\won{h},\payoff{h},\prefStrat{h})$ used in the case of reachability. This is because $(\won{h},\payoff{h})$ does not make sense for prefix-independent objectives. The number of regions traversed by a witness $\rho$ is thus bounded by $|\Wit{\sigma_0}|$.
      \item For the definition of the compact witnesses, we proceed identically as for reachability by simply removing cycles inside each section with the exception of terminal sections. Given the terminal section $\pi[k]$ of a witness $\rho \in \Wit{\sigma_0}$, we replace it by a lasso $c \pi[k] = \pi'_1(\pi'_2)^\omega$ such that $c\pi[k]$ and $\pi[k]$ start at the same vertex, $\occ{c \pi[k]} = \occ{\pi[k]}$, $\infOcc{c \pi[k]} = \infOcc{\pi[k]}$, and $|\pi'_1\pi'_2|$ is quadratic in $|V|$~\cite[Proposition 3.1]{BouyerBMU15}. Therefore, by construction, the objectives $\Omega_i$ satisfied by a witness $\rho$ are exactly the same as for its corresponding compact play $c\rho$.
  \end{itemize} 

The required strategy $\tilde{\sigma}_0$ produces the set $c \Wit{\sigma_0}$ of compact witnesses. After any deviation it acts with the adequate punishing strategy as follows. Let $gv$ be a deviation from $c \Wit{\sigma_0}$, then there exists a deviation $hv$ from $\Wit{\sigma_0}$, and from $gv$ Player~0 switches to the punishing strategy $\punStrat{v}$ of Lemma~\ref{lem:parity-case}.
\end{proof}

With Lemmas~\ref{lem:reach-correct},~\ref{lem:parity-correct} and Corollary~\ref{cor:safety-correct}, we have thus proved Proposition~\ref{prop:tildesigma}.

\subsection{Verifying the Correctness of the Guessed Solution}
%==============================================================

We know by Proposition~\ref{prop:tildesigma} that when there exists a solution to the \problemAb{}, there is one that is finite-memory and has a memory size exponential in the size of the game. We are thus able to show that the \problemAb{} is in \nexptime{} for \gamesAb{}. The proof is divided into two parts, first for reachability and safety objectives and then for prefix-independent objectives.

\begin{proof}[Proof of Theorem~\ref{thm:nexptime} for reachability and safety objectives.] 
We begin with the case of reachability \gamesAb{}, let $\mathcal{G}$ be such a game. Proposition~\ref{prop:tildesigma} states the existence of solutions to the \problemAb{} for $\mathcal{G}$ that use a finite memory bounded exponentially. Let $\sigma_0$ be such a solution. We can guess it as a Moore machine $\mathcal{M}$ with a set of memory states at most exponential in the size of $\mathcal{G}$. Let us explain how to verify that the guessed solution $\sigma_0$ is a solution to the \problemAb{}, i.e., that every play in $\Playsigmazero$ which is \paretoOptimal{} satisfies the objective $\Omega_0$ of Player~$0$. 
\begin{itemize}
    \item First, we construct the cartesian product $G \times \mathcal{M}$ of the arena $G$ with the Moore machine $\mathcal{M}$ which is a graph of exponential size whose infinite paths (starting from the initial vertex $v_0$ and the initial memory state $m_0$) are exactly the plays consistent with $\sigma_0$. 
    \item Second, to compute $\paretoSet{\sigma_0}$, we test for the existence of a play $\rho$ in $G \times \mathcal{M}$ with a given payoff $p = \payoff{\rho}$, beginning with the largest possible payoff $p = (1,\ldots,1)$ and finishing with the smallest possible one $p = (0,\ldots,0)$. As we begin with the largest payoffs, deciding the existence of a play $\rho$ for payoff $p$ corresponds to deciding the existence of a play $\rho$ that satisfies an intersection of reachability objectives (those $\Omega_i$ such that $p_i = 1$, without considering those $\Omega_i$ such that $p_i = 0$). The latter property can be checked in exponential time as follows. We extend the graph $G \times \mathcal{M}$ with a Boolean vector in $\B^\nbrObjectives$ keeping track of the objectives of Player~$1$ that are already satisfied. The resulting graph is still of exponential size and the intersection of reachability objectives becomes a single reachability objective that can be checked in polynomial time in the size of this graph. Therefore, as there is at most an exponential number of payoffs $p$ to consider, the set $\paretoSet{\sigma_0}$ can be computed in exponential time.
    \item Third, to check that each \paretoOptimal{} play in $\Playsigmazero$ satisfies $\Omega_0$, we test for each $p \in \paretoSet{\sigma_0}$ whether there exists a play that satisfies the objectives $\Omega_i$ such that $p_i = 1$ as well as the objective $\Plays_G \setminus \Omega_0$. As the complement $\Plays_G \setminus \Omega_0$ of $\Omega_0 = \reach{T_0}$ is a safety objective, given $p \in \paretoSet{\sigma_0}$, we remove vertices of $G \times \mathcal{M}$ whose first component belongs to $T_0$ before checking whether there exists a play that satisfies the objectives $\Omega_i$ such that $p_i = 1$. It follows that this third step can also be done in exponential time.
\end{itemize}
As a consequence we have a \nexptime{} algorithm for reachability \gameAb{}s.

We now consider the case of safety \gamesAb{}. The approach is similar to the reachability case, and we therefore only explain the differences. We also use $G \times \mathcal{M}$ as described in the first step. In the second step, for a given payoff $p$, we need to decide whether there exists a play $\rho$ that satisfies an intersection of safety (instead of reachability) objectives which is again a safety objective. This can be checked in polynomial time in the size of $G \times \mathcal{M}$. In the third step, we test for each $p \in \paretoSet{\sigma_0}$ whether there exists a play that satisfies the objectives $\Omega_i$ such that $p_i = 1$ (a safety objective) as well as the objective $\Plays_G \setminus \Omega_0$ (a reachability objective). For each $p$, this can be done in polynomial time in the size of $G \times \mathcal{M}$ as explained for the case of reachability. This yields the desired \nexptime{} algorithm for safety \gameAb{}s.
\end{proof}

The proof of Theorem~\ref{thm:nexptime} for prefix-independent objectives is based on the following lemma that is useful for computing the payoff of plays. Let $\mathcal{G}$ be an \gameAb{} with prefix-independent objectives and $\sigma_0$ be a strategy for Player~$0$. We denote by $\infOcc{\Playsigmazero}$ the set of all $I \subseteq V$ such that there exists $\rho \in \Playsigmazero$ with $\infOcc{\rho} = I$.

\begin{lemma}
\label{lem:ComputeInfOcc}
If $\sigma_0$ is given by a Moore machine $\mathcal{M}$ with a number of memory states at most exponential in $|\mathcal{G}|$, then the set $\infOcc{\Playsigmazero}$ can be computed in time exponential in $|\mathcal{G}|$.
\end{lemma}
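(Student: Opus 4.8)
The plan is to pass to the product of the arena with the Moore machine, characterise combinatorially which sets arise as $\infOcc{\rho}$ for plays $\rho$ consistent with $\sigma_0$, and then enumerate candidate sets over $V$ only — not over the exponentially large product — so that a $2^{|V|}$ outer loop combined with polynomial‑time graph work on a product of exponential size stays within time exponential in $|\mathcal{G}|$.

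First I would build the product graph $G \times \mathcal{M}$, whose vertices are pairs $(v,m)$ and whose infinite paths from $(v_0,m_0)$ project, on the $V$‑component, exactly onto the plays of $\Playsigmazero$. Since $\mathcal{M}$ has exponentially many memory states, this graph has exponential size, and it together with the set $R$ of its vertices reachable from $(v_0,m_0)$ can be computed in exponential time. For a play $\rho \in \Playsigmazero$ with product lift $\rho'$, the set $\infOcc{\rho}$ is the $V$‑projection of $\infOcc{\rho'}$, and $\infOcc{\rho'}$ is a nonempty, $R$‑intersecting set $J$ of product vertices whose induced subgraph is strongly connected and supports an infinite walk (that is, $|J|\ge 2$, or $J$ is a single vertex carrying a self‑loop); conversely, every such $J$ is realised as $\infOcc{\rho'}$ for a suitable product path, by first reaching $J$ and then cycling through all of $J$ forever. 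Writing $J_V$ for the $V$‑projection of a set $J$ of product vertices, this gives the exact description $\infOcc{\Playsigmazero} = \{\, J_V \mid J \subseteq V\times M,\ J\cap R \neq \emptyset,\ J \text{ induces a strongly connected subgraph supporting an infinite walk}\,\}$.

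The key step, which avoids a doubly‑exponential enumeration of subsets of the product, is the following. Fix $I \subseteq V$ and let $H_I$ be the subgraph of $G\times\mathcal{M}$ induced by the vertices whose $V$‑component lies in $I$. Any strongly connected $J$ with $J_V = I$ is contained in a single strongly connected component $C$ of $H_I$; since every vertex of $C$ has $V$‑component in $I$ while $C \supseteq J$ forces $C_V \supseteq I$, we get $C_V = I$ exactly. Hence $I \in \infOcc{\Playsigmazero}$ if and only if some SCC $C$ of $H_I$ satisfies $C_V = I$, $C\cap R \neq \emptyset$, and $C$ supports an infinite walk. I would therefore iterate over all $I \subseteq V$ (at most $2^{|V|}$ of them); for each, build $H_I$, compute its SCCs in linear time, and test these three conditions on every SCC, adding $I$ to the output iff one SCC passes. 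Each iteration costs time polynomial in the exponential size of the product, and there are at most $2^{|V|} \le 2^{|\mathcal{G}|}$ iterations, so the whole procedure runs in time exponential in $|\mathcal{G}|$, as claimed.

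The main obstacle is pinning down the combinatorial characterisation precisely: the soundness and completeness of restricting attention to the SCCs of $H_I$ rather than to arbitrary strongly connected product‑vertex sets projecting onto $I$, the careful treatment of degenerate SCCs (a lone product vertex with no self‑loop must be rejected, since no infinite walk stays in it, whereas the same $V$‑vertex may still be realised through a different product vertex bearing a self‑loop), and the point that reachability must be taken in the full product, not inside $H_I$. Once these are settled, the complexity accounting — an exponential outer loop times polynomial work on an exponential‑size graph remains in $2^{\mathrm{poly}(|\mathcal{G}|)}$ — is routine.
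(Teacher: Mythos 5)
Your proposal is correct and follows essentially the same route as the paper: form the product $G \times \mathcal{M}$, enumerate the $2^{|V|}$ candidate sets $I \subseteq V$, and for each test via the SCCs of the induced subgraph whether some component projects exactly onto $I$, giving an exponential-time bound. Your explicit handling of degenerate SCCs (a lone product vertex without a self-loop must be rejected) is a small refinement the paper's argument leaves implicit, but it does not change the approach.
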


\begin{proof}
The required algorithm works as follows. We construct the graph $G \times \mathcal{M}$ and remove all the vertices that are not reachable from the initial vertex $(v_0, m_0)$. The resulting graph is denoted by $G'$. Then, given $I \subseteq V$, \emph{(i)} we construct the subgraph $G'_I$ obtained by only keeping the vertices $(v,m)$ of $G'$ in which $v \in I$, \emph{(ii)} we compute the set of strongly connected components (SCCs) $C$ of $G'_I$. The set $I$ is added to $\infOcc{\Playsigmazero}$ if and only if there exists an SCC $C$ such that $I = \{v \mid (v,m) \in C \}$.

Let us show that this procedure is correct. Let us consider a set $I \subseteq V$ and an SCC $C$ in $G'_I$ as described above. Let us show that there exists a play $\rho \in \Playsigmazero$ such that $\infOcc{\rho} = I$. The vertices in $G'_I$ are reachable from the initial vertex $(v_0, m_0)$ of $G'$ and the SCC $C$ in $G'_I$ contains a vertex $(v, m)$ for each $v \in I$. It follows that there exists an infinite path $\rho'$ in $G'$ starting in $(v_0, m_0)$ with $\infOcc{\rho'} = C$. The play $\rho \in \Playsigmazero$ corresponding to this path $\rho'$ is such that $\infOcc{\rho} = I$ by definition of $C$. Conversely, let us consider a play $\rho\in \Playsigmazero$ with $\infOcc{\rho} = I$. We can consider the corresponding infinite path $\rho'$ in $G'$ starting in $(v_0, m_0)$. Let us consider the set $I' = \infOcc{\rho'}$. It follows that $I'$ contains at least a vertex $(v, m)$ for each $v \in I$. Moreover, $I'$ is reachable from $(v_0, m_0)$ and contained in $G'_{I}$. Consider the SCC $C$ of $G'_{I}$ containing $I'$. Then as $I'$, $C$ contains at least a vertex $(v, m)$ for each $v \in I$. By definition of $G'_I$, $C$ contains no vertex $(v,m)$ with $v \not\in I$. Therefore $I = \{v \mid (v,m) \in C \}$.

It remains to show that the proposed algorithm executes in exponential time in $|\mathcal{G}|$. Computing $G$ and then $G'$ is done in exponential time. There are $2^{|V|}$ possible sets $I \subseteq V$ and for each such set $I$, the steps (\emph{i}) and (\emph{ii}) can be done in polynomial time in the size of $G'$, thus in exponential in $|\mathcal{G}|$.
\end{proof}

\begin{proof}[Proof of Theorem~\ref{thm:nexptime} for prefix-independent objectives.]
The proof for prefix-independent objectives follows the same steps as for reachability objectives. We again guess a solution $\sigma_0$ as a Moore machine $\mathcal{M}$ of exponential size and check whether it is indeed a solution to the \problemAb{} as follows.
\begin{itemize}
    \item First, we construct the graph $G \times \mathcal{M}$. 
    \item Second, we compute the set $\infOcc{\Playsigmazero}$. From this set, we compute the set  $\mathsf{Ext}(\sigma_0) = \{(w, p) \mid \text{ there exists } \rho \in \Playsigmazero \text{ with extended payoff } (w,p) \}$. Indeed recall that the objectives are prefix-independent, meaning that the extended payoff of $\rho$ can be retrieved from $\infOcc{\rho}$. We then compute $\paretoSet{\sigma_0}$ from $\mathsf{Ext}(\sigma_0)$. 
    \item Third, we check for each \paretoOptimal{} payoff $p \in \paretoSet{\sigma_0}$ whether there exists an extended payoff $(0, p)$ in $\mathsf{Ext}(\sigma_0)$.
\end{itemize}
The overall complexity of the previous steps is the following. The first step is performed in exponential time. For the second step, computing $\infOcc{\Playsigmazero}$ is done in exponential time by Lemma~\ref{lem:ComputeInfOcc}, computing $\mathsf{Ext}(\sigma_0)$ amounts to retrieving from each $I \in \infOcc{\Playsigmazero}$ the corresponding extended payoff which can be done in exponential time, and computing $\paretoSet{\sigma_0}$ can also be done in exponential time by comparing payoffs. The third step can be done in polynomial time in the size of $\paretoSet{\sigma_0}$ and $\mathsf{Ext}(\sigma_0)$, which are exponential in $|\mathcal{G}|$. We obtain a \nexptime{} algorithm for \gameAb{}s for prefix-independent objectives.
\end{proof}

\section{\textsf{NP} Membership for B\"uchi Objectives} \label{sec:buchi_np}
%============================================================================
We have shown in the previous section that the \problemAb{} is in \nexptime{} for every objective considered in this paper. In this section, we improve this result for B\"uchi objectives by proving the following theorem.
\begin{theorem}
\label{thm:buchi_np_membership}
The \problemAb{} is in \np{} for B\"uchi \gameAb{}.
\end{theorem}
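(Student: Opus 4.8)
The plan is to prove the stronger statement that a positive instance always admits a \emph{succinct} solution, namely a strategy for Player~$0$ representable by a Moore machine with only polynomially many memory states; the \np{} algorithm then guesses such a machine and verifies in polynomial time that it is a solution, exploiting the fact that for B\"uchi objectives every relevant computation reduces to reachability and strongly-connected-component analysis (which, unlike the generalized parity questions that would arise for parity objectives, stays in \p{}).

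\textbf{Step 1 (memoryless punishing strategies).} Starting, as in the proof of Proposition~\ref{prop:tildesigma}, from an arbitrary solution $\sigma_0$ with Pareto set $P=\paretoSet{\sigma_0}$ and witness set $\Wit{\sigma_0}$, consider the punishing objective $\ObjPlayer{0}\cup\dominatedPlays{\sigma_0}$. When every $\ObjPlayer{i}$ is a B\"uchi objective $\Buchi{B_i}$, this objective is a \emph{Rabin} objective for Player~$0$: $\ObjPlayer{0}=\Buchi{B_0}$ is the degenerate Rabin pair $(\emptyset,B_0)$, and for each $p\in P$ the set $\{\rho\mid\payoff{\rho}<p\}$ equals $\bigvee_{j\,:\,p_j=1}\CoBuchi{\bigl(\bigcup_{i\,:\,p_i=0}B_i\bigr)\cup B_j}$, a finite union of co-B\"uchi objectives, each a degenerate Rabin pair. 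Since the owner of a Rabin objective has a memoryless winning strategy that is moreover uniform on its whole winning region, and since continuing to play $\sigma_0$ after any deviation $hv\in\deviations{\sigma_0}$ already realizes this objective, a single \emph{memoryless} strategy of Player~$0$ suffices to punish every deviation — in contrast with the general case, where the punishing objective is an arbitrary explicit Muller objective requiring exponential memory.

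\textbf{Step 2 (succinct witness part).} Here lies the main difficulty. One must show that the ``main part'' of a solution — the part that, for each Pareto-optimal payoff, produces a winning play realizing it — can itself be carried out with polynomially many memory states. The idea is to exploit prefix-independence together with the \Witproperty{} of $\Wit{\sigma_0}$: for B\"uchi objectives it is enough, per Pareto-optimal payoff, to reach a \emph{winning cycle} realizing that payoff, rather than to reproduce the whole (exponentially branching) witness tree; collapsing the witness sections into lassos as in Lemma~\ref{lem:parity-correct} and plugging in the memoryless punishing strategy of Step~1 yields a solution $\tilde\sigma_0$ encoded by a Moore machine $\mathcal{M}$ of polynomial size.

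\textbf{Step 3 (guess and check).} The \np{} algorithm guesses $\mathcal{M}$ and builds the polynomial-size product $G\times\mathcal{M}$, whose infinite paths are exactly the plays consistent with $\tilde\sigma_0$. By prefix-independence, the payoff of such a play is determined by the set of $V$-vertices it visits infinitely often, and these sets are precisely the $V$-projections of the vertex sets of reachable strongly connected subgraphs of $G\times\mathcal{M}$. To decide whether every Pareto-optimal consistent play satisfies $\ObjPlayer{0}$, it suffices to examine each reachable maximal strongly connected component $C$ of $G\times\mathcal{M}$, let $S(C)=\{i\geq 1\mid$ the $V$-projection of $C$ meets $B_i\}$ be its satisfied-index-set, and test: (i) whether $S(C)$ is maximal among achievable satisfied-index-sets — equivalently, whether for some $j\notin S(C)$ there is a reachable strongly connected subgraph meeting every $B_i$ with $i\in S(C)\cup\{j\}$, a polynomial-time generalized-B\"uchi test; and (ii) whether, after deleting the vertices that project into $B_0$, some reachable strongly connected subgraph of $C$ still meets every $B_i$ with $i\in S(C)$. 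One checks that $\tilde\sigma_0$ is \emph{not} a solution if and only if some $C$ passes both (i) and (ii); each test runs in polynomial time, so the whole verification does, proving that the \problemAb{} is in \np{} for B\"uchi \gamesAb{}.
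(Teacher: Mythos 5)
Your Steps~1 and~3 are sound: the punishing objective $\ObjPlayer{0}\cup\dominatedPlays{\sigma_0}$ is indeed a disjunction of one B\"uchi and several co-B\"uchi conditions (this is exactly formula~(\ref{eq:formula}) in the paper's Lemma~\ref{lem:buchi_punish}), so uniform memoryless punishment and polynomial-time verification of its existence are available; and your SCC-based check on $G\times\mathcal{M}$ is a correct polynomial-time test that a given finite-memory strategy is a solution. The genuine gap is Step~2, and it is precisely the step you flag as ``the main difficulty'' and then do not resolve. For a fixed solution $\sigma_0$, the set $\paretoSet{\sigma_0}$ can contain exponentially many payoffs (payoffs are determined by the sets $\infOcc{\rho}\subseteq V$, and an antichain of such payoffs can have size exponential in $|V|$), so the witness tree has exponentially many branches and the region/section decomposition of Subsection~\ref{subsec:compacting} only bounds the number of regions by $|\Wit{\sigma_0}|$, which is exponential. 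Prefix-independence and ``one winning lasso per Pareto-optimal payoff'' therefore do not yield a polynomial Moore machine: you would still need exponentially many lassos. Nothing in your argument shows that \emph{some} solution has only polynomially many Pareto-optimal payoffs.

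The missing idea is the paper's Lemma~\ref{lem:bounding_buchi}: if $|\Wit{\sigma_0}|>|V|$, two witnesses $\rho_1,\rho_2$ share a vertex in $\infOcc{\rho_1}\cap\infOcc{\rho_2}$, and one can build a new strategy whose witness $\rho^*$ alternates between the recurrent cycles of $\rho_1$ and $\rho_2$; since $\infOcc{\rho^*}=\infOcc{\rho_1}\cup\infOcc{\rho_2}$, its payoff strictly dominates both, the Pareto set shrinks, and iterating gives a solution with at most $|V|$ witnesses. Only after this bound does compaction into polynomial-size lassos (Lemma~\ref{lem:buchi_compact}) produce a succinct object. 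Note also that the paper then avoids your Step~2 entirely: its \np{} algorithm guesses the at most $|V|$ compacted witnesses directly (not a Moore machine), checks consistency, payoffs, incomparability, and uses Lemma~\ref{lem:buchi_punish} to verify punishability of deviations in polynomial time. If you want to keep your guess-a-Moore-machine formulation, you must first prove the witness-count bound (or an equivalent polynomial bound on the memory of some solution); as written, the certificate you guess has no polynomial size guarantee, so the claimed \np{} membership does not follow.
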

We do so by first showing that the number of witnesses for a strategy which is a solution to the problem in a B\"uchi \gameAb{} can be bounded by the number of vertices in its arena. We then show that given a set of witnesses, we can check in polynomial time whether deviations from these witnesses are correctly punished. We also highlight that in the case of B\"uchi objectives, witnesses can be compacted into finite paths and lassos of polynomial size. Finally, we devise an \np{} algorithm where we guess a set of witnesses, verify that they indeed are a possible set of witnesses for some strategy which is a solution to the problem and check that deviations from those witnesses are properly handled.

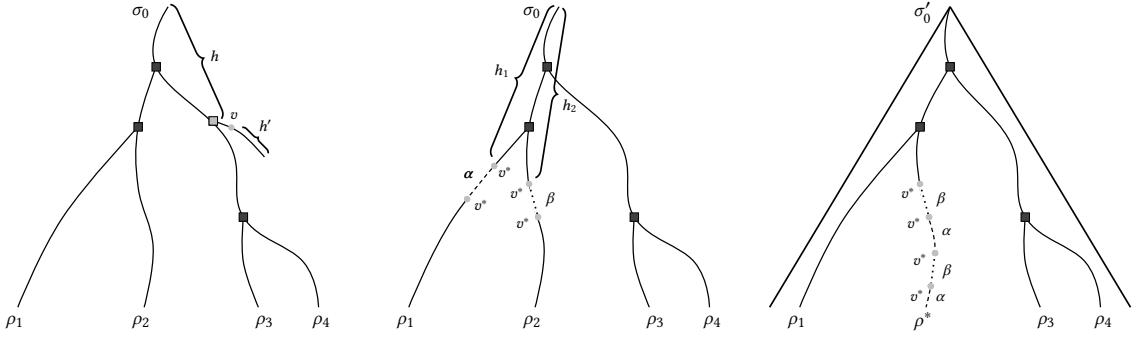
\begin{figure}
	\centering
		\resizebox{\textwidth}{!}{%
		\begin{tikzpicture}
          
         % First tree

        %\draw[-, thick] (16.5,10) edge [] node {} (11.5, 5);
        %\draw[-, thick] (16.5,10) edge [] node {} (21.5, 5);

        \node[] (0bis) at (9.55,9.9){$\sigma_0$};
        %\node[] (0bis) at (16.05,9.9){$\hat{\sigma}_0$};
        \node[] (0bis) at (10,10){};
        
        \node[rectangle, fill=darkgray, inner sep = 0pt, minimum size=4pt] (1bis) at (9.8,9){};
        \node[rectangle, fill=darkgray, inner sep = 0pt, minimum size=4pt] (2bis) at (9.5,8){};
        \node[rectangle, fill=darkgray, inner sep = 0pt, minimum size=4pt] (3bis) at (11.25,6.5){};%0.25
        
        \node[] (4bis) at (7.5,5){};
        %\node[] () at (14,4.75){$\rho_1$};
        \node[] (5bis) at (9.6,5){};
        %\node[] () at (16.1,4.75){$\rho_2$};
        \node[] (6bis) at (11.5,5){};%0.25
        %\node[] () at (18,4.75){$\rho_3$};
        \node[] (7bis) at (12.5,5){};%0.25
        %\node[] () at (19,4.75){$\rho_4$};

		\draw[semithick] plot [smooth, tension=0.7] coordinates { (0bis) (1bis) (11,7.75) (3bis) (12.25, 5.75) (7bis)};%0.25
		\draw[semithick] plot [smooth, tension=0.7] coordinates { (3bis) (11.25,5.75) (6bis)};%0.25
		\draw[semithick] plot [smooth, tension=0.7] coordinates { (1bis) (2bis) (9.5,7) (9.75,6) (5bis)};
		\draw[semithick] plot [smooth, tension=0.7] coordinates { (2bis) (8.25, 6.5) (4bis)};

        \node[draw, rectangle, fill=darkgray, inner sep = 0pt, minimum size=4pt] () at (9.8,9){};
        \node[draw, rectangle, fill=darkgray, inner sep = 0pt, minimum size=4pt] () at (9.5,8){};
        \node[draw, rectangle, fill=darkgray, inner sep = 0pt, minimum size=4pt] () at (11.25,6.5){};%0.25
        
        % punish2
        \node[draw, rectangle, fill=lightgray, inner sep = 0pt, minimum size=4pt] (pun2bis) at (10.75,8.1){}; %0.25 0.1
        \draw[semithick] plot [smooth, tension=0.7] coordinates { (pun2bis) (11.2,7.9) (11.6,7.5)};
        \node[draw, rectangle, fill=lightgray, inner sep = 0pt, minimum size=4pt] (pun2bis) at (10.75,8.1){}; %0.25 0.1
        
        \node[circle, color=lightgray, fill=lightgray, inner sep = 0pt, minimum size=3pt] (pun2bis2) at (11.05,7.99){}; %0.25 0.1
        \node[] (trash) at (11.09,8.19){\footnotesize $v$}; %0.25 0.1
        \draw [thick,decorate,decoration={brace,amplitude=4pt},xshift=1pt,yshift=1pt] (10,10) -- (10.85,8.1) node [midway,xshift=9pt,yshift=3pt] {\footnotesize $h$};
        
        \draw [thick,decorate,decoration={brace,amplitude=2pt},xshift=2.3pt,yshift=2.3pt] (11.15,7.91) -- (11.58,7.48) node [midway,xshift=5pt,yshift=7pt] {\footnotesize $h'$};
       
        \node[] () at (7.45,4.75){$\rho_1$};
        \node[] () at (9.55,4.75){$\rho_2$};
        \node[] () at (11.6,4.75){$\rho_3$};
        \node[] () at (12.55,4.75){$\rho_4$};

 % Second tree

        %\draw[-, thick] (16.5,10) edge [] node {} (11.5, 5);
        %\draw[-, thick] (16.5,10) edge [] node {} (21.5, 5);

        \node[] (0bis) at (16.05,9.9){$\sigma_0$};
        %\node[] (0bis) at (16.05,9.9){$\hat{\sigma}_0$};
        \node[] (0bis) at (16.5,10){};
        
        \node[rectangle, fill=darkgray, inner sep = 0pt, minimum size=4pt] (1bis) at (16.3,9){};
        \node[rectangle, fill=darkgray, inner sep = 0pt, minimum size=4pt] (2bis) at (16,8){};
        \node[rectangle, fill=darkgray, inner sep = 0pt, minimum size=4pt] (3bis) at (17.75,6.5){};%0.25
        
        \node[] (4bis) at (14,5){};
        %\node[] () at (14,4.75){$\rho_1$};
        \node[] (5bis) at (16.1,5){};
        %\node[] () at (16.1,4.75){$\rho_2$};
        \node[] (6bis) at (18,5){};%0.25
        %\node[] () at (18,4.75){$\rho_3$};
        \node[] (7bis) at (19,5){};%0.25
        %\node[] () at (19,4.75){$\rho_4$};

		\draw[semithick] plot [smooth, tension=0.7] coordinates { (0bis) (1bis) (17.5,7.75) (3bis) (18.75, 5.75) (7bis)};%0.25
		\draw[semithick] plot [smooth, tension=0.7] coordinates { (3bis) (17.75,5.75) (6bis)};%0.25
		\draw[semithick] plot [smooth, tension=0.7] coordinates { (1bis) (2bis) (16,7) (16.25,6) (5bis)};
		\draw[semithick] plot [smooth, tension=0.7] coordinates { (2bis) (14.75, 6.5) (4bis)};

        \node[draw, rectangle, fill=darkgray, inner sep = 0pt, minimum size=4pt] () at (16.3,9){};
        \node[draw, rectangle, fill=darkgray, inner sep = 0pt, minimum size=4pt] () at (16,8){};
        \node[draw, rectangle, fill=darkgray, inner sep = 0pt, minimum size=4pt] () at (17.75,6.5){};%0.25

        \node[] () at (13.95,4.75){$\rho_1$};
        \node[] () at (16.05,4.75){$\rho_2$};
        \node[] () at (18.1,4.75){$\rho_3$};
        \node[] () at (19.05,4.75){$\rho_4$};
        
        \node[ circle, color=lightgray, fill=lightgray, inner sep = 0pt, minimum size=3pt] (cycle8) at (15.42,7.35){};
        \node[ circle, color=lightgray, fill=lightgray, inner sep = 0pt, minimum size=3pt] (cycle9) at (14.97,6.8){};
        \draw[ultra thick, dotted, draw=white] plot [smooth, tension=0.7] coordinates { (cycle8) (cycle9)};
        \node[] () at (15,7.2){\footnotesize $\alpha$};
        \node[ circle, color=lightgray, fill=lightgray, inner sep = 0pt, minimum size=3pt] (cycle8) at (15.42,7.35){};
        \node[] (trash) at (15.6,7.25){\footnotesize $v^*$}; %0.25 0.1
        \node[ circle, color=lightgray, fill=lightgray, inner sep = 0pt, minimum size=3pt] (cycle9) at (14.97,6.8){};
        \node[] (trash) at (15.2,6.7){\footnotesize $v^*$}; %0.25 0.1

        \draw [thick,decorate,decoration={brace,mirror,amplitude=4pt},xshift=-2.3pt,yshift=0.4pt] (16.5,10) -- (15.47,7.49) node [midway,xshift=-10pt,yshift=5pt] {\footnotesize $h_1$};
        
        \draw [thick, decorate,decoration={brace,amplitude=4pt},xshift=3pt,yshift=3pt] (16.45,9.87) -- (16.0,7.05) node [midway,xshift=10pt,yshift=-6pt] {\footnotesize $h_2$};
                
        \node[ circle, color=lightgray, fill=lightgray, inner sep = 0pt, minimum size=3pt] (cycle10) at (16.0,7.05){};
        \node[ circle, color=lightgray, fill=lightgray, inner sep = 0pt, minimum size=3pt] (cycle11) at (16.15,6.5){};
        \draw[ultra thick, draw=white] plot [smooth, tension=0.7] coordinates { (15.95,7.05) (cycle11)};
        \draw[thick, dotted] plot [smooth, tension=0.7] coordinates { (cycle10) (cycle11)};
        \node[] () at (15,7.2){\footnotesize $\alpha$};
        \node[ circle, color=lightgray, fill=lightgray, inner sep = 0pt, minimum size=3pt] (cycle10) at (16.0,7.05){};
        \node[] () at (16.35,6.8){\footnotesize $\beta$};
        \node[] (trash) at (15.8,6.9){\footnotesize $v^*$}; %0.25 0.1
        \node[ circle, color=lightgray, fill=lightgray, inner sep = 0pt, minimum size=3pt] (cycle11) at (16.15,6.5){};
        \node[] (trash) at (15.9,6.4){\footnotesize $v^*$}; %0.25 0.1

        % Third tree
        \draw[-, thick] (23,10) edge [] node {} (20, 5);
        \draw[-, thick] (23,10) edge [] node {} (26, 5);

        \node[] (0tre) at (22.55,9.9){$\sigma'_0$};
        \node[] (0tre) at (23,10){};
        
        \node[rectangle, fill=darkgray, inner sep = 0pt, minimum size=4pt] (1tre) at (23,9){};
        \node[rectangle, fill=darkgray, inner sep = 0pt, minimum size=4pt] (2tre) at (22.5,8){};
        \node[rectangle, fill=darkgray, inner sep = 0pt, minimum size=4pt] (3tre) at (24.25,6.5){};%0.25
        
        \node[] (4tre) at (20.5,5){};
        %\node[] () at (20.5,4.75){$c \rho_1$};
        \node[] (5tre) at (22.6,5){};
        %\node[] () at (22.6,4.75){$c \rho_2$};
        \node[] (6tre) at (24.5,5){};%0.25
        %\node[] () at (24.5,4.75){$c \rho_3$};
        \node[] (7tre) at (25.5,5){};%0.25
        %\node[] () at (25.5,4.75){$c \rho_4$};

		\draw[semithick] plot [smooth, tension=0.7] coordinates { (0tre) (1tre) (24,7.75) (3tre) (25.25, 5.75) (7tre)};%0.25
		\draw[semithick] plot [smooth, tension=0.7] coordinates { (3tre) (24.25,5.75) (6tre)};%0.25
		\draw[semithick] plot [smooth, tension=0.7] coordinates { (1tre) (2tre) (22.5,7) (22.75,6) (5tre)};
		\draw[semithick] plot [smooth, tension=0.7] coordinates { (2tre) (21.25, 6.5) (4tre)};

        \node[draw, rectangle, fill=darkgray, inner sep = 0pt, minimum size=4pt] () at (23,9){};
        \node[draw, rectangle, fill=darkgray, inner sep = 0pt, minimum size=4pt] () at (22.5,8){};
        \node[draw, rectangle, fill=darkgray, inner sep = 0pt, minimum size=4pt] () at (24.25,6.5){};%0.25
        
        \node[ circle, color=lightgray, fill=lightgray, inner sep = 0pt, minimum size=3pt] (cycle10) at (22.5,7.05){};
        \node[ circle, color=lightgray, fill=lightgray, inner sep = 0pt, minimum size=3pt] (cycle11) at (22.65,6.5){};
        \draw[ultra thick, draw=white] plot [smooth, tension=0.7] coordinates { (22.5,7.05) (22.61,6.5)};
        \draw[thick, dotted] plot [smooth, tension=0.7] coordinates { (cycle10) (cycle11)};
        \node[ circle, color=lightgray, fill=lightgray, inner sep = 0pt, minimum size=3pt] (cycle10) at (22.5,7.05){};
        \node[ circle, color=lightgray, fill=lightgray, inner sep = 0pt, minimum size=3pt] (cycle11) at (22.65,6.5){};
        \node[] () at (22.85,6.8){\footnotesize $\beta$};
        \node[] (trash) at (22.3,6.9){\footnotesize $v^*$}; %0.25 0.1
        \node[] (trash) at (22.4,6.4){\footnotesize $v^*$}; %0.25 0.1
        
        \node[ circle, color=lightgray, fill=lightgray, inner sep = 0pt, minimum size=3pt] (cycle10) at (22.65,6.5){};
        \node[ circle, color=lightgray, fill=lightgray, inner sep = 0pt, minimum size=3pt] (cycle11) at (22.75,5.9){};
        \draw[ultra thick, dotted, draw=white] plot [smooth, tension=0.7] coordinates { (22.65,6.5) (22.775,5.9)};
        \node[ circle, color=lightgray, fill=lightgray, inner sep = 0pt, minimum size=3pt] (cycle10) at (22.65,6.5){};
        \node[ circle, color=lightgray, fill=lightgray, inner sep = 0pt, minimum size=3pt] (cycle11) at (22.75,5.9){};
        \node[] () at (22.95,6.25){\footnotesize $\alpha$};
        \node[] (trash) at (22.5,5.8){\footnotesize $v^*$}; %0.25 0.1
        
        \node[ circle, color=lightgray, fill=lightgray, inner sep = 0pt, minimum size=3pt] (cycle10) at (22.75,5.9){};
        \node[ circle, color=lightgray, fill=lightgray, inner sep = 0pt, minimum size=3pt] (cycle11) at (22.68,5.35){};
        \draw[ultra thick, draw=white] plot [smooth, tension=0.7] coordinates { (22.75,5.9) (22.7,5.35)};
        \draw[thick, dotted] plot [smooth, tension=0.7] coordinates { (cycle10) (cycle11)};
        \node[ circle, color=lightgray, fill=lightgray, inner sep = 0pt, minimum size=3pt] (cycle10) at (22.75,5.9){};
        \node[ circle, color=lightgray, fill=lightgray, inner sep = 0pt, minimum size=3pt] (cycle11) at (22.68,5.35){};
        \node[] () at (22.95,5.6){\footnotesize $\beta$};
        \node[] (trash) at (22.45,5.2){\footnotesize $v^*$}; %0.25 0.1
        
        \draw[ultra thick, dotted, draw=white] plot [smooth, tension=0.7] coordinates { (22.68,5.35) (22.6,4.98)};
        
        \node[ circle, color=lightgray, fill=lightgray, inner sep = 0pt, minimum size=3pt] (cycle11) at (22.68,5.35){};
        
        \node[] () at (22.85,5.15){\footnotesize $\alpha$};

        \node[] () at (20.45,4.75){$\rho_1$};
        \node[] () at (22.55,4.77){$\rho^*$};
        \node[] () at (24.6,4.75){$\rho_3$};
        \node[] () at (25.55,4.75){$\rho_4$};
        
		\end{tikzpicture}
	}%
	\caption{The creation of strategy $\sigma'_0$ from a solution $\sigma_0$ with $\Wit{\sigma_0} = \{\rho_1, \rho_2, \rho_3, \rho_4\}$ viewed as a tree.}
	\label{fig:np_strategy}
	\Description{Figure 3. Fully described in the text.}
\end{figure} 

\subsection{Bounding the Number of Witnesses}
We start by showing the following lemma on the number of witnesses for a strategy which is a solution to the \problemAb{} in a B\"uchi \gameAb{}.
\begin{lemma}
\label{lem:bounding_buchi}
Let $\mathcal{G}$ be a B\"uchi \gameAb{} and let $\sigma_0$ be a solution to the \problemAb{} in $\mathcal{G}$. Then there exists a solution $\sigma'_0$ to the problem in $\mathcal{G}$ such that $|\Wit{\sigma'_0}| \leq |V|$.
\end{lemma}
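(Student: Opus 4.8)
We are given a solution $\sigma_0$ to the SPS problem in a Büchi SP game $\mathcal{G}$, with Pareto set $\paretoSet{\sigma_0}$ and witness set $\Wit{\sigma_0}$; we want to produce a solution $\sigma'_0$ with $|\Wit{\sigma'_0}| \leq |V|$. Since each payoff in $\paretoSet{\sigma_0}$ needs exactly one witness, equivalently we must shrink $\paretoSet{\sigma_0}$ to size at most $|V|$. The plan is to argue that, because Büchi objectives are prefix-independent and a play's extended payoff is determined by $\infOcc{\cdot}$, two witnesses passing through a common vertex inside a common "tail loop" can be merged without losing any payoff. More precisely, I would exploit the picture in Figure~\ref{fig:np_strategy}: if there are more than $|V|$ witnesses, then (after compacting terminal sections into lassos of polynomial size, as noted in the introduction to this section) by a pigeonhole argument two distinct witnesses $\rho$ and $\rho'$ re-enter a common vertex $v^*$ infinitely often, and one can re-route so that a single play realizes both payoffs' union or so that the branching tree of witnesses uses each vertex of $V$ at most once as a "terminal branching vertex."

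**Key steps.** First I would recall the region/section decomposition from Subsection~\ref{subsec:compacting} specialized to prefix-independent objectives: each witness $\rho \in \Wit{\sigma_0}$ has a terminal section that can be replaced by a lasso $\pi'_1(\pi'_2)^\omega$ with $\occ{}$ and $\infOcc{}$ preserved (Lemma~\ref{lem:parity-correct} and \cite[Proposition 3.1]{BouyerBMU15}), so without loss of generality the witnesses are eventually periodic. Second, I would observe that the set $\Wit{\sigma_0}$, viewed as a tree, branches only finitely often and that distinct branches eventually diverge; the crucial quantity is the set of vertices at which two witnesses' infinite behaviors "stabilize." Third, the pigeonhole core: if $|\Wit{\sigma_0}| > |V|$, then two witnesses $\rho_2, \rho_3$ share some vertex $v^*$ that each visits infinitely often in its lasso tail. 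Starting from the longest common prefix, I would cut-and-paste: let $\rho^*$ follow $\rho_2$'s tree path down to $v^*$, then alternate the two cycles $\alpha$ (the loop of $\rho_2$ through $v^*$) and $\beta$ (the loop of $\rho_3$ through $v^*$) so that $\infOcc{\rho^*} = \infOcc{\rho_2} \cup \infOcc{\rho_3}$, hence $\payoff{\rho^*} \geq \payoff{\rho_2} \vee \payoff{\rho_3}$ and $\won{\rho^*} = 1$ (Büchi is monotone under enlarging $\infOcc{}$, and $\won{\rho_2}=\won{\rho_3}=1$ since $\sigma_0$ is a solution). Replacing $\rho_2$ and $\rho_3$ by $\rho^*$ in the witness set strictly decreases the number of witnesses while keeping all Pareto-optimal payoffs covered (the payoffs of $\rho_2$ and $\rho_3$, if still maximal, are dominated by $\payoff{\rho^*}$, which stays consistent with the new strategy). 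Fourth, I would define $\sigma'_0$ to produce this reduced set of compacted witnesses and, after any deviation, fall back on the punishing strategy $\punStrat{v}$ of Lemma~\ref{lem:parity-case} (a deviation from the new witness tree corresponds to a deviation $hv \in \deviations{\sigma_0}$, so the punishing machinery applies verbatim). Iterating this merge until no two witnesses share an infinitely-visited vertex yields $|\Wit{\sigma'_0}| \leq |V|$, since then the map sending each witness to some vertex in its $\infOcc{}$ set that no other witness visits infinitely often is injective — or, more carefully, a counting argument on the "last branching vertices" of the witness tree.

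**Main obstacle.** The delicate point is ensuring that after merging, the resulting tree of plays really is the witness set of a genuine strategy that is still a solution — that is, that every play consistent with $\sigma'_0$ which is Pareto-optimal is won by Player~$0$, and that the announced Pareto set does not accidentally grow (new incomparable payoffs appearing) or that a merged payoff does not become incomparable with an existing witness payoff in a way that breaks coverage. I would handle this by taking $P' = \max\{\payoff{\rho} \mid \rho \in \text{new witness tree}\}$ and checking that $P' \subseteq$ the downward closure of $\paretoSet{\sigma_0}$ while still dominating every payoff reachable after a deviation (guaranteed by $\dominatedPlays{\sigma_0}$-punishment), so $\paretoSet{\sigma'_0} = P'$ and every Pareto-optimal consistent play is a (compacted) witness, hence won. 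The second subtlety is the precise bound $|V|$ rather than something like $|V|^2$: I expect this requires arguing that the injection is into $V$ itself, using that each witness, being an eventually-periodic play through vertices of $G$, has a nonempty $\infOcc{}$ and that the merge operation can be applied whenever two witnesses share \emph{any} infinitely-visited vertex, so at termination the sets $\infOcc{\rho}$ for $\rho \in \Wit{\sigma'_0}$ are pairwise disjoint nonempty subsets of $V$, forcing $|\Wit{\sigma'_0}| \leq |V|$.
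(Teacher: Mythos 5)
Your proposal is correct and follows essentially the same route as the paper's proof: a pigeonhole argument showing that more than $|V|$ witnesses forces two of them to share a vertex $v^*$ visited infinitely often, merging them into a single consistent play $\rho^*$ that alternates the two tail cycles through $v^*$ so that $\infOcc{\rho^*}$ is the union of the two (hence its payoff strictly dominates both and it is won), and transferring the punishment of deviations via prefix-independence. The only cosmetic differences are that the paper phrases the reduction as a minimal-counterexample argument rather than an explicit iteration, and has $\sigma'_0$ punish deviations by mimicking $\sigma_0$ itself instead of invoking the canonical punishing strategies of Lemma~\ref{lem:parity-case}; neither affects correctness.
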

\begin{proof}
Let $\mathcal{G}$ be a B\"uchi \gameAb{} and let $\sigma_0$ be a solution to the \problemAb{} in $\mathcal{G}$. Let us assume that the number of witnesses of $\sigma_0$ is the smallest possible for a solution in $\mathcal{G}$. Towards contradiction, let us also assume that this number of witnesses is larger than the number of vertices in $\mathcal{G}$, that is $|\Wit{\sigma_0}| > |V|$. It therefore holds that there exist two witnesses $\rho_1$ and $\rho_2$ in $\Wit{\sigma_0}$ such that $\infOcc{\rho_1} \cap \infOcc{\rho_2} \neq \emptyset$. Let $v^* \in V$ be a vertex such that $v^* \in \infOcc{\rho_1}$ and $v^* \in \infOcc{\rho_2}$. Our goal is to create a strategy $\sigma'_0$ which is a solution to the problem, which yields every witness in $\Wit{\sigma_0}$ except for $\rho_2$ that is replaced by a new witness $\rho^*$, and which properly punishes deviations from the witnesses. We will show that the substitution of $\rho_2$ by $\rho^*$ will result in a set of witnesses of $\sigma'_0$ such that $|\Wit{\sigma'_0}| < |\Wit{\sigma_0}|$, contradicting our assumptions on $\sigma_0$.

Strategy $\sigma'_0$ behaves as $\sigma_0$ for every history $h$ which is a prefix of at least one witness that is not $\rho_2$, i.e., $\sigma'_0(h) = \sigma_0(h)$ if $\prefStrat{h} \not \in \{\emptyset, \{\rho_2\} \}$. This allows $\sigma'_0$ to yield every witness except for $\rho_2$. Upon deviation from such witnesses, $\sigma'_0$ behaves as $\sigma_0$. That is, given some history $hvh'$ %with $\prefStrat{hvh'} = \emptyset$ and 
such that $\prefStrat{h} \not \in \{\emptyset, \{\rho_2\} \}$ and $\prefStrat{hv} = \emptyset$, we define $\sigma'_0(hvh') = \sigma_0(hvh')$ (as illustrated in Figure~\ref{fig:np_strategy} left). This allows $\sigma'_0$ to make sure that upon deviation of any prefix of at least one witness that is not $\rho_2$, either the objective of Player~$0$ is satisfied or the payoff of the resulting play is not \paretoOptimal{} (as $\sigma_0$ is a solution to the problem).

We now define $\sigma'_0$ such that it replaces witness $\rho_2$ by a new witness $\rho^*$ which alternates between parts of $\rho_1$ and $\rho_2$. Let $h_1 v^* \alpha v^*$ be the smallest history consistent with $\sigma_0$ such that $\prefStrat{h_1 v^*} = \{\rho_1\}$ and $\infOcc{\rho_1} = \{v^*\} \cup \{v \mid v \in \alpha\}$. That is, $h_1 v^* \alpha v^*$ is the smallest history such that $v^*$ is visited twice and the vertices occurring in $v^* \alpha$ are exactly those visited infinitely often in $\rho_1$. The existence of such a history is guaranteed as $v^* \in \infOcc{\rho_1} \cap \infOcc{\rho_2}$ by definition. Similarly, let $h_2 v^* \beta v^*$ have the same property for $\rho_2$. An example of these histories is illustrated in Figure~\ref{fig:np_strategy} (center). We easily define $\sigma'_0$ to replace $\rho_2$ with a witness $\rho^*$ that alternates between $\beta$ and $\alpha$ every other occurrence of $v^*$. We first make sure that $\sigma'_0$ properly yields history $h_2 v^*$ and punishes deviations from $h_2 v^*$ by behaving as $\sigma_0$ along this history and its deviations. % For every history $h'$ such that $h'$ is a prefix of $h_2$ with $\prefStrat{h'} = \{\rho_2\}$ we define $\sigma'_0(h') = \sigma_0(h')$ and treat deviations from such histories as done in $\sigma_0$. 
From $h_2 v^*$ on, we define $\sigma'_0$ to yield $\beta$ and punish deviations from $\beta$ as done in $\rho_2$. On the next occurrence of $v^*$ after $\beta$ has been completely visited, $\sigma'_0$ switches to yielding $\alpha$ and punishes deviations from $\alpha$ as done in $\rho_1$. This is repeated infinitely often to yield the new witness $\rho^* = h_2 v^* (\beta v^* \alpha v^*)^\omega$ replacing $\rho_2$ (see Figure~\ref{fig:np_strategy} right). Notice that, since B\"uchi objectives are prefix-independent, deviations from each occurrence of $\alpha$ (resp. $\beta$) can be punished as if they were from the first. It is direct to see that $\infOcc{\rho^*} = \infOcc{\rho_1} \cup \infOcc{\rho_2}$ and that therefore the payoff of $\rho^*$ is such that $\payoff{\rho^*}_i = 1$ if and only if $\payoff{\rho_1}_i = 1$ or $\payoff{\rho_2}_i = 1$ for $i \in \{1, \dots, \nbrObjectives\}$ and it holds that $\won{\rho^*} = 1$ (as $\rho_1$ and $\rho_2$ were won by Player~$0$). In particular, we have that $\payoff{\rho_1} < \payoff{\rho^*}$ and  $\payoff{\rho_2} < \payoff{\rho^*}$.%In particular, since $\Wit{\sigma_0}$ was the set of witnesses for a solution to the problem, it holds that $\payoff{\rho_1}$ and $\payoff{\rho_2}$ are incomparable and therefore that $\payoff{\rho_1} < \payoff{\rho^*}$ and  $\payoff{\rho_2} < \payoff{\rho^*}$.

Let us now first show that $|\Wit{\sigma'_0}| \leq |\Wit{\sigma_0}| - 1$ and then that $\sigma'_0$ is a solution to the \problemAb{} in $\mathcal{G}$. Let us establish that the set of witnesses of $\sigma'_0$ is equal to $\Wit{\sigma'_0} = \{\rho \in \Wit{\sigma_0} \mid \payoff{\rho} \not < \payoff{\rho^*} \} \cup \{\rho^*\}$. Notice that $\rho_1, \rho_2$ do not belong to this set as $\payoff{\rho_1} < \payoff{\rho^*}$ and  $\payoff{\rho_2} < \payoff{\rho^*}$, showing that the size of $\Wit{\sigma'_0}$ is strictly smaller than that of $\Wit{\sigma_0}$. The set $\Wit{\sigma'_0}$ is a proper set of witnesses as \emph{(i)} the payoffs of its elements are pairwise incomparable due to the removal of the witnesses of $\Wit{\sigma_0}$ whose payoff is strictly smaller than that of $\rho^*$, and \emph{(ii)} the witnesses in $\Wit{\sigma'_0}$ are won by Player~$0$, as those which also belong to $\Wit{\sigma_0}$ do and given our previous observations on $\rho^*$. It also holds that $\sigma'_0$ is a solution, that is, upon deviation from any witness in $\Wit{\sigma'_0}$, either the objective of Player~$0$ is satisfied or the payoff of the resulting play is strictly smaller than the payoff of some witness in $\Wit{\sigma'_0}$. Notice that by construction, strategy $\sigma'_0$ punishes deviations the same ways as $\sigma_0$ (in particular for $\rho^*$ because B\"uchi objectives are prefix-independent objectives). Assume that a deviating play does not satisfy the objective of Player~$0$. Since $\sigma'_0$ punishes deviations as done in $\sigma_0$ it follows that the payoff of this play is smaller than that of some witness $\rho$ in $\Wit{\sigma_0}$, and thus smaller than that of some witness $\rho'$ in $\Wit{\sigma'_0}$.
\end{proof}

\subsection{Compacting Witnesses}

We have established that if there is a solution to a B\"uchi \gameAb{}, we can construct one such that its number of witnesses is bounded by $|V|$. We now show that these witnesses can be compacted to have a size polynomial in $|V|$. 

\begin{lemma}
\label{lem:buchi_compact}
Given the set $\Wit{\sigma_0}$ of witnesses for a solution $\sigma_0$ to the problem such that $|\Wit{\sigma_0}| \leq |V|$, each witness $\rho$ can be expressed in the form $h\ell^\omega$ such that $h$ and $\ell$ have a size polynomial in $|V|$.
\end{lemma}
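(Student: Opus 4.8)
The plan is to recycle the \region{} decomposition developed for prefix-independent objectives in Section~\ref{sec:nexptime_member}, taking advantage of the new hypothesis $|\Wit{\sigma_0}| \le |V|$. First I would associate to every history $h$ that is a prefix of at least one witness of $\Wit{\sigma_0}$ the \region{} $\Reg{h} = \prefStrat{h} \subseteq \Wit{\sigma_0}$. Along any witness $\rho = v_0 v_1 \ldots \in \Wit{\sigma_0}$ the sequence $\big(\prefStrat{v_0\cdots v_j}\big)_{j\ge 0}$ is non-increasing for $\subseteq$ and eventually stabilizes on the singleton $\{\rho\}$, and it can only strictly decrease at a vertex of Player~$1$ where the witness tree branches. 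Hence the number of distinct \region s traversed by $\rho$ is at most $|\Wit{\sigma_0}| \le |V|$, so $\rho$ admits a \region{} decomposition $\rho = \pi[1]\cdots\pi[k]$ with $k \le |V|$, the \region{} being constant along each \sect{} $\pi[\ell]$; in particular each $\pi[\ell]$ is a genuine path (no branching inside a \sect{}), which is what makes the per-\sect{} compaction well defined and compatible with the branching of $\Wit{\sigma_0}$.

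Next I would compact each \sect{} exactly as in the proof of Lemma~\ref{lem:parity-correct}. Every \internal{} \sect{} $\pi[\ell]$ with $\ell<k$ is replaced by the elementary path $c\pi[\ell]$ obtained by deleting all cycles: it has the same first and last vertex as $\pi[\ell]$, it is consistent with the constant \region{} of $\pi[\ell]$, and it has length at most $|V|$. The \final{} \sect{} $\pi[k]$ is replaced, using \cite[Proposition~3.1]{BouyerBMU15}, by a lasso $c\pi[k] = \pi'_1(\pi'_2)^\omega$ that starts at the first vertex of $\pi[k]$, satisfies $\occ{c\pi[k]} = \occ{\pi[k]}$ and $\infOcc{c\pi[k]} = \infOcc{\pi[k]}$, and with $|\pi'_1\pi'_2|$ quadratic in $|V|$. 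Setting $c\rho = c\pi[1]\cdots c\pi[k-1]\,\pi'_1\,(\pi'_2)^\omega$ gives a play of the required shape $h\ell^\omega$ with $h = c\pi[1]\cdots c\pi[k-1]\,\pi'_1$ and $\ell = \pi'_2$: there are at most $|V|$ \internal{} \sect s each of length $\le |V|$, and $|\pi'_1|,|\pi'_2|$ are quadratic in $|V|$, so $|h|$ and $|\ell|$ are $\mathcal{O}(|V|^2)$, hence polynomial in $|V|$. I would also record that since the prefixes are finite and $\infOcc{c\pi[k]} = \infOcc{\pi[k]} = \infOcc{\rho}$ we get $\infOcc{c\rho} = \infOcc{\rho}$, whence $\payoff{c\rho} = \payoff{\rho}$ and $\won{c\rho} = \won{\rho}$, and that the \Witproperty{} of $\Wit{\sigma_0}$ is inherited by $\{c\rho \mid \rho \in \Wit{\sigma_0}\}$ because the \region{} is constant inside each \sect{}; these are the invariants needed by the \np{} algorithm of the next subsection.

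The routine parts (cycle removal, counting \sect s, multiplying the polynomial bounds) are immediate. The only step that genuinely requires care is the compaction of the \final{} \sect{}, where the play must be shrunk to polynomial length while simultaneously preserving its set of infinitely-occurring vertices — and therefore the B\"uchi payoff and the winning status of the witness. This is exactly what the cited result \cite[Proposition~3.1]{BouyerBMU15} delivers, so the main obstacle reduces to invoking it correctly and to having set up the \region{} decomposition so that each \sect{} is a single path, making the compacted witnesses fit together into a tree respecting the \Witproperty.
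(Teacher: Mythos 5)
Your proof is correct and follows essentially the same route as the paper: the \region{} decomposition with $\Reg{h}=\prefStrat{h}$ bounding the number of \sect s by $|\Wit{\sigma_0}|\le|V|$, cycle removal in \internal{} \sect s, and the lasso compaction of the \final{} \sect{} via \cite[Proposition~3.1]{BouyerBMU15} preserving $\infOcc{}$ and hence the payoff. Nothing is missing.
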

\begin{proof}
We proceed as follows.  
\begin{itemize}
      \item As explained in Subsection \ref{subsec:compacting} for prefix-independent objectives (see in particular the proof of Lemma~\ref{lem:parity-correct}), the number of regions traversed by a witness $\rho$ is bounded by $|\Wit{\sigma_0}|$.
      \item The compacted witnesses are created as done in Subsection \ref{subsec:compacting} by removing cycles inside each internal section with the exception of terminal sections (see again the proof of Lemma~\ref{lem:parity-correct}). These compacted internal sections thus have a size that is bounded by $|V|$. The terminal section of a witness $\rho \in \Wit{\sigma_0}$ is replaced by a lasso which retains the same set of vertices occurring (in)finitely often, with a size being quadratic in $|V|$~\cite[Proposition 3.1]{BouyerBMU15}. It follows that compacted witnesses retain the same payoff.
  \end{itemize} 
Therefore, since $|\Wit{\sigma_0}| \leq |V|$, the size of any compacted witness is at most quadratic in $|V|$.
\end{proof}

\subsection{Punishing Strategies}

Before describing an \np{} algorithm to solve the \problemAb{} for B\"uchi \gamesAb{} in the next section, we first need to discuss about punishing strategies. Recall that we explained in Subsection \ref{subsec:punishing} how punishing strategies can be \emph{synthesized} to punish deviations from the witnesses yielded by some strategy $\sigma_0$. We here proceed differently by studying the complexity of \emph{verifying} whether such strategies exist in the following way. Assume that we are given a finite set $W$ of plays and its related set of payoffs $P = \{\payoff{\rho} \mid \rho \in W\}$ (our \np{} algorithm will guess such a set $W$ and check that it is a set of witnesses for some solution $\sigma_0$). Let $hv$ be a history such that $h$ is prefix of some play in $W$ but $hv$ is not ($hv$ can be seen as a deviation from $W$). We want to check whether Player~$0$ has a (punishing) strategy $\sigma_0$ such that all consistent plays $\rho$ starting in $v$ are such that either $h\rho \in \ObjPlayer{0}$ or $h\rho \in \Omega^{<}(P)$. This can be done in polynomial time when $|W| \leq |V|$ as stated in the following lemma.

\begin{lemma}
\label{lem:buchi_punish}
Let $W$ be a finite set containing at most $|V|$ plays and $P$ be its corresponding set of payoffs. Let $hv$ be a history such that $h$ is prefix of some play in $W$ but $hv$ is not. We can check in polynomial time whether there exists a strategy $\sigma_0$ for Player~$0$ such that all consistent plays $\rho$ starting in $v$ are such that either $h\rho \in \ObjPlayer{0}$ or $h\rho \in \Omega^{<}(P)$. 
\end{lemma}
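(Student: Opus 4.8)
The plan is to reduce this verification problem to solving a polynomial number of zero-sum games on a graph of polynomial size, each game having an objective that is the disjunction of a Büchi objective (for $\Omega_0$) and a Büchi objective (for staying below some fixed Pareto-incomparable payoff threshold). First I would fix the history $hv$ and note that what matters about $h$ is only whether $\Omega_0$ has already been "forced" — but since all objectives here are Büchi (hence prefix-independent), $\ObjPlayer 0$ and $\Omega^{<}(P)$ depend only on the suffix $\rho$ starting in $v$, not on the prefix $h$. So the condition to check simplifies to: Player~$0$ has a strategy from $v$ in $G$ such that every consistent play $\rho$ satisfies $\rho \in \ObjPlayer 0 \cup \Omega^{<}(P)$, where $\Omega^{<}(P) = \{\rho \mid \exists p' \in P,\ \payoff\rho < p'\}$.

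Next I would unfold $\Omega^{<}(P)$. Writing $\payoff\rho < p'$ means $\payoff\rho \le p'$ and $\payoff\rho \ne p'$; since $p'$ is a fixed Boolean vector, $\payoff\rho \le p'$ is the conjunction over all $i$ with $p'_i = 0$ of "$\rho \notin \ObjPlayer i$", i.e. a conjunction of co-Büchi objectives, and "$\payoff\rho \ne p'$" is the disjunction over all $i$ with $p'_i = 1$ of "$\rho \notin \ObjPlayer i$", again a co-Büchi objective, together with the further constraint that for $i$ with $p'_i=1$ we additionally want $\rho \in \ObjPlayer i$ to be \emph{allowed but not required}. The cleanest route is: for each $p' \in P$, the objective "$\rho \in \ObjPlayer 0$ or $\payoff\rho < p'$" is an $\omega$-regular objective expressible as a Boolean Büchi objective over the at most $\nbrObjectives + 1$ underlying Büchi sets $B_0, B_1, \dots, B_\nbrObjectives$, with a formula of size polynomial in $\nbrObjectives$. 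Then Player~$0$ wins from $v$ iff he wins the zero-sum game $(G, \bigcup_{p' \in P} \text{(this objective)})$, i.e. the objective $\ObjPlayer 0 \lor \bigvee_{p' \in P}(\payoff\rho < p')$, which is again a Boolean Büchi objective of size polynomial in $|P| \le |V|$ and $\nbrObjectives$.

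The remaining issue is complexity: solving a Boolean Büchi zero-sum game is not polynomial in general (it is the Emerson–Lei problem). Here, however, I would exploit the special \emph{shape} of the objective. The objective $\ObjPlayer 0 \lor \bigvee_{p' \in P}(\payoff\rho < p')$ is a \emph{union of "$\reachName$-free" objectives each of which is a disjunction of a single Büchi objective with a conjunction/disjunction of co-Büchi objectives} — more precisely it is a \emph{generalized-Rabin / emptiness-of-Streett-complement-style} condition, but the key point is that it is a positive Boolean combination that is a Rabin condition in disguise: "$\rho \in B_0$ infinitely often, OR for some $p'$, for every $i$ with $p'_i=0$ the set $B_i$ is visited only finitely often and for some $j$ with $p'_j=1$ the set $B_j$ is visited only finitely often." Each disjunct is a Rabin pair-style condition (finitely-often constraints conjoined, plus one Büchi-type good event), so the whole thing is a Rabin condition with $\mathcal O(|P| \cdot \nbrObjectives) \le \mathcal O(|V| \cdot \nbrObjectives)$ pairs. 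But in fact we only need to decide whether \emph{Player~$0$} wins a Rabin game — and Rabin games are in general NP-hard, which would be too slow. So the real trick, which I expect to be the main obstacle, is to observe that for the \emph{protagonist} here the objective is actually much simpler: we can first compute, for each $p' \in P$, the set of vertices from which Player~$0$ can guarantee $\payoff\rho < p'$ — and "$\payoff\rho \le p'$ and strictly below" reduces to a \emph{safety-like} constraint (stay out of certain Büchi sets forever, eventually) combined with avoiding finitely many sets, which for a single threshold $p'$ is a conjunction of co-Büchi objectives plus a disjunction of co-Büchi objectives; a union of co-Büchi objectives is a \emph{one-pair} Streett-complement and is solvable in polynomial time, and a conjunction of co-Büchi is co-Büchi, so "$\payoff\rho < p'$" alone is a disjunction of co-Büchi objectives (generalized co-Büchi), solvable in polynomial time. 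Then "$\ObjPlayer 0 \lor \bigvee_{p'}(\payoff\rho<p')$" is a disjunction of a Büchi objective with a generalized co-Büchi objective; I would solve this via a standard attractor/fixpoint computation: compute the set $Z$ of vertices from which Player~$0$ can force $\bigvee_{p'}(\payoff\rho<p')$ (polynomial, since it is generalized co-Büchi), then solve the Büchi game for $\ObjPlayer 0$ in the subgame where reaching $Z$ is also winning (i.e. add $Z$ as an absorbing winning region), which is again polynomial. Correctness of this decomposition is the disjunction rule for games when one of the two objectives is prefix-independent (both are here). Hence the whole check runs in time polynomial in $|V|$ and $|P| \le |V|$, as claimed.
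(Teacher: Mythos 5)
Your reduction of the check to a zero-sum game from $v$ with objective $\ObjPlayer{0} \cup \Omega^{<}(P)$, and the rewriting of that objective as $\Buchi{B_0} \lor \bigvee_{p \in P}\bigvee_{p_j=1}\CoBuchi{B_j \cup \bigcup_{p_i=0}B_i}$ (a Büchi objective disjoined with a generalized co-Büchi objective), is exactly the paper's route and is fine. The gap is in your final algorithmic step. You propose to compute the winning region $Z$ of Player~$0$ for the generalized co-Büchi part $B = \bigvee_{p\in P}(\payoff{\rho}<p)$ alone, make $Z$ absorbing and winning, and then solve the Büchi game for $B_0$, invoking a ``disjunction rule'' for prefix-independent objectives. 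No such rule holds, and the decomposition is unsound: a play consistent with a strategy winning for $\Buchi{B_0}\cup B$ may satisfy $B$ without ever visiting a vertex from which Player~$0$ can \emph{force} $B$, so your procedure only under-approximates the winning region. Concretely, take $t=1$, $P=\{(1)\}$ (so $\Omega^{<}(P)=\CoBuchi{B_1}$), and let the part of the arena reachable from $v$ consist of two Player~$1$ vertices $a=v$ and $b$ with edges $a\to a$, $a\to b$, $b\to b$, where $B_0=B_1=\{b\}$ (this is easy to embed in a full instance with a witness of payoff $(1)$ elsewhere and a deviation $hv$ entering $a$). Every play from $a$ satisfies $\Buchi{\{b\}}\cup\CoBuchi{\{b\}}$, so the required punishing strategy exists trivially; yet $Z=\emptyset$ (from both $a$ and $b$ Player~$1$ can force $b$ to be visited infinitely often), and Player~$0$ does not win the Büchi game for $B_0$ from $a$ (Player~$1$ loops on $a$). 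Your algorithm therefore answers ``no'' on a positive instance.

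The objective class itself is polynomially solvable, but you need a sound algorithm for the disjunction. The paper finishes by dualizing: by determinacy, Player~$0$ wins $\Buchi{B_0}\lor\bigvee\CoBuchi{\cdot}$ from $v$ if and only if Player~$1$ does not win the complementary objective, which is the conjunction of the single co-Büchi objective $\CoBuchi{B_0}$ with a generalized Büchi objective; such games are solvable in $\mathcal{O}(|V|^3\cdot k)$ time with $k\le |P|\cdot\nbrObjectives\le|V|\cdot\nbrObjectives$, giving the polynomial bound. Alternatively, a direct algorithm for the disjunction needs a genuine nested fixpoint (the condition is essentially a three-priority parity-type condition), not the one-shot two-phase computation you describe. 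With that step replaced, the rest of your argument goes through.
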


\begin{proof}
For the given history $hv$ we want to verify whether there exists a strategy $\sigma_0$ such that every consistent play $\rho$ starting in $v$ satisfies either $\rho \in \ObjPlayer{0}$ or $h\rho \in \Omega^{<}(P)$. This amounts to verifying the existence of a winning strategy $\sigma_0$ for Player~$0$ in a zero-sum game played on the same arena and where he has objective $\ObjPlayer{0} \cup \Omega^{<}(P)$. We express this objective using a Boolean combination of B\"uchi and co-B\"uchi objectives as follows: 
\begin{eqnarray} \label{eq:formula}
\Buchi{B_0} \lor \bigvee_{p \in P} \Big{(} \bigwedge_{p_i = 0} \CoBuchi{B_i} \land \bigvee_{p_j = 1} \CoBuchi{B_j} \Big{)}
\end{eqnarray}
using the fact that the complement of a B\"uchi objective $\Buchi{B}$ is the co-B\"uchi objective $\CoBuchi{B}$. Indeed, given a play $\rho$ satisfying objective (\ref{eq:formula}), it holds that either \emph{(i)} the objective of Player~$0$ is satisfied or \emph{(ii)} there exists a payoff $p \in P$ such that the objectives not satisfied in $p$ are not satisfied in $\rho$ and $\rho$ also does not satisfy at least one of the objectives satisfied in $p$. It follows that the payoff of $\rho$ is strictly smaller than $p$. Using the fact that the conjunction of co-B\"uchi objectives is again a co-B\"uchi objective, we obtain the equivalent objective 
$$\Buchi{B_0} \lor \bigvee_{p \in P} \Big{(} \CoBuchi{\bigcup_{p_i = 0} B_i} \land \bigvee_{p_j = 1} \CoBuchi{B_j} \Big{)}$$
which we rewrite to obtain the following: 
$$\Buchi{B_0} \lor \bigvee_{p \in P} \bigvee_{p_j = 1} \Big{(} \CoBuchi{B_j \cup \bigcup_{p_i = 0} B_i}\Big{)}.$$
This corresponds to the negation of an objective $\Obj'$ which is the conjunction between one co-B\"uchi and one generalized B\"uchi objective. Verifying whether there exists a strategy for Player~$0$ ensuring objective $\ObjPlayer{0} \cup \Omega^{<}(P)$ from $v$ thus amounts to deciding whether Player~$1$ has a winning strategy for objective $\Obj'$ (because of the determinacy of zero-sum games with such an objective) from $v$. The complexity of checking the existence of the latter strategy is in $\mathcal{O}(|V|^3 \cdot k)$ with $k$ the number of objectives in the generalized B\"uchi objective (see e.g. \cite{BruyereHR16}). In this case, this number $k$ is bounded by $|P| \cdot \nbrObjectives$ and thus by $|V| \cdot \nbrObjectives$ as $|W| \leq |V|$ by hypothesis. This check exhibits the announced polynomial complexity.
\end{proof}

\subsection{\textsf{NP} Algorithm for B\"uchi SP Games}
We finally prove Theorem~\ref{thm:buchi_np_membership} by describing an \np{} algorithm to solve the \problemAb{} for  B\"uchi \gamesAb{}. First, we non-deterministically guess a set of witnesses. As we have established in Lemma \ref{lem:bounding_buchi}, the number of such witnesses required in a solution to the problem in a B\"uchi \gameAb{} is bounded by $|V|$. We assume that the witnesses we guess are compacted, as we have shown in Lemma~\ref{lem:buchi_compact}, and they are thus guessed as a finite word $h\ell$ for each witness $h\ell^\omega$. %that such compacted witnesses exist in the previous subsection. 
Therefore, the size of each witness is at most quadratic in $|V|$ and as we guess at most $|V|$ witnesses, this step is in polynomial time. 

Second, we make sure that the set of witnesses we guessed is proper. It must be the case that these witnesses can actually be yielded by some strategy of Player~$0$, that is, avoiding cases where two witnesses starting with a same history $hv$ branch to different successors of $v$ when $v\in V_0$. This amounts to verifying that given two witnesses, their longest common prefix ends with a vertex of Player~$1$. %If so, the witnesses follow some branching tree structure as has been previously discussed and can therefore be yielded by a strategy. This can be done by considering every pair of witnesses and going through both witnesses in the pair to decide if the previously mentioned property holds. 
This step is in polynomial time as the number of pairs of witnesses is bounded by $|V|^2$ and given the polynomial size of the witnesses. We then compute the payoff of each witness (which are of the form $h \ell^\omega$) in polynomial time by checking for each objective $\Buchi{B}$ whether some vertex of $B$ occurs in $\ell$. We also check that Player~$0$ wins in every witness with the same technique. Finally, we check that the payoffs of the witnesses are pairwise incomparable, which can also be done in polynomial time. 

Now that we have guessed a set of witnesses and checked that it is proper, it suffices to verify that deviations from these witnesses can be properly punished. If this is the case, we have shown the existence of a strategy which is a solution to the problem (consisting of a part yielding the witnesses and of a part consisting of punishing strategies). For every $v \in V$ such that there exists a deviation $hv$ from the witnesses, we use Lemma~\ref{lem:buchi_punish} to check whether some punishing strategy exists from $v$. This is done in polynomial time for each such vertex $v$.

\section{\textsf{NEXPTIME}-Hardness}
%====================================
\label{sec:nexptime_hard}

In this section, we show the \nexptime{}-hardness of solving the \problemAb{} in \gamesAb{} for each kind of objectives, except for B\"uchi and co-B\"uchi objectives for which we show the \np{}-hardness. Given Theorem~\ref{thm:nexptime}, the \problemAb{} is therefore \nexptimeComplete{} for most \gamesAb{}, and given Theorem~\ref{thm:buchi_np_membership} it is \npComplete{} for B\"uchi \gamesAb{} (see Table~\ref{table:comp_summary}). We start this section with a proof of the \np-completeness of the \problemAb{} for reachability objectives and tree arenas, which is a simpler setting, to introduce the intuition behind our reasoning for general arenas. In the subsequent subsections, we prove the \nexptime-hardness first for reachability objectives, then for safety objectives, and finally for prefix-independent objectives other than (co-)B\"uchi.  

\begin{theorem}
\label{thm:nexptimehard}
 The \problemAb{} is \nexptimeHard{} for \gamesAb{}, except for B\"uchi and co-B\"uchi objectives for which it is \np{}-hard. The complexity results for each family of objectives is summarized in Table~\ref{table:comp_summary}.
\end{theorem}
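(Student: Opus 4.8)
The plan is to derive all hardness bounds in Theorem~\ref{thm:nexptimehard} from reductions based on (succinct) variants of the Set Cover problem, following the three-tier picture of Table~\ref{table:comp_summary}; I would first warm up with the reachability-on-trees case (Theorem~\ref{thm:npcomplete}), which already isolates the core mechanism. There I reduce from the \setCover{}: given a universe $U$, a family $\mathcal S=\{S_1,\dots,S_m\}$ and a bound $\problemParam$, I build a tree arena in which Player~$1$ has one objective per element of $U$ (so $\nbrObjectives=|U|$), the payoff of a play records which elements its vertices meet, and $\ObjPlayer{0}$ is a single reachability objective $\reach{T_0}$. The arena is arranged so that a strategy of Player~$0$ amounts to committing to a subfamily $\mathcal C\subseteq\mathcal S$ with $|\mathcal C|\le\problemParam$, while every branch available to Player~$1$ \emph{probes} an element of $U$: that branch yields a Pareto-optimal (pairwise incomparable) payoff and the corresponding play satisfies $\ObjPlayer{0}$ exactly when the probed element is covered by $\mathcal C$. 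Hence Player~$0$ has a solution iff the instance admits a cover of size at most $\problemParam$. Membership in \np{} is routine on tree arenas (every strategy is essentially memoryless and can be guessed and verified in polynomial time), so this case is \npComplete{}; running the same gadget with the reachability objectives replaced by (co-)B\"uchi objectives on an arena that latches the reached targets gives the \np{}-hardness for B\"uchi and co-B\"uchi claimed in the theorem.

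For \nexptime{}-hardness I would introduce the \succinctSetCover{} (\succinctSetCoverAb{}): the universe is $\{0,1\}^n$, the sets are presented by Boolean circuits, and a bound $\problemParam$ is given in binary. I would then prove Theorem~\ref{thm:ssc-completeness}, that \succinctSetCoverAb{} is \nexptimeComplete{}. Membership is guess-and-check (guess an exponential-size subfamily, verify that it covers using the circuits). Hardness is by encoding the computation of an arbitrary \nexptime{} machine as a succinctly described covering instance, in the style of the classical succinct-completeness results. \emph{This is the main obstacle of the section:} one must make ``there is a cover of size at most $\problemParam$'' faithfully simulate ``there is a short certificate accepted by an exponential-time verifier'', while keeping every component of the instance polynomial in the circuit descriptions.

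Finally I would replay the tree construction, now letting the arena \emph{unfold} the circuit presentations of the universe and of the sets: a solution strategy of Player~$0$ --- which by Proposition~\ref{prop:tildesigma} may use exponential memory --- encodes the (possibly exponential) chosen subfamily as its set $\Wit{\sigma_0}$ of witnesses, one per Pareto-optimal payoff, while a rational deviation of Player~$1$ probes an arbitrary element of the exponential universe and must be won by Player~$0$, which is possible iff that element is covered. Since $\nbrObjectives$ stays polynomial (only the number of payoffs, hence of witnesses, grows), the reduction runs in polynomial time and yields \nexptime{}-hardness for reachability \gamesAb{}. For safety \gamesAb{} a direct reduction from \succinctSetCoverAb{} along the same lines, now encoding the covering condition with safety objectives, gives the same bound; and for parity, Muller, Streett, Rabin and Boolean B\"uchi \gamesAb{} one gives a further direct reduction from \succinctSetCoverAb{} in which the expressive power of these prefix-independent objectives emulates the (prefix-dependent) conditions of the construction on a modified arena. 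This last family genuinely cannot be handled by composing a reduction through the B\"uchi encoding of Section~\ref{sec:link}, since that would place reachability \gamesAb{} in \np{} by Theorem~\ref{thm:buchi_np_membership}. Together with Theorem~\ref{thm:nexptime} and Theorem~\ref{thm:buchi_np_membership}, this establishes the complexity classifications of Table~\ref{table:comp_summary}.
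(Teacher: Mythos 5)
Your overall architecture matches the paper's (warm-up on trees from the \setCoverAb{}, a succinct set-cover problem shown \nexptimeComplete{}, then reductions to \gamesAb{} that keep $\nbrObjectives$ polynomial while the number of \paretoOptimal{} payoffs grows), but several steps you leave open are exactly where the work lies, and some of your mechanisms would not function as stated. First, for Theorem~\ref{thm:ssc-completeness} you propose a generic tableau encoding of an arbitrary \nexptime{} machine into a covering instance; the paper instead reduces from the \dominatingSetAb{}, already known \nexptimeComplete{} for succinctly (CNF-)presented graphs, which sidesteps the delicate issue you yourself flag of making ``cover of size at most $\problemParam$'' simulate an accepted certificate. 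More importantly, you present the universe and the sets by Boolean \emph{circuits}, but the later game reductions depend on the \emph{CNF} presentation: the arena has one Player-$1$ objective per clause of $\phi$ and per clause (or, for safety, per literal of each clause) of $\psi$, and satisfaction of a clause is read off from which literal vertices are visited. ``Unfolding a circuit'' in the arena while keeping the number of reachability/safety objectives polynomial is not something your sketch supports. Two further concrete gaps: (i) since $\problemParam$ is given in binary, you cannot give Player~$0$ ``$\problemParam$ chances'' by explicit copies; the paper needs the gadget $Q_\problemParam$ producing exactly $\problemParam$ paths from $v_0$ to the selection part, and your proposal has no substitute for it; (ii) your covering mechanism is mis-stated: the probe plays (in $G_1$) lie entirely in Player-$1$-controlled territory, so their winning status is fixed and they are \emph{always lost}; a probe cannot ``satisfy $\ObjPlayer{0}$ exactly when covered.'' The actual mechanism is domination: coverage must make the probe's payoff strictly smaller than the payoff of a consistent, won play in the selection sub-arena, so that the probe ceases to be \paretoOptimal{}; this is the content of Lemma~\ref{lem:valsatis} and Proposition~\ref{prop:nexptime-hard-correct-reach}, and your argument never establishes it.

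Second, your treatment of the prefix-independent objectives is the weakest point, and it is precisely where the conference version's proof was erroneous. Proposing ``a further direct reduction from the \succinctSetCoverAb{}'' for each of parity, Muller, Streett, Rabin and Boolean B\"uchi gives no construction, and the difficulty is real: with prefix-independent objectives the plays that encode valuations must revisit the choice structure infinitely often, so you cannot bound Player~$0$ to $\problemParam$ valuations of $Y$ by placing an acyclic $Q_\problemParam$ gadget in front of the $Y$-choices (he would get infinitely many distinct consistent plays). The paper resolves this with a dedicated parity construction: the $Y$-selection is folded into the looping sub-arena $G''_1$, a separate sub-arena $G''_4$ carries $Q_\problemParam$, and correctness rests on the $X$-stability and $Y$-stability lemmas plus a three-priority encoding of clause (non)satisfaction. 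You are right that one cannot route hardness through the B\"uchi encoding (B\"uchi \gamesAb{} are in \np{}), but the paper's actual route is to reduce \emph{from parity}: once parity is \nexptimeHard{}, Propositions~\ref{prop_bb_encoding}, \ref{prop:parity_into_streett_rabin} and \ref{prop:parity_to_muller} transfer hardness to Boolean B\"uchi, Streett/Rabin and Muller in polynomial time, which your proposal misses and replaces with unproven constructions. Until the SSC hardness, the $Q_\problemParam$ gadget, the domination mechanism, and the parity construction (with its stability arguments) are supplied, the proposal does not yet establish Theorem~\ref{thm:nexptimehard}.
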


\subsection{\textsf{NP}-Completeness for Reachability SP Games Played on Tree Arenas}
%==========================================
Before turning to the \nexptime-hardness of the \problemAb{} in the next section, we first want to show that the \problemAb{} is already \npComplete{} in the simple setting of reachability objectives and arenas that are trees. To do so, we use a reduction from the \emph{\setCover{}} (\setCoverAb{}) which is \npComplete{}~\cite{Karp72}.
 
\begin{theorem} \label{thm:npcomplete}
    The \problemAb{} is \npComplete{} for reachability \gamesAb{} on tree arenas. 
\end{theorem}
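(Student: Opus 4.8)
The plan is to establish membership and hardness separately. For \np{} membership, I would exploit that in a tree arena each history is the unique path from the root to its final vertex, so a strategy for Player~$0$ amounts to a choice of successor at each of his vertices and can be written down in space $O(|V|)$. The algorithm guesses such a $\sigma_0$ and verifies in polynomial time that it is a solution: the plays consistent with $\sigma_0$ are in bijection with the (at most $|V|$) leaves of the subtree obtained by keeping only $\sigma_0$'s choices, each such play reaches a leaf and loops there, so its extended payoff is read off along a finite path; from these one computes $\paretoSet{\sigma_0}$ and checks that every leaf whose payoff belongs to $\paretoSet{\sigma_0}$ is won by Player~$0$. This is an \np{} procedure.

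For \np{}-hardness I would reduce from the \setCover{}. Given a universe $U=\{e_1,\dots,e_n\}$, sets $S_1,\dots,S_m$ and a budget $k$ (assume $1\le k\le m$), build a reachability \gameAb{} on the tree whose root belongs to Player~$1$ and has $k$ children, the \emph{slot} vertices $\mathrm{sl}_1,\dots,\mathrm{sl}_k$; each $\mathrm{sl}_\ell$ belongs to Player~$0$ and has $m$ children $(\ell,j)$ (``put $S_j$ in slot $\ell$''); each $(\ell,j)$ belongs to Player~$1$ and has $n$ children, the leaves $(\ell,j,i)$ (``challenge with $e_i$''); leaves loop on themselves. This tree has $O(kmn)$ vertices. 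Player~$1$ gets $2n$ objectives: $B_i$, satisfied exactly on plays reaching some leaf $(\ell,j,i)$, and $B'_i$, satisfied exactly on leaves $(\ell,j,i)$ with $e_i\in S_j$. Player~$0$'s objective is to reach some leaf $(\ell,j,i)$ with $e_i\in S_j$.

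A strategy of Player~$0$ is exactly a function $g\colon\{1,\dots,k\}\to\{1,\dots,m\}$, i.e.\ a family of at most $k$ sets. A short computation gives that consistent plays have payoff $\{B_i,B'_i\}$ when $e_i\in S_{g(\ell)}$ and $\{B_i\}$ otherwise, so $\paretoSet{\sigma_0}$ is $\{\,\{B_i,B'_i\}\mid e_i \text{ covered by } g\,\}\cup\{\,\{B_i\}\mid e_i \text{ not covered by } g\,\}$, since $\{B_i\}$ is dominated precisely when some slot covers $e_i$. Every play of payoff $\{B_i,B'_i\}$ is won by Player~$0$ and every play of payoff $\{B_i\}$ is lost, so $g$ is a solution iff it covers $U$; hence Player~$0$ has a solution iff a set cover of size at most $k$ exists, and the reduction is polynomial, giving \np{}-hardness (combined with membership, \npComplete{}).

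The delicate step will be the choice of objectives: a single play only sees one slot, one set and one element and thus cannot refer to the whole cover, so Pareto-optimality must be made to create an obligation for Player~$0$ exactly at the uncovered elements. The pair $(B_i,B'_i)$ does this bookkeeping: the $B'_i$-play exists iff $e_i$ is covered somewhere and then dominates away every losing play challenging $e_i$, whereas if $e_i$ is uncovered the losing play retains the Pareto-optimal payoff $\{B_i\}$. Two further points need checking: that in a tree arena Player~$0$ has no freedom beyond $g$ (immediate from the positionality observation above), so no bad cover is rescued; and that the arena stays polynomial, which works only because Player~$1$ commits to a slot \emph{before} Player~$0$ fills it, so the cover is assembled in parallel across Player~$1$'s branches rather than through a $k$-fold branching on Player~$0$'s side.
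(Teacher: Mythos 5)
Your proof is correct. The \np{} membership argument is essentially the paper's: on a tree arena every strategy of Player~$0$ is effectively memoryless, so one guesses it, enumerates the at most $|V|$ consistent plays (one per leaf of the pruned subtree), computes $\paretoSet{\sigma_0}$, and checks that every Pareto-optimal leaf is won. For hardness you reduce from the \setCoverAb{} like the paper, and the core trick is the same (an uncovered element must leave behind a Pareto-optimal play lost by Player~$0$, while Player~$1$'s branching gives Player~$0$ exactly $k$ parallel commitments), but your gadget is genuinely different. The paper lets Player~$1$ choose at the root between an ``element'' sub-arena, whose plays have unit payoffs and are lost, and a ``cover'' sub-arena, where each play's payoff is the full characteristic vector of the chosen set; an extra objective $\reach{\{v_2\}}$, shared with Player~$0$, guarantees strict domination of covered elements. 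You instead use $2n$ paired objectives $(B_i,B'_i)$ and have Player~$1$ challenge with an element \emph{after} Player~$0$ fills the slot, so domination happens pairwise ($\{B_i\}<\{B_i,B'_i\}$) with no extra ``won'' coordinate needed. The trade-off: the paper's arena is smaller ($O(n+km)$ vertices, $n+2$ objectives) and each covering play exhibits a whole set in one payoff, whereas your arena is $O(kmn)$ with $2n+1$ objectives but the domination argument is more local and does not require the auxiliary objective. Both reductions are polynomial and sound, so your proposal establishes the theorem.
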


The proof of this theorem is obtained as the consequence of the following arguments and of Proposition~\ref{prop:tree_nphard} below. Notice that when the game arena is a tree, it is easy to design an algorithm for solving the \problemAb{} that is in \np{}. First, we nondeterministically guess a strategy $\sigma_0$ that can be assumed to be memoryless as the arena is a tree. Second, we apply a depth-first search algorithm from the root vertex which accumulates to leaf vertices the extended payoff of plays which are consistent with $\sigma_0$. Finally, we check that $\sigma_0$ is a solution.

Let us explain why the \problemAb{} is \npHard{} on tree arenas by reduction from the \setCoverAb{}. We recall that an instance of the \setCoverAb{} is defined by a set $C = \{e_1,e_2, \dots, e_n\}$ of $n$ elements, $m$ subsets $S_1, S_2, \dots, S_m$ such that $S_i \subseteq C$ for each $i\in \{1, \dots, m\}$, and an integer $\problemParam \leq m$. The problem consists in finding $\problemParam$ indexes $i_1, i_2, \dots, i_\problemParam$ such that the union of the corresponding subsets equals $C$, i.e., $C = \bigcup\limits_{j = 1}^\problemParam S_{i_j}$.

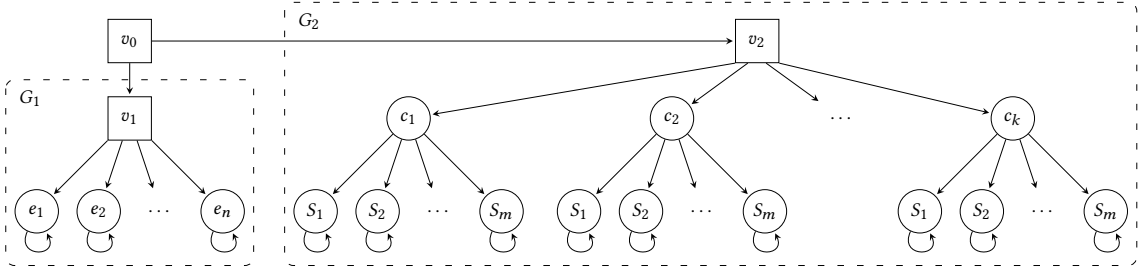
\begin{figure}
	\centering
	\resizebox{\textwidth}{!}{%
		
		\begin{tikzpicture}
		
		% G_1
		\draw[loosely dashed,  rounded corners] (3,10.625) rectangle (7, 7.625) {};
		\node[] at (3 + 0.4,10.625 - 0.3) {$G_1$};
		
		\node[draw, rectangle, minimum size=0.7cm, inner sep = 0.5pt] (v1) at (5,10){$v_1$};
		\node[draw, circle, minimum size=0.7cm, inner sep = 0.5pt] (1) at (3.5,8.5){$e_1$};
		\node[draw, circle, minimum size=0.7cm, inner sep = 0.5pt] (2) at (4.5,8.5){$e_2$};
		\node[minimum size=0.7cm, circle, inner sep = 0.5pt] (dots) at (5.5,8.5){$\dots$};
		\node[draw, circle, minimum size=0.7cm, inner sep = 0.5pt] (n) at (6.5,8.5){$e_n$};
		
		\draw[-stealth, shorten >=1pt, auto] (v1) edge [] node {} (1);
		\draw[-stealth, shorten >=1pt, auto] (v1) edge [] node {} (2);
		\draw[-stealth, shorten >=1pt, auto] (v1) edge [] node {} (dots);
		\draw[-stealth, shorten >=1pt, auto] (v1) edge [] node {} (n);

		\draw[-stealth, shorten >=1pt, auto, in=295, out=245, looseness=4] (1) edge [] node {} (1);
		\draw[-stealth, shorten >=1pt, auto, in=295, out=245, looseness=4] (2) edge [ ] node {} (2);
		\draw[-stealth, shorten >=1pt, auto, in=295, out=245, looseness=4] (n) edge [ ] node {} (n);
		
        % G_2
        
		\draw[loosely dashed,  rounded corners] (7.5,11.875) rectangle (21.25, 7.625) {};
		\node[] at (7.5 + 0.4,11.875 - 0.3) {$G_2$};
        
        % C_1
		\node[draw, circle, minimum size=0.7cm, inner sep = 0.5pt] (c1) at (9.5,10){$c_1$};
		\node[draw, circle, minimum size=0.7cm, inner sep = 0.5pt] (S1c1) at (8,8.5){$S_1$};
		\node[draw, circle, minimum size=0.7cm, inner sep = 0.5pt] (S2c1) at (9,8.5){$S_2$};
		\node[minimum size=0.7cm, circle, inner sep = 0.5pt] (dotsc1) at (10,8.5){$\dots$};
		\node[draw, circle, minimum size=0.7cm, inner sep = 0.5pt] (Smc1) at (11,8.5){$S_m$};
		
		\draw[-stealth, shorten >=1pt, auto] (c1) edge [] node {} (S1c1);
		\draw[-stealth, shorten >=1pt, auto] (c1) edge [] node {} (S2c1);
		\draw[-stealth, shorten >=1pt, auto] (c1) edge [] node {} (dotsc1);
		\draw[-stealth, shorten >=1pt, auto] (c1) edge [] node {} (Smc1);

		\draw[-stealth, shorten >=1pt, auto, in=295, out=245, looseness=4] (S1c1) edge [] node {} (S1c1);
		\draw[-stealth, shorten >=1pt, auto, in=295, out=245, looseness=4] (S2c1) edge [ ] node {} (S2c1);
		\draw[-stealth, shorten >=1pt, auto, in=295, out=245, looseness=4] (Smc1) edge [ ] node {} (Smc1);

        % C_2

		\node[draw, circle, minimum size=0.7cm, inner sep = 0.5pt] (c2) at (13.75,10){$c_2$};
		\node[draw, circle, minimum size=0.7cm, inner sep = 0.5pt] (S1c2) at (12.25,8.5){$S_1$};
		\node[draw, circle, minimum size=0.7cm, inner sep = 0.5pt] (S2c2) at (13.25,8.5){$S_2$};
		\node[minimum size=0.7cm, circle, inner sep = 0.5pt] (dotsc2) at (14.25,8.5){$\dots$};
		\node[draw, circle, minimum size=0.7cm, inner sep = 0.5pt] (Smc2) at (15.25,8.5){$S_m$};
		
		\draw[-stealth, shorten >=1pt, auto] (c2) edge [] node {} (S1c2);
		\draw[-stealth, shorten >=1pt, auto] (c2) edge [] node {} (S2c2);
		\draw[-stealth, shorten >=1pt, auto] (c2) edge [] node {} (dotsc2);
		\draw[-stealth, shorten >=1pt, auto] (c2) edge [] node {} (Smc2);

		\draw[-stealth, shorten >=1pt, auto, in=295, out=245, looseness=4] (S1c2) edge [] node {} (S1c2);
		\draw[-stealth, shorten >=1pt, auto, in=295, out=245, looseness=4] (S2c2) edge [ ] node {} (S2c2);
		\draw[-stealth, shorten >=1pt, auto, in=295, out=245, looseness=4] (Smc2) edge [ ] node {} (Smc2);

		% dots
	    \node[minimum size=0.7cm, inner sep = 0.5pt] (dotsmid) at (16.5,10){$\dots$};

        % C_t

		\node[draw, circle, minimum size=0.7cm, inner sep = 0.5pt] (ck) at (19.25,10){$c_k$};
		\node[draw, circle, minimum size=0.7cm, inner sep = 0.5pt] (S1ck) at (17.75,8.5){$S_1$};
		\node[draw, circle, minimum size=0.7cm, inner sep = 0.5pt] (S2ck) at (18.75,8.5){$S_2$};
		\node[minimum size=0.7cm, circle, inner sep = 0.5pt] (dotsck) at (19.75,8.5){$\dots$};
		\node[draw, circle, minimum size=0.7cm, inner sep = 0.5pt] (Smck) at (20.75,8.5){$S_m$};
		
		\draw[-stealth, shorten >=1pt, auto] (ck) to [] (S1ck);
		\draw[-stealth, shorten >=1pt, auto] (ck) to [] (S2ck);
		\draw[-stealth, shorten >=1pt, auto] (ck) to [] (dotsck);
		\draw[-stealth, shorten >=1pt, auto] (ck) to [] (Smck);

		\draw[-stealth, shorten >=1pt, auto, in=295, out=245, looseness=4] (S1ck) edge [] node {} (S1ck);
		\draw[-stealth, shorten >=1pt, auto, in=295, out=245, looseness=4] (S2ck) edge [ ] node {} (S2ck);
		\draw[-stealth, shorten >=1pt, auto, in=295, out=245, looseness=4] (Smck) edge [ ] node {} (Smck);
		
        % root
		\node[draw, rectangle, minimum size=0.7cm, inner sep = 0.5pt] (v0) at (5,11.25){$v_0$};
		\node[draw, rectangle, minimum size=0.7cm, inner sep = 0.5pt] (v2) at (15.125,11.25){$v_2$};
		
		\draw[-stealth, shorten >=1pt, auto] (v0) to [] (v1);
		\draw[-stealth, shorten >=1pt, auto] (v0) to [] (v2);
		
		\draw[-stealth, shorten >=1pt, auto] (v2.225) to [] (c1);
		\draw[-stealth, shorten >=1pt, auto] (v2.247) to [] (c2);
		\draw[-stealth, shorten >=1pt, auto] (v2.292) to [] (dotsmid);
		\draw[-stealth, shorten >=1pt, auto] (v2.315) to [] (ck);

		\end{tikzpicture}
		}%
	
	\caption{The tree arena used in the reduction from the \setCoverAb{}.}
	\label{scp_game}
	\Description{Figure 4. Fully described in the text.}
\end{figure}

Given an instance of the \setCoverAb{}, we construct a reachability \gameAb{} played on a tree arena consisting of a polynomial number ($n + \problemParam \cdot (m + 1) + 3$) of vertices. The arena $G$ of the game is provided in Figure \ref{scp_game} and can be seen as two sub-arenas reachable from the initial vertex $v_0$. 
The game is such that there is a solution to the \setCoverAb{} if and only if Player~$0$ has a strategy from $v_0$ in $G$ which is a solution to the \problemAb{}. 
The game is played between Player~$0$ with reachability objective $\Omega_0$ and Player~$1$ with $n + 1$ reachability objectives. The objectives are defined as follows: $\ObjPlayer{0} = \reach{\{v_2\}}$, $\ObjPlayer{i} = \reach{\{e_i\} \cup \{S_j \mid e_i \in S_j\}}$ for $i \in \{1,2, \dots, n\}$ and $\ObjPlayer{n+1} = \reach{\{v_2\}}$. First, notice that every play in $G_1$ is consistent with any strategy of Player~$0$ and is lost by that player. It holds that for each $\ell \in \{1, 2, \dots, n\}$, there is such a play with payoff $(p_1, \ldots, p_{n+1})$ such that $p_\ell = 1$ and $p_j = 0$ for $j \neq \ell$. These payoffs correspond to the elements $e_\ell$ we aim to cover in the \setCoverAb{}. A play in $G_2$ visits $v_2$ and then a vertex $c$ from which Player~$0$ selects a vertex $S$. Such a play is always won by Player~$0$ and its payoff is $(p_1, \ldots, p_{n+1})$ such that $p_{n+1} = 1$ and $p_r = 1$ if and only if the element $e_r$ belongs to the set $S$. It follows that the payoff of such a play corresponds to a set of elements in the \setCoverAb{}. The following proposition holds and it follows that, as a consequence, Theorem~\ref{thm:npcomplete} holds.

\begin{proposition}
\label{prop:tree_nphard}
    There is a solution to an instance of the \setCoverAb{} if and only if Player~$0$ has a strategy from $v_0$ in the corresponding \gameAb{} that is a solution to the \problemAb{}.
\end{proposition}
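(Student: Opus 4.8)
The plan is to establish a correspondence between memoryless strategies of Player~0 in $G$ and $k$-tuples of subset indices, and to show that such a strategy is a solution to the \problemAb{} precisely when the corresponding tuple is a set cover of $C$.

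First I would pin down the shape of plays and strategies. In $G$ the only vertices of Player~0 with more than one successor are $c_1,\dots,c_k$ (the vertices $e_i$ and $S_i$ each carry only a self-loop), and since $G$ is a tree each $c_j$ is reached by the unique history $v_0v_2c_j$; hence every strategy $\sigma_0$ is, up to choices at vertices lying on no relevant play, given by a tuple $(i_1,\dots,i_k)\in\{1,\dots,m\}^k$, with $\sigma_0$ moving from $c_j$ to $S_{i_j}$. (As already noted above, we may take $\sigma_0$ memoryless because $G$ is a tree.) The plays consistent with such a $\sigma_0$ are then exactly: the plays $v_0v_1e_\ell^\omega$ for all $\ell\in\{1,\dots,n\}$ — all consistent, since Player~1 owns $v_0$ and $v_1$ — whose payoff is the $\ell$-th unit vector in the first $n$ coordinates and $0$ in coordinate $n+1$, and which are lost by Player~0; and the plays $v_0v_2c_jS_{i_j}^\omega$ for $j\in\{1,\dots,k\}$, whose payoff has a $1$ in coordinate $r\le n$ iff $e_r\in S_{i_j}$ and a $1$ in coordinate $n+1$, and which are won by Player~0.

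Next I would compute the Pareto-optimal payoffs for $\sigma_0$. The decisive observation — and the reason for the auxiliary objective $\Omega_{n+1}=\reach{\{v_2\}}$ — is that every payoff arising from $G_2$ has a $1$ in coordinate $n+1$, whereas every payoff arising from $G_1$ has a $0$ there. Consequently no $G_1$-payoff can dominate any other consistent payoff, and a $G_2$-payoff $p$ strictly dominates the payoff $q$ of $v_0v_1e_\ell^\omega$ iff $p$ has a $1$ in coordinate $\ell$ (coordinate $n+1$ already forces $q\ne p$, and all other coordinates of $q$ are $0$). Hence $v_0v_1e_\ell^\omega$ is \paretoOptimal{} for $\sigma_0$ iff $e_\ell\notin\bigcup_{j=1}^k S_{i_j}$, while all plays of $G_2$ — in particular all \paretoOptimal{} ones — are won by Player~0.

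Finally I would conclude both directions at once: $\sigma_0$ is a solution to the \problemAb{} iff no \paretoOptimal{} play consistent with $\sigma_0$ is lost; the only consistent plays lost by Player~0 are the $v_0v_1e_\ell^\omega$, so this holds iff none of them is \paretoOptimal{}, i.e.\ iff $e_\ell\in\bigcup_{j=1}^k S_{i_j}$ for every $\ell\in\{1,\dots,n\}$, i.e.\ (since each $S_i\subseteq C$) iff $\bigcup_{j=1}^k S_{i_j}=C$ — exactly the condition that $(i_1,\dots,i_k)$ solves the given \setCoverAb{} instance. I do not expect a genuine obstacle; the only care needed is in the Pareto-domination bookkeeping — checking that a $G_1$-payoff dominates nothing and that distinct $G_1$-payoffs are pairwise incomparable, which is exactly where $\Omega_{n+1}$ is used — together with the remark that working with memoryless strategies of Player~0 is without loss of generality because $G$ is a tree.
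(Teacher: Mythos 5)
Your proposal is correct and follows essentially the same route as the paper's proof: you characterize the consistent plays and their payoffs in $G_1$ and $G_2$, use the auxiliary objective $\Omega_{n+1}$ to show a $G_1$ payoff is strictly dominated exactly by a consistent $G_2$ payoff whose chosen set covers the corresponding element, and conclude that the strategy is a solution iff the chosen sets cover $C$. The only presentational difference is that you package the two directions of the paper's argument into a single equivalence via an explicit description of the Pareto-optimal payoffs, which is fine.
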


\begin{proof}
    First, let us assume that there is a solution to the \setCoverAb{}. It holds that there exists a set of $\problemParam$ indexes $i_1, i_2, \dots, i_\problemParam$ such that the union of the corresponding sets equals the set $C$ of elements we aim to cover. We define the strategy $\sigma_0$ as follows: $\sigma_0(v_0 v_2 c_j) = S_{i_j}$ for $j \in \{1, \ldots, k\}$. Let us show that this strategy is solution to the \problemAb{} by showing that any play with a \paretoOptimal{} payoff is won by Player~$0$. This amounts to showing that for every play in $G_1$ there is a play in $G_2$ with a strictly larger payoff. This is sufficient as it makes sure that the payoff of plays in $G_1$ are not \paretoOptimal{} and as every play in $G_2$ is won by Player~$0$. Let $p =(p_1, \ldots, p_{n+1})$ be the payoff of a play in $G_1$. It holds that $p_\ell = 1$ for some $\ell \in \{1, 2, \dots, n\}$ and $p_j = 0$ for $\ell \neq j$. This corresponds to the element $e_\ell$ in $C$. Since the $\problemParam$ indexes $i_1, i_2, \dots, i_\problemParam$ are a solution to the \setCoverAb{}, it holds that there exists some index $i_j$ such that $e_\ell \in S_{i_j}$. It also holds that the play $v_0 v_2 c_j (S_{i_j})^\omega$ is consistent with $\sigma_0$. Its payoff is $p' = (p'_1, \ldots, p'_{n+1})$ with $p'_\ell = 1$ since $e_\ell \in S_{i_j}$ and $p'_{n+1} = 1$. It follows that payoff $p'$ is strictly larger than $p$. 
    
    Now, let us assume that Player~$0$ has a strategy $\sigma_0$ from $v_0$ that is a solution to the \problemAb{}. Let us show that the set of indexes $\{i_j \mid \sigma_0(v_0 v_2 c_j) = S_{i_j}, j \in \{1, \dots, \problemParam \}\}$ is a solution to the \setCoverAb{}. It is easy to see that since strategy $\sigma_0$ is a solution to the \problemAb{}, every payoff $p$ in $G_1$ is strictly smaller than some payoff $p'$ in $G_2$. It follows that in the \setCoverAb{}, each element $e \in C$ corresponding to $p$ is contained in some set $S$ corresponding to $p'$. Since it also holds that $S \subseteq C$ for each set $S$, it follows that the sets mentioned above are an exact cover of $C$.
\end{proof}

Notice that the proof of the \np-hardness of solving the \problemAb{} for reachability \gamesAb{} played on tree arenas can easily be adapted to other objectives. In particular, the same tree arena can be used and the reachability objectives used in the reduction can be translated into B\"uchi or co-B\"uchi objectives. This yields the \np-hardness of solving the \problemAb{} in (co-)B\"uchi \gamesAb{} mentioned in Theorem~\ref{thm:nexptimehard}. The \nexptime{}-hardness for the other objectives is obtained thanks to the succinct variant of the \setCoverAb{} presented in the next subsection.

\subsection{Succinct Set Cover Problem}
%=====================================

The \emph{\succinctSetCover{}} (\succinctSetCoverAb{}) is defined as follows. We are given a Conjunctive Normal Form (CNF) formula $\phi = C_1 \land C_2 \land \dots \land C_p$ over the variables $X = \{x_1, x_2, \ldots, x_m\}$ made up of $p$ clauses, each containing some disjunction of literals of the variables in $X$. The set of valuations of the variables $X$ which satisfy $\phi$ is written $\llbracket \phi \rrbracket$. We are also given an integer $\problemParam \in \mathbb{N}$ (encoded in binary) and an other CNF formula $\psi = D_1 \land D_2 \land \dots \land D_q$ over the variables $X \cup Y$ with $ Y = \{y_1, y_2, \ldots, y_n \}$, made up of $q$ clauses. Given a valuation $val_Y: Y \rightarrow \{0, 1\}$ of the variables in $Y$, called a \emph{partial valuation}, we write $\psi[val_Y]$ the CNF formula obtained by replacing in $\psi$ each variable $y \in Y$ by its valuation $val_Y(y)$. We write $\llbracket \psi[val_Y] \rrbracket$ the valuations of the remaining variables $X$ which satisfy $\psi[val_Y]$. The \succinctSetCoverAb{} is to decide whether there exists a set $K = \big{\{}val_Y \mid val_Y: Y \rightarrow \{0, 1\} \big{\}}$ of $\problemParam$ valuations of the variables in $Y$ such that the valuations of the remaining variables $X$ which satisfy the formulas $\psi[val_Y]$ include the valuations of $X$ which satisfy $\phi$. Formally, we write this $\llbracket \phi \rrbracket \subseteq \bigcup\limits_{val_Y \in K} \llbracket \psi[val_Y] \rrbracket$. 

We can show that this corresponds to a set cover problem succinctly defined using CNF formulas. The set $\llbracket \phi \rrbracket$ of valuations of $X$ which satisfy $\phi$ corresponds to the set of elements we aim to cover. Parameter $\problemParam$ is the number of sets that can be used to cover these elements. Such a set is described by a formula $\psi[val_Y]$, given a partial valuation $val_Y$, and its elements are the valuations of $X$ in $\llbracket \psi[val_Y] \rrbracket$. This is illustrated in the following example.

\begin{example}
\label{example_phi}
Consider the CNF formula $\phi = (x_1 \lor \neg x_2) \land (x_2 \lor x_3)$ over the variables $X = \{x_1, x_2, x_3\}$. The set of valuations of the variables which satisfy $\phi$ is $\llbracket \phi \rrbracket = \{(1,1,1), (1,1,0), (1,0,1), (0,0,1)\}$. Each such valuation corresponds to one element we aim to cover. Consider the CNF formula $\psi = (y_1 \lor y_2) \land (x_1 \lor y_2) \land (x_2 \lor x_3 \lor y_1)$ over the variables $X \cup Y$ with $Y = \{y_1, y_2\}$. Given the partial valuation $val_Y$ of the variables in $Y$ such that $val_Y(y_1) = 0$ and $val_Y(y_2) = 1$, we get the CNF formula $\psi[val_Y] = (0 \lor 1) \land (x_1 \lor 1) \land (x_2 \lor x_3 \lor 0)$. This formula describes the contents of the set identified by the partial valuation (as a partial valuation yields a unique formula). The valuations of the variables $X$ which satisfy $\psi[val_Y]$ are the elements contained in the set. In this case, these elements are $\llbracket \psi[val_Y] \rrbracket = \{(0,1,0), (0,0,1), (0,1,1), (1,1,0), (1,0,1), (1,1,1)\}$. We can see that $\llbracket \psi[val_Y] \rrbracket$ contains all the elements $(1,1,1), (1,1,0), (1,0,1), (0,0,1)$ of $\llbracket \phi \rrbracket$.
\qed\end{example}

The following result is used in the proof of our \nexptime{}-hardness results and is of potential independent interest.

\begin{theorem}
\label{thm:ssc-completeness}
The \succinctSetCoverAb{} is \nexptimeComplete{}.
\end{theorem}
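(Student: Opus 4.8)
The statement splits into \nexptime{}-membership and \nexptime{}-hardness; I would prove membership directly and hardness by a reduction from a classical succinctly-presented problem. For membership, given an instance $(\phi,\psi,\problemParam)$ over $X=\{x_1,\dots,x_m\}$ and $Y=\{y_1,\dots,y_n\}$, observe that there are only $2^{n}$ partial valuations of $Y$, so we may assume $\problemParam\le 2^{n}$ (otherwise the instance is negative, or one replaces $\problemParam$ by $2^{n}$). A nondeterministic algorithm guesses $\problemParam$ partial valuations $val_Y^{1},\dots,val_Y^{\problemParam}\in\B^{n}$ --- a guess of size at most $n\cdot 2^{n}$, hence exponential in the input --- and then verifies deterministically that $\llbracket\phi\rrbracket\subseteq\bigcup_{j}\llbracket\psi[val_Y^{j}]\rrbracket$ by enumerating the $2^{m}$ valuations $val_X\in\B^{m}$ and, for each one with $val_X\models\phi$, checking whether $(val_X,val_Y^{j})\models\psi$ for some $j$. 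Every check is polynomial, so the verification runs in time $2^{O(m+n)}\cdot\mathrm{poly}$, and the \succinctSetCoverAb{} lies in \nexptime{}.

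For hardness I would reduce from the succinctly-encoded $3$-SAT problem --- one is given a Boolean circuit $C$ that, on a clause index $i\in\B^{N}$, outputs the three literals (each a variable index in $\B^{n'}$ together with a sign) of the $i$-th clause of a $3$-CNF over $2^{n'}$ variables, and one asks whether this exponentially long formula is satisfiable --- which is \nexptimeComplete{}. The reduction follows the textbook reduction of $3$-SAT to the \setCoverAb{}: $\llbracket\phi\rrbracket$ is designed to consist of a \emph{selector element} for each variable $v\in\B^{n'}$ and a \emph{clause element} for each clause index $i\in\B^{N}$; a valuation of $Y$ encodes a pair $(v,b)\in\B^{n'}\times\B$, and the set $\llbracket\psi[val_Y]\rrbracket$ it determines is to contain the selector element of $v$ together with every clause element $i$ whose clause contains the literal ``$x_v=b$''. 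Taking $\problemParam=2^{n'}$, covering the $2^{n'}$ selector elements with $2^{n'}$ sets forces one choice $b_v\in\B$ per variable $v$, i.e.\ a global assignment $A$; the clause elements are then all covered iff $A$ satisfies every clause. Hence the constructed \succinctSetCoverAb{} instance is positive iff the $3$-CNF is satisfiable, and the construction is polynomial.

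The step I expect to be the crux is turning the informal specification of $\phi$ and $\psi$ into honest polynomial-size \emph{CNF} formulas over one common set of variables, even though deciding ``clause $i$ contains literal $(x_v=b)$'' requires evaluating the circuit $C$. The remedy is to fold the circuit's computation into the elements themselves: $X$ is split into a type bit $t$, $\max(N,n')$ index bits, and one bit per gate of $C$; $\phi$ requires a clause-type element ($t=0$) to carry the --- necessarily unique --- consistent Tseitin trace of $C$ on its index, so that there is exactly one clause element per index and its trace literally exposes the three literals of $C_i$, while a selector-type element ($t=1$) is required to have all gate bits and all surplus index bits zero. Then $\psi$ reuses that trace to test, by a polynomial CNF, that one of the exposed literals equals $(x_v=b)$ with $v$ and $b$ read from the $Y$-variables (variable-index bits $\leftrightarrow$ the $y$-bits, sign bit $\leftrightarrow y_{n'+1}$). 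A disjunction of two CNFs that branches on $t$ is put back into CNF by appending the literal $\neg t$ (resp.\ $t$) to every clause of the first (resp.\ second) one, so $\phi$ and $\psi$ remain polynomial, and $\problemParam=2^{n'}$ is written in binary; together with the membership argument this gives that the \succinctSetCoverAb{} is \nexptimeComplete{}.
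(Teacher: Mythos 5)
Your proposal is correct, but your hardness argument follows a genuinely different route from the paper's. The membership part (guess at most $2^{n}$ partial valuations of $Y$, then verify the cover by enumerating the $2^{m}$ valuations of $X$) is exactly the easy argument the paper leaves implicit. For hardness, the paper reduces from the \dominatingSetAb{}, which is already known to be \nexptimeComplete{} for graphs succinctly encoded by CNF formulas; that reduction is nearly immediate: take $\phi$ to be the empty formula, so $\llbracket \phi \rrbracket$ is all of $\{0,1\}^{|X|}$, take $\psi$ to be the CNF form of $\theta(X,Y)\lor\theta(Y,X)$ (a polynomial blow-up obtained by distributing the disjunction of two CNFs), and keep the same $k$. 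You instead reduce from succinct, circuit-encoded $3$-SAT and transplant the textbook $3$-SAT-to-set-cover construction into the succinct setting; the real work, which you correctly identify and resolve, is keeping $\phi$ and $\psi$ polynomial-size CNFs even though ``literal $(x_v=b)$ occurs in clause $i$'' is defined by a circuit. Your fix --- letting each clause element carry its own Tseitin trace, which $\phi$ forces to be the unique consistent one and which $\psi$ then reads off with bitwise equality tests against the $Y$-bits, recombined via the type-bit trick --- is sound, and the budget-$2^{n'}$ pigeonhole argument goes through (duplicate clause elements arising from surplus index bits are harmless since all copies are covered by exactly the same sets). What the paper's route buys is brevity: all succinct-encoding difficulties are outsourced to the known completeness of the \dominatingSetAb{} under CNF encodings (Das, Scharpfenecker, Tor\'an). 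What your route buys is self-containedness modulo the standard \nexptime{}-completeness of succinct $3$-SAT (which you should still cite explicitly, e.g., Papadimitriou--Yannakakis), together with a reusable technique for folding circuit computations into the element space; the price is the extra care your last paragraph spends on trace uniqueness and CNF recombination, none of which is needed in the paper's two-line reduction.
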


\begin{proof}
It is easy to see that the \succinctSetCoverAb{} is in \nexptime{}. We can show that the \succinctSetCoverAb{} is \nexptimeHard{} by reduction from the \emph{\dominatingSet{}} (\dominatingSetAb{}) which is known to be \nexptimeComplete{} for graphs \emph{succinctly} defined using CNF formulas \cite{DasST17}. An instance of the \dominatingSetAb{} is defined by a CNF formula $\theta$ over two sets of $n$ variables $X = \{x_1, x_2, \dots, x_n\}$ and $Y=\{y_1, y_2,\dots, y_n\}$ and an integer $\problemParam$ (encoded in binary). The formula $\theta$ succinctly defines an undirected graph in the following way. The set of vertices is the set of all valuations of the $n$ variables in $X$ (or over the $n$ variables in $Y$) of which there are $2^n$. Let $val_X$ and $val_Y$ be two such valuations, representing two vertices. Then, there is an edge between $val_X$ and $val_Y$ if and only if $\theta[val_X, val_Y]$ or $\theta[val_Y, val_X]$ is true. An instance of the \dominatingSetAb{} is positive if there exists a set $K = \{val^1_X, val^2_X, \dots, val^\problemParam_X\}$ of $\problemParam$ valuations of the variables in $X$, corresponding to $\problemParam$ vertices, such that all vertices in the graph are adjacent to a vertex in $K$. Formally, we write this $|\bigcup\limits_{val_X \in K} \{ val_Y \mid \theta[val_X, val_Y] \lor  \theta[val_Y, val_X] \mbox{ is true} \}| = 2^n$.

The \dominatingSetAb{} can be reduced in polynomial time to the \succinctSetCoverAb{} as follows. We define the CNF formula $\phi$ over the set of variables $X$ such that the formula is empty. Therefore, the set $\llbracket \phi\rrbracket$ is equal to the $2^n$ valuations of the variables in $X$. We then define the CNF formula $\psi$ over the set of variables $X$ and $Y$ such that it is the CNF equivalent to $\theta(X, Y) \lor \theta(Y, X)$.  
The latter formula has a size which is polynomial in the size of the CNF formula $\theta$ which defines the graph. We keep the same integer $\problemParam$. Then, it is direct to see that the instance of the \dominatingSetAb{} is positive if and only if the instance of \succinctSetCoverAb{} is positive. Indeed, there is a positive instance to the \dominatingSetAb{} if and only if there exists a set $K$ of $\problemParam$ valuations of the variables in $Y$ such that $\llbracket \phi \rrbracket \subseteq \bigcup\limits_{val_Y \in K} \llbracket \psi[val_Y] \rrbracket$.
\end{proof}

\subsection{\textsf{NEXPTIME}-Hardness for Reachability SP Games}
\label{subsec:nexptime_reach}
%================================================================

In this subsection, we describe our reduction from the \succinctSetCoverAb{} which allows us to show the \nexptime-hardness of solving the \problemAb{} in reachability \gamesAb{} and therefore prove Theorem~\ref{thm:nexptimehard} for reachability objectives.

Let $(\phi,\psi,k)$ be an instance of the \succinctSetCoverAb{} where $\phi = C_1 \land C_2 \land \dots \land C_p$ is a conjunction of $p$ clauses over $X = \{x_1, x_2, \ldots, x_m\}$, $\psi = D_1 \land D_2 \land \dots \land D_q$ is a conjunction of $q$ clauses over $X \cup Y$ with $Y = \{y_1, y_2, \ldots, y_n \}$, and $k$ is an integer given in binary.
We construct a reachability \gameAb{} with arena $G$ consisting of a polynomial number of vertices in the number of clauses and variables in the formulas $\phi$ and $\psi$ and in the length of the binary encoding of the integer $k$. This reduction is such that there is a solution to the \succinctSetCoverAb{} if and only if Player~$0$ has a strategy from $v_0$ in $G$ which is a solution to the \problemAb{}. The arena $G$, provided in Figure \ref{sscp_game_long}, can be viewed as three sub-arenas reachable from $v_0$. We call these sub-arenas $G_1$, $G_2$ and $G_3$. Sub-arena $G_3$ starts with a gadget $Q_\problemParam$ whose vertices belong to Player~$1$ and which provides exactly $k$ different paths from $v_0$ to $v_3$.

\paragraph{Gadget $Q_\problemParam$.} 
Parameter $\problemParam$ can be represented in binary using $r = \lfloor log_2(\problemParam) \rfloor + 1$ bits. It also holds that the binary encoding of $\problemParam$ corresponds to the sum of at most $r$ powers of 2. Given the binary encoding $b_{r-1} \dots b_1 b_0$ of $\problemParam$ such that $b_i \in \B$, let $ones = \{ i \in \{0, \dots, r-1\} \mid b_i = 1\}$. It holds that $\problemParam = \sum_{i \in ones}^{} 2^i$. Our gadget $Q_\problemParam$ is a graph with a polynomial number of vertices (in the length $r$ of the binary encoding of $k$) such that all these vertices belong to Player~$1$. For each $i \in ones$ there is $2^i$ different paths from the initial vertex $g_1$ to vertex $g_2$. Therefore, it holds that in $Q_\problemParam$ there are $\problemParam$ different paths from vertex $g_1$ to vertex $g_2$. 
\begin{example}
Let $\problemParam = 11$, it holds that it can be represented in binary using $\lfloor log_2(11) \rfloor + 1 = 4$ bits. The binary representation of $11$ is $1011$ and it can be obtained by the following sum $2^3 + 2^1 + 2^0$. The gadget $Q_{11}$ is detailed in Figure \ref{gadget}.\qed

\begin{figure}
	\centering
		\resizebox{0.5\textwidth}{!}{%
		\begin{tikzpicture}
		
	    \draw[loosely dashed, rounded corners] (-0.5,4) rectangle (9.5,-3) {};
	    \node[] at (-0.5 + 0.4,4 - 0.3) {$Q_{11}$};

		\node[draw, rectangle, minimum size=0.7cm, inner sep = 0.5pt] (al) at (0,0.5){$g_1$};
		
		\node[draw, rectangle, minimum size=0.7cm, inner sep = 0.5pt] (t1) at (2.5,0.5){};

		\node[draw, rectangle, minimum size=0.7cm, inner sep = 0.5pt] (m1) at (1.5,2.5){};
		\node[draw, rectangle, minimum size=0.7cm, inner sep = 0.5pt] (m11) at (2.5,1.5){};
		\node[draw, rectangle, minimum size=0.7cm, inner sep = 0.5pt] (m12) at (2.5,3.5){};
		\node[draw, rectangle, minimum size=0.7cm, inner sep = 0.5pt] (m2) at (3.5,2.5){};
		
		\node[draw, rectangle, minimum size=0.7cm, inner sep = 0.5pt] (b1) at (1.5,-1.5){};
		\node[draw, rectangle, minimum size=0.7cm, inner sep = 0.5pt] (b11) at (2.5,-0.5){};
		\node[draw, rectangle, minimum size=0.7cm, inner sep = 0.5pt] (b12) at (2.5,-2.5){};

		\node[draw, rectangle, minimum size=0.7cm, inner sep = 0.5pt] (b2) at (3.5,-1.5){};
		\node[draw, rectangle, minimum size=0.7cm, inner sep = 0.5pt] (b21) at (4.5,-0.5){};
		\node[draw, rectangle, minimum size=0.7cm, inner sep = 0.5pt] (b22) at (4.5,-2.5){};
		
		\node[draw, rectangle, minimum size=0.7cm, inner sep = 0.5pt] (b3) at (5.5,-1.5){};
		\node[draw, rectangle, minimum size=0.7cm, inner sep = 0.5pt] (b31) at (6.5,-0.5){};
		\node[draw, rectangle, minimum size=0.7cm, inner sep = 0.5pt] (b32) at (6.5,-2.5){};
		
		\node[draw, rectangle, minimum size=0.7cm, inner sep = 0.5pt] (b4) at (7.5,-1.5){};
		
	    \node[draw, rectangle, minimum size=0.7cm, inner sep = 0.5pt] (be) at (9,0.5){$g_2$};

		\draw[-stealth, shorten >=1pt,auto] (al) to [] (t1);

		\draw[-stealth, shorten >=1pt,auto] (al.45) to [] (m1.225);
		\draw[-stealth, shorten >=1pt,auto] (al.315) to [] (b1.135);

		\draw[-stealth, shorten >=1pt,auto] (m1) to [] (m11);
		\draw[-stealth, shorten >=1pt,auto] (m1) to [] (m12);
		\draw[-stealth, shorten >=1pt,auto] (m11) to [] (m2);
		\draw[-stealth, shorten >=1pt,auto] (m12) to [] (m2);

		\draw[-stealth, shorten >=1pt,auto] (b1) to [] (b11);
		\draw[-stealth, shorten >=1pt,auto] (b1) to [] (b12);
		\draw[-stealth, shorten >=1pt,auto] (b11) to [] (b2);
		\draw[-stealth, shorten >=1pt,auto] (b12) to [] (b2);
		
		\draw[-stealth, shorten >=1pt,auto] (b2) to [] (b21);
		\draw[-stealth, shorten >=1pt,auto] (b2) to [] (b22);
		\draw[-stealth, shorten >=1pt,auto] (b21) to [] (b3);
		\draw[-stealth, shorten >=1pt,auto] (b22) to [] (b3);
		
		\draw[-stealth, shorten >=1pt,auto] (b3) to [] (b31);
		\draw[-stealth, shorten >=1pt,auto] (b3) to [] (b32);
		\draw[-stealth, shorten >=1pt,auto] (b31) to [] (b4);
		\draw[-stealth, shorten >=1pt,auto] (b32) to [] (b4);
		
		\draw[] (m2) to [] (7.85, 2.5);
		\draw[-stealth, shorten >=1pt,auto] (7.85, 2.5) to [] (be.135);
		
		\draw[-stealth, shorten >=1pt,auto] (t1) to [] (be);

		\draw[-stealth, shorten >=1pt,auto] (b4.45) to [] (be.225);
		
		\end{tikzpicture}
		}%
	
	\caption{The gadget $Q_{11}$.}
	\label{gadget}
	\Description{Figure 5. Fully described in the text.}
\end{figure}
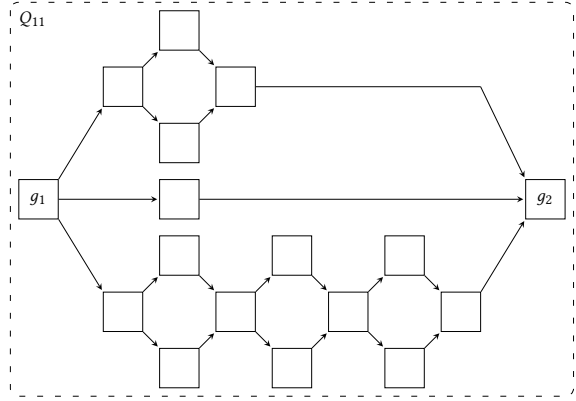
\end{example}

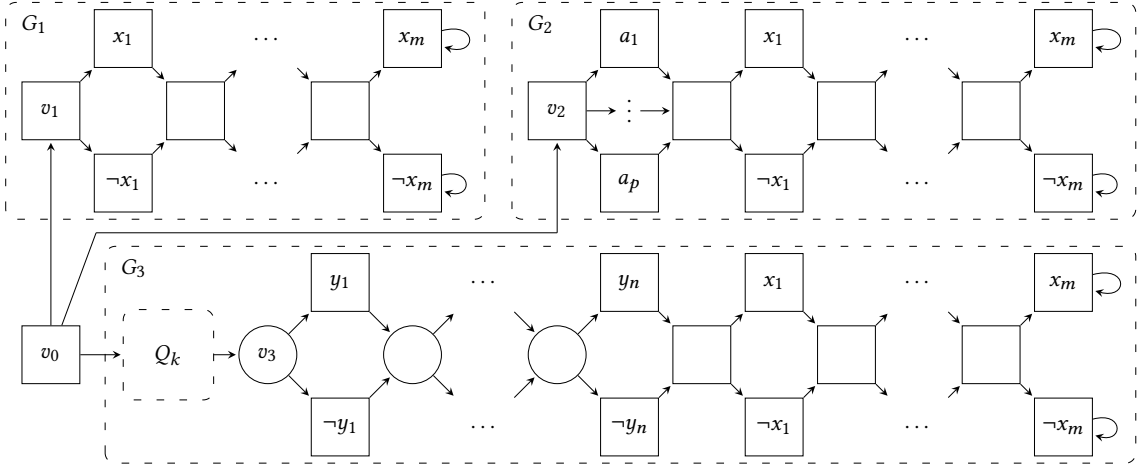
\begin{figure}
	\centering
	\resizebox{\textwidth}{!}{%
		
		\begin{tikzpicture}
		
		\node[draw, rectangle, minimum size=0.8cm] (o) at (-5,-4){$v_0$};
		
		% This is G2
		
		\node[draw, rectangle, minimum size=0.8cm] (i0) at (2,-0.625){$v_2$};

		\draw[loosely dashed, rounded corners] (1.375,-2.125) rectangle (10, 0.875)  {};
        \node[] at (1.375 + 0.4, 0.875 - 0.3) {$G_2$};
		        
		\node[draw, rectangle, minimum size=0.8cm] (i1) at (3,0.375){$a_1$};
		\node[] (idot) at (3,-0.525){$\vdots$};
		\node[draw, rectangle, minimum size=0.8cm] (in) at (3,-1.625){$a_p$};
		
		\node[draw, rectangle, minimum size=0.8cm] (b1) at (4,-0.625){};
		\node[draw, rectangle, minimum size=0.8cm] (x11) at (5,0.375){$x_1$};
		\node[draw, rectangle, minimum size=0.8cm] (nx11) at (5,-1.625){$\neg x_1$};
		\node[draw, rectangle, minimum size=0.8cm] (b2) at (6,-0.625){};
		\node[minimum size=0.8cm] (b2i1) at (7,0.375){$\dots$};
		\node[minimum size=0.8cm] (b2i2) at (7,-1.625){$\dots$};
		\node[draw, rectangle, minimum size=0.8cm] (bm) at (8,-0.625){};
		\node[draw, rectangle, minimum size=0.8cm] (x1m) at (9,0.375){$x_m$};
		\node[draw, rectangle, minimum size=0.8cm] (nx1m) at (9,-1.625){$\neg x_m$};
		
		\draw[-] (o) to [] (-4.375,-2.3125);
		\draw[-] (-4.375,-2.3125) to [] node []{} (2,-2.3125);
		\draw[-stealth, shorten >=1pt,auto] (2,-2.3125) to [] node []{} (i0);
		\draw[-stealth, shorten >=1pt,auto] (i0) to [] node []{} (i1);
		\draw[-stealth, shorten >=1pt,auto] (i0) to [] node []{} (2.85,-0.625);
		\draw[-stealth, shorten >=1pt,auto] (i0) to [] node []{} (in);
		\draw[-stealth, shorten >=1pt,auto] (i1) to [] node []{} (b1);
		\draw[-stealth, shorten >=1pt,auto] (3.15,-0.625) to [] node []{} (b1);
		\draw[-stealth, shorten >=1pt,auto] (in) to [] node []{} (b1);
		\draw[-stealth, shorten >=1pt,auto] (b1) to [] node []{} (x11);
		\draw[-stealth, shorten >=1pt,auto] (b1) to [] node []{} (nx11);
		\draw[-stealth, shorten >=1pt,auto] (x11) to [] node []{} (b2);
		\draw[-stealth, shorten >=1pt,auto] (nx11) to [] node []{} (b2);
		\draw[-stealth, shorten >=1pt,auto] (b2) to [] node []{} (b2i1);
		\draw[-stealth, shorten >=1pt,auto] (b2) to [] node []{} (b2i2);
		\draw[-stealth, shorten >=1pt,auto] (b2i1) to [] node []{} (bm);
		\draw[-stealth, shorten >=1pt,auto] (b2i2) to [] node []{} (bm);
		\draw[-stealth, shorten >=1pt,auto] (bm) to [] node []{} (x1m);
		\draw[-stealth, shorten >=1pt,auto] (bm) to [] node []{} (nx1m);		
		\draw[-stealth,shorten >=1pt,auto,in=-15,out=15,looseness=6] (x1m) edge [] node {} (x1m);
		\draw[-stealth,shorten >=1pt,auto,in=-15,out=15,looseness=5.2] (nx1m) edge [] node {} (nx1m);

		% This is G1
		\draw[loosely dashed, rounded corners] (-5.625, -2.125) rectangle (1, 0.875) {};
		\node[] at (-5.625 + 0.4, 0.875 - 0.3) {$G_1$};
		
		\node[draw, rectangle, minimum size=0.8cm] (a1) at (-5,-0.625){$v_1$};
		\node[draw, rectangle, minimum size=0.8cm] (x21) at (-4,0.375){$x_1$};
		\node[draw, rectangle, minimum size=0.8cm] (nx21) at (-4,-1.625){$\neg x_1$};
	    \node[draw, rectangle, minimum size=0.8cm] (a2) at (-3,-0.625){};
		\node[minimum size=0.8cm] (a2i1) at (-2,0.375){$\dots$};
		\node[minimum size=0.8cm] (a2i2) at (-2,-1.625){$\dots$};
		\node[draw, rectangle, minimum size=0.8cm] (am) at (-1,-0.625){};
		\node[draw, rectangle, minimum size=0.8cm] (x2m) at (0,0.375){$x_m$};
		\node[draw, rectangle, minimum size=0.8cm] (nx2m) at (0,-1.625){$\neg x_m$};
		
		\draw[-stealth, shorten >=1pt,auto] (o) to [] node []{} (a1);
		\draw[-stealth, shorten >=1pt,auto] (a1) to [] node []{} (x21);
		\draw[-stealth, shorten >=1pt,auto] (a1) to [] node []{} (nx21);
		\draw[-stealth, shorten >=1pt,auto] (x21) to [] node []{} (a2);
		\draw[-stealth, shorten >=1pt,auto] (nx21) to [] node []{} (a2);
		\draw[-stealth, shorten >=1pt,auto] (a2) to [] node []{} (a2i1);
		\draw[-stealth, shorten >=1pt,auto] (a2) to [] node []{} (a2i2);
		\draw[-stealth, shorten >=1pt,auto] (a2i1) to [] node []{} (am);
		\draw[-stealth, shorten >=1pt,auto] (a2i2) to [] node []{} (am);
		\draw[-stealth, shorten >=1pt,auto] (am) to [] node []{} (x2m);
		\draw[-stealth, shorten >=1pt,auto] (am) to [] node []{} (nx2m);		
		\draw[-stealth,shorten >=1pt,auto,in=-15,out=15,looseness=6] (x2m) edge [] node {} (x2m);
		\draw[-stealth,shorten >=1pt,auto,in=-15,out=15,looseness=5.2] (nx2m) edge [] node {} (nx2m);
		
		% This is G3
		
		\draw[loosely dashed, rounded corners] (-4.25,-5.5) rectangle (10, -2.5) {};
		\node[] at (-4.25 + 0.4, -2.5 - 0.3) {$G_3$};

		\draw[loosely dashed, rounded corners] (-4,-4.625) rectangle (-2.75, -3.375) {};
		\node[rectangle, minimum size=0.8cm] (j1) at (-3.375,-4){$Q_\problemParam$};

		\draw[-stealth, shorten >=1pt,auto] (o) to [] ((-4,-4);
		
		% y
		
		\node[draw, circle, minimum size=0.8cm] (c1) at (-2,-4){$v_3$};
		\node[draw, rectangle, minimum size=0.8cm] (y1) at (-1,-5){$\neg y_1$};
		\node[draw, rectangle, minimum size=0.8cm] (ny1) at (-1,-3){$y_1$};
		\node[draw, circle, minimum size=0.8cm] (c2) at (0,-4){};
		\node[minimum size=0.8cm] (ci1) at (1,-3){$\dots$};
		\node[minimum size=0.8cm] (ci2) at (1,-5){$\dots$};
		\node[draw, circle, minimum size=0.8cm] (cm) at (2,-4){};
		\node[draw, rectangle, minimum size=0.8cm] (ym) at (3,-5){$\neg y_n$};
		\node[draw, rectangle, minimum size=0.8cm] (nym) at (3,-3){$y_n$};
		
		\draw[-stealth, shorten >=1pt,auto] ((-2.75,-4) to [] node []{} (c1);

		\draw[-stealth, shorten >=1pt,auto] (c1) to [] node []{} (y1);
		\draw[-stealth, shorten >=1pt,auto] (c1) to [] node []{} (ny1);
		\draw[-stealth, shorten >=1pt,auto] (y1) to [] node []{} (c2);
		\draw[-stealth, shorten >=1pt,auto] (ny1) to [] node []{} (c2);
		\draw[-stealth, shorten >=1pt,auto] (c2) to [] node []{} (ci1);
		\draw[-stealth, shorten >=1pt,auto] (c2) to [] node []{} (ci2);
		\draw[-stealth, shorten >=1pt,auto] (ci1) to [] node []{} (cm);
		\draw[-stealth, shorten >=1pt,auto] (ci2) to [] node []{} (cm);
		\draw[-stealth, shorten >=1pt,auto] (cm) to [] node []{} (ym);
		\draw[-stealth, shorten >=1pt,auto] (cm) to [] node []{} (nym);			
		
		% d
		
		\node[draw, rectangle, minimum size=0.8cm] (d1) at (4,-4){};
		\node[draw, rectangle, minimum size=0.8cm] (x31) at (5,-5){$\neg x_1$};
		\node[draw, rectangle, minimum size=0.8cm] (nx31) at (5,-3){$x_1$};
		\node[draw, rectangle, minimum size=0.8cm] (d2) at (6,-4){};
		\node[minimum size=0.8cm] (d2i1) at (7,-5){$\dots$};
		\node[minimum size=0.8cm] (d2i2) at (7,-3){$\dots$};
		\node[draw, rectangle, minimum size=0.8cm] (dm) at (8,-4){};
		\node[draw, rectangle, minimum size=0.8cm] (x3m) at (9,-5){$\neg x_m$};
		\node[draw, rectangle, minimum size=0.8cm] (nx3m) at (9,-3){$x_m$};
		
		\draw[-stealth, shorten >=1pt,auto] (ym) to [] node []{} (d1);		
		\draw[-stealth, shorten >=1pt,auto] (nym) to [] node []{} (d1);	
		\draw[-stealth, shorten >=1pt,auto] (d1) to [] node []{} (x31);
		\draw[-stealth, shorten >=1pt,auto] (d1) to [] node []{} (nx31);
		\draw[-stealth, shorten >=1pt,auto] (x31) to [] node []{} (d2);
		\draw[-stealth, shorten >=1pt,auto] (nx31) to [] node []{} (d2);
		\draw[-stealth, shorten >=1pt,auto] (d2) to [] node []{} (d2i1);
		\draw[-stealth, shorten >=1pt,auto] (d2) to [] node []{} (d2i2);
		\draw[-stealth, shorten >=1pt,auto] (d2i1) to [] node []{} (dm);
		\draw[-stealth, shorten >=1pt,auto] (d2i2) to [] node []{} (dm);
		\draw[-stealth, shorten >=1pt,auto] (dm) to [] node []{} (x3m);
		\draw[-stealth, shorten >=1pt,auto] (dm) to [] node []{} (nx3m);		
		\draw[-stealth,shorten >=1pt,auto,in=-15,out=15,looseness=5.2] (x3m) edge [] node {} (x3m);
		\draw[-stealth,shorten >=1pt,auto,in=-15,out=15,looseness=6] (nx3m) edge [] node {} (nx3m);

		\end{tikzpicture}
		}%
	
	\caption{The arena $G$ used in the reduction from the \succinctSetCoverAb{}.}
	\label{sscp_game_long}
	\Description{Figure 6. Fully described in the text.}
\end{figure}

\paragraph{Objectives.}
The game is played between Player~$0$ with reachability objective $\ObjPlayer{0}$ and Player~$1$ with $\nbrObjectives = 1 + 2 \cdot m + p + q$ reachability objectives. The payoff of a play therefore consists in a single Boolean for objective $\ObjPlayer{1}$, a vector of $2 \cdot m$ Booleans for objectives $\ObjPlayer{x_1}, \ObjPlayer{\neg x_1}, \dots,  \ObjPlayer{x_m}, \ObjPlayer{\neg x_m}$, a vector of $p$ Booleans for objectives $ \ObjPlayer{C_1}, \dots, \ObjPlayer{C_p}$ and a vector of $q$ Booleans for objectives $\ObjPlayer{D_1}, \dots, \ObjPlayer{D_q}$. The objectives are defined as follows.
\begin{itemize}
    \item The target set for objective $\ObjPlayer{0}$ of Player~$0$ and objective $\ObjPlayer{1}$ of Player~$1$ is $\{v_2, v_3\}$.
    \item The target set for objective $\ObjPlayer{x_i}$ (resp.\ $\ObjPlayer{\neg x_i}$) with $i \in \{1, \dots, m\}$ is the set of vertices labeled $x_i$ (resp.\ $\neg x_i$) in $G_1$, $G_2$ and $G_3$. 
    \item The target set for objective $\ObjPlayer{C_i}$ with $i \in \{1, \dots, p\}$ is the set of vertices in $G_1$ and $G_3$ corresponding to the literals of $X$ which make up the clause $C_i$ in $\phi$. In addition, vertex $a_j$ in $G_2$ belongs to the target set of objective $\ObjPlayer{C_\ell}$ for all $\ell \in \{1, \dots, p\}$ such that $\ell \neq j$. 
    \item The target set of objective $\ObjPlayer{D_i}$ with $i \in \{1, \dots, q\}$ is the set of vertices in $G_3$ corresponding to the literals of $X$ and $Y$ which make up the clause $D_i$ in $\psi$. In addition, vertices $v_1$ and $v_2$ satisfy every objective $\ObjPlayer{D_i}$ with $i \in \{1, \dots, q\}$. 
\end{itemize}

\paragraph{Sub-Arenas $G_1$ and $G_2$.}
In each sub-arena $G_1$ and $G_2$, for each variable $x_i \in X$, there is one choice vertex controlled by Player~$1$ which leads to $x_i$ and $\neg x_i$. These vertices have the next choice vertex as their successor, except for vertices $x_m$ and $\neg x_m$ which have a self loop. In $G_2$, there is also a vertex $v_2$ controlled by Player~$1$ with $p$ successors $a_1, \dots, a_p$, each leading to the first choice vertex for the variables in $X$. Sub-arenas $G_1$ and $G_2$ are completely controlled by Player~$1$. Plays entering these sub-arenas are therefore consistent with any strategy of Player~$0$.

\paragraph{Payoff of Plays in $G_1$.}
Plays in $G_1$ do not satisfy objective $\Omega_0$ of Player $0$ nor objective $\ObjPlayer{1}$ of Player~$1$. A play in $G_1$ is of the form $v_0 \: v_1 \: z_1 \boxempty \dots \boxempty (z_m)^\omega$ where $z_i$ is either $x_i$ or $\neg x_i$. It follows that a play satisfies the objective $\ObjPlayer{x_i}$ or $\ObjPlayer{\neg x_i}$ for each $x_i \in X$. The vector of Booleans for these objectives corresponds to a valuation of the variables in $X$, expressed as a vector of $2 \cdot m$ Booleans. In addition, due to the way the objectives are defined, objective $\ObjPlayer{C_i}$ is satisfied in a play if and only if clause $C_i$ of $\phi$ is satisfied by the valuation this play corresponds to. The objective $\ObjPlayer{D_i}$ for $i \in \{1, \dots, q\}$ is satisfied in every play in $G_1$.

\begin{lemma} \label{lem:G1}
    Plays in $G_1$ are consistent with any strategy of Player~$0$. Their payoff are of the form $(0, val, sat(\phi, val), 1, \dots, 1)$ where $val$ is a valuation of the variables in $X$ expressed as a vector of $2 \cdot m$ Booleans for objectives $\ObjPlayer{x_1}$ to $\ObjPlayer{\neg x_m}$ and $sat(\phi, val)$ is the vector of $p$ Booleans for objectives $\ObjPlayer{C_1}$ to $\ObjPlayer{C_p}$ corresponding to that valuation. All plays in $G_1$ are lost by Player~$0$.
\end{lemma}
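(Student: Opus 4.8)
The plan is to prove Lemma~\ref{lem:G1} by directly unwinding the definition of the sub-arena $G_1$ and of the reachability objectives; all three assertions are purely structural, so no game-theoretic argument is needed.

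First I would settle consistency. The initial vertex $v_0$ and every vertex of $G_1$ — the vertex $v_1$, the variable-choice vertices, and the vertices labelled $x_i$ or $\neg x_i$ — are controlled by Player~$1$ (they are the rectangular vertices). Moreover the only edge entering $G_1$ from outside is $v_0 \to v_1$, and once a play reaches $v_1$ it remains inside $G_1$ forever, since every successor reachable from $v_1$ lies in $G_1$ and the vertices $x_m$, $\neg x_m$ each carry a self-loop. Hence along any play that visits $G_1$, Player~$0$ is never asked to move, so such a play is consistent with \emph{every} strategy $\sigma_0$ of Player~$0$; this gives the first sentence.

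Next I would pin down the shape of a play $\rho$ in $G_1$. By the above, $\rho$ starts $v_0\,v_1$ and then, at the variable-choice vertex for $x_i$, Player~$1$ moves to $x_i$ or to $\neg x_i$, followed by the choice vertex for $x_{i+1}$ if $i<m$ and by the self-loop if $i=m$. Thus $\rho = v_0\,v_1\,z_1\boxempty\cdots\boxempty(z_m)^\omega$ with $z_i\in\{x_i,\neg x_i\}$, and $\rho$ determines the valuation $val$ of $X$ given by $val(x_i)=1$ iff $z_i=x_i$. It then remains to read off the payoff block by block from the definition of the target sets: (i) no vertex of $G_1$ belongs to $\{v_2,v_3\}$, so neither $\ObjPlayer{0}$ nor $\ObjPlayer{1}$ is satisfied, giving $\won{\rho}=0$ (last sentence) and the leading $0$ of $\payoff{\rho}$; (ii) $\rho$ visits exactly one of $x_i$, $\neg x_i$ for each $i$, namely $z_i$, so the $2m$ Booleans for $\ObjPlayer{x_1},\ObjPlayer{\neg x_1},\dots,\ObjPlayer{x_m},\ObjPlayer{\neg x_m}$ are precisely the encoding of $val$; (iii) for a clause $C_i$ of $\phi$, the target set of $\ObjPlayer{C_i}$ \emph{restricted to $G_1$} consists exactly of the vertices labelled by a literal of $C_i$ (the additional vertices $a_j$ lie in $G_2$), so $\ObjPlayer{C_i}$ holds iff some $z_j$ is a literal of $C_i$, i.e.\ iff $val$ satisfies $C_i$; hence the $p$-vector for $\ObjPlayer{C_1},\dots,\ObjPlayer{C_p}$ equals $sat(\phi,val)$; (iv) $\rho$ visits $v_1$, which belongs to the target set of every $\ObjPlayer{D_i}$, so the $q$ trailing Booleans all equal $1$. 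Combining (i)--(iv) yields $\payoff{\rho}=(0,val,sat(\phi,val),1,\dots,1)$.

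There is no genuinely hard step: the whole argument is bookkeeping. The only points requiring care are (iii) and (iv) — one must check that intersecting the target sets of the $\ObjPlayer{C_i}$'s and $\ObjPlayer{D_i}$'s with $G_1$ leaves exactly the clause-satisfaction structure of $\phi$ (respectively makes them trivially true through $v_1$), so that the vertices these target sets additionally contain in $G_2$ and $G_3$ cannot interfere; this is immediate from the fact that those vertices lie outside $G_1$.
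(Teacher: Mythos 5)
Your proof is correct and follows the same route as the paper, which does not give a separate proof of this lemma but justifies it by exactly the same bookkeeping in the paragraph preceding it: all of $G_1$ (and $v_0$) is controlled by Player~$1$, plays have the form $v_0\,v_1\,z_1\boxempty\cdots\boxempty(z_m)^\omega$, and the payoff is read off from the target sets, with the $a_j$ and $G_3$ vertices in the targets of the $\ObjPlayer{C_i}$, $\ObjPlayer{D_i}$ being irrelevant inside $G_1$ while $v_1$ satisfies every $\ObjPlayer{D_i}$. Nothing is missing.
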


\paragraph{Payoff of Plays in $G_2$.}
Plays in $G_2$ satisfy the objectives $\ObjPlayer{0}$ of Player~$0$ and $\ObjPlayer{1}$ of Player~$1$. A play in $G_2$ is of the form $v_0 \: v_2 \: a_j \boxempty z_1 \boxempty \dots \boxempty (z_m)^\omega$ where $z_\ell$ is either $x_\ell$ or $\neg x_\ell$. It follows that a play satisfies either the objective $\ObjPlayer{x}$ or $\ObjPlayer{\neg x}$ for each $x \in X$ which again corresponds to a valuation of the variables in $X$. The objective $\ObjPlayer{D_i}$ for $i \in \{1, \dots, q\}$ is satisfied in every play in $G_2$. Compared to the plays in $G_1$, the difference lies in the objectives corresponding to clauses of $\phi$ which are satisfied. In any play in $G_2$, a vertex $a_{j}$ with $j \in \{1, \dots, p\}$ is first visited, satisfying all the objectives $\ObjPlayer{C_\ell}$ with $\ell \in \{1, \dots, p\}$ and $\ell \neq j$. All but one objective corresponding to the clauses of $\phi$ are therefore satisfied. 

\begin{lemma} \label{lem:G2}
    Plays in $G_2$ are consistent with any strategy of Player~$0$. Their payoff are of the form $(1, val, vec, 1, \dots, 1)$ where $val$ is a valuation of the variables in $X$ expressed as a vector of $2 \cdot m$ Booleans for objectives $\ObjPlayer{x_1}$ to $\ObjPlayer{\neg x_m}$ and $vec$ is a vector of $p$ Booleans for objectives $\ObjPlayer{C_1}$ to $\ObjPlayer{C_p}$ in which all of them except one are satisfied. All plays in $G_2$ are won by Player~$0$.
\end{lemma}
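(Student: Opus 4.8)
The statement is a structural claim about the sub-arena $G_2$, so the plan is to prove it by inspecting the shape of the plays that enter $G_2$ and evaluating each objective on them, mirroring the argument sketched for Lemma~\ref{lem:G1}.

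First I would note that $v_0$ and every vertex of $G_2$ --- the vertex $v_2$, the vertices $a_1,\dots,a_p$, the unlabeled choice vertices, and the literal vertices labeled $x_i$ or $\neg x_i$ --- belongs to Player~$1$ (all drawn as rectangles in Figure~\ref{sscp_game_long}). Hence along any play $\rho$ that reaches $v_2$ --- and such a play stays in $G_2$ forever, since no edge leaves $G_2$ --- Player~$0$ is never asked to make a move, so $\rho$ is consistent with every strategy $\sigma_0$ of Player~$0$; this gives the first sentence. By construction of the arena, such a play has the form $v_0\,v_2\,a_j$ followed by the forced move to the first choice vertex and then, for each $i\in\{1,\dots,m\}$, a choice vertex followed by a literal vertex $z_i\in\{x_i,\neg x_i\}$, with the play ultimately looping on $z_m$. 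Here $j\in\{1,\dots,p\}$ and the $z_i$'s are arbitrary.

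Then I would compute $\payoff{\rho}$ and $\won{\rho}$ objective by objective. Since $v_2\in\occ{\rho}$, both $\ObjPlayer{0}=\reach{\{v_2,v_3\}}$ and $\ObjPlayer{1}=\reach{\{v_2,v_3\}}$ are satisfied, so $\won{\rho}=1$ and the first component of the payoff is $1$. For each $i$, exactly one of the literal vertices $x_i,\neg x_i$ occurs in $\rho$, so exactly one of $\ObjPlayer{x_i},\ObjPlayer{\neg x_i}$ is satisfied; these $2m$ bits form a valuation $val$ of $X$. The essential point concerns the clause objectives: the only vertices of $G_2$ lying in the target set of some $\ObjPlayer{C_\ell}$ are the vertices $a_1,\dots,a_p$ (the literal vertices of $G_2$ are deliberately \emph{not} placed in these target sets; only those of $G_1$ and $G_3$ are). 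As $\rho$ visits $a_j$ and no other such vertex, it satisfies $\ObjPlayer{C_\ell}$ for every $\ell\neq j$ and fails $\ObjPlayer{C_j}$, yielding a $p$-bit vector $vec$ with exactly one $0$-entry. Finally, $v_2$ belongs to the target set of every $\ObjPlayer{D_i}$, so the last $q$ components are all $1$. Combining, $\payoff{\rho}=(1,val,vec,1,\dots,1)$ and $\won{\rho}=1$, which is exactly the claim.

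The argument is pure bookkeeping; the only place where care is needed is the precise accounting of which target vertices of the objectives $\ObjPlayer{C_\ell}$ and $\ObjPlayer{D_i}$ fall inside $G_2$ --- in particular that the literal vertices of $G_2$ feed the objectives $\ObjPlayer{x_i}/\ObjPlayer{\neg x_i}$ but none of the clause objectives, which is precisely what makes $vec$ of the ``all-but-one'' shape rather than $sat(\phi,val)$ as in $G_1$. Beyond that, no real obstacle is expected.
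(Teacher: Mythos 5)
Your proof is correct and matches the paper's own justification, which consists precisely of the descriptive paragraph preceding the lemma: all of $G_2$ (and $v_0$) is controlled by Player~$1$, plays have the form $v_0\,v_2\,a_j\,\boxempty\, z_1 \cdots (z_m)^\omega$, and evaluating each objective's target set (including that $a_j$ satisfies every $\ObjPlayer{C_\ell}$ with $\ell\neq j$ while the literal vertices of $G_2$ lie in no clause target, and that $v_2$ satisfies $\ObjPlayer{0}$, $\ObjPlayer{1}$ and every $\ObjPlayer{D_i}$) yields exactly the stated payoff shape. Your extra care in noting which literal vertices feed which objectives is the same bookkeeping the paper relies on.
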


From the two previous lemmas, we can state the following lemma when considering the payoffs of plays in $G_1$ and $G_2$.

\begin{lemma} \label{lem:valsatis}
\label{larger_payoff}
    For every play in $G_1$ which corresponds to a valuation of the variables in $X$ that does not satisfy $\phi$, there is a play in $G_2$ with a strictly larger payoff.
\end{lemma}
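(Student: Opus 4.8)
The plan is to show that for every play $\rho_1$ in $G_1$ whose associated valuation fails to satisfy $\phi$, one can exhibit an explicit play $\rho_2$ in $G_2$ whose payoff dominates $\payoff{\rho_1}$ coordinate-wise and is strictly larger. By Lemmas~\ref{lem:G1} and~\ref{lem:G2} the exact shape of the payoffs on both sides is already available, so once $\rho_2$ is chosen correctly the whole argument reduces to a coordinate-by-coordinate comparison.

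First I would fix such a play $\rho_1$ in $G_1$, determined by its literal choices $z_1, \dots, z_m$ with $z_i \in \{x_i, \neg x_i\}$, let $val$ be the valuation of $X$ it encodes, and recall from Lemma~\ref{lem:G1} that $\payoff{\rho_1} = (0, val, sat(\phi, val), 1, \dots, 1)$ and $\won{\rho_1} = 0$. Since $val$ does not satisfy $\phi = C_1 \land \cdots \land C_p$, I would fix an index $j \in \{1, \dots, p\}$ such that the clause $C_j$ is not satisfied by $val$; equivalently, the $j$-th coordinate of the block $sat(\phi, val)$ is $0$. Because $G_2$ is entirely controlled by Player~$1$, the play $\rho_2$ that goes from $v_0$ to $v_2$, then through the successor $a_j$, and then follows exactly the same sequence of literals $z_1, \dots, z_m$ (ultimately looping on $(z_m)^\omega$) is a genuine play of $G_2$ and is consistent with every strategy of Player~$0$. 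By Lemma~\ref{lem:G2}, $\payoff{\rho_2} = (1, val, vec, 1, \dots, 1)$: the valuation block equals $val$ since $\rho_2$ visits the $G_2$-vertices carrying the same labels $z_1, \dots, z_m$; $vec_\ell = 1$ for every $\ell \neq j$ because $a_j$ lies in the target set of $\ObjPlayer{C_\ell}$ for all $\ell \neq j$; and $vec_j = 0$ because in $G_2$ the only vertices belonging to the target set of $\ObjPlayer{C_j}$ are the $a_\ell$ with $\ell \neq j$, none of which is visited by $\rho_2$.

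It then remains to verify $\payoff{\rho_1} < \payoff{\rho_2}$ coordinate by coordinate: the $\ObjPlayer{1}$-coordinate rises from $0$ to $1$; the $2m$ coordinates encoding the valuation coincide; for each $\ell \neq j$ the $\ObjPlayer{C_\ell}$-coordinate goes from a value in $\{0,1\}$ to $1$; the $\ObjPlayer{C_j}$-coordinate equals $0$ on both sides by the choice of $j$; and the $q$ coordinates for the $\ObjPlayer{D_i}$ are all $1$ on both sides. Hence $\payoff{\rho_1} \le \payoff{\rho_2}$ with strict inequality in the $\ObjPlayer{1}$-coordinate, which is precisely the statement. The only point that requires care — and the reason the index $j$ must be chosen well — is that the unique clause-coordinate on which $\rho_2$ does not strictly improve over $\rho_1$ has to be one where $\rho_1$ already scores $0$; selecting $a_j$ for an \emph{unsatisfied} clause $C_j$ guarantees exactly this.
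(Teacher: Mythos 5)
Your proof is correct and follows essentially the same route as the paper: pick an unsatisfied clause $C_j$, route the $G_2$-play through $a_j$ with the same literal choices, and compare payoffs via Lemmas~\ref{lem:G1} and~\ref{lem:G2}. Your coordinate-by-coordinate check (in particular noting $vec_j=0$ and that the only non-improving clause coordinate is one where $\rho_1$ already scores $0$) is just a slightly more explicit version of the paper's observation that $sat(\phi,val)\leq vec$.
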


\begin{proof}
Let $\rho$ be a play in $G_1$ which corresponds to a valuation of the variables in $X$ that does not satisfy $\phi$. It follows that at least one objective, say $\ObjPlayer{C_\ell}$, is not satisfied in $\rho$ as at least one clause of $\phi$ (clause $C_\ell$) is not satisfied by that valuation. Let us consider the play $\rho'$ in $G_2$ which visits vertex $a_\ell$ and after visits the vertices corresponding to the same valuation of the variables in $X$ as $\rho$. By Lemmas~\ref{lem:G1} and~\ref{lem:G2}, it follows that the payoff of $\rho'$ is strictly larger than that of $\rho$ (as we have $(0, val, sat(\phi, val), 1, \dots, 1) < (1, val, vec, 1, \dots, 1)$ with $sat(\phi, val) \leq vec$). 
%since it satisfies at least the same objectives in $\ObjPlayer{C_1}$ to $\ObjPlayer{C_p}$ and exactly the same objectives in $\ObjPlayer{x_1}, \dots, \ObjPlayer{\neg x_m}$ but objective $\ObjPlayer{1}$ is satisfied in $\rho'$ and not in $\rho$. 
\end{proof}

The following lemma is a consequence of Lemma~\ref{lem:valsatis}.

\begin{lemma}
\label{lem:g1_g2_payoffs}
    Let $\sigma_0$ be a strategy of Player~$0$. The set of payoffs of plays in $G_1$ that are \paretoOptimal{} when considering $G_1 \cup G_2$ is equal to the set of payoffs of plays in $G_1$ whose valuation of $X$ satisfy $\phi$.
\end{lemma}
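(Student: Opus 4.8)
The plan is to prove the two inclusions separately, relying on Lemma~\ref{lem:valsatis} for one direction and on the combinatorial shape of the payoffs described in Lemmas~\ref{lem:G1} and~\ref{lem:G2} for the other. First note that, by Lemmas~\ref{lem:G1} and~\ref{lem:G2}, every play lying in $G_1 \cup G_2$ is consistent with $\sigma_0$ (those sub-arenas are entirely controlled by Player~$1$), so being ``Pareto-optimal when considering $G_1 \cup G_2$'' means being maximal with respect to $\leq$ among the payoffs of plays lying in $G_1 \cup G_2$; in particular the statement does not really depend on $\sigma_0$.

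For the inclusion $\supseteq$, let $\rho$ be a play in $G_1$ whose valuation $val$ of $X$ satisfies $\phi$. By Lemma~\ref{lem:G1} its payoff is $(0, val, sat(\phi, val), 1, \ldots, 1)$, and since $val$ satisfies every clause $C_i$ we have $sat(\phi, val) = (1, \ldots, 1)$. I would then argue that no play $\rho'$ in $G_1 \cup G_2$ has a strictly larger payoff. If $\rho'$ is in $G_1$, with valuation $val'$, then $\payoff{\rho'} \geq \payoff{\rho}$ forces $val' \geq val$ componentwise; but a valuation, viewed as a vector of $2m$ Booleans, has exactly $m$ ones (one per variable $x_i$), hence $val' \geq val$ implies $val' = val$, so $\payoff{\rho'} = \payoff{\rho}$, which is not strictly larger. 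If $\rho'$ is in $G_2$, then by Lemma~\ref{lem:G2} its payoff is $(1, val'', vec, 1, \ldots, 1)$ where $vec$ has exactly one zero coordinate (the objective $\Omega_{C_j}$ corresponding to the first visited vertex $a_j$ is the only clause-objective left unsatisfied); hence $vec \not\geq (1, \ldots, 1) = sat(\phi, val)$, so $\payoff{\rho'} \not\geq \payoff{\rho}$. Therefore $\payoff{\rho}$ is Pareto-optimal when considering $G_1 \cup G_2$.

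For the inclusion $\subseteq$, let $\rho$ be a play in $G_1$ whose payoff is Pareto-optimal when considering $G_1 \cup G_2$, and let $val$ be its valuation of $X$. Suppose, towards a contradiction, that $val$ does not satisfy $\phi$. Then Lemma~\ref{lem:valsatis} provides a play $\rho'$ in $G_2$ with $\payoff{\rho'} > \payoff{\rho}$, contradicting the Pareto-optimality of $\payoff{\rho}$. Hence $val$ satisfies $\phi$, which is exactly the required membership. Combining both inclusions yields the claimed equality.

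The argument is essentially routine. The only steps needing a little care are the observation that distinct valuations of $X$ yield pairwise incomparable sub-vectors (each being a $2m$-bit vector with exactly $m$ ones), and the observation that in $G_2$ one clause-objective is always unsatisfied, which is precisely what makes the $G_2$-payoffs \emph{incomparable} to the ``good'' $G_1$-payoffs rather than dominating them; neither of these is a genuine obstacle.
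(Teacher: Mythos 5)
Your proof is correct and follows essentially the same route as the paper: one direction is exactly the appeal to Lemma~\ref{lem:valsatis}, and the other is the observation that a $G_1$-play satisfying all clause objectives is not dominated by any $G_2$-payoff because the $G_2$-plays leave one clause objective unsatisfied (while satisfying $\Omega_1$). The only difference is that you additionally spell out the $G_1$-versus-$G_1$ comparison via the ``exactly $m$ ones'' argument, which the paper leaves implicit; that is a harmless extra detail, not a different approach.
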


\begin{proof} Recall that all plays in $G_1 \cup G_2$ are consistent with any strategy of Player~$0$. The property of Lemma~\ref{lem:g1_g2_payoffs} stems from the following observations. First, any play in $G_1$ which satisfies every objective $\ObjPlayer{C_i}$ with $i \in \{1, \dots, p\}$, and therefore corresponds to a valuation of $X$ which satisfies $\phi$, has a payoff that is incomparable to every possible payoff of plays in $G_2$. This is because such a play satisfies more objectives in $\ObjPlayer{C_1}, \dots, \ObjPlayer{C_p}$ than the plays in $G_2$ but does not satisfy objective $\ObjPlayer{1}$ while the plays in $G_2$ do. Second, every other play in $G_1$ has a strictly smaller payoff than at least one play in $G_2$ due to Lemma \ref{larger_payoff} and its payoff is therefore not \paretoOptimal{}.
\end{proof}

\paragraph{Problematic Payoffs in $G_1$.}
The plays of $G_1$ described in the previous lemma correspond exactly to the valuations of $X$ which satisfy $\phi$ and therefore to the elements we aim to cover in the \succinctSetCoverAb{}. They are \paretoOptimal{} when considering $G_1 \cup G_2$ and are lost by Player~$0$. All other \paretoOptimal{} payoffs in $G_1 \cup G_2$ are only realized by plays in $G_2$ which are all won by Player~$0$. It follows that in order for Player~$0$ to find a strategy $\sigma_0$ from $v_0$ that is solution to the \problemAb{}, it must hold that those payoffs are not \paretoOptimal{} when considering $G_1 \cup G_2 \cup G_3$. Otherwise, a play of $G_1$ consistent with $\sigma_0$ with a \paretoOptimal{} payoff is lost by Player~$0$. We therefore call the payoffs of plays in $G_1$ mentioned in Lemma~\ref{lem:g1_g2_payoffs} \emph{problematic payoffs}.

In order for Player~$0$ to find a strategy $\sigma_0$ which is a solution to the \problemAb{}, this strategy must be such that for each problematic payoff in $G_1$, there is a play in $G_3$ consistent with $\sigma_0$ and with a strictly larger payoff. Since the plays in $G_3$ are all won by Player~$0$, this would ensure that the strategy $\sigma_0$ is a solution to the problem. This corresponds in the \succinctSetCoverAb{} to selecting a series of sets in order to cover the valuations of $X$ which satisfy $\phi$.

\paragraph{Sub-Arena $G_3$.}
Sub-arena $G_3$ starts with gadget $Q_\problemParam$ whose vertices are controlled by Player~$1$. Then, for each variable $y_i \in Y$, there is one choice vertex controlled by Player~$0$ which leads to $y_i$ and $\neg y_i$. These vertices have the next choice vertex as their successor, except for $y_n$ and $\neg y_n$ which lead to the first choice vertex for the variables in $X$. 

\paragraph{Payoff of Plays in $G_3$.}
Plays in $G_3$ satisfy the objectives $\ObjPlayer{0}$ of Player~$0$ and $\ObjPlayer{1}$ of Player~$1$. A play in $G_3$ consistent with a strategy $\sigma_0$ is of the form $v_0 \boxempty \dots \boxempty v_3 \, t_1 \raisebox{0.2ex}{$\scriptstyle\varbigcirc$} \cdots \raisebox{0.2ex}{$\scriptstyle\varbigcirc$} \, t_n \boxempty z_1 \boxempty \dots \boxempty (z_m)^\omega$ where $t_i$ is either $y_i$ or $\neg y_i$ and $z_i$ is either $x_i$ or $\neg x_i$. Since only the vertices leading to $y$ or $\neg y$ for $y \in Y$ belong to Player~$0$, it holds that $v_3 \, t_1 \raisebox{0.2ex}{$\scriptstyle\varbigcirc$} \cdots \raisebox{0.2ex}{$\scriptstyle\varbigcirc$} \, t_n$ is the only part of any play in $G_3$ which is directly influenced by $\sigma_0$. That part of a play comes after a history from $v_0$ to $v_3$ of which there are $\problemParam$, provided by gadget $Q_k$. By definition of a strategy, this can be interpreted as Player~$0$ making a choice of valuation of the variables in $Y$ after each of those $\problemParam$ histories. After this, the play satisfies either the objective $\ObjPlayer{x}$ or $\ObjPlayer{\neg x}$ for each $x \in X$ which corresponds to a valuation of $X$. Due to the way the objectives are defined, the objective $\ObjPlayer{C_i}$ (resp.\ $\ObjPlayer{D_i}$) is satisfied if and only if clause $C_i$ of $\phi$ (resp.\ $D_i$ of $\psi$) is satisfied by the valuation of the variables in $X$ (resp. $X$ and $Y$) the play corresponds to. 

\paragraph{Creating Strictly Larger Payoffs in $G_3$.}
In order to create a play with a payoff $r'$ that is strictly larger than a problematic payoff $r$, $\sigma_0$ must choose a valuation of $Y$ such that there exists a valuation of the remaining variables $X$ which together with this valuation of $Y$ satisfies $\psi$ and $\phi$ (since in $r$ every objective $\ObjPlayer{C_i}$ for $i \in \{1, \dots, p\}$ and $\ObjPlayer{D_i}$ for $i \in \{1, \dots, q\}$ is satisfied). Since the plays in $G_3$ also satisfy the objective $\ObjPlayer{1}$ and plays in $G_1$ do not, this ensures that $r < r'$. 

We can finally establish that our reduction is correct.
\begin{proposition}\label{prop:nexptime-hard-correct-reach}
    There is a solution to an instance of the \succinctSetCoverAb{} if and only if Player~$0$ has a strategy $\sigma_0$ that is a solution to the \problemAb{} in the corresponding reachability \gameAb{} played on $G$.
\end{proposition}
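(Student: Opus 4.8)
The plan is to prove both implications by reading off the game-theoretic structure of $G$ fixed by Lemmas~\ref{lem:G1}--\ref{lem:g1_g2_payoffs} together with the description of $G_3$. The point to exploit throughout is that $\ObjPlayer{0}$ and $\ObjPlayer{1}$ have the \emph{same} target set $\{v_2,v_3\}$, so a play is won by Player~$0$ precisely when its $\ObjPlayer{1}$-component is $1$; equivalently, the plays won by Player~$0$ are exactly those entering $G_2$ or $G_3$, while all plays of $G_1$ are lost. Since $G_1$, $G_2$ and the gadget $Q_\problemParam$ are entirely controlled by Player~$1$, the only vertices of Player~$0$ in $G$ are the $Y$-choice vertices of $G_3$; and as $Q_\problemParam$ offers exactly $\problemParam$ distinct histories from $v_0$ to $v_3$, any strategy $\sigma_0$ of Player~$0$ is determined — as far as the plays of $G_3$ are concerned — by a family $val_Y^1,\dots,val_Y^\problemParam$ of valuations of $Y$ (one per such history), after which Player~$1$ is free to choose any valuation $val_X$ of $X$ on the remaining $X$-vertices of $G_3$.

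The heart of the argument is to determine, for a fixed $\sigma_0$ described by $val_Y^1,\dots,val_Y^\problemParam$, which consistent plays strictly dominate a problematic payoff $r=(0,val^*,(1,\dots,1),(1,\dots,1))$, where $val^*\in\llbracket\phi\rrbracket$ (the component blocks being, in order, $\ObjPlayer{1}$, the $2m$ literal objectives, the $p$ objectives for clauses of $\phi$, the $q$ objectives for clauses of $\psi$). A play of $G_1$ with a different $X$-valuation has an incomparable payoff, because the $2m$ Booleans for $\ObjPlayer{x_1},\ObjPlayer{\neg x_1},\dots$ faithfully encode the valuation. A play of $G_2$ misses some clause-of-$\phi$ objective by Lemma~\ref{lem:G2} and therefore cannot dominate $r$, all of whose clause components are $1$. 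Finally a play $\rho'$ of $G_3$ consistent with $\sigma_0$ uses some history to $v_3$, then the matching $val_Y^j$, then some $val_X$ chosen by Player~$1$; for $\rho'$ to dominate $r$ one first needs $val_X=val^*$ (incomparability of $X$-encodings), which automatically makes $\rho'$ satisfy every $\ObjPlayer{C_i}$ since $val^*\models\phi$, and one additionally needs $\rho'$ to satisfy every $\ObjPlayer{D_i}$, i.e.\ $(val^*,val_Y^j)\models\psi$, i.e.\ $val^*\in\llbracket\psi[val_Y^j]\rrbracket$; conversely such a $\rho'$ does strictly dominate $r$ because it moreover satisfies $\ObjPlayer{1}$. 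Hence $r\in\paretoSet{\sigma_0}$ if and only if $val^*\notin\bigcup_{j=1}^{\problemParam}\llbracket\psi[val_Y^j]\rrbracket$.

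Given this equivalence, the two directions follow quickly. If the \succinctSetCoverAb{} instance is positive, fix a cover $val_Y^1,\dots,val_Y^\problemParam$ with $\llbracket\phi\rrbracket\subseteq\bigcup_j\llbracket\psi[val_Y^j]\rrbracket$ and let $\sigma_0$ play, after the $j$-th history reaching $v_3$, the valuation $val_Y^j$ (its behaviour is immaterial elsewhere, Player~$0$ having no other vertices). Every consistent play in $G_2\cup G_3$ is won by Player~$0$; every consistent play of $G_1$ whose valuation does not satisfy $\phi$ is strictly dominated by a play of $G_2$ (Lemma~\ref{lem:valsatis}), hence not Pareto-optimal; and every problematic play is strictly dominated by a $G_3$-play consistent with $\sigma_0$ by the equivalence above. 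So every Pareto-optimal consistent play is won, i.e.\ $\sigma_0$ is a solution to the \problemAb{}. Conversely, if $\sigma_0$ is a solution, let $val_Y^1,\dots,val_Y^\problemParam$ be the valuations it plays along the $\problemParam$ histories to $v_3$; were some $val^*\in\llbracket\phi\rrbracket$ left uncovered, the equivalence would give $r\in\paretoSet{\sigma_0}$ for the problematic payoff $r$ of $val^*$, while the play of $G_1$ realising $r$ is consistent with $\sigma_0$ and lost by Player~$0$ (Lemma~\ref{lem:G1}) — contradicting that $\sigma_0$ is a solution. Hence $\{val_Y^1,\dots,val_Y^\problemParam\}$, padded with arbitrary valuations if it has fewer than $\problemParam$ distinct elements, witnesses a positive instance of the \succinctSetCoverAb{}.

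The only genuinely delicate step is the exhaustive case analysis of the second paragraph: one must be sure that \emph{no} consistent play of $G_1$, $G_2$ or $G_3$ dominates a problematic payoff except a $G_3$-play built from one of the $\problemParam$ designated $Y$-valuations together with the matching $X$-assignment. This relies on (i) the faithful encoding of $X$-valuations into the $2m$ literal objectives, which forces incomparability across distinct valuations, (ii) Lemma~\ref{lem:G2} to discard all $G_2$-plays, and (iii) the fact that a problematic payoff already saturates every clause objective of $\phi$ and of $\psi$, so that domination is equivalent to producing, in $G_3$, a full satisfying assignment of both $\phi$ and $\psi$ sharing the same $X$-part. Everything else is a direct dictionary between the $\problemParam$ choices of $\sigma_0$ after $Q_\problemParam$ and the $\problemParam$ sets of a cover.
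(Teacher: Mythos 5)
Your proof is correct and follows essentially the same route as the paper's: both directions rest on the observation that a problematic payoff (a $G_1$ play whose $X$-valuation satisfies $\phi$) can only be strictly dominated by a consistent $G_3$ play reusing the same $X$-valuation together with one of the $\problemParam$ announced $Y$-valuations making $\psi$ true, which is exactly the dictionary between strategies and covers that the paper exploits. Your explicit ``domination equivalence'' simply packages the paper's case analysis (Lemmas~\ref{lem:G1}--\ref{lem:g1_g2_payoffs} plus the description of $G_3$ and the gadget $Q_\problemParam$) into a single statement used for both implications, with a slightly more careful treatment of the padding to exactly $\problemParam$ valuations.
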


\begin{proof}
Let us assume that that $\sigma_0$ is a solution to the \problemAb{} in $G$ and show that there is a solution to the \succinctSetCoverAb{}. Let $val_X$ be a valuation of the variables in $X$ which satisfies $\phi$. This valuation corresponds to a play in $G_1$ with a problematic payoff $r$. Since the objective of Player~$0$ is not satisfied in that play and since $\sigma_0$ is a solution to the \problemAb{}, it holds that $r$ is not \paretoOptimal{}. It follows that there exists a play in $G_3$ that is consistent with $\sigma_0$ and whose payoff is strictly larger than $r$. As described above, such a play corresponds to a valuation $val_Y$ of the variables in $Y$ such that $val_X \in \llbracket \psi[val_Y] \rrbracket$. Since this can be done for each $val_X \in \llbracket \phi \rrbracket$ and since there is a set $K$ of $\problemParam$ possible valuations $val_Y$ in $G_3$, it holds that $\llbracket \phi \rrbracket \subseteq \bigcup\limits_{val_Y \in K} \llbracket \psi[val_Y] \rrbracket$. 

Let us now assume that there is a solution to the \succinctSetCoverAb{} and show that we can construct a strategy $\sigma_0$ that is solution to the \problemAb{}. Let $K$ be the set of $\problemParam$ valuations $val_Y$ of the variables in $Y$ which is a solution to the \succinctSetCoverAb{}. Since there are $\problemParam$ possible histories from $v_0$ to $v_3$ in $G_3$ provided by the gadget $Q_\problemParam$ described previously, we define $\sigma_0$ such that the $n$ vertices $y_i$ or $\neg y_i$ for $i \in \{1, \dots , n\}$ visited after each history correspond to a valuation in $K$. We can now show that this strategy is a solution to the \problemAb{}. We do this by showing that each play $\rho$ with problematic payoff $r$ in $G_1$ has a strictly smaller payoff than that of some play $\rho'$ with payoff $r'$ in $G_3$. Such a payoff $r$ corresponds to a valuation $val_X \in \llbracket \phi \rrbracket$. Since $K$ is a solution to the \succinctSetCoverAb{}, it holds that there exists some valuation $val_Y \in K$ such that $val_X \in \llbracket \psi[val_Y] \rrbracket$. It follows, given the definition of $\sigma_0$, that there exists a play $\rho'$ in $G_3$ corresponding to that valuation $val_Y$ and which visits the vertices $x$ or $\neg x$ for each $x \in X$ such that it corresponds to the valuation $val_X$. Given the properties mentioned before, the payoff $r'$ of this play is such that $r < r'$.
\end{proof}

The previous proof yields the \nexptime{}-hardness of the \problemAb{} stated in Theorem~\ref{thm:nexptimehard} for the case of reachability \gamesAb{}.

\subsection{\textsf{NEXPTIME}-Hardness for Safety SP Games}
%==========================================================

In this subsection, we prove Theorem~\ref{thm:nexptimehard} for safety \gamesAb{}. The arguments used here are similar to the ones used in the previous subsection and we therefore only highlight the main differences.

Given an instance of the \succinctSetCoverAb{}, we construct a safety \gameAb{} with an arena $G$ of polynomial size and with a polynomial number of safety objectives. The arena $G$, provided in Figure \ref{fig:sscp_game_long_safety}, is composed of the same sub-arenas $G_2$ and $G_3$ described in the previous subsection and a modified sub-arena $G'_1$. 

\begin{figure}
	\centering
	\resizebox{\textwidth}{!}{%
		
		\begin{tikzpicture}
		
		\node[draw, rectangle, minimum size=0.8cm] (o) at (-5.375,-0.5){$v_0$};
		
		% This is G2
		
		\node[draw, rectangle, minimum size=0.8cm] (i0) at (1.5,-0.5){$v_2$};

		\draw[loosely dashed, rounded corners] (0.5,-2) rectangle (9.5, 1)  {};
        \node[] at (0.5 + 0.4, -2 + 0.3) {$G_2$};
		        
		\node[draw, rectangle, minimum size=0.8cm] (i1) at (2.5,0.5){$a_1$};
		\node[] (idot) at (2.5,-0.4){$\vdots$};
		\node[draw, rectangle, minimum size=0.8cm] (in) at (2.5,-1.5){$a_p$};
		
		\node[draw, rectangle, minimum size=0.8cm] (b1) at (3.5,-0.5){};
		\node[draw, rectangle, minimum size=0.8cm] (x11) at (4.5,0.5){$x_1$};
		\node[draw, rectangle, minimum size=0.8cm] (nx11) at (4.5,-1.5){$\neg x_1$};
		\node[draw, rectangle, minimum size=0.8cm] (b2) at (5.5,-0.5){};
		\node[minimum size=0.8cm] (b2i1) at (6.5,0.5){$\dots$};
		\node[minimum size=0.8cm] (b2i2) at (6.5,-1.5){$\dots$};
		\node[draw, rectangle, minimum size=0.8cm] (bm) at (7.5,-0.5){};
		\node[draw, rectangle, minimum size=0.8cm] (x1m) at (8.5,0.5){$x_m$};
		\node[draw, rectangle, minimum size=0.8cm] (nx1m) at (8.5,-1.5){$\neg x_m$};
		
		\draw[-stealth, shorten >=1pt,auto] (o) to [] node []{} (i0);
		\draw[-stealth, shorten >=1pt,auto] (i0) to [] node []{} (i1);
		\draw[-stealth, shorten >=1pt,auto] (i0) to [] node []{} (2.35,-0.5);
		\draw[-stealth, shorten >=1pt,auto] (i0) to [] node []{} (in);
		\draw[-stealth, shorten >=1pt,auto] (i1) to [] node []{} (b1);
		\draw[-stealth, shorten >=1pt,auto] (2.65,-0.5) to [] node []{} (b1);
		\draw[-stealth, shorten >=1pt,auto] (in) to [] node []{} (b1);
		\draw[-stealth, shorten >=1pt,auto] (b1) to [] node []{} (x11);
		\draw[-stealth, shorten >=1pt,auto] (b1) to [] node []{} (nx11);
		\draw[-stealth, shorten >=1pt,auto] (x11) to [] node []{} (b2);
		\draw[-stealth, shorten >=1pt,auto] (nx11) to [] node []{} (b2);
		\draw[-stealth, shorten >=1pt,auto] (b2) to [] node []{} (b2i1);
		\draw[-stealth, shorten >=1pt,auto] (b2) to [] node []{} (b2i2);
		\draw[-stealth, shorten >=1pt,auto] (b2i1) to [] node []{} (bm);
		\draw[-stealth, shorten >=1pt,auto] (b2i2) to [] node []{} (bm);
		\draw[-stealth, shorten >=1pt,auto] (bm) to [] node []{} (x1m);
		\draw[-stealth, shorten >=1pt,auto] (bm) to [] node []{} (nx1m);		
		\draw[-stealth,shorten >=1pt,auto,in=-15,out=15,looseness=6] (x1m) edge [] node {} (x1m);
		\draw[-stealth,shorten >=1pt,auto,in=-15,out=15,looseness=5.2] (nx1m) edge [] node {} (nx1m);
		
		% This is G1	
		\draw[loosely dashed, rounded corners] (-3.5, 1.25) rectangle (9.5, 4.25) {};
		\node[] at (-3.5 + 0.4, 1.25 + 0.3) {$G'_1$};
		
		\node[draw, rectangle, minimum size=0.8cm] (a1) at (-2.5,2.75){$v_1$};
		\node[draw, rectangle, minimum size=0.8cm] (x21) at (-1.5,3.75){$x_1$};
		\node[draw, rectangle, minimum size=0.8cm] (nx21) at (-1.5,1.75){$\neg x_1$};
		\node[draw, rectangle, minimum size=0.8cm] (a2) at (-0.5,2.75){};
		\node[minimum size=0.8cm] (a2i1) at (0.5,3.75){$\dots$};
		\node[minimum size=0.8cm] (a2i2) at (0.5,1.75){$\dots$};
		\node[draw, rectangle, minimum size=0.8cm] (am) at (1.5,2.75){};
		\node[draw, rectangle, minimum size=0.8cm] (x2m) at (2.5,3.75){$x_m$};
		\node[draw, rectangle, minimum size=0.8cm] (nx2m) at (2.5,1.75){$\neg x_m$};
		\node[draw, circle, minimum size=0.8cm] (d1) at (3.5,2.75){$D_1$};
		\node[draw, rectangle, minimum size=0.8cm] (d1l3) at (4.5,1.75){$l^{D_1}_{n_{D_1}}$};
		\node[rectangle, minimum size=0.8cm] (d1l2) at (4.5,2.85){$\vdots$};
		\node[draw, rectangle, minimum size=0.8cm] (d1l1) at (4.5,3.75){$l^{D_1}_1$};
		\node[draw, circle, minimum size=0.8cm] (d2) at (5.5,2.75){$D_2$};
		\node[rectangle, minimum size=0.8cm] (d2l3) at (6.5,1.75){$\dots$};
		\node[rectangle, minimum size=0.8cm] (d2l2) at (6.5,2.85){$\vdots$};
		\node[rectangle, minimum size=0.8cm] (d2l1) at (6.5,3.75){$\dots$};
		\node[draw, circle, minimum size=0.8cm] (dq) at (7.5,2.75){$D_q$};
		\node[draw,rectangle, minimum size=0.8cm] (dql3) at (8.5,1.75){$l^{D_q}_{n_{D_q}}$};
		\node[rectangle, minimum size=0.8cm] (dql2) at (8.5,2.85){$\vdots$};
		\node[draw,rectangle, minimum size=0.8cm] (dql1) at (8.5,3.75){$l^{D_q}_1$};
		
		%\draw[-stealth] (o) to [] node []{} (a1);
		\draw[] (o) to [] (-5.375, 2.75);
		\draw[-stealth] (-5.375, 2.75) to [] (a1);
		\draw[-stealth, shorten >=1pt,auto] (a1) to [] node []{} (x21);
		\draw[-stealth, shorten >=1pt,auto] (a1) to [] node []{} (nx21);
		\draw[-stealth, shorten >=1pt,auto] (x21) to [] node []{} (a2);
		\draw[-stealth, shorten >=1pt,auto] (nx21) to [] node []{} (a2);
		\draw[-stealth, shorten >=1pt,auto] (a2) to [] node []{} (a2i1);
		\draw[-stealth, shorten >=1pt,auto] (a2) to [] node []{} (a2i2);
		\draw[-stealth, shorten >=1pt,auto] (a2i1) to [] node []{} (am);
		\draw[-stealth, shorten >=1pt,auto] (a2i2) to [] node []{} (am);
		\draw[-stealth, shorten >=1pt,auto] (am) to [] node []{} (x2m);
		\draw[-stealth, shorten >=1pt,auto] (am) to [] node []{} (nx2m);		
		\draw[-stealth, shorten >=1pt,auto] (x2m) to [] node []{} (d1);
		\draw[-stealth, shorten >=1pt,auto] (nx2m) to [] node []{} (d1);
		\draw[-stealth, shorten >=1pt,auto] (d1) to [] node []{} (d1l1);
		\draw[-stealth, shorten >=1pt,auto] (d1) to [] node []{} (d1l3);
		\draw[-stealth, shorten >=1pt,auto] (d1l1) to [] node []{} (d2);
		\draw[-stealth, shorten >=1pt,auto] (d1l3) to [] node []{} (d2);
		\draw[-stealth, shorten >=1pt,auto] (d2) to [] node []{} (d2l1);
		\draw[-stealth, shorten >=1pt,auto] (d2) to [] node []{} (d2l3);
		\draw[-stealth, shorten >=1pt,auto] (d2l1) to [] node []{} (dq);
		\draw[-stealth, shorten >=1pt,auto] (d2l3) to [] node []{} (dq);		
		\draw[-stealth, shorten >=1pt,auto] (dq) to [] node []{} (dql1);
		\draw[-stealth, shorten >=1pt,auto] (dq) to [] node []{} (dql3);
		\draw[-stealth,shorten >=1pt,auto,in=-15,out=15,looseness=6] (dql1) edge [] node {} (dql1);
		\draw[-stealth,shorten >=1pt,auto,in=-15,out=15,looseness=5.2] (dql3) edge [] node {} (dql3);

		% This is G3
		
		\draw[loosely dashed, rounded corners] (-5.75,-5.25) rectangle (9.5, -2.25) {};
		\node[] at (-5.75 + 0.4, -5.25 + 0.3) {$G_3$};

		\draw[loosely dashed, rounded corners] (-5,-4.5) rectangle (-3.5, -3) {};
		\node[rectangle, minimum size=0.8cm] (j1) at (-4.25,-3.75){$Q_\problemParam$};

		\draw[] (o) to [] ((-5.375,-3.75);
		\draw[-stealth, shorten >=1pt,auto] (-5.375,-3.75) to [] ((-5,-3.75);

		% y
		
		\node[draw, circle, minimum size=0.8cm] (c1) at (-2.5,-3.75){$v_3$};
		\node[draw, rectangle, minimum size=0.8cm] (y1) at (-1.5,-4.75){$\neg y_1$};
		\node[draw, rectangle, minimum size=0.8cm] (ny1) at (-1.5,-2.75){$y_1$};
		\node[draw, circle, minimum size=0.8cm] (c2) at (-0.5,-3.75){};
		\node[minimum size=0.8cm] (ci1) at (0.5,-4.75){$\dots$};
		\node[minimum size=0.8cm] (ci2) at (0.5,-2.75){$\dots$};
		\node[draw, circle, minimum size=0.8cm] (cm) at (1.5,-3.75){};
		\node[draw, rectangle, minimum size=0.8cm] (ym) at (2.5,-4.75){$\neg y_n$};
		\node[draw, rectangle, minimum size=0.8cm] (nym) at (2.5,-2.75){$y_n$};
		
		\draw[-stealth, shorten >=1pt,auto] ((-3.5,-3.75) to [] node []{} (c1);

		\draw[-stealth, shorten >=1pt,auto] (c1) to [] node []{} (y1);
		\draw[-stealth, shorten >=1pt,auto] (c1) to [] node []{} (ny1);
		\draw[-stealth, shorten >=1pt,auto] (y1) to [] node []{} (c2);
		\draw[-stealth, shorten >=1pt,auto] (ny1) to [] node []{} (c2);
		\draw[-stealth, shorten >=1pt,auto] (c2) to [] node []{} (ci1);
		\draw[-stealth, shorten >=1pt,auto] (c2) to [] node []{} (ci2);
		\draw[-stealth, shorten >=1pt,auto] (ci1) to [] node []{} (cm);
		\draw[-stealth, shorten >=1pt,auto] (ci2) to [] node []{} (cm);
		\draw[-stealth, shorten >=1pt,auto] (cm) to [] node []{} (ym);
		\draw[-stealth, shorten >=1pt,auto] (cm) to [] node []{} (nym);			
		
		% d
		
		\node[draw, rectangle, minimum size=0.8cm] (d1) at (3.5,-3.75){};
		\node[draw, rectangle, minimum size=0.8cm] (x31) at (4.5,-4.75){$\neg x_1$};
		\node[draw, rectangle, minimum size=0.8cm] (nx31) at (4.5,-2.75){$x_1$};
		\node[draw, rectangle, minimum size=0.8cm] (d2) at (5.5,-3.75){};
		\node[minimum size=0.8cm] (d2i1) at (6.5,-4.75){$\dots$};
		\node[minimum size=0.8cm] (d2i2) at (6.5,-2.75){$\dots$};
		\node[draw, rectangle, minimum size=0.8cm] (dm) at (7.5,-3.75){};
		\node[draw, rectangle, minimum size=0.8cm] (x3m) at (8.5,-4.75){$\neg x_m$};
		\node[draw, rectangle, minimum size=0.8cm] (nx3m) at (8.5,-2.75){$x_m$};

		\draw[-stealth, shorten >=1pt,auto] (ym) to [] node []{} (d1);		
		\draw[-stealth, shorten >=1pt,auto] (nym) to [] node []{} (d1);	
		\draw[-stealth, shorten >=1pt,auto] (d1) to [] node []{} (x31);
		\draw[-stealth, shorten >=1pt,auto] (d1) to [] node []{} (nx31);
		\draw[-stealth, shorten >=1pt,auto] (x31) to [] node []{} (d2);
		\draw[-stealth, shorten >=1pt,auto] (nx31) to [] node []{} (d2);
		\draw[-stealth, shorten >=1pt,auto] (d2) to [] node []{} (d2i1);
		\draw[-stealth, shorten >=1pt,auto] (d2) to [] node []{} (d2i2);
		\draw[-stealth, shorten >=1pt,auto] (d2i1) to [] node []{} (dm);
		\draw[-stealth, shorten >=1pt,auto] (d2i2) to [] node []{} (dm);
		\draw[-stealth, shorten >=1pt,auto] (dm) to [] node []{} (x3m);
		\draw[-stealth, shorten >=1pt,auto] (dm) to [] node []{} (nx3m);
		\draw[-stealth,shorten >=1pt,auto,in=-15,out=15,looseness=5.2] (x3m) edge [] node {} (x3m);
		\draw[-stealth,shorten >=1pt,auto,in=-15,out=15,looseness=6] (nx3m) edge [] node {} (nx3m);
		
		\end{tikzpicture}
		}%
	\caption{The arena $G$ used in the reduction from the \succinctSetCoverAb{} for safety \gamesAb{}.}
	\label{fig:sscp_game_long_safety}
	\Description{Figure 7. Fully described in the text.}
\end{figure}
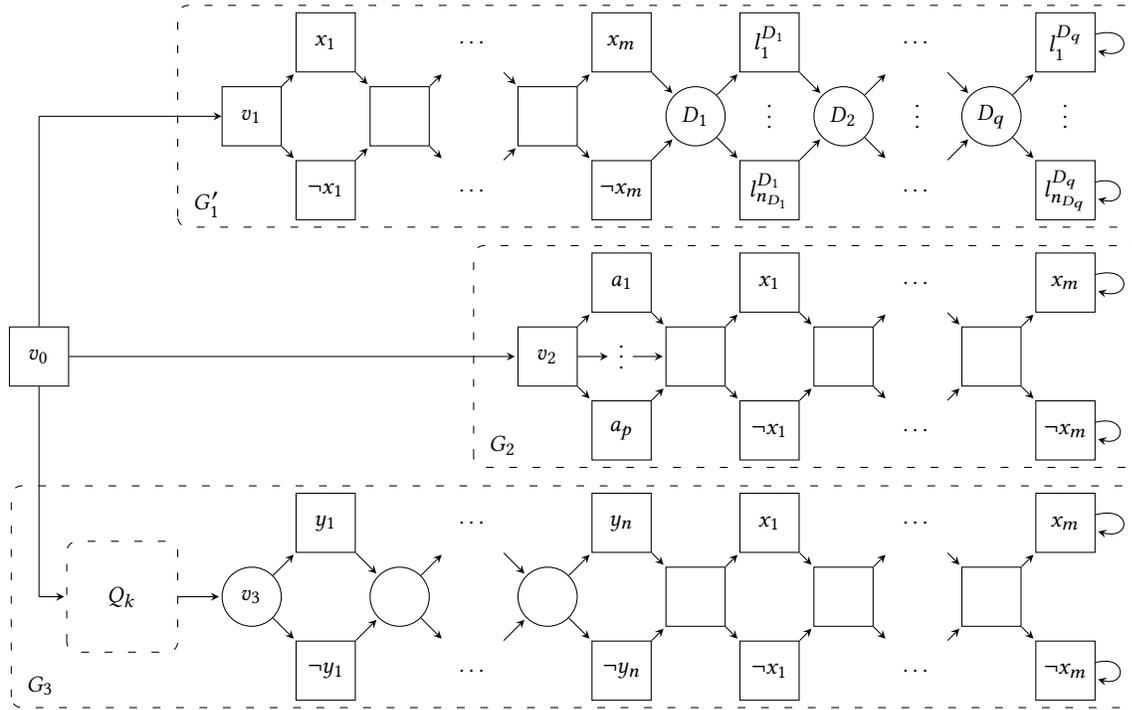

\paragraph{Modified Sub-Arena $G'_1$.}
We assume that each clause $C_j$ of $\phi$, with $j \in \{1, \dots, p\}$, is of the form $C_j = l_1^{C_j} \vee \ldots \vee l_{n_{C_j}}^{C_j}$ and each clause $D_j$ of $\psi$, with $j \in \{1, \dots, q\}$, is of the form $D_j = l_1^{D_j} \vee \ldots \vee l_{n_{D_j}}^{D_j}$.
In the sub-arena $G'_1$, compared to Figure~\ref{sscp_game_long}, we see a new part where Player~$0$ can select for each clause $D_j$ of $\psi$ some literal $l_i^{D_j}$ among the literals composing $D_j$.

\paragraph{Objectives.}
Objectives $\Omega_0$, $\Omega_1$, and $\ObjPlayer{x_i}, \ObjPlayer{\neg x_i}$ with $i \in \{1, \dots, m\}$ retain the same function as in the reduction for reachability objectives, and we therefore define those objectives as the equivalent safety objective to their reachability counterpart in the previous subsection.

We reason differently about the objectives used for the clauses for the following reason. While it was possible to use a single objective per clause in the case of reachability objectives, this reasoning is not valid for safety objectives. In the former case, it sufficed to check that a vertex corresponding to a literal of a clause was reached to ascertain that clause was satisfied. In the latter case, safety objectives only allow us to reason about the literals which are not visited. This does not allow us to define a single objective to ensure that some literal of the clause is visited. We instead associate one safety objective with \emph{each literal} $l$ of each clause $C_j$ and $D_j$. Therefore, there are $n_{C_j}$ (resp. $n_{D_j}$) safety objectives for clause $C_j$ (resp. $D_j$).

% Notice that in $G'_1 \cup G_3$, the translation of the reachability objective $\Omega_{C_j}$ of the previous reduction leads to a \emph{union} of several safety objectives, which explains why we need to consider one objective by literal of the clause $C_j$.

The game is played between Player~$0$ with safety objective $\ObjPlayer{0}$ and Player~$1$ with $\nbrObjectives = 1 + 2 \cdot m + \sum_{j=1}^{p} n_{C_j}	 + \sum_{j=1}^{q} n_{D_j}$ safety objectives. The objectives are defined as follows where $V$ denotes the set of vertices of $G$.

\begin{itemize}
    \item The safe set for objective $\ObjPlayer{0}$ of Player~$0$ and objective $\ObjPlayer{1}$ of Player~$1$ is $V \setminus \{v_1\}$ (which is the safety equivalent of $\reach{\{v_2,v_3\}}$ used for reachability \gamesAb{}).
   
    \item The safe set for objective $\ObjPlayer{x_i}$ with $i \in \{1, \dots, m\}$ is equal to $V$ from which we remove all vertices in $G$ labeled by $\neg x_i$ (which is the safety equivalent of reaching some vertex labeled $x_i$). Similarly, the safe set for objective $\ObjPlayer{\neg x_i}$ consists in $V$ from which we remove all vertices labeled by $x_i$.
    
    \item Consider the objective $\ObjPlayer{l^{C_j}_i}$ with $j \in \{1, \dots, p\}$ and $i \in \{1, \dots, n_{C_j}\}$, corresponding to literal $l^{C_j}_i$ of clause $C_j$. Its safe set is equal to $V$ from which we remove: 
    \begin{itemize}
        \item the vertices in $G'_1 \cup G_3$ labeled by $\neg x_u$ (resp. $x_u$) if $l^{C_j}_i = x_u$ (resp. $l^{C_j}_i = \neg x_u$) for some $u \in \{1, \dots, m\}$ (which is the safety equivalent of reaching vertex $l^{C_j}_i$, therefore satisfying the literal $l^{C_j}_i$),
        \item and vertex $a_j$ in $G_2$.
    \end{itemize}

    \item Consider the objective $\ObjPlayer{l^{D_j}_i}$ with $j \in \{1, \dots, q\}$ and $i \in \{1, \dots, n_{D_j}\}$, corresponding to literal $l^{D_j}_i$ of clause $D_j$. Its safe set if equal to $V$ from which we remove: 
    \begin{itemize}
        \item all vertices $l^{D_j}_\ell$ of $G'_1$ with $\ell \neq i$ (which is the safety equivalent of reaching vertex $l^{D_j}_i$, meaning the objective is satisfied in plays in $G'_1$ only if Player~$0$ chooses to visit $l^{D_j}_i$ among the literals composing $D_j$),
        \item the vertex $\neg x_u$ in $G_3$ (resp. $x_u$, $\neg y_u$, $y_u$) if $l^{D_j}_i = x_u$ (resp. $\neg x_u$, $y_u$, $\neg y_u$) for some $u$ (as we did above for objective $\ObjPlayer{l^{C_j}_i}$, but in this case with variables in $X \cup Y$).
    \end{itemize}
    Notice that the safe set for objective $\ObjPlayer{l^{D_j}_i}$ contains every vertex in $G_2$, meaning that the objective $\ObjPlayer{l^{D_j}_i}$ is always satisfied in $G_2$.
\end{itemize}

\paragraph{Correctness of the Reduction.}
We ascertain the correctness of this reduction using the following properties that are the counterpart of the lemmas proved in the previous subsection for reachability \gamesAb{}, leading to the proof of Theorem~\ref{thm:nexptimehard} for safety \gamesAb{}.

\begin{itemize}
    \item Since we now have an objective per literal per clause of $\psi$, Player~$0$ must now make sure that in his solution to the \problemAb{}, plays in $G_3$ satisfy at least one literal per clause of $\phi$ (in the reduction for reachability objectives, there was one objective per clause and we required all those objectives to be satisfied). To ensure this, for each valuation of the variables in $X$, the rightmost part of $G'_1$ forces Player~$0$ to select one literal per clause of $\psi$ which he announces will be satisfied in his play in $G_3$ covering that valuation.

    \item The plays in $G'_1$ have a payoff of the form $(0, val, sat(\phi, val), w_1, \dots, w_q)$ where $val$ is a valuation of the variables in $X$ expressed as a vector of Booleans for objectives $\ObjPlayer{x_1}$ to $\ObjPlayer{\neg x_m}$, $sat(\phi, val)$ is the vector of Booleans for objectives $\ObjPlayer{l^{C_1}_1}, \dots, \ObjPlayer{l^{C_p}_{n_{C_p}}}$ corresponding to the literals of the clauses of $\phi$ satisfied by that valuation and each $w_j$ is a vector of $n_{D_j}$ Booleans in which only a single Boolean is 1. All plays in $G'_1$ are lost by Player~$0$.

    \item The plays in $G_2$ have a payoff of the form $(1, val, vec, 1, \dots, 1)$ where $val$ is a valuation of the variables in $X$ expressed as a vector of Booleans for objectives $\ObjPlayer{x_1}$ to $\ObjPlayer{\neg x_m}$ and $vec$ is a vector of Booleans for objectives $\ObjPlayer{l^{C_1}_1}, \dots, \ObjPlayer{l^{C_p}_{n_{C_p}}}$ in which all of them except the ones for some clause $C_j$ are satisfied. All plays in $G_2$ are won by Player~$0$.

    \item Let $\sigma_0$ be a strategy of Player~$0$. The set of payoffs of plays in $G'_1$ that are \paretoOptimal{} when considering $G'_1 \cup G_2$ is equal to the set of payoffs of plays in $G'_1$ whose valuation of $X$ satisfy $\phi$. Those payoffs are the problematic payoffs for which Player~$0$ has to create strictly larger payoffs in $G_3$ for strategy $\sigma_0$ to be a solution to the problem.
    
    \item In order to create a play in $G_3$ with a payoff $r'$ that is strictly larger than a problematic payoff $r = (0, val, sat(\phi, val), w_1, \dots, w_q)$ in $G'_1$, $\sigma_0$ must be able to choose a valuation of $Y$ which together with the valuation $val$ of $X$ satisfies $\psi$ in the same way as announced in $r$. That is, it satisfies at least the announced literal $l_i^{D_j}$ in $w_j$ for each $j \in  \{1, \dots, q\}$. If this is the case, the payoff of the corresponding play is therefore $r' = (1, val, sat(\phi, val), w'_1, \dots, w'_q)$ with $w_i' \geq w_i$ for $i\in \{1, \dots, q\}$ and it follows that $r < r'$.
\end{itemize}

%In sub-arena $G_1$, after a valuation of the variables in $X$ has been chosen by visiting $x_i$ or $\neg x_i$ for each $i \in \{1, \dots, m\}$, Player~$0$ decides for each clause $D_j$ of $D$ which literal in that clause will be satisfied in his solution to the problem (as in order to cover this valuation of $X$ he must select a valuation of $Y$ which satisfies all clauses $D_j$ of $D$ along with all clauses $C_i$ of $C$).

\subsection{\textsf{NEXPTIME}-Hardness for SP Games with Prefix-Independent Objectives}
%=====================================================================================

We finally provide the proof of Theorem~\ref{thm:nexptimehard} for \gamesAb{} with prefix-independent objectives (except for B\"uchi and co-B\"uchi objectives for which we have previously shown the \npHard ness). We first consider the case of parity objectives, the proof of which follows arguments related to the ones used in the reduction for reachability and safety \gamesAb{}. At the end of the subsection, we then explain how the \nexptime-hardness result for parity \gamesAb{} easily leads to the hardness result for \gamesAb{} with Boolean B\"uchi, Muller, Streett, and Rabin objectives, by using results from Section~\ref{sec:link}.

The proof of the \nexptime-hardness of the \problemAb{} for parity \gamesAb{} again uses a reduction from the \succinctSetCoverAb{} in which we construct a specific arena $G''$. However, we here use a different approach to define problematic payoffs which are to be covered in the reduction. While letting Player~$0$ select $\problemParam$ valuations of the variables in $Y$ (corresponding to $\problemParam$ sets) in $G_3$ was straightforward in the previous reductions, doing so for parity objectives is more difficult. This is because, in order to make the reduction work with parity objectives, plays in $G''$ need to have a repeating pattern. If such a pattern appeared in $G_3$ (for example with edges from $x_m$ and $\neg x_m$ to $v_3$), it would be possible for Player~$0$ to select more than $\problemParam$ valuations of $Y$. Indeed, the number of plays in $G_3$ consistent with any of his strategies would be infinite due to the presence of vertices of Player~$1$\footnote{The proof of the \nexptime-hardness of the \problemAb{} for parity objectives provided in \cite{DBLP:conf/concur/BruyereRT21} is erroneous due to this argument.}. The arena $G''$ used for the reduction is depicted in Figure~\ref{fig:succinctparity} and contains four sub-arenas $G''_1$, $G''_2$, $G''_3$ and $G''_4$.

\paragraph{Intuition of the Reduction.} The intuition behind the reduction is the following. In sub-arena $G''_1$, Player~$1$ selects a valuation of the variables in $X$ and Player~$0$ then selects a valuation of the variables in $Y$, with this process being repeated to form a play in $G''_1$. The payoff of that play corresponds to a valuation of the variables in $X$ and $Y$, together with the clauses of $\phi$ and $\psi$ satisfied by those valuations. In addition, the objective of Player~$0$ is not satisfied in those plays. Sub-arenas $G''_2$ and $G''_3$ are devised such that the payoff of plays entering them is strictly larger than that of plays in $G''_1$ which do not satisfy $\phi$ or which are not proper valuations of $X$. As the objective of Player~$0$ is satisfied in those plays, and in order for Player~$0$ to have a solution to the \problemAb{}, it remains to make sure that plays in $G''_1$ whose valuation of $X$ is proper and satisfies $\phi$ are not \paretoOptimal{}. This is only the case when the instance of the \succinctSetCoverAb{} is positive. In that case, given any valuation of $X$ which satisfies $\phi$ selected by Player~$1$ in $G''_1$, Player~$0$ can select a valuation of $Y$ which together with this valuation of $X$ satisfies $\psi$. It holds that he can do so with $k$ different valuations of $Y$. Then, in $G''_4$, Player~$0$ is allowed exactly $k$ different plays in which he selects those valuations of $Y$, with their payoff being strictly larger than that of plays in $G''_1$ which satisfy both $\phi$ and $\psi$. 

\paragraph{Structure of a Payoff.} 
We now detail the objectives used in the reduction and the corresponding structure of a payoff in $G''$. The game is played between Player~$0$ with parity objective $\ObjPlayer{0}$ and Player~$1$ with $\nbrObjectives = 1 + 2 \cdot m + 2 \cdot n + p + q$ parity objectives (recall that there are $n$ variables in $X$, $m$ variables in $Y$, $p$ clauses in $\phi$ and $q$ clauses in $\psi$). The payoff of a play in $G''$ therefore consists in a vector of $\nbrObjectives$ %$1 + 2 \cdot m + 2 \cdot n + p + q$ 
Booleans for the following objectives:
$$(\Omega_{1}, \Omega_{x_1}, \Omega_{\neg x_1}, \dots, \Omega_{x_m}, \Omega_{\neg x_m}, \Omega_{y_1}, \Omega_{\neg y_1}, \dots, \Omega_{y_n}, \Omega_{\neg y_n}, \Omega_{C_1}, \dots, \Omega_{C_p},\Omega_{D_1}, \dots,\Omega_{D_q}).$$
Notice that, compared to the reduction for reachability objectives, we now also include objectives for the variables in $Y$. The priority function $c$ of objective $\Omega_0 = \Omega_1$ is such that $c(v_1) = 1$ and that every other vertex has priority $2$. Since plays in $G''_1$ visits $v_1$ infinitely often, they are the only ones not to satisfy $\Omega_0$ nor $\Omega_1$. We detail the priority function of the other objectives for the vertices in each sub-arena later.

\begin{figure}
	\centering
	\resizebox{\textwidth}{!}{%
		
		\begin{tikzpicture}
		
		\node[draw, rectangle, minimum size=0.8cm] (o) at (-4.25,6.25){$v_0$};
		
		% This is G2
		
		\draw[loosely dashed, rounded corners] (-5.1,1) rectangle (-2.5, 4.25) {};
        \node[] at (-5.1 + 0.4, 1 + 0.3) {$G''_2$};

		\node[draw, rectangle, minimum size=0.8cm] (b1) at (-3.5,3.75){$a_1$};
		\node[draw, rectangle, minimum size=0.8cm] (x11) at (-4.5,2.75){$v_2$};
		\node[draw, rectangle, minimum size=0.8cm] (b2) at (-3.5,1.75){$a_m$};
		\node[rectangle, minimum size=0.5cm] (b22) at (-3.5,2.75){$\vdots$};

		\draw[-stealth, shorten >=1pt,auto] (x11) to [] node []{} (b2);
		\draw[-stealth, shorten >=1pt,auto] (x11) to [] node []{} (b22);
		\draw[-stealth, shorten >=1pt,auto] (x11) to [] node []{} (b1);
		
		\draw[-stealth, shorten >=1pt,auto] (o.240) to [] node []{} (x11);
		
		\draw[-stealth,shorten >=1pt,auto,in=-15,out=15,looseness=6] (b1) edge [ ] node {} (b1);
		\draw[-stealth,shorten >=1pt,auto,in=-15,out=15,looseness=6] (b2) edge [ ] node {} (b2);
		
		% This is G3
		
        \draw[loosely dashed, rounded corners] (-2.1,1) rectangle (0.5, 4.25)  {};
        \node[] at (-2.1 + 0.4, 1 + 0.3) {$G''_3$};

		\node[draw, rectangle, minimum size=0.8cm] (bm) at (-0.5,3.75){$b_1$};
		\node[draw, rectangle, minimum size=0.8cm] (x1m) at (-1.5,2.75){$v_3$};
		\node[draw, rectangle, minimum size=0.8cm] (end1) at (-0.5,1.75){$b_p$};
		\node[rectangle, minimum size=0.5cm] (l1) at (-0.5,2.75){$\vdots$};

		\draw[-stealth, shorten >=1pt,auto] (x1m) to [] node []{} (bm);
		\draw[-stealth, shorten >=1pt,auto] (x1m) to [] node []{} (l1);
		\draw[-stealth, shorten >=1pt,auto] (x1m) to [] node []{} (end1);
		
		\draw[-stealth,shorten >=1pt,auto,in=-15,out=15,looseness=6] (bm) edge [ ] node {} (bm);
		\draw[-stealth,shorten >=1pt,auto,in=-15,out=15,looseness=6] (end1) edge [ ] node {} (end1);

		\draw[] (o) to [] (-4.25, 4.4);
		\draw[] (-4.25, 4.4) to [] (-1.5, 4.4);
		\draw[-stealth, shorten >=1pt,auto] (-1.5, 4.4) to [] (x1m);

		% This is G1	
		\draw[loosely dashed, rounded corners] (-3.1, 4.75) rectangle (10.1, 8) {};
		\node[] at (-3.1 + 0.4, 4.75 + 0.3) {$G''_1$};
		
		\node[draw, rectangle, minimum size=0.8cm] (a1) at (-2.5,6.25){$v_1$};
		\node[draw, rectangle, minimum size=0.8cm] (x21) at (-1.5,7.25){$x_1$};
		\node[draw, rectangle, minimum size=0.8cm] (nx21) at (-1.5,5.25){$\neg x_1$};
		\node[draw, rectangle, minimum size=0.8cm] (a2) at (-0.5,6.25){};
		\node[minimum size=0.8cm] (a2i1) at (0.5,7.25){$\dots$};
		\node[minimum size=0.8cm] (a2i2) at (0.5,5.25){$\dots$};
		\node[draw, rectangle, minimum size=0.8cm] (am) at (1.5,6.25){};
		\node[draw, rectangle, minimum size=0.8cm] (x2m) at (2.5,7.25){$x_m$};
		\node[draw, rectangle, minimum size=0.8cm] (nx2m) at (2.5,5.25){$\neg x_m$};
		\node[draw, circle, minimum size=0.8cm] (d1) at (3.5,6.25){$v'_1$};
		\node[draw, circle, minimum size=0.8cm] (d1l3) at (4.5,5.25){$\neg y_1$};
		\node[rectangle, minimum size=0.8cm] (d1l2) at (4.5,6.35){};
		\node[draw, circle, minimum size=0.8cm] (d1l1) at (4.5,7.25){$y_1$};
		\node[draw, circle, minimum size=0.8cm] (d2) at (5.5,6.25){};
		\node[rectangle, minimum size=0.8cm] (d2l3) at (6.5,5.25){$\dots$};
		\node[rectangle, minimum size=0.8cm] (d2l2) at (6.5,6.35){};
		\node[rectangle, minimum size=0.8cm] (d2l1) at (6.5,7.25){$\dots$};
		\node[draw, circle, minimum size=0.8cm] (dq) at (7.5,6.25){};
		\node[draw,circle, minimum size=0.8cm] (dql3) at (8.5,5.25){$\neg y_n$};
		\node[rectangle, minimum size=0.8cm] (dql2) at (8.5,6.35){};
		\node[draw,circle, minimum size=0.8cm] (dql1) at (8.5,7.25){$y_n$};
		\node[draw, circle, minimum size=0.8cm] (endg1) at (9.5,6.25){$v''_1$};

		%\draw[-stealth] (o) to [] node []{} (a1);
		\draw[-stealth] (o)  to [] (a1);
		\draw[-stealth, shorten >=1pt,auto] (a1) to [] node []{} (x21);
		\draw[-stealth, shorten >=1pt,auto] (a1) to [] node []{} (nx21);
		\draw[-stealth, shorten >=1pt,auto] (x21) to [] node []{} (a2);
		\draw[-stealth, shorten >=1pt,auto] (nx21) to [] node []{} (a2);
		\draw[-stealth, shorten >=1pt,auto] (a2) to [] node []{} (a2i1);
		\draw[-stealth, shorten >=1pt,auto] (a2) to [] node []{} (a2i2);
		\draw[-stealth, shorten >=1pt,auto] (a2i1) to [] node []{} (am);
		\draw[-stealth, shorten >=1pt,auto] (a2i2) to [] node []{} (am);
		\draw[-stealth, shorten >=1pt,auto] (am) to [] node []{} (x2m);
		\draw[-stealth, shorten >=1pt,auto] (am) to [] node []{} (nx2m);		
		\draw[-stealth, shorten >=1pt,auto] (x2m) to [] node []{} (d1);
		\draw[-stealth, shorten >=1pt,auto] (nx2m) to [] node []{} (d1);
		\draw[-stealth, shorten >=1pt,auto] (d1) to [] node []{} (d1l1);
		\draw[-stealth, shorten >=1pt,auto] (d1) to [] node []{} (d1l3);
		\draw[-stealth, shorten >=1pt,auto] (d1l1) to [] node []{} (d2);
		\draw[-stealth, shorten >=1pt,auto] (d1l3) to [] node []{} (d2);
		\draw[-stealth, shorten >=1pt,auto] (d2) to [] node []{} (d2l1);
		\draw[-stealth, shorten >=1pt,auto] (d2) to [] node []{} (d2l3);
		\draw[-stealth, shorten >=1pt,auto] (d2l1) to [] node []{} (dq);
		\draw[-stealth, shorten >=1pt,auto] (d2l3) to [] node []{} (dq);		
		\draw[-stealth, shorten >=1pt,auto] (dq) to [] node []{} (dql1);
		\draw[-stealth, shorten >=1pt,auto] (dq) to [] node []{} (dql3);
		\draw[-stealth, shorten >=1pt,auto] (dql1) to [] node []{} (endg1);
		\draw[-stealth, shorten >=1pt,auto] (dql3) to [] node []{} (endg1);
		
		\draw[] (endg1) to [] (9.5,7.8125);
		\draw[] (9.5,7.8125) to [] (-2.5,7.8125);
		\draw[-stealth, shorten >=1pt,auto] (-2.5,7.8125) to [] (a1);

		% This is G4
		
		\draw[loosely dashed, rounded corners] (0.9,1) rectangle (10.1, 4.25) {};
		\node[] at (0.9 + 0.4,1 + 0.3) {$G''_4$};

		\draw[loosely dashed, rounded corners] (1.25,2) rectangle (2.75, 3.5) {};
		\node[rectangle, minimum size=0.8cm] (j1) at (2,2.75){$Q_\problemParam$};

		\draw[] (o.300) to [] (-4, 4.6);
		\draw[] (-4, 4.6) to [] (2, 4.6);
		\draw[-stealth, shorten >=1pt,auto] (2, 4.6) to [] (2, 3.5);

		\node[draw,circle, minimum size=0.8cm] (d1) at (3.5,2.75){$v_4$};
		\node[draw, circle, minimum size=0.8cm] (x31) at (4.5,1.75){$\neg y_1$};
		\node[draw, circle, minimum size=0.8cm] (nx31) at (4.5,3.75){$y_1$};
		\node[draw, circle, minimum size=0.8cm] (d2) at (5.5,2.75){};
		\node[minimum size=0.8cm] (d2i1) at (6.5,1.75){$\dots$};
		\node[minimum size=0.8cm] (d2i2) at (6.5,3.75){$\dots$};
		\node[draw, circle, minimum size=0.8cm] (dm) at (7.5,2.75){};
		\node[draw, circle, minimum size=0.8cm] (x3m) at (8.5,1.75){$\neg y_n$};
		\node[draw, circle, minimum size=0.8cm] (nx3m) at (8.5,3.75){$y_n$};
		\node[draw, circle, minimum size=0.8cm] (eng3) at (9.5,2.75){};
				
		\draw[-stealth, shorten >=1pt,auto] (d1) to [] node []{} (x31);
		\draw[-stealth, shorten >=1pt,auto] (d1) to [] node []{} (nx31);
		\draw[-stealth, shorten >=1pt,auto] (x31) to [] node []{} (d2);
		\draw[-stealth, shorten >=1pt,auto] (nx31) to [] node []{} (d2);
		\draw[-stealth, shorten >=1pt,auto] (d2) to [] node []{} (d2i1);
		\draw[-stealth, shorten >=1pt,auto] (d2) to [] node []{} (d2i2);
		\draw[-stealth, shorten >=1pt,auto] (d2i1) to [] node []{} (dm);
		\draw[-stealth, shorten >=1pt,auto] (d2i2) to [] node []{} (dm);
		\draw[-stealth, shorten >=1pt,auto] (dm) to [] node []{} (x3m);
		\draw[-stealth, shorten >=1pt,auto] (dm) to [] node []{} (nx3m);
		\draw[-stealth, shorten >=1pt,auto] (x3m) to [] node []{} (eng3);
		\draw[-stealth, shorten >=1pt,auto] (nx3m) to [] node []{} (eng3);

		\draw[] (eng3) to [] (9.5,1.1874);
		\draw[] (9.5,1.1874) to [] (3.5,1.1874);
		\draw[-stealth,shorten >=1pt,auto] (3.5,1.1874) to [] (d1);
		
		\draw[-stealth, shorten >=1pt,auto] (2.75,2.75) to [] node []{} (d1);
		
		\end{tikzpicture}
		}%
	\caption{The arena $G'$ used in the reduction from the \succinctSetCoverAb{} for parity \gamesAb{}.}
	\label{fig:succinctparity}
	\Description{Figure 9. Fully described in the text.}
\end{figure}
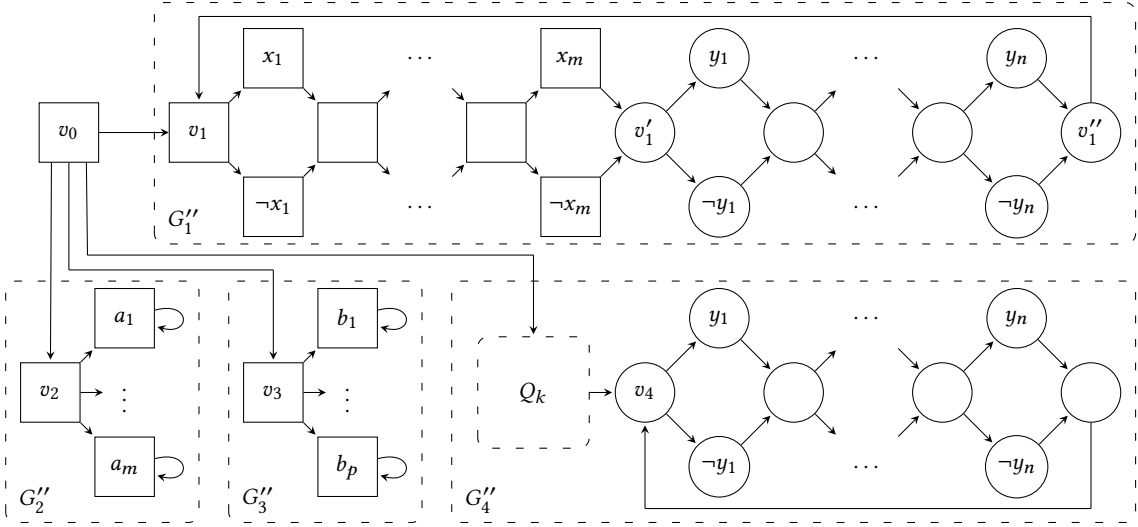

\paragraph{Sub-Arena $G''_1$.}
The intuition behind sub-arena $G''_1$ is as follows (we formally define the objectives in that sub-arena later on). Consider a play $\rho$ in $G''_1$ where Player~$0$ and Player~$1$ always make the same choice of literals for the variables in $X$ and $Y$. The payoff of $\rho$ corresponds to a valuation of these variables expressed using the objectives for their literals. In addition, the objective for clause $C$ of $\phi$ is satisfied in the payoff of $\rho$ if and only if the valuation of $X$ satisfies $C$. The objective for clause $D$ of $\psi$ is satisfied if and only if together, valuations $X$ and $Y$ \emph{falsify} $D$. Since plays in $G''_1$ are lost by Player~$0$, the choices he makes in $G''_1$ aim at making sure that the payoff of these plays is not \paretoOptimal{} when considering the whole arena $G''$. We will see that plays in $G''_1$ whose valuations satisfy $\phi$ and falsify $\psi$ do not have a payoff that is strictly smaller than some other in $G''$. This forces Player~$0$ in $G''_1$ to select valuations of $Y$ which falsifies the objectives for the clauses of $\psi$ (and therefore satisfies $\psi$) whenever $\phi$ is also satisfied.

\paragraph{Sub-Arena $G''_2$.}
All vertices in $G''_2$ belong to Player~$1$. There are $m$ possible plays in $G''_2$, one for each variable $x$ of $X$. Given $i \in \{1, \dots, m\}$, we define the priority function $c$ of objectives $\Omega_{x_i}$ (resp. $c'$ of $\Omega_{\neg x_i}$) for vertices in $G''_2$ such that $c(a_i) = 1$ (resp. $c'(a_i) = 1$) and such that every other vertex in $G''_2$ has priority 2 according to $c$ (resp. $c'$). The priority function $c$ of every other objective is such that every vertex of $G''_2$ has priority 2. Let us consider the play $\rho = v_0 v_2 a_i^\omega$ in $G''_2$ for variable $x_i$, it holds that $\rho$ satisfies \emph{(i)} the objective of Player~$0$ and objective $\Omega_1$ of Player~$1$ and \emph{(ii)} every other objective except $\Omega_{x_i}$ and $\Omega_{\neg x_i}$.
\begin{lemma}
    \label{lem:eliminate_unstable}
	Plays in $G''_2$ are consistent with any strategy of Player~$0$ and their payoff is of the form $(1, s_1, \dots, s_{m}, 1, \dots, 1)$ such that $s_i = (0,0)$ for some $i \in \{1, \dots, m\}$ and $s_j = (1, 1)$ for $j \neq i$.
\end{lemma}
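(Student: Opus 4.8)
The plan is to first settle the consistency claim from the ownership of the vertices involved, then to enumerate the plays that live in $G''_2$, and finally to read off the payoff of each of them by evaluating every parity objective on it. First I would recall that $v_0$ is controlled by Player~$1$ and that, by construction, all vertices of $G''_2$ are controlled by Player~$1$ as well; hence a play that enters $G''_2$ only visits vertices of Player~$1$, and since a strategy of Player~$0$ restricts the moves taken at vertices of $V_0$ only, every such play is consistent with every strategy $\sigma_0$ of Player~$0$. Inspecting the edges of $G''_2$, the only successor of $v_0$ inside $G''_2$ is $v_2$, from $v_2$ Player~$1$ may move to any of $a_1,\dots,a_m$, and each $a_i$ has its self-loop as its unique outgoing edge; therefore the plays of the game that enter $G''_2$ are exactly $\rho_i = v_0\, v_2\, a_i^{\omega}$ for $i \in \{1,\dots,m\}$, so that $\occ{\rho_i}=\{v_0,v_2,a_i\}$ and $\infOcc{\rho_i}=\{a_i\}$.

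Next I would compute, for a fixed $i$, the extended payoff $(\won{\rho_i},\payoff{\rho_i})$ objective by objective, the only subtlety being that distinct objectives use distinct priority functions on $G''_2$. For $\ObjPlayer{0} = \ObjPlayer{1}$ the unique vertex of odd priority is $v_1$, which does not belong to $\occ{\rho_i}$, so the minimal priority seen infinitely often along $\rho_i$ is $2$ and hence $\won{\rho_i}=1$ and $\payoffObj{1}{\rho_i}=1$. For $\Omega_{x_i}$ and $\Omega_{\neg x_i}$ the priority functions assign priority $1$ to $a_i$, so the minimal priority seen infinitely often is $1$, odd, and both objectives fail; thus the $i$-th pair of the payoff is $s_i=(0,0)$. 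For $\Omega_{x_j}$ and $\Omega_{\neg x_j}$ with $j \ne i$, only $a_j$ carries priority $1$ in $G''_2$ whereas $a_i$ carries priority $2$; since $a_j \notin \occ{\rho_i}$ the minimal priority seen infinitely often is $2$, even, so $s_j=(1,1)$. Finally, for every remaining objective, namely $\Omega_{y_\ell},\Omega_{\neg y_\ell}$ for $\ell \le n$ and $\Omega_{C_\ell},\Omega_{D_\ell}$ for the clauses, the priority function gives priority $2$ to every vertex of $G''_2$, so all of these objectives are satisfied by $\rho_i$ and contribute the trailing block of $1$'s. Collecting these values in the order fixed for the payoff vector yields $\payoff{\rho_i}=(1,s_1,\dots,s_m,1,\dots,1)$ with $s_i=(0,0)$ and $s_j=(1,1)$ for $j \neq i$, which is the announced form.

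I do not expect a genuine obstacle here: the statement is a direct unfolding of the definition of the parity objective along the eventually-periodic play $\rho_i$. The one point that needs care, and that I would make explicit, is bookkeeping over \emph{which} priority function is being evaluated, because the vertex $a_i$ has priority $1$ only for the two objectives $\Omega_{x_i}$ and $\Omega_{\neg x_i}$ and priority $2$ for all other objectives; this is exactly what forces the $i$-th coordinate pair down to $(0,0)$ while every other coordinate stays at $1$.
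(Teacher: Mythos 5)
Your proposal is correct and follows exactly the reasoning the paper uses (the lemma is stated there as a direct summary of the preceding description of $G''_2$ and its priority functions): all of $G''_2$, including $v_0$, belongs to Player~$1$, the plays are exactly $v_0 v_2 a_i^\omega$, and the payoff is read off objective by objective, with only $\Omega_{x_i}$ and $\Omega_{\neg x_i}$ failing because $a_i$ has priority $1$ precisely for those two objectives. Your explicit bookkeeping of which priority function applies to which objective is the only point requiring care, and you handle it correctly.
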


\paragraph{Sub-Arena $G''_3$.}
All vertices in $G''_3$ belong to Player~$1$. There are $p$ possible plays in $G''_3$, one for each clause $C$ of $\phi$. Given $i \in \{1, \dots, p\}$, we define the priority function $c$ of objectives $\Omega_{C_i}$ for vertices in $G''_3$ such that $c(b_i) = 1$ and such that every other vertex in $G''_3$ has priority 2 according to $c$. The priority function $c$ of every other objective is such that every vertex of $G''_3$ has priority 2. Let us consider the play $\rho = v_0 v_3 b_i^\omega$ in $G''_3$ for clause $C_i$, it holds that $\rho$ satisfies \emph{(i)} the objective of Player~$0$ and objective $\Omega_1$ of Player~$1$ and \emph{(ii)} every other objective except $\Omega_{C_i}$.
\begin{lemma}
    \label{lem:eliminate_not_phi}
	Plays in $G''_3$ are consistent with any strategy of Player~$0$ and their payoff is of the form $(1, 1, \dots, 1, 1, \dots, 1, r_1, \dots, r_p, 1, \dots, 1)$ such that $r_i = 0$ for some $i \in \{1, \dots, p\}$ and $r_j = 1$ for $j \neq i$.
\end{lemma}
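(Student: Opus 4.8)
The plan is to verify the two assertions of the lemma directly from the shape of $G''_3$ and the priorities assigned to its vertices. First I would observe that every vertex of $G''_3$ is controlled by Player~$1$ and that $G''_3$ is entered only through the edge from $v_0$ to $v_3$, with $v_0$ itself a vertex of Player~$1$. Hence a play that reaches $G''_3$ involves no choice of Player~$0$ whatsoever: from $v_3$, Player~$1$ moves to some $b_i$ with $i \in \{1,\dots,p\}$, and from $b_i$ the only available move is the self-loop. The plays of $G''_3$ are therefore exactly the plays $\rho_i = v_0\, v_3\, b_i^\omega$, $i \in \{1,\dots,p\}$, and each $\rho_i$ is consistent with every strategy $\sigma_0$ of Player~$0$, which establishes the first assertion.

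Then I would compute the (extended) payoff of $\rho_i$. Since all objectives of the game are parity objectives, hence prefix-independent, the payoff depends only on $\infOcc{\rho_i} = \{b_i\}$, so it suffices to read off, for each objective, the priority of $b_i$. For $\Omega_0 = \Omega_1$ the priority function assigns $1$ to $v_1$ and $2$ to every other vertex, and since $v_1 \notin \infOcc{\rho_i}$ the least priority seen infinitely often is $2$, which is even; thus $\rho_i$ satisfies $\Omega_0$ and $\Omega_1$ and $\won{\rho_i} = 1$. For the objectives $\Omega_{x_j}, \Omega_{\neg x_j}$ and $\Omega_{y_j}, \Omega_{\neg y_j}$ attached to the literals of $X$ and $Y$, and for the objectives $\Omega_{D_j}$ attached to the clauses of $\psi$, every vertex of $G''_3$ --- in particular $b_i$ --- has priority $2$, so all of these objectives are satisfied. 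For an objective $\Omega_{C_j}$ attached to a clause of $\phi$ with $j \neq i$, vertex $b_i$ has priority $2$, so $\Omega_{C_j}$ is satisfied; for $\Omega_{C_i}$, vertex $b_i$ has priority $1$, so the least priority seen infinitely often is $1$, which is odd, and $\Omega_{C_i}$ is not satisfied. Assembling these values in the fixed ordering of the objectives yields the announced payoff $(1, 1, \dots, 1, 1, \dots, 1, r_1, \dots, r_p, 1, \dots, 1)$ with $r_i = 0$ and $r_j = 1$ for $j \neq i$.

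Being a routine verification, this lemma does not present a real obstacle; the only point worth making explicit is that the priorities carried by $v_0$ and $v_3$ --- which could a priori be odd for some objective --- play no role, precisely because these vertices occur only finitely often along every play of $G''_3$ and the objectives are prefix-independent. The companion statement for $G''_2$ (Lemma~\ref{lem:eliminate_unstable}) is obtained in exactly the same way, reading off instead the priorities of the self-loop vertices $a_i$.
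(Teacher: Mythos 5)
Your proof is correct and follows exactly the argument the paper gives (implicitly, in the paragraph describing $G''_3$ and its priority functions): all vertices of $G''_3$ belong to Player~$1$, the only plays are $v_0\,v_3\,b_i^\omega$, and since the objectives are prefix-independent the payoff is read off from the priority of $b_i$ for each objective, which is $1$ only for $\Omega_{C_i}$ and $2$ for everything else. Your explicit remark that the priorities of the finitely-visited vertices $v_0$ and $v_3$ are irrelevant is a correct (and welcome) clarification, not a deviation from the paper's reasoning.
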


\paragraph{$X$-Stability of Plays in Sub-Arena $G''_1$.}
We define the priority function $c$ of objective $\Omega_{x}$ (resp. $c'$ of $\Omega_{\neg x}$) for the vertices in $G''_1$ such that $c(x) = 2$, $c(\neg x) = 1$ (resp. $c'(\neg x) = 2$ and $c'(x) = 1$) for the vertices labelled $x$ and $\neg x$ in $G''_1$ and such that every other vertex in $G''_1$ has priority 2 according to $c$ (resp. $c'$).

Notice that in $G''_1$, Player~$1$ first decides to visit one literal $x$ or $\neg x$ for each variable $x \in X$. Following this, Player~$0$ decides to visit one literal $y$ or $\neg y$ for each variable $y \in Y$. This procedure is repeated infinitely often to form a play in $G''_1$. Between two visits of $v_1$ in a play in $G''_1$, the choices made by either player can be different. We call \emph{$X$-unstable} those plays which visit both $x$ and $\neg x$ infinitely often for some $x \in X$ and \emph{$X$-stable} those which visit infinitely often $x$ and finitely often $\neg x$ or infinitely often $\neg x$ and finitely often $x$ for each $x \in X$. Given an $X$-unstable play $\rho$, we write $u(\rho) \in \{1, \dots, m\}$ the smallest index $i$ such that both $x_i$ and $\neg x_i$ are visited infinitely often. It is direct to see that the payoff of an $X$-stable play for objectives $\Omega_{x_1}, \dots, \Omega_{\neg x_m}$ can be interpreted as a proper valuation of the variables in $X$ expressed as a vector of $2 \cdot m$ Booleans (as either the objective for literal $x_i$ or $\neg x_i$ is satisfied for each $i \in \{1, \dots, m\}$). We introduce the following lemma on the $X$-instability of plays in $G''_1$.
\begin{lemma}
    \label{lem:par_stability}
	Let $\sigma_0$ be a strategy for Player~$0$. Let $\rho$ be a play consistent with this strategy in $G''_1$. If $\rho$ is $X$-unstable, it does not have a \paretoOptimal{} payoff.
\end{lemma}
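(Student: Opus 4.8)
The plan is to show that whenever $\rho$ is $X$-unstable there is a play $\rho'$ consistent with $\sigma_0$ with $\payoff{\rho} < \payoff{\rho'}$, which immediately gives $\payoff{\rho} \notin \paretoSet{\sigma_0}$, i.e.\ $\rho$ has no Pareto-optimal payoff. Let $i = u(\rho) \in \{1, \dots, m\}$ be the smallest index such that both $x_i$ and $\neg x_i$ occur infinitely often along $\rho$; this is well defined precisely because $\rho$ is $X$-unstable. The witness $\rho'$ will be the play $v_0 v_2 a_i^\omega$ of $G''_2$. Since every vertex of $G''_2$ belongs to Player~$1$, Lemma~\ref{lem:eliminate_unstable} guarantees that $\rho' \in \Playsigmazero$ and that $\payoff{\rho'}$ has the form $(1, s_1, \dots, s_m, 1, \dots, 1)$ with $s_i = (0,0)$ and $s_j = (1,1)$ for every $j \neq i$, the final block of $1$'s covering the literals of $Y$ and all clause objectives of $\phi$ and $\psi$.

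Next I would compare $\payoff{\rho}$ and $\payoff{\rho'}$ coordinate by coordinate. For the $\Omega_1$-coordinate: $\rho$ stays inside $G''_1$ and hence visits $v_1$ infinitely often, so by the choice of the priority function of $\Omega_0 = \Omega_1$ it satisfies neither $\Omega_0$ nor $\Omega_1$; thus this coordinate equals $0$ in $\payoff{\rho}$ and $1$ in $\payoff{\rho'}$. For the $(x_i, \neg x_i)$-block, recall that on $G''_1$ the priority function of $\Omega_{x_i}$ assigns priority $1$ to the vertices labelled $\neg x_i$ and priority $2$ to every other vertex, and symmetrically the priority function of $\Omega_{\neg x_i}$ assigns priority $1$ to the vertices labelled $x_i$; since both $x_i$ and $\neg x_i$ occur infinitely often along $\rho$, the least priority seen infinitely often is $1$ for each of these two objectives, so $\rho$ satisfies neither, and this block equals $(0,0)$ in $\payoff{\rho}$, matching $\payoff{\rho'}$. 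For every remaining coordinate (the literals of the other $X$-variables, all literals of $Y$, and all clause objectives), the corresponding coordinate of $\payoff{\rho'}$ equals $1$ and therefore dominates that of $\payoff{\rho}$.

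Combining these comparisons yields $\payoff{\rho} \leq \payoff{\rho'}$ with strict inequality in the $\Omega_1$-coordinate, hence $\payoff{\rho} < \payoff{\rho'}$; since both plays belong to $\Playsigmazero$, we conclude $\payoff{\rho} \notin \paretoSet{\sigma_0}$. I do not expect a genuine obstacle here: the argument is pure bookkeeping on payoffs. The only point deserving care is checking that the $(x_i, \neg x_i)$-block of $\payoff{\rho}$ is \emph{exactly} $(0,0)$ rather than something incomparable with the corresponding block of $\rho'$; this relies on the orientation of the priority functions on $G''_1$ being such that visiting a literal infinitely often forces the objective attached to the opposite literal to be violated, and on $\rho$ meeting no priority other than $1$ or $2$ for these objectives because it never leaves $G''_1$. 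One should also keep in mind that Pareto-optimality of $\payoff{\rho}$ is judged only against $\payoff{}$-vectors of plays in $\Playsigmazero$, so producing the single dominating play $\rho'$ suffices.
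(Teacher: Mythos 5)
Your proposal is correct and follows essentially the same route as the paper: it exhibits the play $\rho' = v_0 v_2 a_{u(\rho)}^\omega$ in $G''_2$, invokes Lemma~\ref{lem:eliminate_unstable} for its consistency and payoff, and compares coordinates, with strictness coming from the $\Omega_1$-coordinate and the $(x_{u(\rho)},\neg x_{u(\rho)})$-block being $(0,0)$ in both plays. The extra care you take in checking that this block of $\payoff{\rho}$ is exactly $(0,0)$ via the priority assignments on $G''_1$ is exactly the point the paper's proof also relies on.
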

\begin{proof}
Let $\sigma_0$ be a strategy for Player~$0$. Let us consider the play $\rho = v_0 v_1 \dots$ consistent with $\sigma_0$ in $G''_1$ which is $X$-unstable. It holds that $x_{u(\rho)} \in X$ and that both $x_{u(\rho)}$ and $\neg x_{u(\rho)}$ are visited infinitely often in $\rho$. Let us consider the play $\rho' = v_0 v_2 a_{u(\rho)}^\omega$ which is consistent with $\sigma_0$ in $G''_2$. It is direct to see, given Lemma \ref{lem:eliminate_unstable}, that the payoff of $\rho$ is strictly smaller than that of $\rho'$. First, notice that $\Omega_1$ is satisfied in $\rho'$ and not in $\rho$. Second, $\Omega_{x_{u(\rho)}}$ and $\Omega_{\neg x_{u(\rho)}}$ are not satisfied in $\rho$ since both $x_{u(\rho)}$ and $\neg x_{u(\rho)}$ are visited infinitely often. They are not satisfied in $\rho'$ either by construction. Finally, since every other objective is satisfied in $\rho'$, it follows that $\payoff{\rho} < \payoff{\rho'}$.
\end{proof}

\paragraph{Satisfying $\phi$ in $G''_1$.}
We define the priority function $c$ of objective $\Omega_{C}$ for the vertices in $G''_1$ such that $c(l) = 2$ and $c(\neg l) = 3$ for the vertices labelled with any literal $l$ of the disjunction making up clause $C$ of $\phi$. Every other vertex in $G''_1$ has priority 3 according to $c$. It follows that the objective $\Omega_{C}$ corresponding to clause $C$ is satisfied if and only if \emph{some} literal of that clause is visited infinitely often. When considering an $X$-stable play $\rho$, this objective is satisfied if and only if the valuation of $X$ corresponding to $\rho$ satisfies $C$. We state the following property on $X$-stable plays in $G''_1$ whose valuation of $X$ does not satisfy $\phi$.
\begin{lemma}
    \label{lem:par_not_phi}
	Let $\sigma_0$ be a strategy for Player~$0$. Let $\rho$ be a play consistent with this strategy in $G''_1$. If $\rho$ is $X$-stable and such that its corresponding valuation of $X$ does not satisfy $\phi$, then it does not have a \paretoOptimal{} payoff.
\end{lemma}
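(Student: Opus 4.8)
The plan is to reuse almost verbatim the argument of Lemma~\ref{lem:par_stability}: given the $X$-stable play $\rho$ whose induced valuation of $X$ does not satisfy $\phi$, I will exhibit a play $\rho'$ that is consistent with every strategy of Player~$0$ (in particular with $\sigma_0$) and whose payoff is strictly larger than $\payoff{\rho}$; this shows $\payoff{\rho}$ cannot be maximal among the payoffs of plays consistent with $\sigma_0$, hence is not \paretoOptimal.

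First I would record that, since $\rho$ is $X$-stable, for each $x_j \in X$ exactly one of the literals $x_j$, $\neg x_j$ is visited infinitely often along $\rho$, so the restriction of $\payoff{\rho}$ to the objectives $\Omega_{x_1},\Omega_{\neg x_1},\dots,\Omega_{x_m},\Omega_{\neg x_m}$ encodes a genuine valuation $val_X$ of $X$ (this is the observation stated just before Lemma~\ref{lem:par_stability}). By hypothesis $val_X \not\models \phi$, so there is an index $i \in \{1,\dots,p\}$ with $val_X \not\models C_i$. I would then invoke the priority function chosen for $\Omega_{C_i}$ on $G''_1$ (a vertex labelled by a literal of $C_i$ has priority $2$, every other vertex of $G''_1$ has priority $3$): the minimum priority seen infinitely often along $\rho$ is $2$ iff some literal of $C_i$ occurs infinitely often, which for an $X$-stable play is equivalent to $val_X \models C_i$; since $val_X \not\models C_i$, objective $\Omega_{C_i}$ is not satisfied in $\rho$.

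Next I would take $\rho' = v_0\, v_3\, b_i^\omega$, the play of $G''_3$ associated with clause $C_i$. By Lemma~\ref{lem:eliminate_not_phi}, $\rho'$ is consistent with any strategy of Player~$0$, hence with $\sigma_0$, and its payoff is of the form $(1,1,\dots,1,r_1,\dots,r_p,1,\dots,1)$ with $r_i = 0$ and $r_j = 1$ for $j \neq i$; in particular $\Omega_0$ and $\Omega_1$ are satisfied and $\Omega_{C_i}$ is the only unsatisfied objective. Comparing $\payoff{\rho}$ and $\payoff{\rho'}$ componentwise: every component of $\payoff{\rho'}$ except the $\Omega_{C_i}$-component equals $1$, hence dominates the corresponding component of $\payoff{\rho}$; the $\Omega_{C_i}$-components are both $0$; and the $\Omega_1$-component is $1$ in $\payoff{\rho'}$ but $0$ in $\payoff{\rho}$, because every play of $G''_1$ visits $v_1$ infinitely often and therefore satisfies neither $\Omega_0$ nor $\Omega_1$. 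Thus $\payoff{\rho} \leq \payoff{\rho'}$ with a strict inequality on the $\Omega_1$-component, i.e.\ $\payoff{\rho} < \payoff{\rho'}$, which concludes the argument.

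I do not anticipate a real obstacle: the only point requiring care is the honest use of the $X$-stability hypothesis, both to read off the valuation $val_X$ from the $\Omega_x$-components of $\payoff{\rho}$ and to pass from ``no literal of $C_i$ is visited infinitely often'' to ``$val_X$ falsifies $C_i$''. Everything else is a direct componentwise comparison of payoffs supported by Lemma~\ref{lem:eliminate_not_phi}, exactly in the style of the proof of Lemma~\ref{lem:par_stability}.
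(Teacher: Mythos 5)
Your proposal is correct and matches the paper's own proof: both identify an unsatisfied clause $C_i$, use the play $\rho' = v_0\, v_3\, b_i^\omega$ in $G''_3$ together with Lemma~\ref{lem:eliminate_not_phi}, and conclude $\payoff{\rho} < \payoff{\rho'}$ by componentwise comparison (strictness coming from $\Omega_1$). The only difference is that you spell out in more detail why $\Omega_{C_i}$ fails along $\rho$ via the priority function and $X$-stability, which the paper leaves implicit.
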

\begin{proof}
Let $\sigma_0$ be a strategy for Player~$0$. Let us consider an $X$-stable play $\rho = v_0 v_1 \dots$ consistent with $\sigma_0$ in $G''_1$ such that its valuation of $X$ does not satisfy clause $C_i$ of $\phi$. It therefore holds that objective $\Omega_{C_i}$ is not satisfied in $\rho$. Let us consider the play $\rho' = v_0 v_3 b_{i}^\omega$ which is consistent with $\sigma_0$ in $G''_3$. It is direct to see, given Lemma \ref{lem:eliminate_not_phi}, that the payoff of $\rho$ is strictly smaller than that of $\rho'$. First, notice that $\Omega_1$ is satisfied in $\rho'$ and not in $\rho$. Second, $\Omega_{C_i}$ is not satisfied in $\rho$ given its corresponding valuation of $X$ nor in $\rho'$ by construction. Finally, since every other objective is satisfied in $\rho'$, it follows that $\payoff{\rho} < \payoff{\rho'}$.
\end{proof}

\paragraph{Problematic Payoffs.} Let $\sigma_0$ be a strategy for Player~$0$ in $G''$. We now consider the set of \paretoOptimal{} payoffs in $G''_1 \cup G''_2 \cup G''_3$ and state the following lemma.
\begin{lemma}
    \label{lem:parity_po}
	Let $\sigma_0$ be a strategy for Player~$0$. The \paretoOptimal{} payoffs in $G''_1 \cup G''_2 \cup G''_3$ which do not satisfy the objective of Player~$0$ are those of plays $\rho$ in $G''_1$ which are $X$-stable and such that their corresponding valuation of $X$ satisfies $\phi$.
\end{lemma}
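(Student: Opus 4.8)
The plan is to prove the two inclusions behind the asserted equality, keeping in mind that ``\paretoOptimal{}'' here is meant among the payoffs of the plays consistent with the fixed strategy $\sigma_0$ that stay in $G''_1 \cup G''_2 \cup G''_3$.

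For the inclusion from left to right, I would first record that every $\sigma_0$-consistent play remaining in $G''_2 \cup G''_3$ satisfies $\ObjPlayer{0}$: by definition of the priority function of $\ObjPlayer{0} = \Omega_1$, only $v_1$ has an odd priority, so the plays failing $\ObjPlayer{0}$ are exactly those visiting $v_1$ infinitely often, i.e.\ the plays staying in $G''_1$. Hence any \paretoOptimal{} payoff that does not satisfy $\ObjPlayer{0}$ is realized by a play $\rho$ in $G''_1$. Such a $\rho$ cannot be $X$-unstable, since Lemma~\ref{lem:par_stability} would then contradict its Pareto-optimality, and $\rho$ cannot be $X$-stable with a valuation of $X$ that falsifies $\phi$, since Lemma~\ref{lem:par_not_phi} would do the same. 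So $\rho$ is $X$-stable with $val_X \models \phi$, as required.

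For the converse inclusion, let $\rho$ be a $\sigma_0$-consistent play in $G''_1$ that is $X$-stable with $val_X \models \phi$. Since $\rho$ visits $v_1$ infinitely often we have $\won{\rho} = 0$, and it remains to show that $\payoff{\rho}$ is \paretoOptimal{}. Suppose a $\sigma_0$-consistent play $\rho'$ in $G''_1 \cup G''_2 \cup G''_3$ satisfies $\payoff{\rho} \le \payoff{\rho'}$. As $\rho$ is $X$-stable, exactly one of $\Omega_{x_i},\Omega_{\neg x_i}$ holds along $\rho$ for each $i$; but by Lemma~\ref{lem:eliminate_unstable} a play in $G''_2$ has $(\Omega_{x_i},\Omega_{\neg x_i}) = (0,0)$ for some $i$, and by Lemma~\ref{lem:eliminate_not_phi} a play in $G''_3$ has $\Omega_{C_i} = 0$ for some $i$ whereas all clause objectives $\Omega_{C_i}$ hold along $\rho$ because $val_X \models \phi$. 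In either case $\payoff{\rho} \not\le \payoff{\rho'}$, so $\rho'$ must lie in $G''_1$. Comparing the $\Omega_{x_i}/\Omega_{\neg x_i}$ components once more forces $\rho'$ to be $X$-stable, and since the encoding of a proper valuation of $X$ as a vector of $2m$ Booleans is $\le$-maximal among such vectors, the valuation of $X$ of $\rho'$ must coincide with $val_X$. Consequently $\rho$ and $\rho'$ agree on $\Omega_1$, on every $\Omega_{x_i}/\Omega_{\neg x_i}$, and on every $\Omega_{C_i}$, so $\payoff{\rho} = \payoff{\rho'}$ unless the two payoffs differ strictly on the components $\Omega_{y_j},\Omega_{\neg y_j}$ ($j \le n$) or $\Omega_{D_i}$ ($i \le q$).

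Ruling out this last possibility is the main obstacle. The approach I would take is to describe the long-run behaviour of each play by the set of literal-labelled vertices of $G''_1$ it visits infinitely often, observe that $\payoff{\rho} \le \payoff{\rho'}$ forces the $Y$-literal part of this set for $\rho'$ to be contained in the one for $\rho$ — because $\Omega_{y_j}$ (resp.\ $\Omega_{\neg y_j}$) is satisfied precisely when $\neg y_j$ (resp.\ $y_j$) is avoided, and $\Omega_{D_i}$ precisely when every literal vertex of $D_i$ is avoided — and then argue, using the structure of $G''_1$ and the fact that both plays are consistent with the same strategy $\sigma_0$ and share the same eventual $X$-valuation, that this inclusion cannot be proper, so that in fact $\payoff{\rho} = \payoff{\rho'}$. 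This yields that $\payoff{\rho}$ is \paretoOptimal{}, completing the converse inclusion and hence the lemma.
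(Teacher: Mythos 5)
Your first paragraph is exactly the paper's argument: plays in $G''_2 \cup G''_3$ satisfy $\Omega_0$ while plays in $G''_1$ do not, and Lemmas~\ref{lem:par_stability} and~\ref{lem:par_not_phi} eliminate the $X$-unstable plays and the $X$-stable plays whose valuation falsifies $\phi$. That is all the paper proves, and all that is used later (in Proposition~\ref{prop:nexptime-hard-correct-par} the lemma only serves to say that the problematic payoffs are \emph{among} those of $X$-stable, $\phi$-satisfying plays of $G''_1$).

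The converse inclusion you attempt is where the problems lie, and you yourself flag the decisive step as "the main obstacle" without supplying it. Two concrete issues. First, your reading of the $Y$-objectives in $G''_1$ is not the intended one: the construction deliberately puts priority $3$ (not $1$) on the opposite literal, precisely so that a $Y$-unstable play satisfies \emph{both} $\Omega_{y_j}$ and $\Omega_{\neg y_j}$ (this is the content of Lemma~\ref{lem:par_y_stability} and of the shape $z_i \in \{(1,0),(0,1)\}$ in Lemma~\ref{lem:par_payoff_g1}); so "$\Omega_{y_j}$ holds iff $\neg y_j$ is avoided" is false in $G''_1$, and the monotonicity you invoke ("$\payoff{\rho}\le\payoff{\rho'}$ forces the $Y$-literal part of the infinitely visited set of $\rho'$ to be contained in that of $\rho$") fails. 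Second, the missing step cannot be filled in general: for suitable $\sigma_0$ and $\psi$, an $X$-stable, $\phi$-satisfying, $Y$-stable play $\rho$ can be strictly dominated by another $\sigma_0$-consistent play $\rho'$ of $G''_1$ with the same limiting $X$-valuation that is $Y$-unstable in some $y_j$ (giving $z_j=(1,1)$) while the $\Omega_{D_i}$-components do not decrease, e.g.\ when no clause of $\psi$ contains $\neg y_j$; Player~$1$ can reach two forever-distinct histories with the same limiting $X$-valuation, so $\sigma_0$ may legitimately produce both plays. Hence the set equality you are trying to prove does not hold for every $\sigma_0$, and the proof should be restricted to the inclusion your first paragraph establishes, which is the statement the paper actually needs.
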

\begin{proof}
First, notice that all plays in $G''_2$ and $G''_3$ satisfy the objective of Player~$0$ while none of the plays in $G''_1$ do. Second, by Lemma \ref{lem:par_stability}, $X$-unstable plays don't have a \paretoOptimal{} payoff. Third, by Lemma \ref{lem:par_not_phi}, $X$-stable plays in $G''_1$ whose valuation of $X$ does not satisfy $\phi$ don't have a \paretoOptimal{} payoff.
\end{proof}
The plays mentioned in Lemma \ref{lem:parity_po} are \paretoOptimal{} when considering $G''_1 \cup G''_2 \cup G''_3$ and do not satisfy the objective of Player~$0$. They must therefore not be \paretoOptimal{} when considering the entire arena $G''$ for $\sigma_0$ to be a solution to the problem.

\paragraph{Sub-Arena $G''_4$.} 
Sub-arena $G''_4$ starts with the gadget described in Subsection \ref{subsec:nexptime_reach}, which creates exactly $k$ different paths to $v_4$. It follows that given any strategy of Player~$0$, there are exactly $k$ different plays consistent with this strategy in $G''_4$. The priority function in $G''_4$ of each objective is defined as follows. The priority function $c$ of objective $\Omega_{y}$ (resp. $c'$ of objective $\Omega_{\neg y}$) for the vertices in $G''_4$ is such that $c(y) = 2$, $c(\neg y) = 1$ (resp. $c'(\neg y) = 2$, $c'(y) = 1$) for the vertices labelled $y$ and $\neg y$ in $G''_4$ and such that every other vertex in $G''_4$ has priority 2 according to $c$ (resp. $c'$). For all $i \in \{1, \dots, q\}$, the priority function $c$ of objective $\Omega_{D_i}$ is such that every vertex in $G''_4$ has priority $1$. The priority function $c$ of every other objective is such that every vertex of $G''_4$ has priority 2. It follows that a play in $G''_4$ satisfies \emph{(i)} objective $\Omega_0$ of Player~$0$ and $\Omega_1$ of Player~$1$, \emph{(ii)} either $\Omega_y$ or $\Omega_{\neg y}$ or neither of these objectives for each $y \in Y$, \emph{(iii)} none of the objectives for the clauses of $\psi$, and \emph{(iv)} every other objective. We state the following lemma on the payoff of plays in $G''_4$.
\begin{lemma}
    \label{lem:par_g_prime_four}
	Let $\sigma_0$ be a strategy for Player~$0$. There are exactly $k$ plays in $G''_4$ consistent with $\sigma_0$ and their payoff is of the form $(1, 1, \dots, 1, z_1, \dots, z_n, 1, \dots, 1, w_1, \dots, w_q)$ such that $z_i \in \{(0,0), (1,0), (0,1)\}$ for each $i \in \{1, \dots, n\}$ and $w_i = 0$ for each $i \in \{1, \dots, q\}$.
\end{lemma}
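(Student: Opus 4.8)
The plan is to establish the two assertions of the statement separately: that exactly $\problemParam$ plays of $G''_4$ are consistent with $\sigma_0$, and that each of them carries a payoff of the announced shape. Both will follow directly from the structure of $G''_4$ and from the priority functions fixed on it, the only step that is more than a bookkeeping check being the exclusion of the value $(1,1)$ for the components $z_i$.

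For the counting, I would argue as follows. The sub-arena $G''_4$ is entered from $v_0$ and begins with the gadget $Q_\problemParam$ of Subsection~\ref{subsec:nexptime_reach}, whose vertices all belong to Player~$1$ and which is a directed acyclic graph offering exactly $\problemParam$ distinct finite paths from its entry vertex to $v_4$. Every vertex of $G''_4$ beyond $v_4$ — namely the Player~$0$ choice vertices for the literals of the variables in $Y$ together with their successors — belongs to Player~$0$, and $G''_4$ is self-contained, its only incoming edge being the one from $v_0$. Hence, once $\sigma_0$ is fixed, each of the $\problemParam$ finite paths reaching $v_4$ through $Q_\problemParam$ extends in a unique way to an infinite play consistent with $\sigma_0$, which afterwards iterates the block of $Y$-choices forever. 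This gives exactly $\problemParam$ plays of $G''_4$ consistent with $\sigma_0$, and for every such play $\rho$ the set $\infOcc{\rho}$ consists of vertices of $G''_4$ reachable from $v_4$; in particular $\infOcc{\rho}$ is contained in $G''_4$ and does not contain $v_1$.

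To read off the payoff of such a $\rho$, I would use that all objectives are parity objectives and thus prefix-independent, so that $\payoff{\rho}$ is determined by $\infOcc{\rho}$ and it suffices, for each objective, to determine the least priority it assigns to a vertex of $\infOcc{\rho}$ under the priority function fixed on $G''_4$. The objective $\ObjPlayer{0}=\ObjPlayer{1}$ assigns an odd priority only to $v_1$, which is absent from $\infOcc{\rho}$, so both are satisfied; each objective $\Omega_{x_i}$, $\Omega_{\neg x_i}$ and each objective $\Omega_{C_j}$ assigns priority $2$ to every vertex of $G''_4$ and is therefore satisfied, which yields the two blocks of $1$'s; each objective $\Omega_{D_j}$ assigns priority $1$ to every vertex of $G''_4$, so none of them is satisfied and $w_j=0$ for all $j$; and for each variable $y_i$ of $Y$, $\Omega_{y_i}$ is satisfied exactly when the vertex labelled $\neg y_i$ in $G''_4$ does not belong to $\infOcc{\rho}$, while $\Omega_{\neg y_i}$ is satisfied exactly when the vertex labelled $y_i$ does not. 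Putting the components together gives a payoff of the form $(1,1,\dots,1,z_1,\dots,z_n,1,\dots,1,w_1,\dots,w_q)$.

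The last step, which is the only one needing an actual argument, is to pin down the possible values of $z_i$. Here I would observe that each traversal of the loop of $G''_4$, from $v_4$ back to $v_4$, visits exactly one of the two vertices labelled $y_i$ and $\neg y_i$, so at least one of them lies in $\infOcc{\rho}$; consequently $\Omega_{y_i}$ and $\Omega_{\neg y_i}$ cannot both be satisfied, which excludes $z_i=(1,1)$ and leaves exactly $(0,0)$, $(1,0)$ and $(0,1)$. Combined with the preceding paragraph, this proves the lemma. I do not expect any real obstacle: once this structural remark about the loop is in hand, the rest is a direct computation from the priority assignments of $G''_4$.
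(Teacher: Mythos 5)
Your proposal is correct and follows essentially the same reasoning the paper uses: the paper states this lemma without a separate proof, deriving it directly from the description of $G''_4$ (the gadget $Q_\problemParam$ giving exactly $\problemParam$ paths to $v_4$ with all subsequent vertices controlled by Player~$0$, and the priority assignments determining each payoff component). Your explicit remark that each traversal of the loop visits exactly one of $y_i$, $\neg y_i$, thereby excluding $z_i=(1,1)$, is precisely the justification implicit in the paper's claim that either $\Omega_{y}$, $\Omega_{\neg y}$, or neither is satisfied.
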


\paragraph{$Y$-Stability of Plays in $G''_1$.}
We define the priority function $c$ of objective $\Omega_{y}$ (resp. $c'$ of $\Omega_{\neg y}$) for the vertices in $G''_1$ such that $c(y) = 2$, $c(\neg y) = 3$, (resp. $c'(\neg y) = 2$ and $c'(y) = 3$) for the vertices labelled $y$ and $\neg y$ in $G''_1$ and such that every other vertex in $G''_1$ has priority 2 according to $c$ (resp. $c'$). We call \emph{$Y$-unstable} those plays which visit both $y$ and $\neg y$ infinitely often for some $y \in Y$ and \emph{$Y$-stable} those which visit infinitely often either $y$ or $\neg y$ for each $y \in Y$. Compared to sub-arena $G''_4$, we set $c(\neg y) = 3$ and $c'(y) = 3$ (instead of $c(\neg y) = 1$ and $c'(y) = 1$).

\begin{lemma}
    \label{lem:par_y_stability}
	Let $\sigma_0$ be a strategy for Player~$0$. Let $\rho$ be a play consistent with this strategy in $G''_1$ which is $X$-stable and such that its corresponding valuation of $X$ satisfies $\phi$. If $\rho$ is $Y$-unstable, then strategy $\sigma_0$ is not a solution to the \problemAb{}. 
\end{lemma}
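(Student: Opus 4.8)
The plan is to show directly that, under the stated hypotheses, $\sigma_0$ produces a \paretoOptimal{} play that Player~$0$ loses, which is exactly the negation of being a solution. First I would observe that $\rho$, being a play of $G''_1$, visits $v_1$ infinitely often, so by the priority function of $\Omega_0$ (recall $c(v_1)=1$ while every other vertex has priority $2$) we get $\won{\rho}=0$. Next, since $\rho$ is consistent with $\sigma_0$, the set of plays consistent with $\sigma_0$ whose payoff is at least $\payoff{\rho}$ is non-empty; as there are only finitely many payoffs, I would pick in this set a play $\rho^*$ whose payoff is maximal with respect to $\leq$. By maximality, $\payoff{\rho^*}$ is \paretoOptimal{} among all plays consistent with $\sigma_0$, and $\payoff{\rho}\leq\payoff{\rho^*}$.

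The heart of the argument is to locate $\rho^*$ among the four sub-arenas $G''_1,G''_2,G''_3,G''_4$ (each play entering exactly one of them from $v_0$ and staying there) and to rule out the last three using $\payoff{\rho}\leq\payoff{\rho^*}$. \emph{(i)} Since $\rho$ is $X$-stable, for each $i\in\{1,\dots,m\}$ one of $\Omega_{x_i},\Omega_{\neg x_i}$ holds in $\rho$, hence in $\rho^*$; but by Lemma~\ref{lem:eliminate_unstable} every play of $G''_2$ fails both $\Omega_{x_i}$ and $\Omega_{\neg x_i}$ for some $i$, so $\rho^*\notin G''_2$. \emph{(ii)} Since the valuation of $X$ associated with $\rho$ satisfies $\phi$, every $\Omega_{C_j}$ holds in $\rho$ (this is the characterisation of $\Omega_{C_j}$ on $G''_1$ for $X$-stable plays), hence in $\rho^*$; but by Lemma~\ref{lem:eliminate_not_phi} every play of $G''_3$ fails some $\Omega_{C_j}$, so $\rho^*\notin G''_3$. \emph{(iii)} Since $\rho$ is $Y$-unstable, there is an index $u$ with both $y_u$ and $\neg y_u$ occurring infinitely often in $\rho$; from the priority functions defining $\Omega_{y_u}$ and $\Omega_{\neg y_u}$ on $G''_1$ this forces both $\Omega_{y_u}$ and $\Omega_{\neg y_u}$ to hold in $\rho$, hence in $\rho^*$; but by Lemma~\ref{lem:par_g_prime_four} no play of $G''_4$ satisfies both $\Omega_y$ and $\Omega_{\neg y}$ for any single $y$, so $\rho^*\notin G''_4$.

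It follows that $\rho^*\in G''_1$, so $\rho^*$ visits $v_1$ infinitely often and $\won{\rho^*}=0$. Since $\payoff{\rho^*}$ is \paretoOptimal{} among the plays consistent with $\sigma_0$ while $\rho^*$ is lost by Player~$0$, the strategy $\sigma_0$ is not a solution to the \problemAb{}, as required. The main obstacle — and the only point where the $Y$-instability hypothesis is genuinely used — is step \emph{(iii)}: one must verify from the definition of the priority functions of $\Omega_{y_u}$ and $\Omega_{\neg y_u}$ on $G''_1$ that visiting both literals infinitely often makes \emph{both} objectives true, so that the $Y$-profile of $\rho$ cannot be matched by any play of $G''_4$, and then invoke Lemma~\ref{lem:par_g_prime_four}. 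Steps \emph{(i)} and \emph{(ii)} are routine and parallel the reasoning of Lemmas~\ref{lem:par_stability} and~\ref{lem:par_not_phi}, while the maximality argument of the first paragraph dispenses with any need to recurse through dominating plays of $G''_1$.
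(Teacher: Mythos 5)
Your proof is correct and follows essentially the same route as the paper: $Y$-instability forces both $\Omega_{y_u}$ and $\Omega_{\neg y_u}$ to hold along $\rho$, which no play of $G''_4$ can match (and, via $X$-stability and satisfaction of $\phi$, no play of $G''_2$ or $G''_3$ either), so some Pareto-optimal play consistent with $\sigma_0$ lies in $G''_1$ and is lost by Player~$0$. Your additional step of passing to a payoff-maximal play $\rho^*$ above $\payoff{\rho}$ is in fact slightly more careful than the paper's own argument, which directly asserts that $\payoff{\rho}$ is incomparable to every other payoff and thus Pareto-optimal, glossing over possible domination by another play of $G''_1$; your maximality argument handles that case cleanly, since such a dominating play also satisfies both $Y$-objectives, stays in $G''_1$, and is likewise lost.
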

\begin{proof}
Let $\sigma_0$ be a strategy for Player~$0$. Let $\rho$ be a play consistent with this strategy in $G''_1$ which is $X$-stable such that its corresponding valuation of $X$ satisfies $\phi$ and $Y$-unstable for some $y \in Y$. Both objectives $\Omega_y$ and $\Omega_{\neg y}$ are satisfied in the payoff of $\rho$, by definition of the objectives in $G''_1$. The payoff of this play, which is lost by Player~$0$, is therefore incomparable to that of every other payoff of plays in $G''$. This holds in particular for plays in $G''_4$ which satisfy either or none of objectives $\Omega_y$ and $\Omega_{\neg y}$ as described in Lemma~\ref{lem:par_g_prime_four}. Strategy $\sigma_0$ is therefore not a solution to the problem as $\rho$ has a \paretoOptimal{} payoff and is lost by Player~$0$.
\end{proof}

\paragraph{Satisfying $\psi$ in $G''_1$.}
We define the priority function $c$ of objective $\Omega_{D}$ for the vertices in $G''_1$ such that $c(l) = 1$ and $c(\neg l) = 2$ for the vertices labelled $l$ if $l$ is a literal of $D$ and such that every other vertex in $G''_1$ has priority 3 according to $c$. Notice that the way we reason about the clauses of $\psi$ using objectives is different to how we handle clauses $C$ of $\phi$. Objective $\Omega_{D}$ is satisfied in a play $\rho$ in $G''_1$ if and only if none of the literals of that clause are visited infinitely often in $\rho$. When considering an $X$-stable and $Y$-stable play $\rho$, this objective is satisfied if and only if the valuations of $X$ and $Y$ corresponding to $\rho$ does not satisfy the clause. Formula $\psi$ is satisfied by those valuations if none of the objectives $\Omega_{D_1}, \dots, \Omega_{D_q}$ are satisfied. We state the following lemma on the payoff of plays in $G''_1$.
\begin{lemma}
    \label{lem:par_payoff_g1}
	Let $\sigma_0$ be a strategy for Player~$0$ and let $\rho$ be a play in $G''_1$ consistent with $\sigma_0$, $X$-stable such that its valuation of $X$ satisfies $\phi$ and $Y$-stable such that its valuation of $X$ and $Y$ satisfies $\psi$. The payoff of $\rho$ is of the form $(0, s_1, \dots, s_m, z_1, \dots, z_n, 1, \dots, 1, w_1, \dots, w_q)$ such that $s_i \in \{(1,0), (0,1)\}$ for each $i \in \{1, \dots, m\}$, $z_i \in \{(1,0), (0,1)\}$ for each $i \in \{1, \dots, n\}$ and $w_i = 0$ for each $i \in \{1, \dots, q\}$.
\end{lemma}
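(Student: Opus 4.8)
\emph{Proof idea.} The plan is to evaluate, objective by objective, which priorities defined on $G''_1$ are seen infinitely often by $\rho$, relying on two structural facts about any play that enters $G''_1$. First, after its initial vertex $v_0$ such a play remains in $G''_1$ forever, so $\infOcc{\rho}\subseteq V(G''_1)$; moreover it visits $v_1$, $v'_1$ and $v''_1$ infinitely often, and between two consecutive visits of $v_1$ it picks exactly one of the two literal vertices for each variable of $X$ and exactly one of the two literal vertices for each variable of $Y$. Second, the $X$-stability (resp.\ $Y$-stability) of $\rho$ means that for each variable exactly one of its two literal vertices belongs to $\infOcc{\rho}$, and this is precisely the valuation of $X$ (resp.\ of $X\cup Y$) associated with $\rho$.

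From these facts I would read off the blocks of the payoff vector. The leading $0$ is immediate: since $v_1\in\infOcc{\rho}$ with $c(v_1)=1$ while every other vertex of $G''_1$ has priority $2$ for $\Omega_0=\Omega_1$, the least priority seen infinitely often is $1$, so $\Omega_1$ is violated. For the block $s_1,\dots,s_m$, fix $i$ and assume the vertex labelled $x_i$ lies in $\infOcc{\rho}$ (the other case being symmetric): by the priority function of $\Omega_{x_i}$ on $G''_1$ the least priority seen infinitely often is even, so $\Omega_{x_i}$ holds, whereas for $\Omega_{\neg x_i}$ the infinitely visited vertex $x_i$ carries an odd priority that becomes the least one seen infinitely often, so $\Omega_{\neg x_i}$ fails; hence $s_i\in\{(1,0),(0,1)\}$. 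The block $z_1,\dots,z_n$ is obtained in the same way, using the priority functions of $\Omega_{y_j}$ and $\Omega_{\neg y_j}$ on $G''_1$ together with the $Y$-stability of $\rho$.

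For the $p$ objectives $\Omega_{C_1},\dots,\Omega_{C_p}$: since the valuation of $X$ associated with $\rho$ satisfies $\phi$, each clause $C_i$ contains a literal made true by that valuation, so the corresponding literal vertex lies in $\infOcc{\rho}$; by the priority function of $\Omega_{C_i}$ on $G''_1$ (literal vertices of $C_i$ have priority $2$, all remaining vertices priority $3$) the least priority seen infinitely often is $2$, so $\Omega_{C_i}$ holds, producing the $1,\dots,1$ entries. Dually, for $\Omega_{D_1},\dots,\Omega_{D_q}$: since the valuation of $X\cup Y$ associated with $\rho$ satisfies $\psi$, each clause $D_i$ contains a literal made true, and by the priority function of $\Omega_{D_i}$ (true-literal vertices of $D_i$ carry the odd priority $1$) the least priority seen infinitely often is $1$, so $\Omega_{D_i}$ fails and $w_i=0$.

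The only delicate point, and where I would be most careful, is the bookkeeping of which vertices actually lie in $\infOcc{\rho}$ for each objective's priority function: one must use that $\rho$ cycles through a loop containing $v_1$, that exactly one literal per variable is repeated infinitely often by stability, and that for the objectives where the conclusion is ``satisfied'' ($\Omega_1$ aside) no strictly smaller odd priority is hiding in $\infOcc{\rho}$. Once this is pinned down, the statement follows by a direct case analysis on the priority functions already fixed for $G''_1$, with no further combinatorial input needed.
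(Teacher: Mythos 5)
The paper itself gives no written proof of this lemma: it is stated as an immediate consequence of the priority functions fixed on $G''_1$, so your block-by-block inspection is precisely the intended argument. Your treatment of the leading component (priority $1$ on $v_1$, priority $2$ elsewhere), of the block $s_1,\dots,s_m$ (priorities $2$ on the chosen literal, $1$ on its negation, $2$ elsewhere), of the $\Omega_{C_i}$ entries (some literal of $C_i$ with priority $2$ lies in $\infOcc{\rho}$ because the valuation of $X$ satisfies $\phi$) and of the $\Omega_{D_i}$ entries (a true literal of $D_i$ carries the globally minimal odd priority $1$) is correct and matches what the paper leaves implicit.

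The one place where you wave your hands is exactly the one place where care is needed: the block $z_1,\dots,z_n$ is \emph{not} obtained ``in the same way'' as the $X$-block. In $G''_1$ the objectives $\Omega_{y_j},\Omega_{\neg y_j}$ use priorities $2$ and $3$ (the paper stresses this change compared to $G''_4$), so when $y_j$ is the literal visited infinitely often, the failure of $\Omega_{\neg y_j}$ cannot come from an odd priority undercutting the even ones, as it does for $X$ where the opposite literal carries priority $1$; it can only come from the fact that \emph{no} vertex of priority $2$ for that objective is visited infinitely often, i.e.\ it requires the remaining vertices of $G''_1$ to carry priority at least $3$ for these objectives (as is done for the clause objectives). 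With the priorities as literally written in the paper (``every other vertex in $G''_1$ has priority $2$''), the minimum priority seen infinitely often would be $2$ for both $\Omega_{y_j}$ and $\Omega_{\neg y_j}$ in every play of $G''_1$, giving $z_j=(1,1)$ and falsifying the lemma; the statement holds only under the evidently intended assignment with priority $3$ on the other vertices. Relatedly, your announced ``delicate point'' is aimed in the wrong direction: for the objectives claimed satisfied no smaller odd priority can hide (priority $1$ is used only where intended), whereas the check that actually matters is for the objectives claimed \emph{unsatisfied}, namely that no smaller even priority occurs infinitely often. Spelling this out for the $z$-block would turn your sketch into a complete proof and would also surface the discrepancy in the paper's definition.
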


Using all the arguments that we have established and summarized in lemmas above, we can finally demonstrate that our reduction is correct.
\begin{proposition}
    \label{prop:nexptime-hard-correct-par}
    An instance of the \succinctSetCoverAb{} is positive if and only if Player~$0$ has a strategy $\sigma_0$ that is a solution to the \problemAb{} in the corresponding parity \gameAb{} played on $G''$.
\end{proposition}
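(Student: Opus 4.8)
The plan is to prove the two implications of Proposition~\ref{prop:nexptime-hard-correct-par} by assembling the lemmas already established for the four sub-arenas of $G''$; the only genuinely new work is (a) turning a cover into a solution and (b) extracting a cover from a solution, both of which reduce to coordinate-wise payoff comparisons once the right plays are identified. Throughout I will use that all plays of $G''$ visiting $v_1$ infinitely often (exactly the plays of $G''_1$) are lost by Player~$0$ and all others are won, and that by Lemma~\ref{lem:parity_po} the only \paretoOptimal{} lost plays in $G''_1 \cup G''_2 \cup G''_3$ are the $X$-stable plays of $G''_1$ whose $X$-valuation satisfies $\phi$ (the \emph{problematic} plays).

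For the \emph{if} direction, assume the \succinctSetCoverAb{} instance is positive, fix a cover $K = \{val_Y^1,\dots,val_Y^{\problemParam}\}$ with $\llbracket\phi\rrbracket \subseteq \bigcup_{val_Y \in K}\llbracket\psi[val_Y]\rrbracket$, and fix $g\colon \llbracket\phi\rrbracket \to K$ with $val_X \in \llbracket\psi[g(val_X)]\rrbracket$. I would define $\sigma_0$ so that, in $G''_4$, after the $j$-th of the $\problemParam$ histories from $v_0$ to $v_4$ provided by $Q_\problemParam$, Player~$0$ repeatedly plays the $Y$-literals of $val_Y^j$ (so the $j$-th play of $G''_4$ consistent with $\sigma_0$ is $Y$-stable with valuation $val_Y^j$, with payoff given by Lemma~\ref{lem:par_g_prime_four}); and, in $G''_1$, whenever Player~$1$ has just completed a round selecting the $X$-literals of some valuation $val_X$, Player~$0$ answers with the $Y$-literals of $g(val_X)$ if $val_X \in \llbracket\phi\rrbracket$ and with a fixed default otherwise. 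This reaction depends only on the last round, which is legitimate because parity objectives are prefix-independent. It then suffices to check that every play $\rho$ of $G''_1$ consistent with $\sigma_0$ fails to be \paretoOptimal{}: if $\rho$ is $X$-unstable it is dominated by a play of $G''_2$ (Lemma~\ref{lem:par_stability}); if $\rho$ is $X$-stable with $X$-valuation not satisfying $\phi$ it is dominated by a play of $G''_3$ (Lemma~\ref{lem:par_not_phi}); and if $\rho$ is $X$-stable with $X$-valuation $val_X \models \phi$, then by construction $\rho$ is $Y$-stable with valuation $g(val_X)$ and $(val_X,g(val_X)) \models \psi$, so by Lemma~\ref{lem:par_payoff_g1} its payoff is of the form $(0,s_1,\dots,s_m,z_1,\dots,z_n,1,\dots,1,0,\dots,0)$ with each $s_i,z_i$ proper, which is strictly below the payoff $(1,1,\dots,1,z_1,\dots,z_n,1,\dots,1,0,\dots,0)$ of the $G''_4$ play with valuation $g(val_X) \in K$: they agree on the $Y$-block, the $D$-block is all $0$ in both, and the $G''_4$ play additionally wins $\Omega_1$ and all $\Omega_{x_i},\Omega_{\neg x_i}$. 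All these dominating plays are consistent with $\sigma_0$ (those of $G''_2,G''_3$ with every strategy, those of $G''_4$ by construction), so every \paretoOptimal{} play consistent with $\sigma_0$ lies in $G''_2 \cup G''_3 \cup G''_4$ and is won by Player~$0$; hence $\sigma_0$ is a solution.

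For the \emph{only if} direction, let $\sigma_0$ be a solution. For each $val_X \in \llbracket\phi\rrbracket$, let $\rho_{val_X}$ be the play of $G''_1$ consistent with $\sigma_0$ in which Player~$1$ selects the $X$-literals of $val_X$ in every round; it is $X$-stable, its $X$-valuation is $val_X \models \phi$, and it is lost by Player~$0$. By Lemma~\ref{lem:par_y_stability}, $\rho_{val_X}$ must be $Y$-stable, say with valuation $val_Y$, and since $\sigma_0$ is a solution, $\rho_{val_X}$ is not \paretoOptimal{}, so some consistent play $\rho'$ has $\payoff{\rho_{val_X}} < \payoff{\rho'}$. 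A case analysis pins $\rho'$ to $G''_4$: it cannot be another play of $G''_1$ (a strictly larger payoff would, on the $X$- and $Y$-blocks, force the same proper valuations $val_X,val_Y$ on $\rho'$, hence an equal payoff); it cannot be in $G''_2$, since by Lemma~\ref{lem:eliminate_unstable} such a payoff has an $X$-component equal to $(0,0)$, which is not $\ge$ the proper $X$-valuation of $\rho_{val_X}$; and it cannot be in $G''_3$, since by Lemma~\ref{lem:eliminate_not_phi} such a payoff leaves some $\Omega_{C_j}$ unsatisfied while $\rho_{val_X}$ satisfies all $\Omega_{C_j}$ because $val_X \models \phi$. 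So $\rho' \in G''_4$; by Lemma~\ref{lem:par_g_prime_four} it leaves every $\Omega_{D_j}$ unsatisfied, so domination forces every $\Omega_{D_j}$ unsatisfied in $\rho_{val_X}$ too, i.e. $(val_X,val_Y) \models \psi$, that is $val_X \in \llbracket\psi[val_Y]\rrbracket$; moreover domination on the $Y$-block forces $\rho'$ to be $Y$-stable with valuation $val_Y$. Thus every $val_X \in \llbracket\phi\rrbracket$ is covered by the $Y$-valuation of a $Y$-stable play of $G''_4$ consistent with $\sigma_0$, and since there are exactly $\problemParam$ plays of $G''_4$ consistent with $\sigma_0$ (Lemma~\ref{lem:par_g_prime_four}), the set $K$ of these valuations, padded arbitrarily to size $\problemParam$, satisfies $\llbracket\phi\rrbracket \subseteq \bigcup_{val_Y \in K}\llbracket\psi[val_Y]\rrbracket$; the instance is positive.

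The main obstacle is the case analysis of the second direction identifying the dominating play: the reduction is engineered so that the $\Omega_{D_j}$ block is all-$1$ on a problematic play of $G''_1$ whose joint valuation falsifies $\psi$ but all-$0$ on every play of $G''_4$, which is precisely what forces the dominator into $G''_4$ and converts ``not \paretoOptimal{}'' into a $\psi$-witness; the delicate point on the other side is checking that the reactive strategy in $G''_1$ is well-defined and that, by prefix-independence of parity, reacting only to the most recent round suffices to make every $X$-stable, $\phi$-satisfying consistent play $Y$-stable with the intended covering valuation. The remaining steps are the routine coordinate-wise comparisons already packaged in Lemmas~\ref{lem:eliminate_unstable}, \ref{lem:eliminate_not_phi}, \ref{lem:par_g_prime_four} and~\ref{lem:par_payoff_g1}.
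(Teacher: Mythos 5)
Your proof is correct and follows essentially the same route as the paper: the same reactive strategy (answer the most recent $X$-round in $G''_1$ with the covering $Y$-valuation, dedicate the $\problemParam$ plays of $G''_4$ after the gadget $Q_\problemParam$ to the $\problemParam$ valuations of the cover), and the same extraction of the cover from the $Y$-stable plays of $G''_4$ consistent with $\sigma_0$, your explicit domination case analysis merely spelling out what the paper compresses into Lemma~\ref{lem:parity_po} and Lemma~\ref{lem:par_g_prime_four}. One small repair: when you exclude a dominator $\rho'$ lying in $G''_1$, domination alone does not force a proper $Y$-block on $\rho'$ (a $Y$-unstable play has $(1,1)$ in some $Y$-coordinate, which dominates any proper entry); you must add that such a $\rho'$ would itself be $X$-stable with an $X$-valuation satisfying $\phi$, hence $Y$-stable by Lemma~\ref{lem:par_y_stability} because $\sigma_0$ is a solution, after which its payoff is determined by its limit valuations and equals that of $\rho_{val_X}$ --- a one-line fix using a lemma you already invoke.
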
 
\begin{proof}
Let us assume that the instance of the \succinctSetCoverAb{} is positive and show that we can create a strategy $\sigma_0$ which is a solution to the \problemAb{} in $G''$. Let us consider an history $h$ in $G''_1$ and let $val_X$ be the valuation of the variables in $X$ corresponding to the most recently visited vertices from $v_1$ to $v'_1$ in $h$. We devise $\sigma_0$ such that if $val_X$ satisfies $\phi$, Player~$0$ selects vertices from $v'_1$ to $v''_1$ corresponding to the valuation $val_Y$ of $Y$ used in the solution to the \succinctSetCoverAb{} such that $val_X \in \llbracket \psi_{val_Y} \rrbracket$. Recall that since the instance of the \succinctSetCoverAb{} is positive, $\sigma_0$ only requires $k$ different valuations of $Y$. If $val_X$ does not satisfy $\phi$, $\sigma_0$ selects one of the $k$ valuations of $Y$ arbitrarily. We define $\sigma_0$ in $G''_4$ such that after the $i$th of the $k$ possible histories from $v_0$ to $v_4$, $\sigma_0$ always selects the $i$th valuation of $Y$ used in the solution to the \succinctSetCoverAb{}. The resulting play is therefore $Y$-stable. Let us show that this strategy is a solution to the \problemAb{}. Given Lemma \ref{lem:parity_po}, only $X$-stable plays in $G''_1$ whose valuation of $X$ satisfies $\phi$ have a problematic payoff that cannot be \paretoOptimal{} when considering $G''$ in order for $\sigma_0$ to be a solution to the problem. Let us consider such a play $\rho$ consistent with $\sigma_0$. It holds that $\rho$ is $Y$-stable given the definition of $\sigma_0$. We easily show that its payoff is strictly smaller than that of the play $\rho'$ in $G''_4$ corresponding to the same valuation of $Y$ as the one associated to $\rho$ ($\rho'$ exists by definition of $\sigma_0$). Indeed, given Lemmas \ref{lem:par_g_prime_four} and \ref{lem:par_payoff_g1}, it holds that \emph{(i)} for each $i \in \{1, \dots, m\}$, both $\Omega_{x_i}$ and $\Omega_{\neg x_i}$ are satisfied in $\rho'$ and only either of these objectives is satisfied in $\rho$, \emph{(ii)} for each $i \in \{1, \dots, n\}$ the same objective $\Omega_{y_i}$ or $\Omega_{\neg y_i}$ is satisfied in $\rho$ and $\rho'$ as their valuations of $Y$ are identical \emph{(iii)} all objectives corresponding to the clauses of $\phi$ are satisfied in $\rho$ and $\rho'$ \emph{(iv)} none of the objectives corresponding to the clauses of $\psi$ are satisfied in $\rho$ nor $\rho'$. Therefore it holds that for each problematic play in $G''_1$ there is a play in $G''_4$ with a strictly larger payoff and that is winning for Player~$0$.

Given a strategy $\sigma_0$ which is solution to the \problemAb{} in $G''$ we derive a solution to the \succinctSetCoverAb{} as follows. As $\sigma_0$ is a solution, for every play $\rho$ in $G''_1$ which is $X$-stable and whose valuation of $X$ satisfies $\phi$, it is also $Y$-stable (as $\sigma_0$ is a solution to the problem and by Lemma \ref{lem:par_y_stability}). Together those valuations satisfy $\psi$, as if this were not the case some clause $D$ of $\psi$ would be unsatisfied, the corresponding objective $\Omega_D$ would be satisfied and plays with any such objective satisfied cannot be covered by plays in $G''_4$ given Lemma \ref{lem:par_g_prime_four}. It follows that for each valuation of $X$ which satisfies $\phi$, there exists a valuation of $Y$ which together with this valuation of $X$ satisfies $\psi$. In addition, those problematic plays can only be covered by plays in $G''_4$ by using $k$ different valuations of $Y$. It follows that these $k$ valuations of $Y$ are a solution to the \succinctSetCoverAb{}.
\end{proof}

\paragraph{Other Prefix-Independent Objectives}
The \nexptime{}-hardness of the \problemAb{} for parity objectives allows us to show the same result by reduction for Boolean B\"uchi objectives (by Proposition~\ref{prop_bb_encoding}), Streett and Rabin objectives (by Proposition~\ref{prop:parity_into_streett_rabin}), and Muller objectives (by Proposition~\ref{prop:parity_to_muller}). This completes the proof of Theorem~\ref{thm:nexptimehard}.

\paragraph{Co-B\"uchi Objectives}
Notice that the proof for parity \gamesAb{} requires to use three priorities overall. It is not clear how to reduce this number to two priorities, that is, to adapt the proof for co-B\"uchi \gamesAb{} (as we have shown the problem to be \npComplete{} for B\"uchi \gamesAb{}). Indeed notice that in the previous proof, we sometimes require to work with priorities $2$ and $3$ (resp. $1$ and $2$) corresponding to a B\"uchi (resp. co-B\"uchi) objective.

\section{Conclusion}
%====================
\label{sec:conclusion}
We have introduced in this paper the class of two-player \gamesAb{} with $\omega$-regular objectives and the \problemAb{} in those games. We have considered reachability and safety objectives as well as several classical prefix-independent $\omega$-regular objectives (B\"uchi, co-B\"uchi, Boolean B\"uchi, parity, Muller, Streett, and Rabin). We provided a reduction from \gamesAb{} to a two-player zero-sum game called the \challengerProverAb{} game in order to provide \FPT{} results on solving this problem. We then showed how the arena and the generic objective of this \challengerProverAb{} game can be adapted to specifically handle the objectives studied in this paper. This allowed us to prove that \gamesAb{} are in \FPT{} for the parameters described in Table~\ref{table:fpt-summary}. The techniques used for these \FPT{} results in the case of prefix-independent objectives allowed us to provide an improved \FPT{} algorithm for Boolean B\"uchi zero-sum games. We then turned to the complexity class of the \problemAb{} and provided a proof of its \nexptime-membership for all the objectives we study, which relied on showing that any solution to the \problemAb{} can be transformed into a solution with an exponential memory. We then provided a better upper bound for B\"uchi \gamesAb{} by introducing an \np{} algorithm. We provided a proof of the \np-completeness of the problem in the simple setting of reachability \gamesAb{} played on tree arenas. We then came back to regular game arenas and provided the proof of the \nexptime-hardness of the \problemAb{} for all the objectives except for B\"uchi and co-B\"uchi objectives for which we showed the \npHard ness. The \nexptime-hardness proof relied on a reduction from the \succinctSetCoverAb{} which we proved to be \nexptimeComplete{}, a result of potential independent interest. These complexity results are summarized in Table~\ref{table:comp_summary}.

In future work, we want to find the exact complexity class of the \problemAb{} for co-B\"uchi objectives. We also want to study quantitative objectives such as mean-payoff in the framework of \gamesAb{} and the \problemAb{}. It would also be interesting to study whether other works, such as rational synthesis, could benefit from the approaches used in this paper.

\begin{acks}
This work is partially supported by the PDR project Subgame perfection in graph games (F.R.S.-FNRS), the ARC project Non-Zero Sum Game Graphs: Applications to Reactive Synthesis and Beyond (Fédération Wallonie-Bruxelles), the EOS project Verifying Learning Artificial Intelligence Systems (F.R.S.-FNRS and FWO), and the COST Action 16228 GAMENET (European Cooperation in Science and Technology).
\end{acks}

%%
%% The next two lines define the bibliography style to be used, and
%% the bibliography file.
\bibliographystyle{ACM-Reference-Format}
\bibliography{main}

%%% -*-BibTeX-*-
%%% Do NOT edit. File created by BibTeX with style
%%% ACM-Reference-Format-Journals [18-Jan-2012].

\begin{thebibliography}{41}

%%% ====================================================================
%%% NOTE TO THE USER: you can override these defaults by providing
%%% customized versions of any of these macros before the \bibliography
%%% command.  Each of them MUST provide its own final punctuation,
%%% except for \shownote{}, \showDOI{}, and \showURL{}.  The latter two
%%% do not use final punctuation, in order to avoid confusing it with
%%% the Web address.
%%%
%%% To suppress output of a particular field, define its macro to expand
%%% to an empty string, or better, \unskip, like this:
%%%
%%% \newcommand{\showDOI}[1]{\unskip}   % LaTeX syntax
%%%
%%% \def \showDOI #1{\unskip}           % plain TeX syntax
%%%
%%% ====================================================================

\ifx \showCODEN    \undefined \def \showCODEN     #1{\unskip}     \fi
\ifx \showDOI      \undefined \def \showDOI       #1{#1}\fi
\ifx \showISBNx    \undefined \def \showISBNx     #1{\unskip}     \fi
\ifx \showISBNxiii \undefined \def \showISBNxiii  #1{\unskip}     \fi
\ifx \showISSN     \undefined \def \showISSN      #1{\unskip}     \fi
\ifx \showLCCN     \undefined \def \showLCCN      #1{\unskip}     \fi
\ifx \shownote     \undefined \def \shownote      #1{#1}          \fi
\ifx \showarticletitle \undefined \def \showarticletitle #1{#1}   \fi
\ifx \showURL      \undefined \def \showURL       {\relax}        \fi
% The following commands are used for tagged output and should be
% invisible to TeX
\providecommand\bibfield[2]{#2}
\providecommand\bibinfo[2]{#2}
\providecommand\natexlab[1]{#1}
\providecommand\showeprint[2][]{arXiv:#2}

\bibitem[Balachander et~al\mbox{.}(2021)]%
        {BalachanderGR20}
\bibfield{author}{\bibinfo{person}{Mrudula Balachander},
  \bibinfo{person}{Shibashis Guha}, {and}
  \bibinfo{person}{Jean{-}Fran{\c{c}}ois Raskin}.}
  \bibinfo{year}{2021}\natexlab{}.
\newblock \showarticletitle{Fragility and Robustness in Mean-Payoff Adversarial
  Stackelberg Games}. In \bibinfo{booktitle}{\emph{32nd International
  Conference on Concurrency Theory, {CONCUR} 2021, August 24-27, 2021, Virtual
  Conference}} \emph{(\bibinfo{series}{LIPIcs}, Vol.~\bibinfo{volume}{203})},
  \bibfield{editor}{\bibinfo{person}{Serge Haddad} {and}
  \bibinfo{person}{Daniele Varacca}} (Eds.). \bibinfo{publisher}{Schloss
  Dagstuhl - Leibniz-Zentrum f{\"{u}}r Informatik}, \bibinfo{pages}{9:1--9:17}.
\newblock
\urldef\tempurl%
\url{https://doi.org/10.4230/LIPIcs.CONCUR.2021.9}
\showDOI{\tempurl}


\bibitem[Berwanger(2007)]%
        {Berwanger07}
\bibfield{author}{\bibinfo{person}{Dietmar Berwanger}.}
  \bibinfo{year}{2007}\natexlab{}.
\newblock \showarticletitle{Admissibility in Infinite Games}. In
  \bibinfo{booktitle}{\emph{{STACS} 2007, 24th Annual Symposium on Theoretical
  Aspects of Computer Science, Aachen, Germany, February 22-24, 2007,
  Proceedings}} \emph{(\bibinfo{series}{Lecture Notes in Computer Science},
  Vol.~\bibinfo{volume}{4393})}, \bibfield{editor}{\bibinfo{person}{Wolfgang
  Thomas} {and} \bibinfo{person}{Pascal Weil}} (Eds.).
  \bibinfo{publisher}{Springer}, \bibinfo{pages}{188--199}.
\newblock
\urldef\tempurl%
\url{https://doi.org/10.1007/978-3-540-70918-3\_17}
\showDOI{\tempurl}


\bibitem[Bloem et~al\mbox{.}(2018)]%
        {DBLP:reference/mc/BloemCJ18}
\bibfield{author}{\bibinfo{person}{Roderick Bloem}, \bibinfo{person}{Krishnendu
  Chatterjee}, {and} \bibinfo{person}{Barbara Jobstmann}.}
  \bibinfo{year}{2018}\natexlab{}.
\newblock \showarticletitle{Graph Games and Reactive Synthesis}.
\newblock In \bibinfo{booktitle}{\emph{Handbook of Model Checking}},
  \bibfield{editor}{\bibinfo{person}{Edmund~M. Clarke},
  \bibinfo{person}{Thomas~A. Henzinger}, \bibinfo{person}{Helmut Veith}, {and}
  \bibinfo{person}{Roderick Bloem}} (Eds.). \bibinfo{publisher}{Springer},
  \bibinfo{pages}{921--962}.
\newblock
\urldef\tempurl%
\url{https://doi.org/10.1007/978-3-319-10575-8\_27}
\showDOI{\tempurl}


\bibitem[Bouyer et~al\mbox{.}(2015)]%
        {BouyerBMU15}
\bibfield{author}{\bibinfo{person}{Patricia Bouyer}, \bibinfo{person}{Romain
  Brenguier}, \bibinfo{person}{Nicolas Markey}, {and} \bibinfo{person}{Michael
  Ummels}.} \bibinfo{year}{2015}\natexlab{}.
\newblock \showarticletitle{Pure {N}ash Equilibria in Concurrent Deterministic
  Games}.
\newblock \bibinfo{journal}{\emph{Log. Methods Comput. Sci.}}
  \bibinfo{volume}{11}, \bibinfo{number}{2} (\bibinfo{year}{2015}).
\newblock
\urldef\tempurl%
\url{https://doi.org/10.2168/LMCS-11(2:9)2015}
\showDOI{\tempurl}


\bibitem[Brenguier et~al\mbox{.}(2016)]%
        {DBLP:conf/lata/BrenguierCHPRRS16}
\bibfield{author}{\bibinfo{person}{Romain Brenguier}, \bibinfo{person}{Lorenzo
  Clemente}, \bibinfo{person}{Paul Hunter}, \bibinfo{person}{Guillermo~A.
  P{\'{e}}rez}, \bibinfo{person}{Mickael Randour},
  \bibinfo{person}{Jean{-}Fran{\c{c}}ois Raskin}, \bibinfo{person}{Ocan
  Sankur}, {and} \bibinfo{person}{Mathieu Sassolas}.}
  \bibinfo{year}{2016}\natexlab{}.
\newblock \showarticletitle{Non-Zero Sum Games for Reactive Synthesis}. In
  \bibinfo{booktitle}{\emph{Language and Automata Theory and Applications -
  10th International Conference, {LATA} 2016, Prague, Czech Republic, March
  14-18, 2016, Proceedings}} \emph{(\bibinfo{series}{Lecture Notes in Computer
  Science}, Vol.~\bibinfo{volume}{9618})},
  \bibfield{editor}{\bibinfo{person}{Adrian{-}Horia Dediu},
  \bibinfo{person}{Jan Janousek}, \bibinfo{person}{Carlos
  Mart{\'{\i}}n{-}Vide}, {and} \bibinfo{person}{Bianca Truthe}} (Eds.).
  \bibinfo{publisher}{Springer}, \bibinfo{pages}{3--23}.
\newblock
\urldef\tempurl%
\url{https://doi.org/10.1007/978-3-319-30000-9\_1}
\showDOI{\tempurl}


\bibitem[Brenguier et~al\mbox{.}(2015)]%
        {BrenguierRS15}
\bibfield{author}{\bibinfo{person}{Romain Brenguier},
  \bibinfo{person}{Jean{-}Fran{\c{c}}ois Raskin}, {and} \bibinfo{person}{Ocan
  Sankur}.} \bibinfo{year}{2015}\natexlab{}.
\newblock \showarticletitle{Assume-Admissible Synthesis}. In
  \bibinfo{booktitle}{\emph{26th International Conference on Concurrency
  Theory, {CONCUR} 2015, Madrid, Spain, September 1.4, 2015}}
  \emph{(\bibinfo{series}{LIPIcs}, Vol.~\bibinfo{volume}{42})},
  \bibfield{editor}{\bibinfo{person}{Luca Aceto} {and} \bibinfo{person}{David
  de~Frutos{-}Escrig}} (Eds.). \bibinfo{publisher}{Schloss Dagstuhl -
  Leibniz-Zentrum f{\"{u}}r Informatik}, \bibinfo{pages}{100--113}.
\newblock
\urldef\tempurl%
\url{https://doi.org/10.4230/LIPIcs.CONCUR.2015.100}
\showDOI{\tempurl}


\bibitem[Brice et~al\mbox{.}(2021)]%
        {Raskin2021}
\bibfield{author}{\bibinfo{person}{L{\'{e}}onard Brice},
  \bibinfo{person}{Jean{-}Fran{\c{c}}ois Raskin}, {and} \bibinfo{person}{Marie
  van~den Bogaard}.} \bibinfo{year}{2021}\natexlab{}.
\newblock \showarticletitle{Subgame-Perfect Equilibria in Mean-Payoff Games}.
  In \bibinfo{booktitle}{\emph{32nd International Conference on Concurrency
  Theory, {CONCUR} 2021, August 24-27, 2021, Virtual Conference}}
  \emph{(\bibinfo{series}{LIPIcs}, Vol.~\bibinfo{volume}{203})},
  \bibfield{editor}{\bibinfo{person}{Serge Haddad} {and}
  \bibinfo{person}{Daniele Varacca}} (Eds.). \bibinfo{publisher}{Schloss
  Dagstuhl - Leibniz-Zentrum f{\"{u}}r Informatik}, \bibinfo{pages}{8:1--8:17}.
\newblock
\urldef\tempurl%
\url{https://doi.org/10.4230/LIPIcs.CONCUR.2021.8}
\showDOI{\tempurl}


\bibitem[Brihaye et~al\mbox{.}(2020)]%
        {BrihayeBGRB20}
\bibfield{author}{\bibinfo{person}{Thomas Brihaye},
  \bibinfo{person}{V{\'{e}}ronique Bruy{\`{e}}re}, \bibinfo{person}{Aline
  Goeminne}, \bibinfo{person}{Jean{-}Fran{\c{c}}ois Raskin}, {and}
  \bibinfo{person}{Marie van~den Bogaard}.} \bibinfo{year}{2020}\natexlab{}.
\newblock \showarticletitle{The Complexity of Subgame Perfect Equilibria in
  Quantitative Reachability Games}.
\newblock \bibinfo{journal}{\emph{Log. Methods Comput. Sci.}}
  \bibinfo{volume}{16}, \bibinfo{number}{4} (\bibinfo{year}{2020}).
\newblock
\urldef\tempurl%
\url{https://lmcs.episciences.org/6883}
\showURL{%
\tempurl}


\bibitem[Bruy{\`{e}}re(2017)]%
        {Bruyere17}
\bibfield{author}{\bibinfo{person}{V{\'{e}}ronique Bruy{\`{e}}re}.}
  \bibinfo{year}{2017}\natexlab{}.
\newblock \showarticletitle{Computer Aided Synthesis: {A} Game-Theoretic
  Approach}. In \bibinfo{booktitle}{\emph{Developments in Language Theory -
  21st International Conference, {DLT} 2017, Li{\`{e}}ge, Belgium, August 7-11,
  2017, Proceedings}} \emph{(\bibinfo{series}{Lecture Notes in Computer
  Science}, Vol.~\bibinfo{volume}{10396})},
  \bibfield{editor}{\bibinfo{person}{{\'{E}}milie Charlier},
  \bibinfo{person}{Julien Leroy}, {and} \bibinfo{person}{Michel Rigo}} (Eds.).
  \bibinfo{publisher}{Springer}, \bibinfo{pages}{3--35}.
\newblock
\urldef\tempurl%
\url{https://doi.org/10.1007/978-3-319-62809-7\_1}
\showDOI{\tempurl}


\bibitem[Bruy{\`{e}}re(2021)]%
        {DBLP:journals/siglog/Bruyere21}
\bibfield{author}{\bibinfo{person}{V{\'{e}}ronique Bruy{\`{e}}re}.}
  \bibinfo{year}{2021}\natexlab{}.
\newblock \showarticletitle{Synthesis of equilibria in infinite-duration games
  on graphs}.
\newblock \bibinfo{journal}{\emph{{ACM} {SIGLOG} News}} \bibinfo{volume}{8},
  \bibinfo{number}{2} (\bibinfo{year}{2021}), \bibinfo{pages}{4--29}.
\newblock
\urldef\tempurl%
\url{https://doi.org/10.1145/3467001.3467003}
\showDOI{\tempurl}


\bibitem[Bruy{\`{e}}re et~al\mbox{.}(2016)]%
        {BruyereHR16}
\bibfield{author}{\bibinfo{person}{V{\'{e}}ronique Bruy{\`{e}}re},
  \bibinfo{person}{Quentin Hautem}, {and}
  \bibinfo{person}{Jean{-}Fran{\c{c}}ois Raskin}.}
  \bibinfo{year}{2016}\natexlab{}.
\newblock \showarticletitle{On the Complexity of Heterogeneous Multidimensional
  Games}. In \bibinfo{booktitle}{\emph{27th International Conference on
  Concurrency Theory, {CONCUR} 2016, August 23-26, 2016, Qu{\'{e}}bec City,
  Canada}} \emph{(\bibinfo{series}{LIPIcs}, Vol.~\bibinfo{volume}{59})},
  \bibfield{editor}{\bibinfo{person}{Jos{\'{e}}e Desharnais} {and}
  \bibinfo{person}{Radha Jagadeesan}} (Eds.). \bibinfo{publisher}{Schloss
  Dagstuhl - Leibniz-Zentrum f{\"{u}}r Informatik},
  \bibinfo{pages}{11:1--11:15}.
\newblock
\urldef\tempurl%
\url{https://doi.org/10.4230/LIPIcs.CONCUR.2016.11}
\showDOI{\tempurl}


\bibitem[Bruy{\`{e}}re et~al\mbox{.}(2018)]%
        {BruyereHR18}
\bibfield{author}{\bibinfo{person}{V{\'{e}}ronique Bruy{\`{e}}re},
  \bibinfo{person}{Quentin Hautem}, {and}
  \bibinfo{person}{Jean{-}Fran{\c{c}}ois Raskin}.}
  \bibinfo{year}{2018}\natexlab{}.
\newblock \showarticletitle{Parameterized complexity of games with
  monotonically ordered omega-regular objectives}. In
  \bibinfo{booktitle}{\emph{29th International Conference on Concurrency
  Theory, {CONCUR} 2018, September 4-7, 2018, Beijing, China}}
  \emph{(\bibinfo{series}{LIPIcs}, Vol.~\bibinfo{volume}{118})},
  \bibfield{editor}{\bibinfo{person}{Sven Schewe} {and} \bibinfo{person}{Lijun
  Zhang}} (Eds.). \bibinfo{publisher}{Schloss Dagstuhl - Leibniz-Zentrum
  f{\"{u}}r Informatik}, \bibinfo{pages}{29:1--29:16}.
\newblock
\urldef\tempurl%
\url{https://doi.org/10.4230/LIPIcs.CONCUR.2018.29}
\showDOI{\tempurl}


\bibitem[Bruy{\`{e}}re et~al\mbox{.}(2021)]%
        {DBLP:conf/concur/BruyereRT21}
\bibfield{author}{\bibinfo{person}{V{\'{e}}ronique Bruy{\`{e}}re},
  \bibinfo{person}{Jean{-}Fran{\c{c}}ois Raskin}, {and}
  \bibinfo{person}{Cl{\'{e}}ment Tamines}.} \bibinfo{year}{2021}\natexlab{}.
\newblock \showarticletitle{Stackelberg-Pareto Synthesis}. In
  \bibinfo{booktitle}{\emph{32nd International Conference on Concurrency
  Theory, {CONCUR} 2021, August 24-27, 2021, Virtual Conference}}
  \emph{(\bibinfo{series}{LIPIcs}, Vol.~\bibinfo{volume}{203})},
  \bibfield{editor}{\bibinfo{person}{Serge Haddad} {and}
  \bibinfo{person}{Daniele Varacca}} (Eds.). \bibinfo{publisher}{Schloss
  Dagstuhl - Leibniz-Zentrum f{\"{u}}r Informatik},
  \bibinfo{pages}{27:1--27:17}.
\newblock
\urldef\tempurl%
\url{https://doi.org/10.4230/LIPIcs.CONCUR.2021.27}
\showDOI{\tempurl}


\bibitem[Calude et~al\mbox{.}(2020)]%
        {CaludeJKL020}
\bibfield{author}{\bibinfo{person}{Cristian~S. Calude}, \bibinfo{person}{Sanjay
  Jain}, \bibinfo{person}{Bakhadyr Khoussainov}, \bibinfo{person}{Wei Li},
  {and} \bibinfo{person}{Frank Stephan}.} \bibinfo{year}{2020}\natexlab{}.
\newblock \showarticletitle{Deciding Parity Games in Quasi-polynomial Time}.
\newblock \bibinfo{journal}{\emph{SIAM J. Comput.}}  \bibinfo{volume}{0}
  (\bibinfo{year}{2020}), \bibinfo{pages}{STOC17--152--STOC17--188}.
\newblock
\urldef\tempurl%
\url{https://doi.org/10.1137/17M1145288}
\showDOI{\tempurl}


\bibitem[Chatterjee and Henzinger(2014)]%
        {ChatterjeeH14}
\bibfield{author}{\bibinfo{person}{Krishnendu Chatterjee} {and}
  \bibinfo{person}{Monika Henzinger}.} \bibinfo{year}{2014}\natexlab{}.
\newblock \showarticletitle{Efficient and Dynamic Algorithms for Alternating
  {B}{\"{u}}chi Games and Maximal End-Component Decomposition}.
\newblock \bibinfo{journal}{\emph{J. {ACM}}} \bibinfo{volume}{61},
  \bibinfo{number}{3} (\bibinfo{year}{2014}), \bibinfo{pages}{15:1--15:40}.
\newblock
\urldef\tempurl%
\url{https://doi.org/10.1145/2597631}
\showDOI{\tempurl}


\bibitem[Chatterjee and Henzinger(2007)]%
        {DBLP:conf/tacas/ChatterjeeH07}
\bibfield{author}{\bibinfo{person}{Krishnendu Chatterjee} {and}
  \bibinfo{person}{Thomas~A. Henzinger}.} \bibinfo{year}{2007}\natexlab{}.
\newblock \showarticletitle{Assume-Guarantee Synthesis}. In
  \bibinfo{booktitle}{\emph{Tools and Algorithms for the Construction and
  Analysis of Systems, 13th International Conference, {TACAS} 2007, Held as
  Part of the Joint European Conferences on Theory and Practice of Software,
  {ETAPS} 2007 Braga, Portugal, March 24 - April 1, 2007, Proceedings}}
  \emph{(\bibinfo{series}{Lecture Notes in Computer Science},
  Vol.~\bibinfo{volume}{4424})}, \bibfield{editor}{\bibinfo{person}{Orna
  Grumberg} {and} \bibinfo{person}{Michael Huth}} (Eds.).
  \bibinfo{publisher}{Springer}, \bibinfo{pages}{261--275}.
\newblock
\urldef\tempurl%
\url{https://doi.org/10.1007/978-3-540-71209-1\_21}
\showDOI{\tempurl}


\bibitem[Chatterjee et~al\mbox{.}(2006)]%
        {DBLP:journals/tcs/ChatterjeeHJ06}
\bibfield{author}{\bibinfo{person}{Krishnendu Chatterjee},
  \bibinfo{person}{Thomas~A. Henzinger}, {and} \bibinfo{person}{Marcin
  Jurdzinski}.} \bibinfo{year}{2006}\natexlab{}.
\newblock \showarticletitle{Games with secure equilibria}.
\newblock \bibinfo{journal}{\emph{Theor. Comput. Sci.}} \bibinfo{volume}{365},
  \bibinfo{number}{1-2} (\bibinfo{year}{2006}), \bibinfo{pages}{67--82}.
\newblock
\urldef\tempurl%
\url{https://doi.org/10.1016/j.tcs.2006.07.032}
\showDOI{\tempurl}


\bibitem[Condurache et~al\mbox{.}(2016)]%
        {ConduracheFGR16}
\bibfield{author}{\bibinfo{person}{Rodica Condurache},
  \bibinfo{person}{Emmanuel Filiot}, \bibinfo{person}{Raffaella Gentilini},
  {and} \bibinfo{person}{Jean{-}Fran{\c{c}}ois Raskin}.}
  \bibinfo{year}{2016}\natexlab{}.
\newblock \showarticletitle{The Complexity of Rational Synthesis}. In
  \bibinfo{booktitle}{\emph{43rd International Colloquium on Automata,
  Languages, and Programming, {ICALP} 2016, July 11-15, 2016, Rome, Italy}}
  \emph{(\bibinfo{series}{LIPIcs}, Vol.~\bibinfo{volume}{55})},
  \bibfield{editor}{\bibinfo{person}{Ioannis Chatzigiannakis},
  \bibinfo{person}{Michael Mitzenmacher}, \bibinfo{person}{Yuval Rabani}, {and}
  \bibinfo{person}{Davide Sangiorgi}} (Eds.). \bibinfo{publisher}{Schloss
  Dagstuhl - Leibniz-Zentrum f{\"{u}}r Informatik},
  \bibinfo{pages}{121:1--121:15}.
\newblock
\urldef\tempurl%
\url{https://doi.org/10.4230/LIPIcs.ICALP.2016.121}
\showDOI{\tempurl}


\bibitem[Das et~al\mbox{.}(2017)]%
        {DasST17}
\bibfield{author}{\bibinfo{person}{Bireswar Das}, \bibinfo{person}{Patrick
  Scharpfenecker}, {and} \bibinfo{person}{Jacobo Tor{\'{a}}n}.}
  \bibinfo{year}{2017}\natexlab{}.
\newblock \showarticletitle{{CNF} and {DNF} succinct graph encodings}.
\newblock \bibinfo{journal}{\emph{Inf. Comput.}}  \bibinfo{volume}{253}
  (\bibinfo{year}{2017}), \bibinfo{pages}{436--447}.
\newblock
\urldef\tempurl%
\url{https://doi.org/10.1016/j.ic.2016.06.009}
\showDOI{\tempurl}


\bibitem[Downey and Fellows(2012)]%
        {downey2012parameterized}
\bibfield{author}{\bibinfo{person}{R.G. Downey} {and} \bibinfo{person}{M.R.
  Fellows}.} \bibinfo{year}{2012}\natexlab{}.
\newblock \bibinfo{booktitle}{\emph{Parameterized Complexity}}.
\newblock \bibinfo{publisher}{Springer New York}.
\newblock
\showISBNx{9781461205159}
\showLCCN{97022882}
\urldef\tempurl%
\url{https://books.google.be/books?id=HyTjBwAAQBAJ}
\showURL{%
\tempurl}


\bibitem[Dziembowski et~al\mbox{.}(1997)]%
        {DziembowskiJW97}
\bibfield{author}{\bibinfo{person}{Stefan Dziembowski}, \bibinfo{person}{Marcin
  Jurdzinski}, {and} \bibinfo{person}{Igor Walukiewicz}.}
  \bibinfo{year}{1997}\natexlab{}.
\newblock \showarticletitle{How Much Memory is Needed to Win Infinite Games?}.
  In \bibinfo{booktitle}{\emph{Proceedings, 12th Annual {IEEE} Symposium on
  Logic in Computer Science, Warsaw, Poland, June 29 - July 2, 1997}}.
  \bibinfo{publisher}{{IEEE} Computer Society}, \bibinfo{pages}{99--110}.
\newblock
\urldef\tempurl%
\url{https://doi.org/10.1109/LICS.1997.614939}
\showDOI{\tempurl}


\bibitem[Emerson and Lei(1987)]%
        {DBLP:journals/scp/EmersonL87}
\bibfield{author}{\bibinfo{person}{E.~Allen Emerson} {and}
  \bibinfo{person}{Chin{-}Laung Lei}.} \bibinfo{year}{1987}\natexlab{}.
\newblock \showarticletitle{Modalities for Model Checking: Branching Time Logic
  Strikes Back}.
\newblock \bibinfo{journal}{\emph{Sci. Comput. Program.}} \bibinfo{volume}{8},
  \bibinfo{number}{3} (\bibinfo{year}{1987}), \bibinfo{pages}{275--306}.
\newblock
\urldef\tempurl%
\url{https://doi.org/10.1016/0167-6423(87)90036-0}
\showDOI{\tempurl}


\bibitem[Filiot et~al\mbox{.}(2020)]%
        {FiliotGR20}
\bibfield{author}{\bibinfo{person}{Emmanuel Filiot}, \bibinfo{person}{Raffaella
  Gentilini}, {and} \bibinfo{person}{Jean{-}Fran{\c{c}}ois Raskin}.}
  \bibinfo{year}{2020}\natexlab{}.
\newblock \showarticletitle{The Adversarial {S}tackelberg Value in Quantitative
  Games}. In \bibinfo{booktitle}{\emph{47th International Colloquium on
  Automata, Languages, and Programming, {ICALP} 2020, July 8-11, 2020,
  Saarbr{\"{u}}cken, Germany (Virtual Conference)}}
  \emph{(\bibinfo{series}{LIPIcs}, Vol.~\bibinfo{volume}{168})},
  \bibfield{editor}{\bibinfo{person}{Artur Czumaj}, \bibinfo{person}{Anuj
  Dawar}, {and} \bibinfo{person}{Emanuela Merelli}} (Eds.).
  \bibinfo{publisher}{Schloss Dagstuhl - Leibniz-Zentrum f{\"{u}}r Informatik},
  \bibinfo{pages}{127:1--127:18}.
\newblock
\urldef\tempurl%
\url{https://doi.org/10.4230/LIPIcs.ICALP.2020.127}
\showDOI{\tempurl}


\bibitem[Fisman et~al\mbox{.}(2010)]%
        {FismanKL10}
\bibfield{author}{\bibinfo{person}{Dana Fisman}, \bibinfo{person}{Orna
  Kupferman}, {and} \bibinfo{person}{Yoad Lustig}.}
  \bibinfo{year}{2010}\natexlab{}.
\newblock \showarticletitle{Rational Synthesis}. In
  \bibinfo{booktitle}{\emph{Tools and Algorithms for the Construction and
  Analysis of Systems, 16th International Conference, {TACAS} 2010, Held as
  Part of the Joint European Conferences on Theory and Practice of Software,
  {ETAPS} 2010, Paphos, Cyprus, March 20-28, 2010. Proceedings}}
  \emph{(\bibinfo{series}{Lecture Notes in Computer Science},
  Vol.~\bibinfo{volume}{6015})}, \bibfield{editor}{\bibinfo{person}{Javier
  Esparza} {and} \bibinfo{person}{Rupak Majumdar}} (Eds.).
  \bibinfo{publisher}{Springer}, \bibinfo{pages}{190--204}.
\newblock
\urldef\tempurl%
\url{https://doi.org/10.1007/978-3-642-12002-2\_16}
\showDOI{\tempurl}


\bibitem[Gr{\"{a}}del et~al\mbox{.}(2002)]%
        {2001automata}
\bibfield{editor}{\bibinfo{person}{Erich Gr{\"{a}}del},
  \bibinfo{person}{Wolfgang Thomas}, {and} \bibinfo{person}{Thomas Wilke}}
  (Eds.). \bibinfo{year}{2002}\natexlab{}.
\newblock \bibinfo{booktitle}{\emph{Automata, Logics, and Infinite Games: {A}
  Guide to Current Research [outcome of a Dagstuhl seminar, February 2001]}}.
  \bibinfo{series}{Lecture Notes in Computer Science},
  Vol.~\bibinfo{volume}{2500}. \bibinfo{publisher}{Springer}.
\newblock
\showISBNx{3-540-00388-6}
\urldef\tempurl%
\url{https://doi.org/10.1007/3-540-36387-4}
\showDOI{\tempurl}


\bibitem[Gupta and Schewe(2014)]%
        {GuptaS14}
\bibfield{author}{\bibinfo{person}{Anshul Gupta} {and} \bibinfo{person}{Sven
  Schewe}.} \bibinfo{year}{2014}\natexlab{}.
\newblock \showarticletitle{Quantitative Verification in Rational
  Environments}. In \bibinfo{booktitle}{\emph{21st International Symposium on
  Temporal Representation and Reasoning, {TIME} 2014, Verona, Italy, September
  8-10, 2014}}, \bibfield{editor}{\bibinfo{person}{Amedeo Cesta},
  \bibinfo{person}{Carlo Combi}, {and} \bibinfo{person}{Fran{\c{c}}ois
  Laroussinie}} (Eds.). \bibinfo{publisher}{{IEEE} Computer Society},
  \bibinfo{pages}{123--131}.
\newblock
\urldef\tempurl%
\url{https://doi.org/10.1109/TIME.2014.9}
\showDOI{\tempurl}


\bibitem[Gupta et~al\mbox{.}(2015)]%
        {GuptaS14c}
\bibfield{author}{\bibinfo{person}{Anshul Gupta}, \bibinfo{person}{Sven
  Schewe}, {and} \bibinfo{person}{Dominik Wojtczak}.}
  \bibinfo{year}{2015}\natexlab{}.
\newblock \showarticletitle{Making the Best of Limited Memory in Multi-Player
  Discounted Sum Games}. In \bibinfo{booktitle}{\emph{Proceedings Sixth
  International Symposium on Games, Automata, Logics and Formal Verification,
  GandALF 2015, Genoa, Italy, 21-22nd September 2015}}
  \emph{(\bibinfo{series}{{EPTCS}}, Vol.~\bibinfo{volume}{193})},
  \bibfield{editor}{\bibinfo{person}{Javier Esparza} {and}
  \bibinfo{person}{Enrico Tronci}} (Eds.). \bibinfo{pages}{16--30}.
\newblock
\urldef\tempurl%
\url{https://doi.org/10.4204/EPTCS.193.2}
\showDOI{\tempurl}


\bibitem[Gutierrez et~al\mbox{.}(2019)]%
        {GutierrezNPW19}
\bibfield{author}{\bibinfo{person}{Julian Gutierrez}, \bibinfo{person}{Muhammad
  Najib}, \bibinfo{person}{Giuseppe Perelli}, {and} \bibinfo{person}{Michael~J.
  Wooldridge}.} \bibinfo{year}{2019}\natexlab{}.
\newblock \showarticletitle{On Computational Tractability for Rational
  Verification}. In \bibinfo{booktitle}{\emph{Proceedings of the Twenty-Eighth
  International Joint Conference on Artificial Intelligence, {IJCAI} 2019,
  Macao, China, August 10-16, 2019}}, \bibfield{editor}{\bibinfo{person}{Sarit
  Kraus}} (Ed.). \bibinfo{publisher}{ijcai.org}, \bibinfo{pages}{329--335}.
\newblock
\urldef\tempurl%
\url{https://doi.org/10.24963/ijcai.2019/47}
\showDOI{\tempurl}


\bibitem[Gutierrez et~al\mbox{.}(2020)]%
        {GutierrezNPW20}
\bibfield{author}{\bibinfo{person}{Julian Gutierrez}, \bibinfo{person}{Muhammad
  Najib}, \bibinfo{person}{Giuseppe Perelli}, {and} \bibinfo{person}{Michael~J.
  Wooldridge}.} \bibinfo{year}{2020}\natexlab{}.
\newblock \showarticletitle{Automated temporal equilibrium analysis:
  Verification and synthesis of multi-player games}.
\newblock \bibinfo{journal}{\emph{Artif. Intell.}}  \bibinfo{volume}{287}
  (\bibinfo{year}{2020}), \bibinfo{pages}{103353}.
\newblock
\urldef\tempurl%
\url{https://doi.org/10.1016/j.artint.2020.103353}
\showDOI{\tempurl}


\bibitem[Horn(2008)]%
        {Horn08}
\bibfield{author}{\bibinfo{person}{Florian Horn}.}
  \bibinfo{year}{2008}\natexlab{}.
\newblock \showarticletitle{Explicit {M}uller Games are {PTIME}}. In
  \bibinfo{booktitle}{\emph{{IARCS} Annual Conference on Foundations of
  Software Technology and Theoretical Computer Science, {FSTTCS} 2008, December
  9-11, 2008, Bangalore, India}} \emph{(\bibinfo{series}{LIPIcs},
  Vol.~\bibinfo{volume}{2})}, \bibfield{editor}{\bibinfo{person}{Ramesh
  Hariharan}, \bibinfo{person}{Madhavan Mukund}, {and}
  \bibinfo{person}{V.~Vinay}} (Eds.). \bibinfo{publisher}{Schloss Dagstuhl -
  Leibniz-Zentrum f{\"{u}}r Informatik}, \bibinfo{pages}{235--243}.
\newblock
\urldef\tempurl%
\url{https://doi.org/10.4230/LIPIcs.FSTTCS.2008.1756}
\showDOI{\tempurl}


\bibitem[Karp(1972)]%
        {Karp72}
\bibfield{author}{\bibinfo{person}{Richard~M. Karp}.}
  \bibinfo{year}{1972}\natexlab{}.
\newblock \showarticletitle{Reducibility Among Combinatorial Problems}. In
  \bibinfo{booktitle}{\emph{Proceedings of a symposium on the Complexity of
  Computer Computations, held March 20-22, 1972, at the {IBM} Thomas J. Watson
  Research Center, Yorktown Heights, New York, {USA}}}
  \emph{(\bibinfo{series}{The {IBM} Research Symposia Series})},
  \bibfield{editor}{\bibinfo{person}{Raymond~E. Miller} {and}
  \bibinfo{person}{James~W. Thatcher}} (Eds.). \bibinfo{publisher}{Plenum
  Press, New York}, \bibinfo{pages}{85--103}.
\newblock
\urldef\tempurl%
\url{https://doi.org/10.1007/978-1-4684-2001-2\_9}
\showDOI{\tempurl}


\bibitem[Kupferman(2018)]%
        {DBLP:reference/mc/Kupferman18}
\bibfield{author}{\bibinfo{person}{Orna Kupferman}.}
  \bibinfo{year}{2018}\natexlab{}.
\newblock \showarticletitle{Automata Theory and Model Checking}.
\newblock In \bibinfo{booktitle}{\emph{Handbook of Model Checking}},
  \bibfield{editor}{\bibinfo{person}{Edmund~M. Clarke},
  \bibinfo{person}{Thomas~A. Henzinger}, \bibinfo{person}{Helmut Veith}, {and}
  \bibinfo{person}{Roderick Bloem}} (Eds.). \bibinfo{publisher}{Springer},
  \bibinfo{pages}{107--151}.
\newblock
\urldef\tempurl%
\url{https://doi.org/10.1007/978-3-319-10575-8\_4}
\showDOI{\tempurl}


\bibitem[Kupferman et~al\mbox{.}(2016)]%
        {KupfermanPV16}
\bibfield{author}{\bibinfo{person}{Orna Kupferman}, \bibinfo{person}{Giuseppe
  Perelli}, {and} \bibinfo{person}{Moshe~Y. Vardi}.}
  \bibinfo{year}{2016}\natexlab{}.
\newblock \showarticletitle{Synthesis with rational environments}.
\newblock \bibinfo{journal}{\emph{Ann. Math. Artif. Intell.}}
  \bibinfo{volume}{78}, \bibinfo{number}{1} (\bibinfo{year}{2016}),
  \bibinfo{pages}{3--20}.
\newblock
\urldef\tempurl%
\url{https://doi.org/10.1007/s10472-016-9508-8}
\showDOI{\tempurl}


\bibitem[Nash(1950)]%
        {Nas50}
\bibfield{author}{\bibinfo{person}{John~F. Nash}.}
  \bibinfo{year}{1950}\natexlab{}.
\newblock \showarticletitle{Equilibrium points in $n$-person games}. In
  \bibinfo{booktitle}{\emph{PNAS}}, Vol.~\bibinfo{volume}{36}.
  \bibinfo{publisher}{National Academy of Sciences}, \bibinfo{pages}{48--49}.
\newblock


\bibitem[Pnueli and Rosner(1989)]%
        {PnueliR89}
\bibfield{author}{\bibinfo{person}{Amir Pnueli} {and} \bibinfo{person}{Roni
  Rosner}.} \bibinfo{year}{1989}\natexlab{}.
\newblock \showarticletitle{On the Synthesis of a Reactive Module}. In
  \bibinfo{booktitle}{\emph{Conference Record of the Sixteenth Annual {ACM}
  Symposium on Principles of Programming Languages, Austin, Texas, USA, January
  11-13, 1989}}. \bibinfo{publisher}{{ACM} Press}, \bibinfo{pages}{179--190}.
\newblock
\urldef\tempurl%
\url{https://doi.org/10.1145/75277.75293}
\showDOI{\tempurl}


\bibitem[Renkin et~al\mbox{.}(2020)]%
        {DBLP:conf/atva/RenkinDP20}
\bibfield{author}{\bibinfo{person}{Florian Renkin}, \bibinfo{person}{Alexandre
  Duret{-}Lutz}, {and} \bibinfo{person}{Adrien Pommellet}.}
  \bibinfo{year}{2020}\natexlab{}.
\newblock \showarticletitle{Practical "Paritizing" of Emerson-Lei Automata}. In
  \bibinfo{booktitle}{\emph{Automated Technology for Verification and Analysis
  - 18th International Symposium, {ATVA} 2020, Hanoi, Vietnam, October 19-23,
  2020, Proceedings}} \emph{(\bibinfo{series}{Lecture Notes in Computer
  Science}, Vol.~\bibinfo{volume}{12302})},
  \bibfield{editor}{\bibinfo{person}{Dang~Van Hung} {and} \bibinfo{person}{Oleg
  Sokolsky}} (Eds.). \bibinfo{publisher}{Springer}, \bibinfo{pages}{127--143}.
\newblock
\urldef\tempurl%
\url{https://doi.org/10.1007/978-3-030-59152-6\_7}
\showDOI{\tempurl}


\bibitem[Selten(1965)]%
        {selten}
\bibfield{author}{\bibinfo{person}{Reinhard Selten}.}
  \bibinfo{year}{1965}\natexlab{}.
\newblock \showarticletitle{Spieltheoretische {B}ehandlung eines
  {O}ligopolmodells mit {N}achfragetr\"agheit}.
\newblock \bibinfo{journal}{\emph{Zeitschrift f\"ur die gesamte
  Staatswissenschaft}}  \bibinfo{volume}{121} (\bibinfo{year}{1965}),
  \bibinfo{pages}{301--324 and 667--689}.
\newblock


\bibitem[Ummels(2006)]%
        {Ummels06}
\bibfield{author}{\bibinfo{person}{Michael Ummels}.}
  \bibinfo{year}{2006}\natexlab{}.
\newblock \showarticletitle{Rational Behaviour and Strategy Construction in
  Infinite Multiplayer Games}. In \bibinfo{booktitle}{\emph{{FSTTCS} 2006:
  Foundations of Software Technology and Theoretical Computer Science, 26th
  International Conference, Kolkata, India, December 13-15, 2006, Proceedings}}
  \emph{(\bibinfo{series}{Lecture Notes in Computer Science},
  Vol.~\bibinfo{volume}{4337})},
  \bibfield{editor}{\bibinfo{person}{S.~Arun{-}Kumar} {and}
  \bibinfo{person}{Naveen Garg}} (Eds.). \bibinfo{publisher}{Springer},
  \bibinfo{pages}{212--223}.
\newblock
\urldef\tempurl%
\url{https://doi.org/10.1007/11944836\_21}
\showDOI{\tempurl}


\bibitem[Ummels(2008)]%
        {Ummels08}
\bibfield{author}{\bibinfo{person}{Michael Ummels}.}
  \bibinfo{year}{2008}\natexlab{}.
\newblock \showarticletitle{The Complexity of {N}ash Equilibria in Infinite
  Multiplayer Games}. In \bibinfo{booktitle}{\emph{Foundations of Software
  Science and Computational Structures, 11th International Conference,
  {FOSSACS} 2008, Held as Part of the Joint European Conferences on Theory and
  Practice of Software, {ETAPS} 2008, Budapest, Hungary, March 29 - April 6,
  2008. Proceedings}} \emph{(\bibinfo{series}{Lecture Notes in Computer
  Science}, Vol.~\bibinfo{volume}{4962})},
  \bibfield{editor}{\bibinfo{person}{Roberto~M. Amadio}} (Ed.).
  \bibinfo{publisher}{Springer}, \bibinfo{pages}{20--34}.
\newblock
\urldef\tempurl%
\url{https://doi.org/10.1007/978-3-540-78499-9\_3}
\showDOI{\tempurl}


\bibitem[Ummels and Wojtczak(2011)]%
        {UmmelsW11}
\bibfield{author}{\bibinfo{person}{Michael Ummels} {and}
  \bibinfo{person}{Dominik Wojtczak}.} \bibinfo{year}{2011}\natexlab{}.
\newblock \showarticletitle{The Complexity of {N}ash Equilibria in
  Limit-Average Games}. In \bibinfo{booktitle}{\emph{{CONCUR} 2011 -
  Concurrency Theory - 22nd International Conference, {CONCUR} 2011, Aachen,
  Germany, September 6-9, 2011. Proceedings}} \emph{(\bibinfo{series}{Lecture
  Notes in Computer Science}, Vol.~\bibinfo{volume}{6901})},
  \bibfield{editor}{\bibinfo{person}{Joost{-}Pieter Katoen} {and}
  \bibinfo{person}{Barbara K{\"{o}}nig}} (Eds.). \bibinfo{publisher}{Springer},
  \bibinfo{pages}{482--496}.
\newblock
\urldef\tempurl%
\url{https://doi.org/10.1007/978-3-642-23217-6\_32}
\showDOI{\tempurl}


\bibitem[von Stackelberg(1937)]%
        {Stackelberg37}
\bibfield{author}{\bibinfo{person}{Heinrich~Freiherr von Stackelberg}.}
  \bibinfo{year}{1937}\natexlab{}.
\newblock \bibinfo{booktitle}{\emph{Marktform und Gleichgewicht}}.
\newblock \bibinfo{publisher}{Wien und Berlin, J. Springer},
  \bibinfo{address}{Cambridge, MA}.
\newblock


\end{thebibliography}
\appendix

\section{Useful Result on SP Games}
\label{app:usefull_notions}

\begin{proposition}
\label{prop:transform_two_suc}
Every \gameAb{} $\mathcal{G}$ with arena $G$ containing $n$ vertices can be transformed into an \gameAb{} $\bar{\mathcal{G}}$ with the same objectives and with arena $\bar{G}$ containing at most $n^2$ vertices such that any vertex in $\bar{G}$ has at most $2$ successors and Player~$0$ has a strategy $\sigma_0$ that is solution to the \problemAb{} in $G$ if and only if Player~$0$ has a strategy $\bar{\sigma}_0$ that is solution to the problem in $\bar{G}$.
\end{proposition}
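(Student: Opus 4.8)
The plan is to replace every vertex of $G$ that has three or more successors by a small "routing ladder" made of copies of that vertex, all owned by the same player, so that the original multi-way choice at $v$ is simulated by a sequence of binary choices.

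Concretely, fix $v\in V$ with successors $v_1,\dots,v_k$, $k\ge 3$, listed in the paper's left-to-right convention. Introduce fresh vertices $u^v_2,\dots,u^v_{k-1}$, set $u^v_1:=v$, place every $u^v_i$ in $V_0$ (resp.\ $V_1$) exactly when $v\in V_0$ (resp.\ $v\in V_1$), and give each $u^v_i$ the same objective label as $v$: the same priority for parity and Muller objectives, the same membership in $B$ for (co\nobreakdash-)B\"uchi, in each pair $(E_j,F_j)$ for Streett and Rabin, in each set $\target_i$ for Boolean B\"uchi, and in each $\target_j$ (resp.\ $\safeSet_j$) for reachability (resp.\ safety). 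The edges leaving the ladder are $u^v_i\to v_i$ and $u^v_i\to u^v_{i+1}$ for $i\le k-2$, together with $u^v_{k-1}\to v_{k-1}$ and $u^v_{k-1}\to v_k$; vertices of out-degree at most two are left unchanged, and the initial vertex of $\bar G$ is $v_0=u^{v_0}_1$. Every vertex of $\bar G$ now has at most two successors, and since at most $n$ vertices of $G$ have out-degree $\ge 3$, each contributing at most $k-2\le n-2$ fresh vertices, $\bar G$ has at most $n+n(n-2)<n^2$ vertices, and is computed in polynomial time.

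Next I would set up a play correspondence: contracting each maximal ladder segment of a play of $\bar G$ down to its bottom vertex $u^v_1=v$ yields a play of $G$, and conversely every play of $G$ lifts uniquely to a play of $\bar G$ by expanding each step $v\to v_i$ into the path $u^v_1u^v_2\cdots u^v_i\,v_i$; these maps are mutually inverse. The key point is that this correspondence preserves extended payoffs. A ladder vertex $u^v_i$ occurs in a play $\bar\rho$ of $\bar G$ only if its parent $v=u^v_1$ does, so the original vertices occurring (resp.\ occurring infinitely often) in $\bar\rho$ are exactly those of the projected play $\rho$, while every ladder vertex of $\bar\rho$ carries the same label as a parent already lying in $\occ{\rho}$ (resp.\ $\infOcc{\rho}$). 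A short case analysis over the nine objective types -- each objective depending only on which labels occur, or occur infinitely often -- then gives $(\won{\bar\rho},\payoff{\bar\rho})=(\won{\rho},\payoff{\rho})$: for parity one uses that the minimal priority seen infinitely often, and the maximal priority (hence its parity), are unchanged, and for Muller, Streett and Rabin that the set of priorities, resp.\ the intersections with the $E_j$'s and $F_j$'s, seen infinitely often are unchanged.

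Finally I would transfer strategies. Because each ladder is controlled by a single player, a strategy $\sigma_0$ of Player~$0$ in $G$ induces a strategy $\bar\sigma_0$ in $\bar G$: at a history ending inside the ladder of some $v\in V_0$, $\bar\sigma_0$ walks down the ladder and branches out to $v_i$ exactly when $\sigma_0$, applied to the contracted history, chooses $v_i$; conversely a strategy $\bar\sigma_0$ in $\bar G$ induces $\sigma_0$ in $G$ by reading off, from the lifted history, which exit of the ladder $\bar\sigma_0$ eventually selects (reached after at most $k-1$ binary choices, so $\sigma_0$ is well defined). Under the play correspondence, $\Playsigma{\sigma_0}$ and $\Playsigma{\bar\sigma_0}$ are in payoff-preserving bijection, hence $\paretoSet{\sigma_0}=\paretoSet{\bar\sigma_0}$, and a play consistent with one strategy is won by Player~$0$ iff the corresponding play consistent with the other is; thus $\sigma_0$ is a solution to the \problemAb{} in $G$ iff $\bar\sigma_0$ is a solution in $\bar G$. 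The delicate points are verifying that the lifting/projection of strategies really is a function of histories and that the extended payoff survives the contraction for every objective type; this last bookkeeping -- guaranteeing that the ladder vertices are "transparent" for $\infOcc$ in the prefix-independent cases -- is the main thing to get right.
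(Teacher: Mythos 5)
Your proposal is correct and follows essentially the same route as the paper's proof: expand each vertex of large out-degree into a gadget of binary-choice vertices owned by the same player, label the fresh vertices so that extended payoffs are preserved under the natural contraction/expansion correspondence between plays of $G$ and $\bar G$, and transfer strategies through that correspondence. The only (immaterial) differences are that the paper attaches a complete binary tree rather than a chain and leaves the new vertices out of the defining sets for most objectives (adding them to the safe sets for safety and copying the root's priority for parity/Muller), whereas you copy the root's full label uniformly -- both choices make the gadget transparent for $\occ{\cdot}$ and $\infOcc{\cdot}$.
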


\begin{proof}
Let $\mathcal{G}$ be an \gameAb{} with arena $G$. Let us first describe the arena $\bar{G}$ of $\bar{\mathcal{G}}$. Let $v \in V$ be a vertex of $G$, then $v$ is also a vertex of $\bar{G}$ such that it belongs to the same player and is the root of a complete binary tree with $\ell = |\{v' \mid (v, v') \in E\}|$ leaves if $(v, v) \not \in E$. Otherwise, $v$ has a self loop and its other successor is the root of such a tree with $\ell-1$ leaves. The internal vertices of the tree (that is vertices which are not $v$, nor the leaves) belong to the same player as $v$. Each leaf vertex $v'$ of this tree is such that $(v, v') \in E$, belongs to the same player as in $G$ and is again the root of its own tree. The initial vertex $v_0$ of $G$ remains unchanged in $\bar{G}$. Since every vertex in $\bar{G}$ is part of a binary tree or has a self loop and a single successor, it holds that it has at most two successors. Since $G$ is a game arena, this transformation is such that each vertex in $\bar{G}$ has at least one successor. It follows that $\bar{G}$ is a game arena containing $n$ vertices $v \in V$ and at most $n - 1$ internal vertices per tree in the case where $v \in V$ has $n$ successors in $G$. It follows that the number of vertices in $\bar{G}$ is at most $n + n \cdot (n - 1) = n^2$. 

We now define the objectives in $\bar{\mathcal{G}}$. In the case of reachability, B\"uchi, co-B\"uchi, Boolean B\"uchi, Streett, and Rabin, the objective $\bar{\Omega}$ in $\bar{\mathcal{G}}$ corresponding to objective $\Omega$ in $\mathcal{G}$ is defined using the same sets of vertices (recall that the vertices of $G$ appear in $\bar{G}$). For parity and Muller \gamesAb{}, the priority function $c'$ of objective $\bar{\Omega}$ remains unchanged for vertices $v \in V$ and we define $c'(v') = c(v)$ for $v' \in \bar{V}\setminus V$ such that $v'$ is an internal vertex of a tree whose root is $v$. In the case of safety objectives, the safe set $\bar{S}$ of objective $\bar{\Omega}$ corresponds to that of $\Omega$, augmented with every newly added vertex of $\bar{G}$, that is $\bar{S} = S \cup (\bar{V} \setminus V)$.

Finally, let us show that there is a solution to the \problemAb{} in $\mathcal{G}$ if and only if there is a solution in $\bar{\mathcal{G}}$. From each root $v$ of a tree in $\bar{G}$ (corresponding to a vertex $v$ of Player~$i$ in $G$) there is a set of $\ell = |\{v' \mid (v, v') \in E\}|$ different paths controlled by  Player~$i$, each leading to a vertex $v'$. It follows that there exists a play $\rho = v_0 v_1 v_2 \ldots \in \Plays_G$ if and only if there exists a play $\rho' = v_0 a_0 \dots a_{n_1} v_1 b_0 \dots b_{n_2} v_2 \ldots \in \Plays_{\bar{G}}$ such that every vertex $a_i$ (resp.\ $b_i$) belongs to the same player as $v_0$ (resp.\ $v_1$) and so on. Given the way the objectives are defined, it holds that $\payoff{\rho} = \payoff{\rho'}$ and $\won{\rho} = \won{\rho'}$. Therefore, a strategy $\sigma_0$ that is solution to the \problemAb{} in $G$ can be transformed into a strategy $\bar{\sigma}_0$ which is a solution in $\bar{G}$ and vice-versa.
\end{proof}

\end{document}